\documentclass[11pt,fleqn]{article}

\setlength{\textwidth}{6.5in} \setlength{\topmargin}{0.0in}
\setlength{\headheight}{0in} \setlength{\headsep}{0.0in}

\setlength{\textheight}{9in} \setlength{\oddsidemargin}{0in}
\setlength{\evensidemargin}{0in}

\newcommand{\etal}{et al.\ }

\usepackage{amsthm, amssymb}
\usepackage{graphicx}
\usepackage{algorithm}
\usepackage{algorithmic}
\usepackage{xspace}
\usepackage{enumitem}
\usepackage{tabularx}
\usepackage{xcolor}
\usepackage{graphicx} 

\usepackage{tikz}
\usetikzlibrary{shadows}
\usetikzlibrary{patterns}
\usepackage{wrapfig}
\usepackage[fleqn]{amsmath}

\usepackage{thmtools} 

\usepackage{hyperref}
\hypersetup{colorlinks=true,citecolor=red}
\usepackage{cleveref}

\pgfdeclarepatternformonly{my crosshatch dots}{\pgfqpoint{-1pt}{-1pt}}{\pgfqpoint{1pt}{1pt}}{\pgfqpoint{6pt}{6pt}}%
{
    \pgfpathcircle{\pgfqpoint{0pt}{0pt}}{.5pt}
    \pgfpathcircle{\pgfqpoint{3pt}{3pt}}{.5pt}
    \pgfusepath{fill}
}

\newcommand{\R}{\mathbb{R}}
\newcommand{\Z}{\mathbb{Z}}
\newcommand{\Q}{\mathbb{Q}}

\newtheorem{theorem}{Theorem}[section]
\newtheorem{lemma}[theorem]{Lemma}

\newtheorem{observation}[theorem]{Observation}
\newtheorem{proposition}[theorem]{Proposition}
\newtheorem{corollary}[theorem]{Corollary}
\newtheorem{definition}[theorem]{Definition}
\newtheorem{claim}[theorem]{Claim}

\newtheorem{remark}[theorem]{Remark}

\newcommand{\set}[1]{\left\{#1\right\}}

\def\+#1{\mathcal{#1}}
\newcommand{\cI}{\mathcal{I}}

\newcommand{\cL}{{\mathcal L}}

\newcommand{\cP}{{\mathcal P}}

\newcommand{\cR}{{\mathcal R}}

\newcommand{\dd}{\text{d}}

\newcommand{\eps}{\epsilon}

\newcommand{\cost}{\mathsf{cost}}

\newcommand{\opt}{\mathsf{opt}}

\newcommand{\GD}{{\mathsf{GD}}}

\newcommand{\norm}[1]{\lVert#1\rVert}

\newcommand{\bc}{\textbf{c}}
\newcommand{\bb}{\textbf{b}}
\newcommand{\bx}{\textbf{x}}
\newcommand{\by}{\textbf{y}}
\newcommand{\bz}{\textbf{z}}
\newcommand{\bZ}{\textbf{Z}}

\newcommand{\vecf}{\textbf{f}}

\newcommand{\bp}{\textbf{p}}
\newcommand{\bq}{\textbf{q}}

\newcommand{\bone}{\textsf{1}}

\def\DEBUG{true}
\ifdefined\DEBUG
	
	\def\rem#1{{\marginpar{\raggedright\scriptsize #1}}}
	\newcommand{\sjr}[1]{\rem{\small\textcolor{red}{$\bullet${\tiny #1}}}}
	
	\newcommand{\remove}[1]{{\color{lightgray} #1}}
\else
	
	\newcommand{\sjr}[1]{}
	\new
	command{\remove}[1]{}
\fi
\def\DEBUG{true}
\ifdefined\DEBUG
	
	\def\rem#1{{\marginpar{\raggedright\scriptsize #1}}}
	\newcommand{\msr}[1]{\rem{\small\textcolor{red}{$\bullet${\tiny #1}}}}

\else
	
	\newcommand{\msr}[1]{}
	\new
	command{\remove}[1]{}
\fi

\begin{document}

\sloppy

\title{Online Scheduling via Gradient Descent  for Weighted Flow Time Minimization}


\author{
Qingyun Chen \thanks{ Electrical Engineering and Computer Science, University of California, 5200 N. Lake Road, Merced CA 95344. \texttt{qchen41@ucmerced.edu}.}   \and
Sungjin Im\thanks{ Electrical Engineering and Computer Science, University of California, 5200 N. Lake Road, Merced CA 95344. \texttt{sim3@ucmerced.edu}.}  
\and 
Aditya Petety\thanks{ Electrical Engineering and Computer Science, University of California, 5200 N. Lake Road, Merced CA 95344. \texttt{apetety@ucmerced.edu}.
} 
}

\date{}
\maketitle
\thispagestyle{empty}

\begin{abstract}

In this paper, we explore how a natural generalization of Shortest Remaining Processing Time (SRPT) can be a powerful \emph{meta-algorithm} for online scheduling. The meta-algorithm processes jobs to maximally reduce the objective of the corresponding offline scheduling problem of the remaining jobs: minimizing the total weighted completion time of them (the residual optimum). We show that it achieves scalability for minimizing total weighted flow time when the residual optimum exhibits \emph{supermodularity}. Scalability here means it is $O(1)$-competitive with an arbitrarily small speed augmentation advantage over the adversary, representing the best possible outcome achievable for various scheduling problems.

Thanks to this finding, our approach does not require the residual optimum to have a closed mathematical form. Consequently, we can obtain the schedule by solving a linear program, which makes our approach readily applicable to a rich body of applications. Furthermore, by establishing a novel connection to \emph{substitute valuations in Walrasian markets}, we show how to achieve supermodularity, thereby obtaining scalable algorithms for various scheduling problems, such as matroid scheduling, generalized network flow, and generalized arbitrary speed-up curves, etc., and this is the \emph{first} non-trivial or scalable algorithm for many of them.
\end{abstract}

\clearpage
\setcounter{page}{1}

\section{Introduction}
    \label{sec:intro}

Scheduling is a fundamental algorithmic problem because it is at the heart of any resource allocation process. Because of its wide appearance in practice, scheduling has been a central field in many disciplines such as operations research and computer science. Further, scheduling problems have provided arenas where theoretical computer science has developed and tested new theory. Scheduling theory has also become richer from various points of view, such as approximation, online, stochastic, game theory, etc. In the modern era,  scheduling is giving increasing challenges and opportunities as more powerful computing resources are becoming available at low cost for novel applications.

 Online scheduling studies dynamic scheduling settings and is found in various applications.	In the online setting, the algorithm must make scheduling decisions without knowing jobs arriving in the future. Competitive analysis is commonly used to measure the quality of an online algorithm in contrast to the optimal offline algorithm \cite{borodin2005online}. 	Unfortunately, in competitive analysis, online algorithms are often too weak compared to their offline adversaries that know the whole input sequence from the beginning. Particularly, the effect of online scheduling algorithms making sub-optimal decisions can accrue over time, often rendering their competitive analysis pointless.

To overcome this pessimistic landscape, in the mid-'90s, a resource augmentation model was introduced to online scheduling \cite{KalyanasundaramP00,phillips2002optimal}.  For example, if the augmented resource is speed, which is the most commonly augmented resource, the scheduling algorithm is compared to the adversary that can process jobs slightly slower. This beyond-worst-case analysis enabled meaningful study of a large number of scheduling problems, which would otherwise not be possible \cite{PruhsST04}. New analysis tools have been developed for online scheduling, such as potential functions \cite{ImMP11} and dual fitting \cite{AnandGK12,GuptaKP12}, which have resulted in a plethora of new exciting results 
for various online scheduling problems under the resource augmentation model.  

	\smallskip
	Despite all this success, our understanding of online scheduling can be considerably improved. First of all, we only have limited ways to design online scheduling algorithms systematically. For example, the powerful online primal-dual framework \cite{Niv-primaldual} is mainly applicable for monotone packing and covering problems but a large number of scheduling problems are not as such.  Dual fitting tries to analyze a candidate algorithm by setting the primal and dual LP variables according to the algorithm's behavior. Thus, designing online scheduling algorithms still requires much creativity, often non-trivially combining commonly used scheduling strategies via repeated trial and error. There is a need to find new algorithm design strategies that can keep pace with increasingly complicated scheduling environments.

Up to date, Proportional Fairness (PF) has been the only online meta scheduling algorithm known. The algorithm seeks to maximize the product of how fast jobs get processed at each time (subject to the scheduling constraints), and it generalizes Round-Robin.\footnote{In Round-Robin scheduling for a single machine, jobs are processed in turns after each unit of processing time. We assume that each unit of time is infinitesimally small. Under this assumption, the algorithm is equivalent to Processor Sharing, where all active jobs are processed at an equal rate at all times. In the single machine, PF solves $\max \prod_{j = 1}^n z_j$ s.t. $\sum_j z_j \leq 1$ and $\bz \geq 0$, thus yields the same assignment as Round-Robin.} It was only recently shown that PF is  $O(1/ \eps)$-competitive for a large class of scheduling problems, given $(e+\eps)$-speed over the adversary \cite{ImKM18}. Specifically, it was shown that if we view jobs as agents and how fast they get processed as their utility, PF finds a fair resource allocation that leads to a competitive schedule for the total flow time objective. The algorithm was not new, but the main contribution was to demonstrate its effectiveness as a meta scheduling algorithm---the guarantee holds true for various problems as long as the allocation is monotone under PF.\footnote{This means that when another job is added to the PF allocation, the share or speed allocated to each existing job decreases. } 

In this paper we aim to discover more meta scheduling algorithms that work for various scheduling environments.  
 We  focus on perhaps the most widely used metric, namely average (weighted) flow time, which measures the average time a job stays in the system from its arrival until completion. 

\medskip
\noindent 
\textbf{Roadmap.} 
Due to the general applicability of our approach, we will employ somewhat heavy notation. Further, we use concepts from market equilibria, which are not commonly used in the online scheduling/algorithm literature. To make the paper more accessible, we first describe our results informally (Section~\ref{sec:informal}) and then illustrate them using a single machine scheduling example (Section~\ref{sec:illustration}).
Readers unfamiliar with online scheduling are strongly recommended to \textbf{read Section~\ref{sec:illustration} before Section~\ref{sec:informal}}. There is some intentional overlap between these sections.
    
We then formally define our problem and introduce preliminary concepts about online scheduling and substitute valuations in Section~\ref{sec:prelim}. The linear substitute results and examples discussed in Section~\ref{sec:ls} are, to our best knowledge, new. We formally present our algorithms and theorems in Section~\ref{sec:formal}. Section~\ref{sec:main-proofs} is devoted to proving the main theorems. Section~\ref{sec:theorem-applications} demonstrates how these theorems are applied to the applications discussed in Section~\ref{sec:applications}. Other related work is discussed in Section~\ref{sec:other-related}. Appendix~\ref{sec:rt} discusses solving the residual optimum problem efficiently and implementing our 
 meta algorithm efficiently. All missing proofs are provided in the Appendix.

\subsection{Informal Description of Our Results and Contributions}
    \label{sec:informal}

We seek to find a meta scheduling algorithm that works for a large class of problems that fall into the polytope scheduling, which encompasses numerous scheduling problems \cite{ImKM18}.

\smallskip
\noindent
\textbf{Polytope Scheduling}.  
In the Polytope Scheduling Problem (PSP), we are given a set of constraints that form a downward-closed\footnote{If $\bf 0 \leq \bz' \leq \bz$ and $\bz \in \cP$, then $\bz' \in P$.} polytope $\cP$ (more generally a convex body), which constrains the feasible processing rate vector at each time: each job $j$ can be processed at a rate of $z_j$ if and only if $\bz := \{z_j\}_{j} \in \cP$.
For example, we have $\cP = \{\bz \; | \; \sum_j z_j \leq 1 \mbox{ and } \bz \geq 0\}$ in the single machine scheduling, which says jobs can be processed at a rate of up to 1 in total.  We assume that $\cP$ remains fixed over time. In the online setting, each job $j$ arrives at time $r_j$ with (original) processing size $p_j$ and weight $w_j$.  Note that the online scheduler learns about job $j$, including $p_j$ and $w_j$, only when it arrives. Preemption is allowed and the goal is to minimize total (weighted) flow time.

\smallskip

To develop a general approach to this problem, we get an inspiration from the Shortest Remaining Processing Time (SRPT): The SRPT algorithm, which prioritizes jobs with the shortest remaining time, is well-known for optimally minimizing the average (unweighted) flow time of jobs on a single machine. Interestingly, the scheduling literature implicitly suggests a possible interpretation of SRPT as a form of gradient descent \cite{ChadhaGKM09,AnandGK12,BansalCP13,BansalPS09}. 
The residual optimum \cite{ChadhaGKM09,AnandGK12}, informally, is the min cost for  completing all remaining jobs, assuming no more jobs arrive. Taking \emph{gradient descent on the residual objective}, meaning that we process the jobs that \emph{decrease the residual optimum the most}, is natural because we complete all the remaining jobs when the residual optimum becomes zero.
Further, the spirit of gradient descent is present in the analysis based on potential functions \cite{ChadhaGKM09,ImMP11,EdmondsP12}.

Inspired by this, we explore \emph{gradient descent as a meta algorithm for online scheduling}. Our approach is general and streamlined: i) Algorithm Execution: Solve an LP relaxation of the offline residual optimum and perform gradient descent; and ii) Performance Guarantee: If a valuation function coming from the scheduling constraints is a substitute function, gradient descent is competitive.

Below, we outline our results and contributions.

\begin{enumerate}
    \item \textbf{First Scalable Meta-algorithm.} We demonstrate the effectiveness of gradient descent as a \emph{meta-algorithm}. We use gradient descent to give the \emph{first scalable algorithm} for various scheduling problems, such as matroid scheduling, generalized flow scheduling, generalized arbitrary speed-up curves, etc.; see Section~\ref{sec:applications} for the problem definition. An algorithm is called scalable if it is $O(1_\eps)$-competitive with $(1+\eps)$ factor more speed given over the adversary.  Scalable algorithms are essentially the best one can hope for problems that have strong lower bounds on competitive ratio, and all the above problems are as such.    Since the aforementioned PF (Proportional Fairness) requires at least $(2+\eps)$-speed to be $O(1_\eps)$-competitive  \cite{Edmonds99,ImKM18}, our result gives the first scalable meta algorithm. We also note that we give the \emph{first non-trivial results} for generalized flow scheduling and generalized arbitrary speed-up curves.  For a comparison of our results to prior work, please see Table~\ref{table:summary}.

\end{enumerate}

\begin{table}
    \centering
    \begin{tabular}{|c|c|c|}
    
        \hline
        Problems & Previous Results& Our Results \\
        \hline\hline
        Matroid Scheduling & $e+\eps$ \cite{ImKM18} & $1+\eps$\\
        \hline
        Generalized Flow Scheduling & unknown  & $1 + \eps$\\
        \hline
        Unrelated Machines & $1 + \eps$ \cite{AnandGK12}& $1 + \eps$\\
        \hline
        Speed-up Curves on Partitioned Resources &  unknown  & $1 + \eps$ \\
        \hline
    \end{tabular}
    \caption{Comparison  of our results to prior work that summarizes the speed needed to be $O(1_\eps)$-competitive for minimizing total weighted flow time.  }
    \label{table:summary}
\end{table}

 As discussed earlier, the concept of gradient descent is  implicitly embedded within the online scheduling literature. In this paper, we present a general approach to using gradient descent and establish a sufficient condition that ensures its competitiveness. Rather than showcasing gradient descent's applicability to several scheduling problems by case-by-case analysis, we seek a general approach that can be used to solve a wide range of scheduling problems.

\begin{enumerate}
      \setcounter{enumi}{1}
    \item \textbf{Reduction to Obtaining an Offline (LP) Relaxation with Supermodularity.} We \emph{reduce} designing online scheduling algorithms to \emph{showing supermodularity of the offline residual optimum problem}. The residual optimum is the minimum weighted completion time objective to complete the remaining jobs, assuming no more jobs arrive.
    \begin{enumerate}
    
        \item
        The (approximate) residual optimum need not have a closed-form mathematical expression. Instead, it can be computed using mathematical programming, such as time-indexed linear programming (LP). This allows us to directly derive a gradient descent-based schedule. Consequently, our approach has a wide range of potential applications. In contrast, traditional potential function analysis relies on finding closed-form expressions, which has been a major hurdle in designing and analyzing online scheduling algorithms \cite{ImMP11}. Additionally, dual fitting is primarily an analysis tool, rather than a method for algorithm design.

        \item We show that \emph{supermodularity} of the residual optimum is a sufficient condition for gradient descent to be scalable. Intuitively, supermodularity implies adding a new job can only increase the residual optimum more when there are more existing jobs. Towards this end,  we generalize the beautiful potential function introduced by \cite{ChadhaGKM09} and crystallize the analysis by identifying supermodularity as the key to the analysis.             
    \end{enumerate}
\end{enumerate}

However, it is in general 
 challenging to show the residual optimum is supermodular. We address this issue by showing that it suffices to check the constraints that characterize the scheduling problem for a large class of problems. For example, 
 for scheduling $m$ identical parallel machines (we can process up to $m$ jobs at a rate of one at each time), we only need to 
 take a close look at the valuation function $v(\bx) = \max \sum_{0 \leq \by \leq \bx} \textbf{a} \cdot \by$ s.t. $|| \by ||_1 \leq m$ and $||\by||_\infty 
 \leq 1$ for some vector $\textbf{a} \geq 0$. 
 Intuitively, the valuation function means this: Given a set of jobs $J'$, what is the maximum value we can get by choosing $m$ jobs from $J'$ when each job $j$ has a value of $a_j$? Obviously, it is obtained by choosing $m$ jobs with the highest value of $a_j$. Alternatively, one can view the valuation as the weighted rank function of a uniform matroid of rank $m$.

    \begin{enumerate}
             \setcounter{enumi}{2}
   
    \item \textbf{Substitutes Empower Gradient Descent.} 
    We show that if maximizing a linear function subject to the constraints is a substitute function, then the residual optimum is supermodular. A valuation function is called \emph{gross/linear substitutes} (GS/LS) \cite{KelC82,GulS99,LehLN06,milgrom2009substitute} if increasing a resource's price does not reduce the demand for other resources. Consider the above valuation function $v(\bx)$ for $m$ identical machines: Items are individually priced at $\bq$ and valued at $\textbf{a}$ by you. If you buy $m$ items (say $1, 2, \ldots, m$) with the maximum value of $a_j - q_j$ (so, your utility is the valuation of the items you buy minus the payment for that), increasing item 1's price might make you buy a different item instead, but you would still want items $2,3, \ldots, m$. This shows that $v(\bx)$ is a gross substitute; linear substitute is a continuous extension of this concept.

    Substitute valuations play a crucial role in understanding market equilibria in Walrasian markets where agents aim to maximize their utility, defined as the difference between their valuation of purchased items and the price paid. Notably, Walrasian markets admit an equilibrium when all agents have GS valuations. An equilibrium is an allocation of items to agents alongside with a price vector where every agent gets an allocation maximizing their utility and all items are allocated. Furthermore, LP duality implies that this equilibrium maximizes the total valuation of all agents, also known as social surplus.

    We view the residual optimum problem (with a sign flip) as the social surplus maximization problem in Walrasian markets. Conceptually, we introduce a phantom agent for each time slot, where each job's unit processing is a distinct item.  If the valuation function is gross/linear substitutes (GS/LS), we prove that the residual optimum is supermodular. This leverages the fact that GS/LS valuations are a subclass of submodular functions and closed under max-convolution \cite{LehLN06,milgrom2009substitute}. This sufficient condition simplifies supermodularity checks. For example, in matroid scheduling, showing the valuation function (essentially the weighted rank) is GS/LS is easier than proving supermodularity for the entire residual optimum.

    Many scheduling scenarios involve jobs with processing rates dependent on allocated resources. To address this heterogeneity, linear substitutes (LS) \cite{milgrom2009substitute} become more relevant. Linear substitutes provide a natural continuous extension of GS. Less studied than GS, LS exhibits distinct behavior. We present novel results for LS, including Theorem~\ref{thm:gs-concave-closure}, which shows a concave closure of a GS valuation function is LS and Theorem~\ref{thm:gf-ls}, which shows generalized flow being LS. These findings hold independent interest. See Sections~\ref{sec:gs} and \ref{sec:ls} for more details and formal definitions. 
\end{enumerate}

Our work builds upon and extends numerous groundbreaking results that have been established throughout the history of scheduling research, e.g. \cite{HallSSW97,BansalPS09,ChadhaGKM09,EdmondsP12,AnandGK12,GuptaKP12,BansalCP13}. As a consequence, we demonstrate the efficacy of gradient descent as a meta-algorithm, which has perhaps been considered a possibility by many. This work perhaps suggests that SRPT, a common topic covered in algorithms and operating systems courses, should be taught alongside its connection to gradient descent.

\subsection{Overview of Our Approach}
\label{sec:illustration}

We illustrate our approach and its benefits using the single machine scheduling problem as a running example. 

\noindent 
\textbf{Single machine scheduling}. A set of $n$ jobs arrive over time to be processed on a single machine/processor and the scheduler is not aware of jobs that have not arrived yet. Job $j$ has a processing time (size) $p_j$ and arrives at time $r_j$. The scheduler is clairvoyant, meaning that it learns $p_j$ upon $j$'s arrival. The single machine can process at most one job or equivalently one unit of jobs at any point in time. Thus, letting $z_j$ denote the rate we process job $j$ at a time, feasible schedules are constrained by $\bz \in \cP = \{ \bz \geq 0 \; | \; \sum_j z_j \leq 1\}$.\footnote{If $\bz$ is fractional, we can implement it by preemptive time sharing.} A job $j$ is completed when it gets processed for $p_j$ units of time in total. Preemption is allowed, meaning that the scheduler can preempt a job being processed and resume it from where it left off. If we denote $j$'s completion time as $C_j$, its flow time\footnote{Flow time is also known as response time or waiting time.} is defined as $F_j := C_j - r_j$. Suppose the objective is to find an online schedule that minimizes total (or equivalently average) flow time, i.e., $\sum_j F_j$.

For this simple problem, as mentioned before, at each time $t$, SRPT (Shortest Remaining Processing Time) processes the job $j$ with the minimum $p_j(t)$, which denotes $j$'s remaining size at time $t$, that is, $p_j$ minus the amount of time for which it has been processed until time $t$. 
SRPT optimally minimizes the total flow time---intuitively, it minimizes the overall delay of job processing by processing the job it can complete the earliest. 

\subsubsection{Supermodularity Makes Gradient Descent Work (Theorem~\ref{thm:main-clairvoyant-int})}
\label{sec:overview-1}

Let us first interpret SRPT as a form of gradient descent (GD). Suppose the current time is $t$. Since we would like to minimize the total flow time, $\sum_j (C_j - r_j)$, by ignoring the cost of each job $j$ already incurred (how long it has waited), $t - r_j$, pretending the current time is 0, and assuming no more jobs arrive in the future, let us try to measure the minimum cost we should pay, which is termed the \emph{residual} optimum. The residual optimum then  becomes the minimum total completion time of jobs with sizes $\{p_j(t)\}_j$:
\begin{equation}
    \label{eqn:single}
f(\bp(t)) =   n p_{\pi(1)}(t)+ (n-1) p_{\pi(2)}(t) + (n-2) p_{\pi(3)}(t) + \cdots + 1 p_{\pi(n)}(t),
\end{equation}
where jobs are ordered so $\pi(1)$ has the smallest remaining size and $\pi(n)$ has the largest; we hide $\pi$'s dependency on $t$ for notational convenience. Note that computing $f(\bp(t))$ is a purely \emph{offline} problem. In general, $f(\bx)$ does not depend on a specific algorithm but only on the scheduling problem specified by the constraints, $\cP$.
Given the constraint that $\sum_j p_j(t)$ can be reduced by one at a time, GD processes job $\pi(1)$, 
because it decreases $f(\bp(t))$ the most.  This \emph{coincides with SRPT}. Thus, GD decreases the residual objective by exactly $n$, the number of jobs currently alive, at a time. Note that the algorithm designer  \emph{only needs to know the residual objective to derive GD}. While $f(\bp(t))$ has a nice closed form in the single machine case, it is often NP-hard to compute exactly. In such cases, we can use linear or convex programming (Sections~\ref{sec:illustration-2} and \ref{sec:Substitutes-Empower-GD}).

Let $\bp^O(t)$ denote the remaining job sizes under the adversary's schedule. Let $A$ denote GD algorithm (in this case, SRPT) and $O$ the adversary. Similarly, $A(t)$ and $O(t)$ imply the set of jobs currently alive at time $t$ in their schedule; we may drop $t$ for notational brevity.

For analysis, we use the following \emph{generic} potential function $\Phi(t)$, which is directly derived from $f$ in a \emph{black-box} manner: \vspace{-0.5mm}
$$\Phi(t) := \frac{2}{\eps}\left(f(\bp(t)) - \frac{1}{2}f(  \bp(t) \ || \ \bp^O(t)) \right)\vspace{-0.8mm},$$
where $\bp^O(t) := \{ \bp^O_j(t) \}_{j}$ denotes a vector of remaining job sizes in $O$'s schedule and $||$ denotes concatenation of two vectors. In other words, $f(  \bp(t) || \bp^O(t))$ computes the residual optimum pretending that each job exists in two separate copies, with remaining sizes $p_j(t)$ and $p^O_j(t)$, respectively. 
As mentioned before, this abstracts away the beautiful potential function used in \cite{ChadhaGKM09} for unrelated machines.\footnote{The potential function in \cite{ChadhaGKM09} looks somewhat different at first sight as it measures the aggregate increase of the residual optimum when each job in $A$ is added to $O$, and vice versa. But, by rearranging terms carefully, one can observe that our potential function is essentially equivalent to theirs in the setting of unrelated machines. } 

It is known \cite{ImMP11} that an online algorithm can be shown to be $O(1)$-competitive (with some speed augmentation) if the potential satisfies the following:
\begin{itemize}
\item (P1). $\Phi$ does not increase when a job arrives or is completed by $A$ or $O$.
\item (P2). $\frac{\dd}{\dd t} \Phi(t) \leq - |A(t)| + O(1) |O(t)|$ at all the other times.
\end{itemize}
This is because $\Phi(t  = 0) = \Phi(t = \infty) = 0$ (either no jobs have arrived or all jobs have been completed) and due to (P1) and (P2), we know that $0 \leq \int_{t = 0}^\infty \frac{\dd}{\dd t} \Phi(t) \dd t \leq  - \int_{t = 0}^\infty  |A(t)| \dd t+ O(1) \cdot \int_{t = 0}^\infty |O(t)| \dd t$
(the integral is taken over the whole time except when jobs arrive or are completed.) Note that $\int_{t = 0}^\infty  |A(t)| \dd t$ is $A$'s objective since every job alive at time $t$ incurs a unit cost at the time. Similarly, $\int_{t = 0}^\infty  |O(t)| \dd t$ is $O$'s objective. Thus we can establish $A$'s $O(1)$-competitiveness. 

Unlike \cite{ImMP11} that repeats trial-errors to find a potential function that satisfies these properties to analyze an online scheduling algorithm that a creative algorithm designer came up with, our approach \emph{reduces the whole algorithm design and analysis to checking if \textnormal{$f(\bx)$} satisfies supermodularity}. More formally, let $g(Y) := f(\bx \odot \bone_Y)$\footnote{The operation $\odot$ represents component-wise multiplication.} be the residual optimum only considering jobs in $Y$ when each job $j$'s (remaining) size is $x_j$.  Supermodularity  means 
$g(X \cup \{j\}) - g(X) \geq g(Y \cup \{j\}) - g(Y)$  for any $j$ and $X \supseteq Y$. In other words, it implies that the increase in the residual optimum caused by the addition of a new job can only be larger if there are more jobs.

We show that we can obtain (P1) from the supermodularity of $f$ by \emph{crystallizing} the analysis of \cite{ChadhaGKM09}. Further, (P2) follows from the nature of GD. In general, we can show that GD just follows the residual optimum schedule (until a new job arrives) and therefore decreases every alive job's projected completion time (counted from now) by one. Thus, the residual optimum (the minimum total completion time) decreases at a rate of $s |A(t)|$ when  GD is given speed $s > 1$, i.e., $\frac{\dd}{\dd t} f(\bp(t))  =  - s |A(t)|$.\footnote{Speed augmentation is a common beyond-worst-case model where the algorithm with slightly higher speed is compared to the adversary with speed 1.}
By the nature of GD, this is also the maximum possible decrease rate, which is the speed times the number of jobs alive. By thinking of the processing being done on $\bp(t) || \bp^O(t)$ as a feasible schedule with speed $s+1$ (combining GD's speed $s$ and the adversary's speed 1), we have 
$\frac{\dd}{\dd t} f(  \bp(t) || \bp^O(t)) \geq - (s+1) (|A(t)| + |O(t)|)$. Thus, (P2) follows  with $s = 1+\eps$. 

In summary, we demonstrated that GD can be naturally derived from the residual optimum and its competitiveness follows from the residual optimum's supermodularity. This argument extends to an arbitrary polytope scheduling problem (Theorem~\ref{thm:main-clairvoyant-int}).

\subsubsection{Substitutes Empower Gradient Descent (Theorems~\ref{thm:procesing-rate-view} and~\ref{thm:resource-view-2})}
    \label{sec:illustration-2}

Theorem~\ref{thm:main-clairvoyant-int} establishes supermodularity as a crucial structural property that ensures the effectiveness of GD. However, demonstrating supermodularity of the residual optimum is challenging for arbitrary problems, while trivial for the single-machine case. In fact, computing the residual optimum is NP-hard for many scheduling scenarios. 
Proving supermodularity without access to an efficient value oracle for the function  appears practically impossible.

To address these challenges, we can first consider an approximate residual optimum, which can be efficiently obtained by solving a time-indexed LP. Time-indexed LPs are widely used in the literature, e.g.  \cite{HallSSW97,AnandGK12}. We maintain our focus on single machine scheduling for illustration.
\begin{align*}
\min  \sum_{j, \tilde t \geq 1}  \frac{\tilde t}{p_j} z_{j \tilde t} \quad \mbox{ s.t. } \quad \sum_{j} z_{j \tilde t} \leq 1 \; \forall \tilde t \geq 1; \quad
\sum_{\tilde t \geq 1} z_{j \tilde t} \geq p_j(t) \; \forall j; \quad
\bz \geq 0 
\end{align*}
Here, we use $\tilde t$ in the LP to distinguish the time variables\footnote{Here, a unit time is assumed to be sufficiently small.} from actual time $t$. Variable $z_{j\tilde t}$ implies that job $j$ is processed at a rate of $z_{j\tilde t}$  at time $\tilde t$. The second constraint ensures that all jobs must be eventually completed. In general, the first constraint can be replaced with $\{z_{j \tilde t}\}_j \in \cP$, which encodes each scheduling problem's constraints. 
The objective is fractional since when job $j$ is processed at time $\tilde t$ by one unit, we pretend $\frac{1}{p_j}$ fraction of it is completed by the algorithm. It is well known that an algorithm that is scalable for the fractional objective can be  converted into one that is scalable for the integer objective 
(Lemma~\ref{thm:conversion}). 
Thus, it suffices to derive GD from the fractional residual optimum. We show that we can implement GD as follows: Solve the LP and follow the solution, i.e., 
process job $j$ at a rate of $z_{j\tilde t}$ at time $t + \tilde t$, until a new job arrives. We show how to make the LP compact in  Appendix~\ref{sec:rt} to make GD run in polynomial time. 
 
However, we now face another challenge: The optimum LP solution lacks a nice structure. Despite this, how can we show the supermodularity of the optimum LP objective as a set function of jobs? To show supermodularity, we make a connection to market equilibria. To this end, by flipping the sign (and adding some large constants $B$), we can convert the problem into a maximization problem as follows. Now the goal becomes showing \emph{submodularity} of the objective. 
\begin{align*}
\max  \sum_{\tilde t \geq 1} \sum_{j}  (B - \frac{\tilde t}{p_j}) z_{j \tilde t} \quad \mbox{ s.t. } \quad \sum_{j} z_{j \tilde t} \leq 1 \; \forall \tilde t \geq 1; \quad
\sum_{\tilde t \geq 1} z_{j \tilde t} \leq p_j(t) \; \forall j; \quad
\bz \geq 0 
\end{align*}

We consider the following (Walrasian) market: 
Items correspond to the remaining job sizes. Each item $j$ exists in $p_j(t)$ units. Each time $\tilde t$ has an agent with a valuation function,
\begin{align*}
v_{\tilde t}(\bx) := &\sum_{j}  (B - \frac{\tilde t}{p_j}) x_{j},
\end{align*}
where the function is defined over all $\bx \geq 0$ such that $||\bx||_1 \leq 1$.\footnote{Technically we need to extend the domain of $v_{\tilde t}$ to encompass all non-negative vectors $\bx \geq 0$. However, we will disregard this for the sake of illustration in this section.} Suppose the items are priced at $\bq$ (each unit of job $j$ is priced at $q_j$). To maximize her quasi-linear utility (valuation minus payment), the agent at time $\tilde t$ buys one unit of the item/job with the maximum value of $B - \tilde t / p_j - q_j$. 

A market equilibrium is an allocation of items to agents along with a price vector where every agent gets an assignment maximizing their utility and all items are sold out. The maximum total valuation of all agents is called the maximum social surplus, and it is well known that an \emph{equilibrium results in the maximum social surplus} from LP duality.

With this interpretation, the question becomes whether the max social surplus is submodular as a set function of items/jobs. Gross substitute (GS) valuations come to the rescue here. As mentioned before, a valuation is GS if increasing one item's price does not decrease the demand for other items in the agent's optimal choice. It is trivial to see $v_{\tilde t}$ is GS: The agent buys the item $j = \arg \max_{j'} B - \tilde t / p_{j'} -q_{j'}$ and increasing any other job's price does not affect her choice; the GS condition trivially holds if we increase  $j$'s price.

The beauty of GS lies in the fact that \emph{if every agent has a GS valuation, then the market equilibrium exists} and can be found by a tatonnement process where prices are monotonically increased from zero ensuring every item is demanded by at least one agent. In such a market, the agents with GS valuations collectively behave as a single agent with GS valuation. This aligns with the fact that GS functions are closed under max-convolution \cite{LehLN06}, resulting in \emph{the max social surplus being GS}.  Since \emph{GS is a subclass of submodular functions} (but not vice versa) \cite{GulS99,LehLN06}, we can conclude that the max social surplus is submodular as a set function of items/jobs, as desired.

\subsection{Applications}
    \label{sec:applications}

We outline some applications of GD.
We give scalable algorithms for all these applications---by using Theorem~\ref{thm:procesing-rate-view} for the first two and by using Theorem~\ref{thm:resource-view-2} for the third. The last application uses Theorems~\ref{thm:main-clairvoyant-int} and \ref{thm:procesing-rate-view} with a small tweak. 
The details, including how the scheduling problems are captured by the polytope scheduling (or equivalently multidimensional scheduling) are deferred to Section~\ref{sec:theorem-applications}.
\emph{No scalable} algorithms were known for any applications below except unrelated machines, prior to our work. In particular, \emph{no positive} algorithms existed for generalized flow scheduling and speed-up curves on partitioned resources.

\paragraph{Matroid Scheduling \cite{ImKM18,im2019matroid}.} In this problem, we are given a matroid $\+M = (J, \+I)$ where  $\+I$ is a collection of job sets. The pair $\+M = (J, \+I)$ is called a matroid if it satisfies: if $I' \subseteq I$ and $I \in \cI$, then $I' \in \cI$; and if $|I| < |I'|$ and $I, I' \in \cI$, then there exist $j \in I' \setminus I$ such that $I \cup \{j'\} \in \cI$. For numerous examples of matroids, see~Chapter 8 of \cite{Schrijver}.  At any time instant $t$, a feasible schedule is an independent set $I$ in $\cI$. If we schedule an independent set $I$ of jobs, then each job in $I$ gets processed at a rate of 1. We give a $(1+\eps)$-speed $O(1 / \eps^2)$-competitive algorithm. Previously, \cite{ImKM18} gave a $(e+\eps)$-speed $O(1 / \eps^2)$-competitive algorithm.

Matroid scheduling encompasses many scheduling problems. As discussed earlier, a uniform matroid of rank $m$ can model parallel identical machines (at every time step, we can process up to $m$ jobs). A gammoid captures single-source routing in a directed graph with a single source where jobs are processed when there are vertex-disjoint paths from the source to their corresponding nodes. A cographic matroid captures edge maintenance scheduling in an undirected graph. When performing service on an edge, it becomes unusable at the moment. However, concurrent maintenance work on some edges is possible if the remaining graph stays connected without those edges. In this scenario, maintenance jobs arrive at edges.

\paragraph{Generalized Flow Scheduling.} This generalizes the gammoid scheduling, a special case of the matroid scheduling. We are given a network $G=(V,E)$, where $J \subseteq V$ is the set of source nodes. Arc $e$ has capacity $u_e$ and flow gain factor $\gamma_e$; a flow of value $f$ becomes $\gamma_e f$ after flowing through $e$. Gain factors distinguish generalized flow from standard flow and are extremely useful for modeling \emph{heterogeneity}, as we can have a different factor for each edge.
Job $j$ gets processed at a rate equal to its (outgoing) net flow. It is critical for cloud computing to manage heterogeneous computing resources connected via a network; see \cite{jennings2015resource}. A job gets processed by communicating with the resources and the heterogeneous benefit between jobs and resources can be captured by edge gain factors, e.g. \cite{im2015spaa-tree}.  We give a $(1+\eps)$-speed $O(1 / \eps^2)$-competitive algorithm. 
No prior work exists for generalized flow scheduling. 

\paragraph*{Speed-up Curves on Partitioned Resources.} 
This generalizes the arbitrary speed-up curves model on uniform processors \cite{S0097539797315598,edmonds2003non,EdmondsP12}, which elegantly models how the parallelization effect degrades as we use more processors to speed-up processing a job. 
There are $D$ different divisible resources, where the $i$th resource exists in $m_i$ units. 

At every time, resources are allocated to jobs.
Each job has a valuation function of the following form: Resources are partitioned into $k$ disjoint groups $\+G = \set{G_1, ..., G_k}$; partitioning can be different for each job. Each group $G_i \in \+G$ is associated with a monotone univariate concave function $g_i: \R_+ \to \R_+$ with $g_i(0) = 0$. Then, the processing rate of job $j$ is $v_j(\bx_j) = \sum_{i=1}^{k} g_i(\sum_{d \in G_i} a_{jd} x_{jd})$,  where $a_{jd} \geq 0$, when $j$ receives  $\bx_j = (x_{j1}, x_{j2}, ..., x_{jD})$ resources. Concave functions are used to model the \emph{diminishing returns of parallelization} as we add more computing resources. The former arbitrary speed-up curves is a special case of this model when $D = 1$.  
We give a $(1+\eps)$-speed $O(1 / \eps^2)$-competitive algorithm. No prior work exists for this problem.

To see why this general model is useful, suppose $D = 2$ and consider three types of valuation functions:  $x_{j1}^{.5}$, $x_{j2}^{.75}$, and $(x_{j1}+ x_{j2})^{.5}$, for an example. The first means the job can only use processors in $G_1$, and the second means the job can only use those in  $G_2$, but the third can use any processors. This can model the \emph{restriction of processors} available for each job's processing. CPU partitioning is common in practice \cite{CPU-partitioning}.

\paragraph{Unrelated Machines \cite{lenstra1990approximation,sethuraman1999optimal,skutella2001convex,schulz2002scheduling,ChadhaGKM09,AnandGK12,Devanur014}.}  There is a set of parallel machines. Each job $j$ gets processed at a rate of  $\lambda_{ij}$ when scheduled on machine $i$; this is equivalent to $j$ having a processing time $p_{j} / \lambda_{ij}$ if scheduled only on machine $i$. Each job can be processed by at most one machine at any point in time. 

We first \emph{reproduce} the seminal result in \cite{AnandGK12}, which gave a $(1+\eps)$-speed $O(1/ \eps)$-competitive algorithm for minimizing total weighted flow time (Section~\ref{sec:unrelated-immediate}). We use the same algorithm of \cite{AnandGK12},  which immediately dispatches an arriving job to one of the machines and keeps the job on the same machine until its completion. Perhaps, the analysis of \cite{ChadhaGKM09,AnandGK12} becomes more transparent through the lens of our framework, supermodularity and gradient descent. 

While \cite{AnandGK12} gave essentially the best possible result and we can recover it, the work left an intriguing question from a technical point of view. Although immediate-dispatch and non-migration are highly desirable, they were strongly required in the analysis; it was not known whether we could still achieve a competitive algorithm if we migrated jobs in the middle of the execution for better load balancing. This has been puzzling in the research of online scheduling algorithms.

Our approach answers this question positively by providing a scalable algorithm that changes job assignments when a new job arrives to achieve better load balancing. Furthermore, our algorithm is purely \emph{Markovian}, meaning it only needs to remember the amount of jobs processed, eliminating the need to track job assignments. Specifically, we present a $(1+\eps)$-speed, $O(1 / \eps)$-competitive algorithm for unweighted total flow time (Section~\ref{sec:unrelated}) and a $(1+\eps)$-speed, $O(1 / \eps^3)$-competitive algorithm for weighted total flow time (Section~\ref{sec:unrelated-weighted}). While this result has a worse competitive ratio for the weighted case and is migratory, we believe the Markovian property has the potential for broader applications; see Section~\ref{sec:future}.

\subsection{Comparison with the Previous Work}
\label{sec:comparison}

As mentioned, Proportional Fairness (PF) \cite{bargaining,KMT,drf,JainV10} is the only online scheduling meta-algorithm known other than our gradient descent (GD). PF finds a fair allocation in Fisher markets, where each agent (job $j$) starts with an endowment (job $j$'s weight $w_j$) and buys some resources to maximize its valuation (how fast it gets processed at the moment). PF solves $\max  \sum_j w_j \log v_j(\by_j)$ s.t. $\sum_{j} \by_j \leq \bone$, $\by \geq 0$ where $j$ has valuation $v(\by_j)$ when it receives $\by_j$ resources from the supply vector of divisible resources.\footnote{Equivalently, the scheduling problem can be expressed as a polytope $\cP$ that describes how fast jobs can be processed, i.e., $\cP := \{ x_j = v_j(\by_j) \; | \; \sum_j \by_j \leq 1, \by \geq 0\}$.}  The solution can be viewed as an equilibrium where the market is cleared under the corresponding dual prices of the resources. Note that PF 
finds an instantaneous fair allocation without factoring in job sizes, hence is non-clairvoyant. The main contribution of \cite{ImKM18} lies in showing that PF is $O(1_\eps)$-competitive with $(e+ \eps)$-speed for the total weighted flow time objective for  scheduling problems when the allocation under PF is monotone---a job's speed/share can only decrease when other jobs arrive.

Although PF is a remarkable meta-algorithm generalizing RR (Round-Robin), it exhibits limitations from an algorithm design point of view. It is because we lose all the nice properties of PF when we deviate from the equilibrium it finds. Thus, it is not clear how to reduce the speed requirement of the algorithm; even RR requires $(2+\eps)$-speed over the adversary for competitiveness. 
Moreover, it disregards job sizes in scheduling decisions, a crucial aspect for designing machine learning-augmented algorithms that leverage job size estimates, a topic of recent significant interest \cite{purohit2018improving,im2021non,azar2021flow,AzarLT22,LindermayrM22}. 

Driven by these considerations, we investigate GD as a meta-algorithm, as the spirit of GD is often ingrained in online scheduling. Furthermore, unlike PF, GD offers flexibility. Implementing GD requires an estimate of the remaining cost. One common method of obtaining this estimate is to first fix a candidate algorithm and derive a reasonably accurate estimate of its remaining cost, which is converted into a potential function of a nice mathematical form \cite{ImMP11}. However, this approach demands algorithmic creativity, which involves a lot of trial and error. 

Alternatively, one can utilize the (approximately) optimal remaining cost, also known as the residual optimum. Note that the residual optimum is not tied to specific algorithms. However, computing the residual optimum is often NP-hard, rendering it challenging to understand its evolution. One approach is employing an LP relaxation, but this results in an LP solution, which commonly lacks structure.

Our work draws upon critical ideas from the seminal work of  \cite{ChadhaGKM09}, which presented the first scalable algorithm for unrelated machines; see~\ref    {sec:applications} for more details.  They introduced an elegant potential function, and our meta-potential function is its generalization. 

Generalizing their work poses challenges. Their algorithm executes SRPT (or its weighted version) on each machine following the initial assignment, and SRPT admits a closed-form expression for the residual optimum. Their potential function computes the residual optimum for a mixture of jobs from the algorithm and the adversary. However, determining how to proceed when the residual optimum lacks a convenient closed-form mathematical expression remains unclear. 

We address the challenges as follows. First, we abstract away the potential function argument presented in \cite{ChadhaGKM09} and demonstrate that their potential function can be generalized to work solely with supermodularity when gradient descent is employed. Furthermore, by establishing a novel connection to Walrasian markets, we simplify the task of verifying whether the residual optimum is supermodular.

The connection we make between gradient descent to Walrasian markets is less obvious than PF's connection to the Fisher markets. It is well-known that PF finds a market-clearing equilibrium in the Fisher market, but it is unclear how markets are related to the residual optimum. In Walrasian markets, agents have no endowments and they try to maximize their quasi-linear utility, which is their valuation minus the payment for the resources bought. We use gross substitutes\footnote{It is important to note that they have different definitions in Fisher and Walrasian markets.} and linear substitutes to find sufficient conditions to obtain supermodularity.

Finally, PF and GD are somewhat complementary to each other. 
PF makes scheduling decisions without knowing job sizes but uses $(e+\eps)$ speed to be $O(1)$-competitive. On the other hand, GD only needs $(1+\eps)$ speed augmentation. Further, there are problems for which one is $O(1)$-competitive (with $O(1)$-speed augmentation), but not the other. For example, for broadcast scheduling PF is $O(1)$-competitive, but not gradient descent (Appendix~\ref{sec:gd-bad-bdcast}). For unrelated machines, PF is believed not to be $O(1)$-competitive \cite{ImKM18}, but we show gradient descent is (Sections~\ref{sec:unrelated}, \ref{sec:unrelated-weighted}, and \ref{sec:unrelated-immediate}).

\subsection{Future Work}
    \label{sec:future}

In this paper, we show a reduction from designing online scheduling problems to the offline problem of finding an LP relaxation whose optimum satisfies supermodularity. In particular, we show if the underlying valuation function subject to the polytope characterizing the scheduling constraints at each time satisfies substitute properties, gradient descent is scalable. We discuss several open problems below. 

\begin{itemize}
    \item Gradient descent does not yield an $O(1)$-competitive algorithm for all scheduling problems as mentioned above (Appendix~\ref{sec:gd-bad-bdcast}).       
    Can we find other sufficient conditions for GD to be $O(1)$-competitive? For example, can we design algorithms for the intersection of two matroids?     
    The matroid intersection is known not to be  a gross substitute \cite{murota2022discrete}. Here, it may increase the required speed or  competitive ratio. 
    
    \item Can we extend gradient descent to $\ell_k$-norms of flow time? Although a naive extension of our framework to $\ell_k$-norms seems to create an undesirable increase of the potential due to job ages, we believe it should not be a major roadblock.  No meta-algorithms have been  considered for  $\ell_k$-norms of flow. We have some preliminary results for this objective. 
    
    \item Is there an online algorithm for minimizing $\ell_k$-norms of flow time where the competitive ratio has no dependency on $k$? For unrelated machines, there exist $(1+\eps)$-speed $O(k)$-competitive immediate-dispatch algorithms \cite{ImM11unrelated,AnandGK12}. However, it is known that any immediate dispatch algorithm must have a linear dependency on $k$ \cite{ImM11unrelated,AnandGK12}.  Since there exists a scalable algorithm for the maximum unweighted flow time with no dependency on $k$, it is plausible to remove the dependency by considering migratory algorithms \cite{anand2017maximum}. As discussed, our algorithm allows migration, and thus has the potential to solve this problem.

    \item Can we develop another nonclairvoyant meta-algorithm besides PF? PF generalizes Round Robin, which requires at least $(2+\eps)$-speed to be $O(1)$-competitive. However, Shortest Elapsed Time First (SETF)\footnote{In the single machine scheduling, SETF processes the job that has been processed the least.} is known to be scalable in the single-machine setting and uses information about how much jobs have been processed. Can we generalize SETF in a similar way that SRPT and Round Robin were generalized to gradient descent and PF?

\end{itemize}


\section{Preliminaries}
    \label{sec:prelim}

\subsection{Common Notation} We use $\bone_X$ to denote a binary vector corresponding to $X$. We use bold fonts for vectors. For two vectors $\bx, \by$, $\bx \wedge \by$ takes the component-wise min of $\bx$ and $\by$. Similarly, $\bx \vee \by$ takes the component-wise max of $\bx$ and $\by$. We let 
$\bx \odot \by$ denote the component-wise multiplication  of $\bx$ and $\by$. When $v$ is a set function, we will interchangeably use $v(X)$ and $v(\bone_X)$ for a set $X$.

\subsection{Problem Definition and Scheduling Primitives}
    \label{sec:problem-definition}

For easy reference, we reproduce the definition of the polytope scheduling below. All scheduling problems we will study fall into the polytope scheduling problem, or equivalently multidimensional scheduling. 

\paragraph{Polytope Scheduling (Feasible Processing Rate View).} Most preemptive scheduling problems with fixed computing resources can be represented as an instance of the polytope scheduling problem (PSP) \cite{ImKM18}. In the PSP, we are given a set of constraints that form a downward-closed\footnote{If $\bz' \leq \bz$ and $\bz \in \cP$, then $\bz' \in \cP$.} polytope $\cP$ (more generally a convex body), which encodes the feasible processing rate vector at each time: each job $j$ can be processed at a rate of $z_j$ if and only if $\bz := \{z_j\}_{j} \in \cP$.
For example, we have $\cP = \{\bz \; | \; \sum_j z_j \leq 1 \mbox{ and } \bz \geq 0\}$ in the single machine scheduling, which says jobs can be processed at a rate of up to 1 in total. 
We assume that $\cP$ remains fixed over time. 
\vspace{-3mm}
\paragraph{Multi-dimensional Scheduling (Resource View).} Equivalently, the PSP can be viewed as multidimensional scheduling where each dimension corresponds to a distinct resource. There exists a set of $D$ divisible resources or a continuous supply vector $\bone$,\footnote{We can assume wlog that every resource exists by one unit by scaling.} which are replenished at each time. Each job $j$ is associated with a concave valuation function, $u_j: [0, 1]^{D} \to \R_+$, which denotes how fast it gets processed. A feasible allocation is constrained by $\sum_j \by_j \leq \bone$: Job $j$ gets assigned a resource vector $\by_j$ and gets processed at a rate of $u_j(\by_j)$. We will call this resource view and the above feasible processing rate view. 

\medskip

We consider the clairvoyant scheduling setting: Each job $j$ arrives at time $r_j$ with (original) processing size $p_j$ and weight $w_j$. Note that the online scheduler learns about job $j$, including $w_j$ and $p_j$, only when it arrives. A job is completed when it has been processed by $p_j$ units since its arrival time $r_j$. Preemption is allowed, meaning a job can be paused and resumed later. Our goal is to minimize the following objective function.

\paragraph{Total Weighted Flow Time.} For integral flow time objective, an individual job $j$'s flow-time is defined as $F_j = C_j - r_j$ which is the difference between its arrival time $r_j$ and  completion time $C_j$; we may add $\sigma$ to $C_j$ as superscript to make clear the schedule $\sigma$ considered. Note that $C_j$ is defined the min time $t'$ such that $\int_{t = r_j}^{t'} z_j(t) \dd t = p_j$, where $z_j(t)$ is the processing rate of job $j$ at time $t$ in the schedule $\sigma$. 
Then, the total weighted integral flow time is $\sum_{j \in J} w_j F_j$, where $w_j$ is the weight of job $j$ and $J$ is the entire set of jobs arriving to be completed. 

\paragraph{Speed Augmentation.} A large number of scheduling problems do not admit $O(1)$-competitive algorithms. In such cases, speed augmentation is commonly used for beyond-worst-case analysis. Here, the online scheduling algorithm runs with extra speed compared to the optimal scheduler. If an algorithm can process $s >1$ times more than the adversary can process at a time and has an objective at most $c$ times than the adversary does, we say that the algorithm is $s$-speed $c$-competitive. 
Intuitively, this implies that the algorithm can handle $1 / s$-fraction of what the optimum solution can handle \cite{KalyanasundaramP00}. Thus, an algorithm that is $O(1_\eps)$-competitive with $(1 + \eps)$-speed for any $\eps > 0$ is of fundamental importance and termed  \textit{scalable}.

\paragraph{Total Fractional Weighted Flow Time.} In many scenarios, we will consider the fractional flow time which is a relaxation of integral flow time. The fractional weighted flow time objective is defined to be $\sum_{j \in J} \int_{t \geq 0} \frac{w_j}{p_j} t z_j(t) \dd t$. In other words, we view $\{z_j(t) / p_j\}_t$ as a distribution when the job is completed. By definition, it is obvious that a job's fractional flow time is no greater than its integer flow time in any fixed schedule. It is easier to minimize fractional flow time than integer flow time. Fortunately, with an extra speed augmentation, we can convert an algorithm that is good for fractional flow time into one that is good for integer flow time.

\begin{lemma}[\cite{ChadhaGKM09,ImMP11,ImKM18}]
    \label{thm:conversion}
    For any polytope scheduling problem, we can convert an online algorithm whose total fractional weighted flow time is $C$ into an online algorithm with (integer) weighted flow time $O(C / \eps)$ given $(1+\eps)$-speed. Therefore, if there is an online scheduling algorithm that is $c$-competitive with $s > 1$ speed for total fractional weighted flow time, then for any $\eps > 0$, there exists an algorithm that is $s(1+\eps)$-speed $O(c / \eps)$-competitive for total (integral) weighted flow time. 
\end{lemma}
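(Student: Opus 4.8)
The plan is to prove the standard fractional-to-integral conversion for flow time by a scheduling simulation argument that slows down the fractional algorithm and "waits out" the tail of each job. First I would set up the conversion: given a $(1+\eps/3)$-speed algorithm $A_{\mathrm{frac}}$ with total fractional weighted flow time $C$, build an integral algorithm $A_{\mathrm{int}}$ that runs with speed $(1+\eps)$. The idea is to split the extra speed: use $(1+\eps/3)$ speed to \emph{emulate} $A_{\mathrm{frac}}$ exactly (so at every time $t$ it processes jobs with the same rate vector $\bz(t)\in\cP$ that $A_{\mathrm{frac}}$ uses, which is feasible because the polytope $\cP$ is fixed and the speed is at least what $A_{\mathrm{frac}}$ needs), and use the remaining slack in speed to absorb the difference between fractional and integral completion. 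Concretely, partition the lifetime of each job $j$ in $A_{\mathrm{frac}}$'s schedule into phases according to remaining fraction: phase $k$ is the time interval during which job $j$'s remaining size lies in $(2^{-(k+1)}p_j, \, 2^{-k}p_j]$. When $A_{\mathrm{frac}}$ finishes a phase, $A_{\mathrm{int}}$ commits to "virtually completing" the job's next chunk a bit later, and the point is that the length of phase $k$ contributes at least a $\Theta(2^{-k})$ fraction of $j$'s weight to the fractional objective, so the total tail one must wait out is geometrically controlled.

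The key steps, in order, are: (1) State that $A_{\mathrm{int}}$ processes exactly as $A_{\mathrm{frac}}$ does (using $(1+\eps/3)$ of its speed), so the rate vectors stay in $\cP$ and feasibility is immediate; the only change is \emph{when} jobs are declared complete. (2) For each job $j$, let $t_j^{(k)}$ be the first time $A_{\mathrm{frac}}$ has reduced $j$'s remaining size to $2^{-k}p_j$; note $t_j^{(0)}=r_j$ and $C_j^{\mathrm{frac},A} = \lim_k t_j^{(k)}$ while $A_{\mathrm{frac}}$'s fractional flow of $j$ is $\ge \sum_{k\ge 1} 2^{-k}(t_j^{(k)}-r_j)\cdot$(constant). (3) Define $A_{\mathrm{int}}$'s completion time of $j$ to be roughly $t_j^{(K)}$ for a truncation level $K=O(\log(1/\eps))$ where the extra processing needed to finish the last $2^{-K}p_j$ fraction is done using the leftover $\eps/3$-fraction of speed — this is where one needs a small counting argument that the total leftover work demanded across all alive jobs never exceeds what the extra speed supplies (here is where the standard trick of "running a separate copy at speed $\eps/3$ that just finishes tails in SRPT-like order" is invoked, charging its flow time to the fractional objective). (4) Bound the integral flow time: $F_j^{\mathrm{int}} \le (t_j^{(K)}-r_j) + (\text{tail delay})$, and sum $w_j F_j^{\mathrm{int}}$ against the fractional objective, losing a factor $O(1/\eps)$ from the $2^{K}=O(1/\eps)$ truncation and the tail bookkeeping. (5) Compose with Lemma-hypothesis: if the given algorithm is $c$-competitive at speed $s$ for fractional, run it at speed $s$ inside this construction with the extra $(1+\eps)$ factor, giving $s(1+\eps)$-speed and $O(c/\eps)$-competitive for integral.

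The main obstacle I expect is step (3)–(4): making precise, in the full generality of an arbitrary downward-closed polytope $\cP$ (not just a single machine), the claim that the "tail work" left over after truncating each job at the $2^{-K}p_j$ level can be completed within the slack $\eps$-fraction of speed while only charging $O(1/\eps)$ times the fractional objective. On a single machine this is the classical argument (e.g.\ \cite{ChadhaGKM09}), but for general $\cP$ one has to be careful that the rate vectors used for "emulation" and for "tail completion" can be added and still lie in $(1+\eps)\cP$; this works because $\cP$ is downward-closed and convex, so $\bz_{\mathrm{emul}} + \bz_{\mathrm{tail}} \in \cP + \cP = 2\cP$ and after rescaling speeds appropriately the combined rate sits in $(1+\eps)$ times the scaled polytope — but the scaling constants need to be tracked. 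Everything else (feasibility, the geometric sum, composing the competitive ratios) is routine. Since the statement is quoted from \cite{ChadhaGKM09,ImMP11,ImKM18}, I would in fact present this as a sketch and cite those papers for the full details, emphasizing only the two points that matter for \emph{polytope} scheduling: that emulation keeps rates feasible because $\cP$ is fixed and downward-closed, and that the conversion is oblivious to the structure of $\cP$.
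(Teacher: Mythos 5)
The paper itself gives no proof of this lemma; it is simply cited from \cite{ChadhaGKM09,ImMP11,ImKM18}. Your sketch, however, departs from the argument in those references and introduces machinery that is both unnecessary and, as you yourself flag, genuinely gap-ridden, so it is worth being precise about what goes wrong and what the simpler standard route is.

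The actual conversion needs no dyadic phases, no truncation level $K=O(\log(1/\eps))$, and no ``separate SRPT-like tail-cleanup copy.'' The algorithm $A_{\mathrm{int}}$ is given $(1+\eps)$-speed and at every instant uses the rate vector $(1+\eps)\bz(t)$, where $\bz(t)\in\cP$ is $A_{\mathrm{frac}}$'s rate vector; this is feasible because $A_{\mathrm{int}}$ has exactly $(1+\eps)$ times the speed of $A_{\mathrm{frac}}$. Let $\tau_j$ be the first time at which $A_{\mathrm{frac}}$ has processed $\tfrac{p_j}{1+\eps}$ units of job $j$, i.e.\ the remaining fraction drops to $\tfrac{\eps}{1+\eps}$. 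By time $\tau_j$, $A_{\mathrm{int}}$ has processed $(1+\eps)\cdot\tfrac{p_j}{1+\eps}=p_j$ units, so $C_j^{\mathrm{int}}\leq\tau_j$ (once $j$ is done, $A_{\mathrm{int}}$ simply drops the rate on $j$ to $0$, using downward-closedness of $\cP$). Meanwhile, for all $t\in[r_j,\tau_j]$, job $j$ contributes at least $\tfrac{\eps}{1+\eps}w_j$ to $\tilde W^{A_{\mathrm{frac}}}(t)$, so $w_j F_j^{\mathrm{int}}\leq w_j(\tau_j-r_j)\leq\tfrac{1+\eps}{\eps}\cdot(\text{$j$'s fractional weighted flow in $A_{\mathrm{frac}}$})$. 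Summing over $j$ and using Proposition~\ref{pro:fractional-weight} gives the $O(C/\eps)$ bound, and composing with a $c$-competitive $s$-speed fractional algorithm gives $s(1+\eps)$-speed $O(c/\eps)$-competitiveness. The whole argument is oblivious to $\cP$ beyond it being downward-closed and scaling being well-defined.

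Your version has two concrete gaps that this single-threshold argument avoids. First, the ``tail-cleanup copy at speed $\eps/3$ in SRPT-like order'' is not well-defined for a general polytope $\cP$: there is no canonical priority order, and you give no mechanism for bounding the tail copy's flow time by the fractional objective of $A_{\mathrm{frac}}$ — that charging step is exactly the nontrivial content and is left as a black box. Second, the dyadic truncation itself does not deliver the $O(1/\eps)$ loss you claim: from the identity $\int\frac{p_j(t)}{p_j}\dd t\geq\frac{1}{2}\sum_{k\geq1}2^{-k}(t_j^{(k)}-r_j)$ you only get $t_j^{(K)}-r_j$ bounded by $2^{K+1}$ times the fractional flow, i.e.\ a $\mathrm{poly}(1/\eps)$ loss when $2^{K}=\Theta(1/\eps)$, and there is no reason $\sum_{k\leq K}2^{-k}(t_j^{(k)}-r_j)$ should dominate $t_j^{(K)}-r_j$ by a better factor. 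The fix is to drop the dyadic decomposition entirely and use the single threshold $\tfrac{\eps}{1+\eps}$, at which point the speed-split and the tail copy both become unnecessary and the feasibility-in-$\cP$ issue you were worried about never arises.
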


For a fixed schedule $\sigma$, we will let $\sigma(t) := \{j \; | \; r_j \leq t < C^\sigma_j\}$  denote the set of jobs alive at time $t$ in $\sigma$. We will often use $A$ and $O$ to denote our algorithm('s schedule) and the adversary('s schedule), respectively. As an alive job $j$ adds $w_j$ at each time to the total weighted flow time objective, we immediately have the following. 

\begin{proposition}
    \label{pro:integral-weight}
    We have $\int_{t = 0}^\infty  W^\sigma(t) \dd t = \sum_j w_j (C^\sigma_j - r_j)$, where $W^\sigma(t) := \sum_{j \in \sigma(t)} w_j$ is the total weight of jobs alive at time $t$ in $\sigma$. 
\end{proposition}

This observation also extends to the fractional objective. Let $p^\sigma_j(t)$ be $j$'s remaining size at time $t$ in schedule $\sigma$.

\begin{proposition}
    \label{pro:fractional-weight}
    We have $\int_{t = 0}^\infty  \tilde W^\sigma(t) \dd t = \sum_j w_j \int_{t = r_j}^\infty  \frac{z_j^\sigma(t)}{p_j} t \dd t$, the total fractional weighted flow time of $\sigma$, where $\tilde W^\sigma(t) := \sum_{j \in \sigma(t)} w_j \frac{p_j^\sigma(t)}{p_j}$ is the total fractional weight of jobs alive at time $t$ in $\sigma$. 
\end{proposition}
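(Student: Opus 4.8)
The claim is essentially an accounting identity, analogous to Proposition~\ref{pro:integral-weight} but for the fractional objective, so the plan is to evaluate both sides and observe they compute the same quantity. The right-hand side $\sum_j w_j \int_{t = r_j}^\infty \frac{z_j^\sigma(t)}{p_j} t \, \dd t$ is by definition the total fractional weighted flow time of $\sigma$ (see the definition of \emph{Total Fractional Weighted Flow Time} in Section~\ref{sec:problem-definition}), so there is nothing to prove there; the work is entirely in showing the left-hand side equals it. First I would fix a job $j$ and rewrite its remaining size: since $j$ arrives at $r_j$ with size $p_j$ and is processed at rate $z_j^\sigma(s)$, we have $p_j^\sigma(t) = p_j - \int_{s = r_j}^{t} z_j^\sigma(s)\,\dd s$ for $r_j \le t < C_j^\sigma$, and $p_j^\sigma(t) = 0$ otherwise. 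Hence $\frac{p_j^\sigma(t)}{p_j} = \frac{1}{p_j}\int_{s = t}^{C_j^\sigma} z_j^\sigma(s)\,\dd s$ for $t \ge r_j$ (using that $\int_{r_j}^{C_j^\sigma} z_j^\sigma = p_j$), and this is $0$ for $t < r_j$ and for $t \ge C_j^\sigma$.

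The second step is to substitute this into $\int_{t=0}^\infty \tilde W^\sigma(t)\,\dd t = \sum_j w_j \int_{t=0}^\infty \frac{p_j^\sigma(t)}{p_j}\,\dd t$ and swap the order of integration by Fubini/Tonelli (everything is nonnegative, so this is justified):
\begin{align*}
\int_{t=0}^\infty \frac{p_j^\sigma(t)}{p_j}\,\dd t
&= \frac{1}{p_j}\int_{t = r_j}^{C_j^\sigma} \int_{s = t}^{C_j^\sigma} z_j^\sigma(s)\,\dd s\,\dd t
= \frac{1}{p_j}\int_{s = r_j}^{C_j^\sigma} z_j^\sigma(s)\left(\int_{t = r_j}^{s} \dd t\right) \dd s \\
&= \frac{1}{p_j}\int_{s = r_j}^{C_j^\sigma} z_j^\sigma(s)\,(s - r_j)\,\dd s.
\end{align*}
Multiplying by $w_j$ and summing over $j$ gives $\int_{t=0}^\infty \tilde W^\sigma(t)\,\dd t = \sum_j w_j \int_{s = r_j}^\infty \frac{z_j^\sigma(s)}{p_j}(s - r_j)\,\dd s$, which is the fractional weighted flow time measured with the clock reset at each arrival.

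The final step is to reconcile this with the stated right-hand side, which uses $t$ rather than $t - r_j$ in the integrand. These differ by $\sum_j w_j \int_{s=r_j}^\infty \frac{z_j^\sigma(s)}{p_j} r_j\,\dd s = \sum_j w_j r_j \cdot \frac{1}{p_j}\int_{r_j}^{C_j^\sigma} z_j^\sigma(s)\,\dd s = \sum_j w_j r_j$, a constant independent of the schedule; so strictly the identity as written holds up to this additive $\sum_j w_j r_j$ term, and either the definition of fractional flow time is implicitly the $r_j$-shifted one or this constant is understood to be absorbed (it does not affect competitive ratios). I would state the computation with $(s-r_j)$, note the equivalence, and move on. The only mild subtlety — the "main obstacle," though it is minor — is being careful that the double integral region is $\{(s,t) : r_j \le t \le s \le C_j^\sigma\}$ so that swapping the order correctly produces the factor $(s - r_j)$ and not something else; everything else is bookkeeping.
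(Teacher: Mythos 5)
Your proof is correct and is the natural argument; the paper states Proposition~\ref{pro:fractional-weight} with no explicit proof (only the remark that it "extends" Proposition~\ref{pro:integral-weight}), so there is no alternative route to compare against. You were right to worry about the $r_j$ term: what $\int_0^\infty \tilde W^\sigma(t)\,\dd t$ actually equals is $\sum_j w_j \int_{r_j}^\infty \frac{z_j^\sigma(t)}{p_j}(t-r_j)\,\dd t$, which matches the $(C_j - r_j)$ in Proposition~\ref{pro:integral-weight}, whereas the displayed right-hand side with $t$ rather than $t - r_j$ (and likewise the paper's definition of "total fractional weighted flow time") is really fractional weighted \emph{completion} time. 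The two differ by the additive constant $\sum_j w_j r_j$, exactly as you computed. This is a small slip in the paper's statement, not in your proof; and since the paper's analysis always works directly with $\int \tilde W^\sigma(t)\,\dd t$ on both sides of the competitive comparison, the constant never actually enters, so the downstream results are unaffected. Your Fubini bookkeeping (the region $\{(s,t): r_j \le t \le s \le C_j^\sigma\}$ producing the factor $s-r_j$) is exactly right.
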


\subsection{Gross Substitutes for Indivisible Resources}
    \label{sec:gs}

We first consider valuation functions for indivisible items.  Note that we consider the Walrasian market model throughout this paper, which is different from the Fisher market, as discussed in Section~\ref{sec:comparison}. Formally, the Walrasian market is defined as follows.

\begin{definition}[Walrasian Market]
    A Walrasian market consists of a set of $n$ indivisible items and $m$ agents where  each agent $i \in [m]$ has a valuation function $v_i: 2^{[n]} \to \R$. Each item $j \in [n]$ is priced at $q_j \in \R_+$. Then, the market gives a disjoint allocation $\{ S_i\}_{i\in[m]}$ of items over agents. We define the social surplus of the market as $\sum_{i \in[m]} v_i(S_i)$. 
\end{definition}

In the Walrasian market, each agent has a preference for bundles of items specified by the quasi-linear utility function, which can be formally defined by the demand correspondence as follows.

\begin{definition} [Demand Correspondence]
    Given a set function $v : 2^{[n]} \to \R$ and a price vector $\bq \in \R_+^n$, we define the demand correspondence (or set) $\+D(v, \bq)$ as the family of sets that maximize the quasi-linear utility of the agent:
    $$ \+D(v, \bq) := \arg\max_{X} \set{v(X) - \bq \cdot \bone_X}.$$
\end{definition}

In other words, $\+D(v, \bq)$ represents the set of item subsets that maximize our total happiness (the difference between our valuation and the total cost) under the price vector $\bq$, given our valuation function $v$. The Walrasian equilibrium asks that every agent maximizes its utility function and all items are allocated. Formally, the Walrasian equilibrium is defined as follows.

\begin{definition}[Walrasian Equilibrium]
    Given a Walrasian market with $n$ indivisible items and $m$ agents with valuations $\{v_i\}_{i \in [m]}$, a Walrasian equilibrium consists of a partition $\{ S_i\}_{i\in[m]}$ of $n$ and a price vector $\bq \in \R^n_+$ such that   
    $S_i \in \+D(v_i,\bq)$ for all $i \in [m]$. 
\end{definition}

It is known that an equilibrium exists if all agents' valuation function satisfies the GS property. GS is an important class of valuation functions \cite{KelC82}.

\begin{definition} [Gross Substitutes]
    A valuation function $v : 2^{[n]} \to \R$ is gross-substitute (GS) if for any price vectors $\bq' \geq \bq \geq 0$ and any $X \in \+D(v, \bq)$, there is a set $Y \in \+D(v, \bq')$ such that $X \cap \set{j : q_j = q'_j} \subseteq Y$.
\end{definition}

In other words, a GS valuation means that if we increase an item $j$'s price, we can update an optimum allocation without dropping any other items. Intuitively, this is possible because $j$ can be substituted with other items unlike some items behaving like a bundle, e.g. milk and cereal.

It is known that GS is a strict subclass of submodular functions \cite{GulS99,LehLN06}.

\begin{definition}
    \label{def:submodular}
    A valuation function $v: 2^{[n]} \rightarrow \R$ is submodular if for all $X, Y \subseteq [n]$, $v(X) + v(Y) \geq v(X \cap Y) + v(X \cup Y)$. Alternatively, it is submodular if $v(X \cup \{e\}) - v(X) \leq v(Y \cup \{e\}) - v(Y)$ for all $X \supseteq Y$ and $e \in [n]$.     
    Symmetrically, $v$ is supermodular if $-v$ is submodular. 
\end{definition}

\begin{lemma}[\cite{GulS99}]
    \label{lem:gs-submodular}
    A valuation function $v$ that satisfies the GS property is submodular.
\end{lemma}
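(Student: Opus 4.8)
The plan is to prove the contrapositive-free statement directly: take a GS valuation $v$ and show it satisfies the second (diminishing-returns) characterization of submodularity from Definition~\ref{def:submodular}, namely $v(Y \cup \{e\}) - v(Y) \geq v(X \cup \{e\}) - v(X)$ for every $Y \subseteq X \subseteq [n]$ and every $e \notin X$. The key idea is to use the pricing mechanism that defines GS: prices let us ``select'' exactly the bundle we want as a demand set, and then an infinitesimal (or appropriately large) price perturbation on a single item lets us read off the marginal value of that item. Concretely, I would fix $e$ and the two sets $Y \subseteq X$, and construct a price vector $\bq$ under which $X$ (or $X \cup \{e\}$) is a demand set, while simultaneously $q_e$ is poised exactly at the threshold where item $e$ enters or leaves the demand. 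Comparing the utility just below and just above that threshold, at a price vector where the demand is $X$, yields that the marginal value $v(X \cup \{e\}) - v(X)$ equals the critical price of $e$; doing the same with a price vector whose demand is $Y$ gives the critical price of $e$ relative to $Y$. The GS axiom, applied when we raise $q_e$, then forces the smaller set's critical price to be at least the larger set's, which is exactly the submodular inequality.

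First I would set up the ``which items are in the demand at a given price'' language carefully: for a fixed price vector $\bq$ on $[n] \setminus \{e\}$, define the critical price $\pi_e(\bq)$ of item $e$ as the supremum of prices $q_e$ for which some demand set contains $e$. Standard arguments (or a direct appeal to the structure of GS valuations) show that for $q_e < \pi_e(\bq)$ every maximal demand set contains $e$ and for $q_e > \pi_e(\bq)$ none does, and that $\pi_e(\bq)$ equals the marginal $v(S \cup \{e\}) - v(S)$ where $S$ is the demand set of $v$ restricted to $[n]\setminus\{e\}$ at price $\bq$. Next I would choose, for the set $X$, a price vector $\bq^X$ on $[n]\setminus\{e\}$ that makes $X$ the unique demand set of the restricted valuation $v|_{[n]\setminus\{e\}}$ — e.g. price items in $X$ at $0$ and items outside $X$ at a value exceeding their maximum possible marginal contribution — so that $\pi_e(\bq^X) = v(X \cup \{e\}) - v(X)$. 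Analogously construct $\bq^Y$ for $Y$. Finally, I would relate $\bq^X$ and $\bq^Y$ by a monotone path of price increases on the items in $X \setminus Y$ (raising each such item's price from $0$ up to its ``blocking'' value), and apply the GS property repeatedly along this path: each price increase can only remove items other than the ones kept fixed, and in particular cannot force $e$ into the demand, which translates into $\pi_e$ being non-increasing along the path, i.e. $\pi_e(\bq^Y) \geq \pi_e(\bq^X)$. Substituting the two marginal-value identities gives $v(Y\cup\{e\}) - v(Y) \geq v(X\cup\{e\}) - v(X)$.

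The main obstacle I anticipate is making the ``critical price equals marginal value, and demand-set membership is monotone across the threshold'' claims fully rigorous while only invoking the GS definition as stated in the excerpt (which is phrased in terms of a single price increase preserving a subset of a demand set, not in terms of critical prices or lattice structure of demand sets). Getting from that one-step axiom to the needed monotonicity of $\pi_e$ along a multi-step price path requires a small induction, and one has to be careful about ties in the demand correspondence — choosing demand sets consistently (e.g. always taking an inclusion-maximal one, or breaking ties by a fixed lexicographic rule) so that the subset-preservation guaranteed by GS composes correctly. A secondary technical point is ensuring the constructed price vectors $\bq^X, \bq^Y$ genuinely pin down $X$ and $Y$ as demand sets; this needs only that $v$ is defined on all of $2^{[n]}$ and that marginal contributions are bounded, both of which are immediate, but it should be stated. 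Once the critical-price machinery is in place the remaining computation is a one-line substitution, so I would keep the emphasis on the monotonicity lemma.
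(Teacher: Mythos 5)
The paper does not actually prove this lemma; it is cited to \cite{GulS99} and used as a black box. The closest thing to an internal proof is the paper's argument for the continuous analogue, Lemma~\ref{lem:LS-submodular} in Appendix~\ref{sec:ls-proofs}, and that proof takes a genuinely different, duality-based route: it invokes the characterization (Lemma~\ref{lem: LS-equal}) that a valuation is LS iff its dual profit function $\pi(\bq)=\max_{\bx} v(\bx)-\bq\cdot\bx$ is submodular, writes $v(\bx)=\min_{\bq\geq 0}\{\pi(\bq)+\bq\cdot\bx\}$, shows the minimand is jointly submodular in $(\bx,\bq)$, and concludes by a lattice/minimization argument. Your proposal is instead a purely combinatorial argument straight from the GS definition: isolate a single item $e$, read off its marginal value $v(S\cup\{e\})-v(S)$ as a critical price threshold under a carefully chosen price vector, and use the substitutes axiom to show this threshold moves monotonically as other items' prices rise. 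This is in fact much closer in spirit to Gul and Stacchetti's original proof than to the paper's duality proof, and is more elementary and self-contained (it never needs the dual-profit characterization), whereas the duality route is what transfers cleanly to the divisible/LS setting, which is why the paper uses it there.

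Two corrections to your write-up. First, the description of the key monotonicity step is stated backwards: along the price path from $\bq^X$ to $\bq^Y$ (raising prices on $X\setminus Y$ while holding $q_e$ fixed), the GS axiom guarantees that $e$ cannot be forced \emph{out} of the demand (not ``into''), so the set of $q_e$ for which $e$ is demanded can only grow, i.e.\ $\pi_e$ is \emph{non-decreasing} along the path, not non-increasing. That is exactly what yields the inequality $\pi_e(\bq^Y)\geq \pi_e(\bq^X)$ that you correctly write down, so the conclusion is right but the stated reasoning contradicts it. Second, the identification $\pi_e(\bq^X)=v(X\cup\{e\})-v(X)$ silently uses monotonicity of $v$ (so that under $\bq^X$, pricing $X$ at zero and everything else exorbitantly, the optimal bundle in $[n]\setminus\{e\}$ is all of $X$ rather than a proper subset). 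Monotonicity is a standing assumption in \cite{GulS99} and throughout this literature, but since the paper's Definition of GS does not state it, you should make the assumption explicit where it is used.
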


Furthermore, Lehmann, Lehmann and Nisan \cite{LehLN06} showed GS 
valuations are closed under convolution, meaning that the aggregate max-valuation of GS valuations is GS.

\begin{lemma}
    \label{lem:GS-conv}
    The class of gross-substitute valuations is closed under convolution, i.e., for any two gross-substitute valuations $v_1$ and $v_2$, $v_1 \oplus v_2$ is gross substitutes, where 
    $$ v_1 \oplus v_2 (S) = \max_{T \subseteq S} \set{v_1(T) + v_2(S \setminus T)}$$
\end{lemma}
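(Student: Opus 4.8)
The plan is to reduce gross substitutability of $v := v_1 \oplus v_2$ to a purely local exchange condition on $v$ and then verify that condition by bookkeeping how items are split between the two convolved valuations. A shortcut through indirect utilities is not available here: since each item can be assigned to only one of $v_1$ and $v_2$, the indirect utility of $v_1 \oplus v_2$ is in general strictly below the sum of the indirect utilities of $v_1$ and $v_2$, so ``submodularity of the indirect utility'' style reasoning fails.

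\textbf{Step 1 (reduce to a local property).} Recall that a valuation is GS if and only if it is $M^\natural$-concave, i.e.\ it satisfies the exchange property: for all bundles $X, Y$ and every $i \in X \setminus Y$, either $v(X) + v(Y) \le v(X - i) + v(Y + i)$, or there exists $j \in Y \setminus X$ with $v(X) + v(Y) \le v(X - i + j) + v(Y + i - j)$ (equivalently, one may use the single improvement property of \cite{LehLN06}). We take this property as given for $v_1$ and $v_2$ and aim to deduce it for $v$.

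\textbf{Step 2 (set-up and the easy cases).} Fix $X, Y$ and $i \in X \setminus Y$, and choose partitions $X = A_1 \uplus A_2$, $Y = B_1 \uplus B_2$ witnessing $v(X) = v_1(A_1) + v_2(A_2)$ and $v(Y) = v_1(B_1) + v_2(B_2)$. By symmetry assume $i \in A_1$; since $i \notin Y$ we have $i \in A_1 \setminus B_1$. Apply the exchange property of $v_1$ to $A_1, B_1, i$. If it returns $v_1(A_1) + v_1(B_1) \le v_1(A_1 - i) + v_1(B_1 + i)$, then $(A_1 - i, A_2)$ is a split of $X - i$ and $(B_1 + i, B_2)$ is a split of $Y + i$ (disjointness is immediate because $i \notin Y$), so adding back the unchanged $v_2$-terms gives the first alternative $v(X) + v(Y) \le v(X - i) + v(Y + i)$. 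If instead it returns a partner $j \in B_1 \setminus A_1$ with $v_1(A_1) + v_1(B_1) \le v_1(A_1 - i + j) + v_1(B_1 + i - j)$ \emph{and} $j \notin A_2$ (hence $j \in Y \setminus X$), the analogous re-pairing with $(A_1 - i + j, A_2)$ and $(B_1 + i - j, B_2)$ yields the second alternative for $v$ with this $j$.

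\textbf{Step 3 (the hard case --- the main obstacle).} The remaining possibility is that the partner $j$ returned by the exchange property of $v_1$ lies in $A_2$, i.e.\ $j$ is used on the $v_2$-side of $X$'s optimal split, so $j \notin Y \setminus X$ and is not a legal exchange partner for $v$. Resolving this case is where the real work lies. The plan is to also invoke the exchange property of $v_2$ (legitimate since $j \in A_2$ and $j \notin B_2$) to move $j$ across the split, and to iterate, measuring progress by induction on $|X \triangle Y|$ --- or on the finer quantity $|A_1 \triangle B_1| + |A_2 \triangle B_2|$ --- so that each step either certifies one of the two target inequalities or strictly decreases this potential while preserving disjointness of all intermediate pairs and the running value bound. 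Terminating the induction lands us in one of the easy cases of Step 2, and one checks that the net change to $X$ is a single removal together with at most one insertion, as required. Alternatively one may follow the original argument of \cite{LehLN06} via the single improvement property, or invoke closure of $M^\natural$-concave functions under integer infimal convolution from discrete convex analysis; all of these routes pass through the same exchange bookkeeping, which is the technical heart of the lemma.
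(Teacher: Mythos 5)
The paper itself does not prove this lemma; it is quoted as a known result of Lehmann, Lehmann and Nisan \cite{LehLN06}, so there is no in-paper proof to compare against. Judging your proposal on its own merits: Steps 1 and 2 are correct and cleanly done (the reduction to the $M^\natural$-exchange axiom, the re-pairing of optimal splits when the $v_1$-partner $j$ is absent or lies outside $A_2$, and the disjointness checks are all right). The problem is that Step 3 --- which you yourself flag as ``the technical heart of the lemma'' --- is only a plan, and the plan as stated does not go through.

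Concretely, both proposed induction measures are flat under an $M^\natural$-exchange, so the claimed ``strict decrease'' never happens. An exchange move replaces $(X,Y)$ by $(X-i, Y+i)$ or $(X-i+j, Y+i-j)$; in either case $i$ and $j$ merely switch sides of the symmetric difference, so $|X \triangle Y|$ is unchanged, and the same is true for $|A_1 \triangle B_1| + |A_2 \triangle B_2|$ when one exchanges within a single part. There is also a more basic issue with ``iterating'' the exchange inequalities: the axiom for $v_1$ applied at $(A_1,B_1,i)$ and again at $(A_1,B_1,k)$ produces two bounds with a common left-hand side $v_1(A_1)+v_1(B_1)$, which you cannot simply add to the $v_2$ bound without double-counting; you would instead have to apply the axiom to the \emph{modified} pair $(A_1-i+j,\ B_1+i-j)$ at $k$, and once you do that the returned partner can be $i$ itself (since $i \in (B_1+i-j)\setminus(A_1-i+j)$), collapsing the whole chain to the trivial inequality $v(X)+v(Y)\le v(X)+v(Y)$. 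Moreover, disjointness of the intermediate pairs (e.g.\ that the new $v_1$-partner $l$ avoids $A_2-j+k$) is not guaranteed. So the alternating-exchange bookkeeping is genuinely delicate, and the sketch does not establish termination or even well-definedness of the chain. Falling back to ``see \cite{LehLN06}'' or to closure of $M^\natural$-concave functions under infimal convolution is fine for a citation, but it means Step 3 is not actually proved here. To make this self-contained you would need either the known multiple-exchange/M$^\natural$-EXC$_{\mathrm{loc}}$ machinery from discrete convex analysis, or the single-improvement-property route taken in \cite{LehLN06}; neither reduces to the naive iteration you describe.
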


It is known that a Walrasian equilibrium exists when all agents have GS valuations, and a Walrasian equilbrium results in the max social surplus \cite{KelC82,GulS99}. Thus, the lemma can also be interpreted as stating that the social surplus of a Walrasian market exhibits GS if all agents have GS valuations.

\subsubsection{Examples}
    \label{sec:gs-examples}

We state a few examples of GS valuation functions. For a comprehensive list, see \cite{shioura2015gross}. 
\begin{itemize}
    \item Symmetric concave functions: $f: \set{0, 1}^{n} \rightarrow \R \cup \{ \infty \}$ is a symmetric concave function if there exists a concave function $\varphi: \R_+ \rightarrow \R$ s.t $f(\bx) = \varphi \left(\sum_{i=1}^{n}x_i \right)$. 
    
    \item Laminar concave functions: A set family $\mathcal{T} \subseteq 2^{N}$ is laminar if $X \cap Y = \emptyset$, $X \subseteq Y$ or $X \supseteq Y$ for every $X \neq Y \subseteq \mathcal{T}$. Define $\varphi_{Y}: \R_{+} \rightarrow \R$ to be a univariate concave function. Then the function given by $f(x) = \sum_{Y \in \mathcal{T}} \varphi_{Y}\left(\sum_{i \in Y} x_i \right)$ is called a laminar concave function.
    
    \item Maximum weight bipartite matching: Let $G=(U, W; E)$ be a weighted bipartite graph with an edge cost function $w : E \to \R$. Define  $f : \set{0, 1}^n \to \R \cup \set{-\infty}$ to be the maximum weight bipartite matching w.r.t a vertex set $X \subseteq U$ by
    $$f(\bone_{X}) = \max \Big\{\sum_{e \in M} w(e) \ | \ \text{$M \subseteq E$ is a matching in $G$ covers all vertices in $X$} \Big\}$$

    \item Weighted rank functions of matroids: Let $\+M = (E, \mathcal{I})$ be a matroid (see Section~\ref{sec:applications} for definition), where $E$ is a universe of $n$ elements, and $\+I \subseteq 2^E$ is a collection of independent sets. Let $w : E \to \R_+$ be a weight function. Then define $f: \set{0, 1}^n \to \R_+$ to be the weighted rank function w.r.t an element set $X \subseteq E$, i.e., $$f(\bone_{X}) = \max\Big\{\sum_{i \in Y}w(i) \ | \ Y \subseteq X, Y \in \mathcal{I}\Big\}$$

\end{itemize}

\subsection{Linear Substitutes for Divisible Resources}
\label{sec:ls}

Since we consider preemptive schedules where jobs can be processed by fractional amounts at a time and we will handle heterogeneous resources, we need a continuous extension of GS and Walrasian markets where resources to be allocated are divisible. 
Continuous definitions of gross substitutes, termed linear substitutes (LS),  were introduced by Milgrom and Strulovici \cite{milgrom2009substitute}.
 Interestingly, although the continuous setting does not inherit all properties that are guaranteed in the discrete setting \cite{milgrom2009substitute,shioura2015gross}, LS still offers some useful properties we can leverage. In particular, LS valuations also are closed under convolution (or equivalently called aggregation) and are a subclass of submodular functions.\footnote{For a continuous function $v$, submodularity is defined analogously to Definition~\ref{def:submodular}: 
  $v(\bx) + v(\by) \geq v(\bx \vee \by) + v(\bx \wedge \by)$. 
 
  Further, $v$ is supermodular iff $-v$ is submodular.}

\begin{definition} [Linear Substitutes]
    $v: \prod_{i \in [n]} [0, L_i] \rightarrow \R$ is a linear-substitute valuation if for any price vector $\bq$ and $\bq'$, then whenever $q_j \leq q'_j$, $q_k = q'_k$ for all $k \neq j$ and $X \in \+D(v, \bq)$, there is a $Y \in \+D(v, \bq')$ such that $Y_k \geq X_k$ for all $k \neq j$. 
\end{definition}

\begin{lemma}[\cite{milgrom2009substitute}]
    \label{lem:LS-aggregation}
    The class of linear-substitute valuations is closed under aggregation, i.e., for any linear-substitute valuations $v_1, ..., v_n$, $v$ is linear substitutes, where 
    $$ v(\bx) = \max \set{\sum_{i \in [n]}v_i(\bx_i) : \sum_{i \in [n]} \bx_i = \bx, \bx_i \geq 0 \  \forall i \in [n]}$$
\end{lemma}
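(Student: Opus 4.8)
\textbf{Proof proposal for Lemma~\ref{lem:LS-aggregation}.}

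The plan is to cite the result as established by Milgrom and Strulovici~\cite{milgrom2009substitute}, but since the statement is presented as a lemma, I would want to sketch why closure under aggregation holds, following the same philosophy as the proof of the discrete analogue (Lemma~\ref{lem:GS-conv}). First I would reduce to the two-function case $v = v_1 \oplus v_2$ by induction on $n$, using associativity of the max-convolution operation; the multi-way aggregate $\max\{\sum_i v_i(\bx_i) : \sum_i \bx_i = \bx\}$ equals $((v_1 \oplus v_2) \oplus v_3) \oplus \cdots$, so it suffices to show that $v_1 \oplus v_2$ is LS whenever $v_1, v_2$ are. The key structural fact I would invoke is the pricing characterization of the convolution: for any price vector $\bq$, a bundle $\bx$ lies in $\+D(v_1 \oplus v_2, \bq)$ if and only if $\bx = \bx_1 + \bx_2$ where $\bx_i \in \+D(v_i, \bq)$; equivalently, the indirect utility (concave conjugate) of the aggregate is the sum of the indirect utilities, $V^*(\bq) = V_1^*(\bq) + V_2^*(\bq)$, and demands add. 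This is the continuous counterpart of the standard fact that Walrasian demand of a "merged" agent decomposes across the constituent agents.

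Given that decomposition, the LS property for $v_1 \oplus v_2$ follows directly. Fix prices $\bq \le \bq'$ that differ only in coordinate $j$, and take $\bx \in \+D(v_1 \oplus v_2, \bq)$ with decomposition $\bx = \bx_1 + \bx_2$, $\bx_i \in \+D(v_i, \bq)$. Applying the LS hypothesis to each $v_i$ separately, there exist $\by_i \in \+D(v_i, \bq')$ with $(\by_i)_k \ge (\bx_i)_k$ for all $k \ne j$. Then $\by := \by_1 + \by_2 \in \+D(v_1 \oplus v_2, \bq')$ by the decomposition, and $y_k = (\by_1)_k + (\by_2)_k \ge (\bx_1)_k + (\bx_2)_k = x_k$ for every $k \ne j$, which is exactly the LS condition. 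The induction step then extends this to arbitrary $n$.

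The main obstacle is justifying the demand-decomposition fact rigorously in the continuous, non-closed-form setting: one must be careful that demand correspondences are nonempty (compactness of the box $\prod_i [0, L_i]$ and upper semicontinuity of the $v_i$ handle this), that the argmax of a sum over a Minkowski-type constraint really does split coordinatewise (a routine exchange argument: if $\bx_1 + \bx_2$ were not jointly demand-optimal, one could improve one of the pieces while holding the other fixed, contradicting $\bx_i \in \+D(v_i,\bq)$), and that the aggregate remains well-defined on the correct box $\prod_i [0, \sum_i L_i]$. These are precisely the points where the continuous theory diverges from the discrete one, so rather than reproving them I would cite \cite{milgrom2009substitute} for the decomposition and present the LS-closure argument above as the short consequence. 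I would similarly defer to \cite{milgrom2009substitute,shioura2015gross} for the companion fact (used later) that LS valuations are submodular.
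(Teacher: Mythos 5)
The paper does not prove Lemma~\ref{lem:LS-aggregation}: it is stated with a citation to \cite{milgrom2009substitute} and used as a black box, so there is no in-paper proof to compare against. Your sketch is a correct rendering of the standard argument, and it is the right one: reduce to two valuations by associativity, invoke the fact that the aggregate indirect utility is the sum of individual indirect utilities so that demand for $v_1 \oplus v_2$ decomposes as a sum of individual demands (and conversely any sum of individual demand points is an aggregate demand point), apply the LS hypothesis coordinatewise to each $v_i$, and then re-sum. This mirrors exactly how the discrete closure result (Lemma~\ref{lem:GS-conv}) is typically proved, translated to the continuous setting. You correctly flag where care is needed — nonemptiness of demand correspondences via compactness of the feasible boxes and upper semicontinuity, and that the domain of the aggregate is the Minkowski sum of the boxes — and the choice to cite \cite{milgrom2009substitute} for these infrastructure facts while presenting the LS-closure step as the short consequence is a reasonable division of labor. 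The only imprecision worth polishing if you were to write this out in full: the demand-decomposition claim should be phrased as "$\bx \in \+D(v_1 \oplus v_2, \bq)$ iff \emph{some} (equivalently, every) decomposition $\bx = \bx_1 + \bx_2$ attaining the convolution maximum has $\bx_i \in \+D(v_i, \bq)$," since not every way of writing $\bx$ as a sum will land in the individual demand sets.
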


Similar to GS, this lemma implies that the social surplus is LS if all the agents have LS valuations in the Walrasian market considered. 

To verify the LS property, we can check the submodularity of its dual profit function.

\begin{lemma}[\cite{milgrom2009substitute}]
    \label{lem: LS-equal}
    $v : \R^{n} \to \R$ is a linear-substitute valuation iff $\pi$ is submodular, where 
    $$ \pi(\bq) := \max_{\bx \in \R^n} v(\bx) - \bq \cdot \bx $$
\end{lemma}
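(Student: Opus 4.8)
\textbf{Proof plan for Lemma~\ref{lem: LS-equal}.}
The plan is to prove both directions via convex-duality bookkeeping, since $\pi$ is essentially the (sign-flipped) Legendre--Fenchel conjugate of $v$ and the demand correspondence $\+D(v,\bq)$ is exactly the set of maximizers defining $\pi(\bq)$. First I would record the basic identity: for any $\bq$, $\bx \in \+D(v,\bq)$ if and only if $\pi(\bq) = v(\bx) - \bq\cdot\bx$, and that $\pi$ is always convex (being a supremum of affine functions of $\bq$), hence in particular $\pi$ is a well-behaved function to which subgradient calculus applies; by a standard envelope/Danskin argument, $-\bx \in \partial \pi(\bq)$ whenever $\bx \in \+D(v,\bq)$, and conversely every subgradient of $\pi$ at $\bq$ arises this way (under mild regularity of $v$, e.g. $v$ concave and upper semicontinuous with the $\max$ attained, which holds in our applications). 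This translates statements about how $\+D(v,\bq)$ moves as one coordinate $q_j$ increases into statements about how the subgradient set $\partial\pi(\bq)$ moves.

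Next, for the direction ``$v$ linear substitutes $\Rightarrow$ $\pi$ submodular'': submodularity of the convex function $\pi$ is equivalent (for, say, twice-differentiable $\pi$, with the general case by approximation or by a difference-quotient version) to $\partial^2 \pi / \partial q_j \partial q_k \le 0$ for $j \ne k$, i.e.\ to the statement that $\partial \pi(\bq)_k$ is nonincreasing in $q_j$ for $k \ne j$. But $\partial\pi(\bq)_k = -x_k$ for the optimal bundle, so this is precisely the LS condition ``when $q_j$ goes up, $X_k$ does not go down for $k \ne j$''. Hence I would argue: fix $j$, take $q_j \le q'_j$ agreeing elsewhere, pick $X \in \+D(v,\bq)$; LS gives $Y \in \+D(v,\bq')$ with $Y_k \ge X_k$ for $k \ne j$; translating to subgradients, $-Y \in \partial\pi(\bq')$ and $-X \in \partial\pi(\bq)$ with $(-Y)_k \le (-X)_k$ for $k \ne j$, which is exactly the monotonicity of the partial ``derivative'' characterizing submodularity of $\pi$ in the coordinates $k \ne j$; ranging over all $j$ gives submodularity. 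For the converse ``$\pi$ submodular $\Rightarrow$ $v$ LS'', I would run the same equivalence backwards: submodularity of $\pi$ says each coordinate $k$ of any selection from $\partial\pi(\cdot)$ is monotone nonincreasing in $q_j$ for $k \ne j$; given $X \in \+D(v,\bq)$ and the price increase in coordinate $j$, one uses monotonicity of the subdifferential correspondence of the convex function $\pi$ (together with the fact that $\partial\pi(\bq') \ne \emptyset$ since the $\max$ is attained) to select $Y$ with $-Y \in \partial\pi(\bq')$ and $Y_k \ge X_k$ for all $k \ne j$, i.e.\ $Y \in \+D(v,\bq')$ witnessing LS.

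The main obstacle I expect is the regularity/technical matching between the combinatorial-looking LS definition (which quantifies over individual maximizers $X, Y$) and the smooth duality picture: $\pi$ need not be differentiable, $\partial\pi(\bq)$ can be a whole face, and the ``submodularity'' of a non-smooth convex function has to be phrased via finite differences rather than second derivatives. I would handle this by working throughout with the finite-difference form of submodularity, $\pi(\bq + \delta e_j + \delta' e_k) - \pi(\bq + \delta e_j) \le \pi(\bq + \delta' e_k) - \pi(\bq)$ for $\delta,\delta' \ge 0$, and with the monotone-selection property of subdifferentials of convex functions, being careful about which maximizer is selected when $\+D$ is not a singleton — this is exactly the point where the ``there is a $Y$'' existential in the LS definition is needed rather than ``for all $Y$''. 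Since this lemma is cited from Milgrom--Strulovici~\cite{milgrom2009substitute}, I would either cite their argument directly or, if a self-contained proof is wanted, present the twice-differentiable case in full and remark that the general case follows by a standard mollification argument; all the other steps are routine convex analysis.
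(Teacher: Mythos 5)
The paper does not actually prove this lemma: it is stated with a citation to Milgrom--Strulovici~\cite{milgrom2009substitute} and used as a black box, so there is no internal proof to compare your proposal against. That said, your sketch is essentially the argument that appears in the cited reference, and the core ideas are right: $\pi$ is the (sign-flipped) concave conjugate of $v$, Danskin/envelope gives $\+D(v,\bq) \subseteq -\partial\pi(\bq)$ (with equality under mild regularity), and submodularity of the convex function $\pi$ is equivalent to the cross-partials of $\pi$ being nonpositive, which — read through the subgradient dictionary $-x_k \in \partial\pi(\bq)_k$ — is exactly the LS condition that raising $q_j$ cannot decrease the optimal $x_k$ for $k\neq j$. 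You also correctly pinpoint the two real technical pressure points: (i) $\pi$ need not be differentiable, so the second-derivative characterization must be replaced by a finite-difference form and one must work with the set-valued subdifferential rather than a gradient, and (ii) the LS definition is existential (``there exists $Y \in \+D(v,\bq')$''), so in the direction ``$\pi$ submodular $\Rightarrow$ $v$ LS'' you must exhibit a suitable selection from a possibly non-singleton $\+D(v,\bq')$.

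Two points you should tighten before this could stand as a self-contained proof. First, the selection argument for (ii) is where the real work is: submodularity of $\pi$ by itself only gives you the finite-difference inequalities on $\pi$; to manufacture a coordinatewise-dominating $Y \in \+D(v,\bq')$ given a fixed $X \in \+D(v,\bq)$, you need that the demand correspondence is a sublattice (this is where concavity of $v$ plus submodularity of $\pi$ feed a Topkis-style monotone comparative statics argument), not just ``monotonicity of the subdifferential'' in the generic convex-analysis sense, which is a scalar-inner-product statement and does not by itself yield coordinatewise comparisons. Second, you should flag domain and finiteness issues: the $\max$ defining $\pi(\bq)$ is only finite and attained for prices in a suitable range (in this paper $v$ lives on a compact box $\prod_i[0,L_i]$ and $\bq \geq 0$), and the equivalence should be stated and proved over that effective domain rather than all of $\R^n$. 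Since the lemma is cited rather than proved in the paper, the safest path is exactly the one you name at the end — cite Milgrom--Strulovici for the general non-smooth statement and present the twice-differentiable case as illustration — but if a self-contained proof is intended, these two gaps are the ones that need filling.
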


We can easily extend the proof of GS implying submodularity (Lemma 5 in \cite{GulS99})  to LS implying submodularity; for completeness, we give the proof in Appendix~\ref{sec:ls-proofs}.

\begin{lemma}
    \label{lem:LS-submodular}
    Let $v : \R^n \to \R$ be a concave linear-substitute valuation. Then, $v(\bx)$ is submodular.
\end{lemma}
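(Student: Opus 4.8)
The plan is to prove the statement by combining two facts already available to us: Lemma~\ref{lem: LS-equal}, which says that $v$ is linear-substitute if and only if its concave-conjugate-type profit function $\pi(\bq) = \max_{\bx} (v(\bx) - \bq\cdot\bx)$ is submodular; and Lemma~\ref{lem:gs-submodular} together with the discrete argument of \cite{GulS99} (Lemma~5 there), which shows that in the indivisible setting a GS valuation is submodular by exploiting the substitutes condition on two-point price perturbations. The idea is to lift that discrete argument to the continuous setting essentially verbatim, replacing set-difference operations with componentwise $\vee$ and $\wedge$ and replacing "add item $e$" moves with marginal price decreases.

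First I would fix $\bx, \by \in \R^n$ and aim to show $v(\bx) + v(\by) \geq v(\bx\vee\by) + v(\bx\wedge\by)$. Since $v$ is concave, for any vector $\bz$ there is a supporting price $\bq^{\bz}$ (a subgradient of $v$ at $\bz$, negated) with $\bz \in \+D(v, \bq^{\bz})$, i.e. $\bz$ maximizes $v(\cdot) - \bq^{\bz}\cdot(\cdot)$. The strategy is to start from a price vector under which $\bx\vee\by$ is demanded and, coordinate by coordinate, raise the prices of the coordinates where $y_i > x_i$ (equivalently where $(\bx\vee\by)_i \neq \bx_i$) until we reach a price vector under which $\bx$ is demanded; the LS property guarantees that at each step we can track a demanded bundle whose coordinates on the \emph{unchanged} prices do not decrease. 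This produces, via a telescoping/monotonicity argument over the chain of bundles, the inequality $v(\bx\vee\by) - v(\bx) \le \sum (\text{price changes}) \cdot (\text{something})$ controlled above by $v(\by) - v(\bx\wedge\by)$ evaluated along the mirror-image chain where we raise the same coordinates starting from $\bx\wedge\by$'s supporting price. Concretely, I would let $\bq$ be a supporting price for $\bx\vee\by$ and $\bq'$ a supporting price for $\bx$, form the hybrid price that agrees with $\bq'$ on $S = \{i : y_i > x_i\}$ and with $\bq$ elsewhere, use LS iteratively along $S$ to find a demanded bundle at the hybrid price whose coordinates outside $S$ are $\ge (\bx\vee\by)_i = x_i$, hence (by optimality at the $\bx$-supporting price and downward closedness considerations) relate its value to $v(\bx)$ and $v(\bx\wedge\by)$, and finally assemble the submodularity inequality. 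The cleanest route may actually be to bypass this and simply invoke Lemma~\ref{lem: LS-equal}: $\pi$ submodular is \emph{equivalent} to LS, and one can then derive submodularity of $v$ from submodularity of $\pi$ using the Fenchel-type duality $v(\bx) = \min_{\bq}(\pi(\bq) + \bq\cdot\bx)$ valid for concave $v$, because an infimal convolution / min over a submodular $\pi$ of the modular term $\bq\cdot\bx$ should preserve the supermodular-in-$\bx$ structure appropriately — though this needs care since $v$ is the \emph{conjugate} of $\pi$, not a convolution.

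Given the paper explicitly says "we can easily extend the proof of GS implying submodularity (Lemma 5 in \cite{GulS99}) to LS implying submodularity," I would in fact follow the direct route: transcribe the \cite{GulS99} argument, which uses only (i) the existence of supporting/demand prices — here supplied by concavity of $v$ — and (ii) the substitutes condition under single-coordinate price increases — here supplied directly by the definition of LS. The only genuinely new bookkeeping is that price perturbations are now continuous rather than unit steps, so the "exchange" step becomes a limiting/continuity argument or an argument over a finite chain of intermediate prices interpolating between $\bq$ and the hybrid price; since we only ever need finitely many distinct prices (one per coordinate in $S$), no real analysis subtlety arises.

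\textbf{Main obstacle.} The delicate point is handling the \emph{divisibility}: in the discrete GS proof one picks a demanded \emph{set} and the substitutes condition hands back a demanded set containing the untouched items; in the continuous case a demand point is a vector and LS only guarantees the untouched coordinates do not \emph{decrease} (they could increase), so one must argue that this monotonicity, chained across all coordinates of $S$, still pins down enough structure to force $v(\bx\vee\by) - v(\bx\wedge\by) \le v(\bx) - \big(v(\bx\wedge\by) \text{-side term}\big)$ — i.e. that the "overshoot" in untouched coordinates cannot help. This is exactly where concavity of $v$ must be used (not just on $v$'s domain boundary but to bound marginal values), and it is the step I expect to occupy most of the real work in the full proof; everything else is a faithful continuous retelling of \cite{GulS99}.
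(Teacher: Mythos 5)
Your proposal ends up pursuing a genuinely different route than the paper does, and it leaves a real gap. You correctly flag two candidate approaches --- (a) transcribing the discrete Gul--Stacchetti argument, (b) going through Fenchel duality and Lemma~\ref{lem: LS-equal} --- but you then commit to (a), which is not what the paper does, and you yourself identify the sticking point in (a) (the ``overshoot'' problem: LS only says untouched coordinates do not decrease, whereas the discrete proof needs the stronger ``can keep exactly the untouched items'') without resolving it. In the divisible setting this is not a bookkeeping detail: the demand correspondence is a set of vectors rather than subsets of items, and the standard discrete exchange/chaining argument does not carry over verbatim. As written, route (a) remains a sketch with an unclosed hole precisely at the step you expect ``to occupy most of the real work.''

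The paper in fact takes route (b), and the missing ingredient in your sketch of it is not hard but is essential. Writing $v(\bx)=\min_{\bq\ge 0}\bigl(\pi(\bq)+\bq\cdot\bx\bigr)$ (valid by concavity), one observes that the bilinear form $\bq\cdot\bx$ is \emph{supermodular jointly} in $(\bq,\bx)$ on the nonnegative orthant (componentwise, via $ab+a'b'\le \max\{a,a'\}\max\{b,b'\}+\min\{a,a'\}\min\{b,b'\}$). Hence, after a sign change in $\bx$, the function $g(\bx,\bq):=\pi(\bq)-\bq\cdot\bx$ is \emph{jointly submodular} in $(\bx,\bq)$, since $\pi$ is submodular by Lemma~\ref{lem: LS-equal}. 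The key structural fact is then that taking $\min_{\bq}$ of a jointly submodular function yields a submodular function of $\bx$: if $\bq,\bq'$ attain the minima at $\bx,\bx'$, then
\[
g(\bx,\bq)+g(\bx',\bq')\ \ge\ g(\bx\vee\bx',\bq\vee\bq')+g(\bx\wedge\bx',\bq\wedge\bq')\ \ge\ \min_{\bq}g(\bx\vee\bx',\cdot)+\min_{\bq}g(\bx\wedge\bx',\cdot).
\]
A sign-flip in $\bx$ then converts the resulting submodularity of $v(-\bx)$ into submodularity of $v(\bx)$. Your hesitation (``$v$ is the conjugate, not a convolution'') is answered by exactly this joint-submodularity plus min-over-one-block argument; once you have it, the ``overshoot'' issue in route (a) never arises. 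I would advise developing route (b) with these two steps made explicit rather than trying to salvage the transcription of the discrete proof.
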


\subsubsection{Examples} \label{section:examples}

We discuss some examples of linear substitutes, which will be used as building blocks in applying our theorems. Compared to gross substitutes, relatively fewer examples are known in the literature. We show that the concave closure of GS and generalized flow satisfy LS, which could be of independent interest; the proofs are deferred to Appendix~\ref{sec:ls-proofs}. 
For more examples, see Section 5 of \cite{murota2022discrete}.

\begin{restatable}[Concave Closure of Gross Substitutes is LS]{theorem}{ThmGsClosure}
    \label{thm:gs-concave-closure}
    Given a gross-substitute valuation $v : \set{0, 1}^n \to \R$, let $ v^+(\bx) := \max \set{\sum_S v(S) \lambda_S : \sum_{S} \lambda_S \bone_S = \bx, \sum_{S} \lambda_S = 1, \lambda_S \geq 0}$ be the concave closure of $v$. Then, $v^+(\bx): [0, 1]^n  \to \R_+$ is linear substitute.    
\end{restatable}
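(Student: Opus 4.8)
The plan is to connect the concave closure $v^+$ of a gross-substitute valuation $v$ to a continuous aggregation over a family of single-bundle valuations, and then invoke Lemma~\ref{lem:LS-aggregation}, which guarantees LS is closed under aggregation. The first step is to express $v^+$ in a form that exhibits it as an aggregation. Concretely, I would write $v^+(\bx) = \max\{\sum_S \lambda_S v(S) : \sum_S \lambda_S \bone_S = \bx,\ \sum_S \lambda_S = 1,\ \lambda_S \geq 0\}$ and observe that the constraint $\sum_S \lambda_S = 1$ makes this a convex-combination (not a cone) problem, so I cannot directly aggregate the finitely many ``indicator'' valuations $\bx \mapsto v(S)$ if $\bx = \bone_S$ (and $-\infty$ otherwise), since aggregation in Lemma~\ref{lem:LS-aggregation} sums over all parts without a normalization constraint. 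The fix is standard: homogenize. I would introduce the positively-homogeneous degree-one extension $\hat v(\by) := \max\{\sum_S \mu_S v(S) : \sum_S \mu_S \bone_S = \by,\ \mu_S \geq 0\}$ defined on the cone $\R_+^n$ (the ``concave cover''), note $v^+(\bx) = \hat v(\bx)$ on the slice $\{\bx \in [0,1]^n\}$ when $v$ is monotone with $v(\emptyset)=0$ — or more carefully, that $v^+$ is the restriction of $\hat v$ to the appropriate domain — and then argue $\hat v$ is LS, which passes to the restriction.

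To show $\hat v$ is LS I would verify the hypothesis of Lemma~\ref{lem:LS-aggregation} holds for its building blocks. For each $S \subseteq [n]$, define $v_S : \R_+^n \to \R \cup \{-\infty\}$ by $v_S(\by) = t\, v(S)$ if $\by = t\,\bone_S$ for some $t \geq 0$, and $-\infty$ otherwise (i.e.\ the ``ray valuation'' in direction $\bone_S$). Then $\hat v = \bigoplus_S v_S$ in the aggregation sense of Lemma~\ref{lem:LS-aggregation}. So it suffices to show each $v_S$ is linear substitutes. This is where I would use the characterization in Lemma~\ref{lem: LS-equal}: $v_S$ is LS iff its profit function $\pi_S(\bq) = \max_{\by \geq 0} v_S(\by) - \bq \cdot \by = \max_{t \geq 0}\ t(v(S) - \bq \cdot \bone_S) = \max\{0, +\infty\cdot[v(S) > \bq\cdot\bone_S]\}$ — which is $0$ when $\bq\cdot\bone_S \geq v(S)$ and $+\infty$ otherwise — is submodular as a function of $\bq$. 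The region $\{\bq : \bq \cdot \bone_S \geq v(S)\}$ is a halfspace, and the indicator-like function ($0$/$+\infty$) of a halfspace of the form $\{\sum_{i\in S} q_i \geq c\}$ is easily checked to be submodular in $\bq$ (it only depends on $\sum_{i \in S} q_i$, and $q \mapsto [\![\sum_{i\in S} q_i \geq c]\!]\cdot(+\infty)$ is a monotone nondecreasing function of a modular function, hence submodular). Alternatively, and perhaps more cleanly, I would directly verify the LS definition for $v_S$: its demand correspondence at price $\bq$ is either $\{\bf 0\}$ (if $\bq \cdot \bone_S > v(S)$) or the whole ray $\{t\bone_S : t \geq 0\}$ (if $\bq\cdot\bone_S \leq v(S)$), and raising a single coordinate's price keeps, for any $k \neq j$, the inequality ``$Y_k \geq X_k$'' satisfiable by taking $\bf 0$ whenever the demand collapses — so the condition holds essentially trivially.

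With each $v_S$ shown LS, Lemma~\ref{lem:LS-aggregation} gives that $\hat v = \bigoplus_{S} v_S$ is LS on $\R_+^n$. The final step is to descend to the concave closure: restrict to the domain $[0,1]^n$ and identify $v^+$ with this restriction. I would need a small lemma that the restriction of an LS valuation to a box sub-domain remains LS — this holds because restricting the domain only shrinks each demand set consistently with the substitutes condition, or one can absorb the box constraints $x_i \leq 1$ into the aggregation by adding ``dummy'' ray valuations that cap each coordinate (each such cap is itself a trivially-LS laminar/symmetric concave piece). Finally, I would confirm the equality $v^+(\bx) = \hat v(\bx)$ for $\bx \in [0,1]^n$: for a GS valuation one may assume WLOG $v(\emptyset) = 0$ and $v$ monotone (subtracting a constant and noting GS/LS are preserved, or handling the general case by the standard reduction), so that the optimal $\mu$ in $\hat v$ always has $\sum_S \mu_S \leq 1$ and can be padded to $\sum_S \mu_S = 1$ using $S = \emptyset$ without changing the value, matching the definition of $v^+$.

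The main obstacle I anticipate is \emph{not} the aggregation step (that is essentially a black-box application of Lemma~\ref{lem:LS-aggregation}) but rather the bookkeeping needed to pass between the homogeneous cone version $\hat v$ and the normalized concave closure $v^+$ on the box $[0,1]^n$ — in particular, handling the $\sum_S \lambda_S = 1$ constraint and the upper bounds $x_i \leq 1$ cleanly, and making sure that ``restriction to a sub-box preserves LS'' is actually true and correctly invoked (continuous substitutes are known to be more delicate than the discrete case, as the paper itself notes). A secondary subtlety is whether we need $v$ to be monotone or $v(\emptyset)=0$; I expect the cleanest writeup normalizes $v$ first and remarks that GS (hence the conclusion) is invariant under such normalization, then treats the general $v$ by this reduction.
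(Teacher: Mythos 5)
Your decomposition breaks at the very step you treat as unproblematic: the ray valuations $v_S$ are \emph{not} linear substitutes when $|S|\ge 2$, and aggregating non-LS pieces via Lemma~\ref{lem:LS-aggregation} gives nothing. Intuitively, a valuation supported on the single ray $\{t\,\bone_S : t\ge 0\}$ treats the items in $S$ as perfect \emph{complements} (you only gain value by buying them in lock-step), which is the opposite of substitutability. Concretely, take $S=\{1,2\}$ with $v(S)=c>0$. Your profit function is
$\pi_S(\bq)=0$ if $q_1+q_2\ge c$ and $+\infty$ otherwise.
With $\bq=(c,0)$ and $\bq'=(0,c)$ we get $\pi_S(\bq)=\pi_S(\bq')=0$, $\pi_S(\bq\vee\bq')=\pi_S(c,c)=0$, but $\pi_S(\bq\wedge\bq')=\pi_S(0,0)=+\infty$, so $\pi_S(\bq)+\pi_S(\bq')\ge\pi_S(\bq\vee\bq')+\pi_S(\bq\wedge\bq')$ fails; $\pi_S$ is not submodular, hence $v_S$ is not LS by Lemma~\ref{lem: LS-equal}. (The claim that the $0/\infty$ indicator of a lower halfspace is ``a monotone function of a modular function, hence submodular'' is where the slip happens: that argument would give submodularity for a monotone \emph{nondecreasing} indicator, but yours is nonincreasing, and nonincreasing transformations of modular functions are generally \emph{super}modular.) Your alternative ``direct'' verification also fails for the same reason: at $\bq=(c/2,c/2)$ the demand includes $X=(5,5)$, but after raising $q_1$ to $c$ the demand collapses to $\{\mathbf 0\}$, forcing $Y_2=0<5=X_2$, violating the LS requirement $Y_k\ge X_k$ for $k\ne j$.

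The correct route, which is what the paper does, avoids decomposing $v^+$ into per-subset rays. The paper instead writes $v^+$ on rational points as a convolution of $1/\eps$ rescaled copies of $v$ itself (Lemma~\ref{lem:GS-equal}), and then shows the continuous dual profit $\pi^+(\bq)=\max_\bx v^+(\bx)-\bq\cdot\bx$ actually \emph{equals} the discrete profit $\pi(\bq)$ of the original GS valuation (Lemma~\ref{lem: GS-eps-Sub}), first on rational $\bq$ and then by continuity everywhere. Submodularity of $\pi$ follows from $v$ being GS, and Lemma~\ref{lem: LS-equal} then gives LS of $v^+$. If you want to salvage an aggregation-style argument, the right building block to convolve is $v$ itself (or its discrete LS counterpart on each of the $1/\eps$ item copies), not the ray functions $v_S$; the entire point is that GS-ness of $v$, not of its supporting rays, is what survives the concave closure. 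The normalization and boundedness bookkeeping you flag as the ``main obstacle'' is comparatively minor; the real load-bearing step is identifying $\pi^+$ with $\pi$.
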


To keep the flow of presentation, we defer the formal definition of generalized flow to Section~\ref{sec:theorem-applications}.

\begin{theorem}[Generalized Flow is LS]
    \label{thm:gf-ls}
    For any cost vector $\bc$, $v(\bx) := \max_{\bz \in \cP} \bc \cdot \bz$ for the $\cP$ in Eqn.~(\ref{eqn:p-gf}), which is defined for generalized flow, 
   is LS.
\end{theorem}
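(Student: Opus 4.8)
\textbf{Proof proposal for Theorem~\ref{thm:gf-ls} (Generalized Flow is LS).}

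The plan is to invoke Lemma~\ref{lem: LS-equal} and reduce the claim to showing that the dual profit function $\pi(\bq) := \max_{\bx}\big( v(\bx) - \bq \cdot \bx\big)$ is submodular, where $v(\bx) = \max_{\bz \in \cP(\bx)} \bc \cdot \bz$ and $\cP$ is the generalized-flow polytope of Eqn.~(\ref{eqn:p-gf}), with $\bx$ playing the role of an upper bound on the coordinates (remaining job sizes / net outflows) that parametrizes $\cP$. Substituting the definition of $v$, we get
\[
\pi(\bq) = \max_{\bx \geq 0} \; \max_{\bz \in \cP(\bx)} \big( \bc \cdot \bz - \bq \cdot \bx \big).
\]
The first step is to eliminate $\bx$: since $\cP$ is downward closed in $\bx$ and $\bq \geq 0$, at the optimum we may take $\bx$ exactly equal to the vector of net outflows induced by $\bz$ (making $\bx$ smaller only helps the $-\bq\cdot\bx$ term and never hurts feasibility of $\bz$). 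So $\pi(\bq)$ becomes a single linear program over flows $\bz$ on the network $G=(V,E)$ with capacities $u_e$, gain factors $\gamma_e$, objective coefficients coming from $\bc$ on the flow arcs, and a penalty $-q_j$ charged per unit of net outflow at each source node $j$. In other words, $\pi(\bq)$ is the optimal value of a parametric generalized max-flow / min-cost generalized flow LP where the parameter $\bq$ enters linearly only in the objective, one coordinate per source.

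The heart of the argument is then to show this parametric LP value is submodular in $\bq$, i.e. for all $\bq$ and indices $j \neq k$ and increments $\delta, \delta' \geq 0$,
\[
\pi(\bq) + \pi(\bq + \delta e_j + \delta' e_k) \leq \pi(\bq + \delta e_j) + \pi(\bq + \delta' e_k).
\]
I would prove this by an exchange/uncrossing argument on optimal flows, in the same spirit as the classical proof that GS valuations are submodular (Lemma~5 of \cite{GulS99}), which the paper already says it extends to LS. Take optimal flows $\bz^{(j)}$ and $\bz^{(k)}$ for the two ``single-increment'' prices $\bq+\delta e_j$ and $\bq+\delta' e_k$. Their difference $\bz^{(j)} - \bz^{(k)}$ decomposes, in the generalized-flow setting, into a collection of elementary structures---paths, cycles, and ``bicycles'' (generalized-flow analog of augmenting structures)---and one partitions these into two groups: those that, when applied to $\bz^{(k)}$, move net outflow out of source $j$ (hence should be ``kept'' when the price at $j$ is high), and the rest. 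Re-assembling one group with $\bz^{(k)}$ and the complementary group with $\bz^{(j)}$ yields a feasible flow for $\bq$ and a feasible flow for $\bq+\delta e_j + \delta' e_k$ whose combined objective value is at least $\pi(\bq+\delta e_j) + \pi(\bq+\delta' e_k)$, giving the submodularity inequality. Monotonicity/concavity of $v$ and the linearity of the $\bq$-penalty are what make the bookkeeping of the objective work out cleanly across the exchange.

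The main obstacle I anticipate is precisely this flow-decomposition step: in \emph{generalized} (lossy/gainy) networks, the difference of two feasible flows does not decompose as neatly as in ordinary networks---one has to handle flow-generating cycles and the fact that conservation is scaled by $\gamma_e$, so ``canceling'' a path against a cycle is value-distorting. I would handle this either (a) by appealing to the known structure theorem for generalized flows (decomposition into paths, cycles, and bicycles, e.g. from Gondran--Minoux or Wayne's work on generalized flow) and checking that each elementary piece affects at most the ``interesting'' source coordinates in a sign-consistent way, or (b) more robustly, by bypassing combinatorial decomposition entirely and arguing at the LP level: express $\pi$ via LP duality as a minimization, observe the dual feasible region is independent of $\bq$ while $\bq$ appears only in bounding dual variables associated with source nodes, and then use a lattice / Topkis-type argument ($\pi$ is a min of functions that are submodular in $\bq$ because each is separable-plus-monotone in the relevant dual coordinates). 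Option (b) sidesteps the messiest generalized-flow combinatorics and is the route I would pursue first; the decomposition argument (a) is the fallback if the dual form turns out not to be transparently submodular. Finally, I would note that concavity of $v$ needed to apply Lemma~\ref{lem:LS-submodular} downstream is immediate since $v$ is the value of an LP whose feasible region is convex in the parameter $\bx$.
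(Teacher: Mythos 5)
Your first step is right and matches the paper: both reduce, via Lemma~\ref{lem: LS-equal}, to showing that the dual profit $\pi(\bq)$ is submodular. But neither of your two proposed routes for that submodularity is on solid ground, and the paper's actual argument is quite different from both.

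Your route (b) has a concrete flaw. You assert that after passing to the LP dual, ``the dual feasible region is independent of $\bq$ while $\bq$ appears only in bounding dual variables associated with source nodes,'' and propose a Topkis-type lattice argument. But after eliminating $\bx$ as you suggest, $\pi(\bq)$ is the value of an LP whose \emph{objective} coefficients are $c_j - q_j$ on the source coordinates; dualizing, $\bq$ lands in the dual \emph{equality constraints} $A^\top \by = \bc - \bq'$, so the dual feasible region does depend on $\bq$. There is no obvious lattice on which to run Topkis, and submodularity of a parametric LP value in the objective coefficients is simply false in general (support functions of arbitrary polytopes need not be L$^\natural$-convex). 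Something specific to generalized flows must be used, and route (b) as written does not supply it. Route (a) is conceptually plausible, but you flag it yourself as the messy fallback: the decomposition of a difference of generalized flows into paths, cycles, and bicycles, and the bookkeeping needed to make an exchange argument close, is exactly the hard part, and as a proposal it is left unworked.

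The paper takes a third path that neither route anticipates. It discretizes prices to a grid $\eps\Z_+^n$, then builds an auxiliary generalized-flow network $\tilde G$ in which each source $j$ is fed by a chain of nodes $j_0, j_1, j_2, \ldots$ with arcs of cost $\eps$ (and an arc $(j_0,j)$ of cost $-c_j$); activating the suffix $\{j_{q_j/\eps}, j_{q_j/\eps+1}, \ldots\}$ of this chain encodes the price $q_j$, so that $-\pi^{+\eps}(\bq)$ equals the min-cost generalized flow in $\tilde G$ with a corresponding set $S_\bq$ of activated source nodes. It then invokes a known, nontrivial lemma of Fleischer (Lemma~\ref{lem:Fle-supmodular}): the min-cost generalized-flow value, as a function of which sources are activated, is supermodular. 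Submodularity of $\pi^{+\eps}$ on the grid follows directly, and a continuity argument lifts it to all of $\R_+^n$. The crucial external ingredient---Fleischer's supermodularity lemma and the price-to-activated-sources gadget---is absent from your proposal, and without it (or a fully worked-out exchange argument of comparable depth) the submodularity of $\pi$ remains unproven.
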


\section{Formal Description of Our Results}
    \label{sec:formal}

\subsection{Gradient Descent on the Residual Objective}
\label{sec:GD-on-Res}

We begin by formally defining the residual objective at each time $t$, namely the minimum projected cost of the remaining schedule. Let $A(t) := \{ j \; | \; r_j \leq t < C_j\}$ be the set of jobs alive at time $t$ in our schedule. Let $p_j(t)$ denote $j$'s remaining size at time $t$; by definition, $p_j(t) = 0$ for all $t \geq C_j$. Since the objective we consider is a linear combination of jobs' flow time, we can observe that $w_j (C_j - t)$, is exactly how much $j$ will contribute to the objective from time $t$ until it is completed at time $C_j$ as  $j$ gets processed by $p_j(t)$ units  by the algorithm  during $[t, C_j]$. So, if we shift time by $t$ for easy indexing, we have the following residual scheduling problem: Each alive job that has arrived has a (remaining) size $p_j(t)$ and has arrival time $0$ (in the shifted time horizon), and weight $w_j$. Noticing that completion time is equivalent to flow time when jobs arrive at time 0, the residual optimum can be expressed as follows.
 
\begin{align} 
\mbox{Residual Optimization Problem $\cI^{\cP}_\bx$:} \quad \quad  f(\bx) :=   \min \sum_{j \in A(t)} & w_j \tilde C_j  \quad s.t.  \label{residual-IP}\\
     \int_{\tilde t \geq 0}^{\tilde C_j} z_j(\tilde t) \dd \tilde t &= x_j \quad  \forall j \in A(t) \nonumber \\
     \bz(\tilde t) &\in \cP \quad  \forall \tilde t \geq 0, \nonumber 
\end{align}
where we set $x_j = p_j(t)$ for all $j \in A(t)$.  Note that we added tilde to $C_j$ and $t$ to make it clear that we are considering the shifted time horizon. The first constraint ensures that every job $j$ must be processed by $x_j$ units, and the second guarantees feasible scheduling at every time.

Let $\cI^\cP_\bx(\bz)$ denote the objective of $\cI^\cP_\bx$ for $\bz$. In other words, $\cI^\cP_\bx(\bz)$ is the extra cost we should pay if we follow $\bz$ from time $t$, i.e., process each job $j$ at a rate of $\bz_j(\tilde t)$ at time $t + \tilde t$, assuming no more jobs arrive.  If $\bz$ is infeasible, then let $\cI^\cP_\bx(\bz) = \infty$. Let $f(\bx) := \min_{\bz} \cI^\cP_\bx(\bz)$, which we will term algorithm $A$'s (integral) residual optimum at time $t$. 

We are now ready to formally define our gradient descent algorithm: At each
time $t$ we try to decrease $f(\bx)$ the most, i.e.,
\begin{align} 
\mbox{Gradient Descent on the Residual Optimum $f(\bx):$ } \quad \quad   \min_\bz \nabla f(\bx) \;|_{\bx = \bp(t)}  &\cdot \bz \label{eqn:gd}\\
         \bz &\in \cP \nonumber
 \end{align}

We will let $\GD^\cP(\bx)$ denote the decrease of the integral residual optimum due to the gradient descent, i.e., the optimum to Eqn. (\ref{eqn:gd}).   However, we will see that if $\bz(\tilde t)$ for $\tilde t \geq 0$ is an optimum solution that achieves $f(\bp(t))$, to implement GD, we can simply process jobs according to $\bz(0)$ at time $t$. In fact, we will show that we can exactly follow the residual optimum schedule until a new job arrives (Lemma~\ref{lem:GD-redisual-2}).  For brevity, we may omit $\cP$ from notation.

Notably, we take this time-indexed view since it explicitly describes the residual optimum solution at each time. Let us illustrate this with a single machine scheduling example we discussed in Section~\ref{sec:illustration}: After solving $\cI^{\cP}_{\bp(t)}$ we obtain: $z_{\pi(j)}(\tilde t) = 1$ for $\tilde t \in [0, p_{\pi(1)}(t))$ only for $j = 1$ (0 for all other jobs), $z_{\pi(j)}(\tilde t) = 1$ for $\tilde t \in [p_{\pi(1)}(t), p_{\pi(1)}(t)+ p_{\pi(2)}(t))$ only for $j = 2$, and so forth. By solving~Eqn.(\ref{eqn:gd}), we obtain $\bz$ such that $z_{\pi(1)} = 1$ and $z_{\pi(j)} = 0$ for all $j \neq 1$. Note 
that $\bz$ is identical to $\bz(0)$. This is not a coincidence. Lemma~\ref{lem:GD-redisual-2} says that we can follow $\bz(\tilde t)$ to implement GD: process job $\pi(1)$ during time 
$[t, t+ p_{\pi(1)}(t))$ and job $\pi(2)$ during $[t+p_{\pi(1)}(t), t+ p_{\pi(1)}(t)+ p_{\pi(2)}(t))$, and so forth. This is exactly what SRPT does. 

In the following, we define the key desired property we use in our paper: for any fixed remaining job sizes $\bx$, the residual optimum viewed as a set function is supermodular. There is an extended notion of submodularity/supermodularity in the continuous setting which is needed for linear substitutes (Section~\ref{sec:ls}). However, this notion of supermodularity is sufficient for our purpose here. For brevity, we may simply say supermodularity without explicitly mentioning `discrete.'
\begin{definition}
    \label{def:d-super}
    We say a function $g: \R^{|J|}_{\geq 0} \rightarrow \R$ is 

        (discrete-)supermodular if for any fixed $\bx$,  $\bar g_{\bx}(Z) := g(\bx \odot \bone_Z): 2^{J} \rightarrow \R$ is supermodular, i.e., $\bar g(U) + \bar g(V) \leq \bar g(U \cap V) + \bar g(U \cup V)$ for all $U, V \subseteq J$, where 
        $\bone_Z$ is the binary characteristic vector of $Z$ for any $Z \subseteq J$ and 
        $\odot$ denotes component-wise multiplication.
\end{definition}
We identify supermodularity as a key sufficient condition for the gradient descent on the optimum integral residual objective to give a scalable online algorithm by the following theorem. 
\begin{restatable}[Gradient Descent Desiderata for Integral Objective]{theorem}{thmone}
\label{thm:main-clairvoyant-int}
 For a polytope scheduling problem with polytope $\cP$,  
 suppose $f(\bx) := \min_{\bz} \cI^\cP_\bx(\bz)$ is discrete-supermodular.
     Then, for any $\eps > 0$, GD is $(1+\eps)$-speed $O(1 / \eps)$-competitive for minimizing weighted integral flow time for PSP $\cP$. 
\end{restatable}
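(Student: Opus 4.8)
The plan is to verify properties (P1) and (P2) for the generic potential function
$$\Phi(t) := \frac{2}{\eps}\left(f(\bp(t)) - \frac{1}{2}f(\bp(t)\,\|\,\bp^O(t))\right),$$
and then invoke the standard potential-function argument (from \cite{ImMP11}) together with Proposition~\ref{pro:integral-weight} to conclude $O(1/\eps)$-competitiveness. Throughout, $\bp(t)$ denotes remaining sizes under GD run with speed $1+\eps$, and $\bp^O(t)$ under the adversary with speed $1$. Here $f(\bp(t)\,\|\,\bp^O(t))$ means the residual optimum on the instance with two copies of each job, one of size $p_j(t)$ and one of size $p^O_j(t)$; since $\cP$ is fixed and this is again a polytope scheduling instance (on a doubled job set), $f$ is defined on it and, crucially, is still discrete-supermodular by hypothesis.

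First I would record the structural lemma that GD on the residual objective (\ref{eqn:gd}) is equivalent to following the residual-optimum schedule $\bz(\tilde t)$ until the next arrival — this is Lemma~\ref{lem:GD-redisual-2} as flagged in the text; assuming it, the rate of change of $f(\bp(t))$ along GD's execution at a non-event time is exactly $-s\,W(A(t))$ where $s = 1+\eps$ is GD's speed (each alive job's projected weighted completion time drops at rate equal to the weight times the processing rate, and the residual schedule processes at total "speed" $s$). This needs a short argument that the envelope/derivative of $f$ along the optimal trajectory equals the instantaneous decrease — i.e., $\frac{\dd}{\dd t} f(\bp(t)) = -\GD^\cP(\bp(t)) = -s\,W(A(t))$. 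For the mixed term, I would observe that processing being done simultaneously by GD (speed $s$) and the adversary (speed $1$) on the doubled instance $\bp(t)\,\|\,\bp^O(t)$ constitutes \emph{some} feasible schedule of total speed $s+1$ for that instance, hence it cannot decrease the residual optimum of the doubled instance faster than the best rate, giving $\frac{\dd}{\dd t} f(\bp(t)\,\|\,\bp^O(t)) \geq -(s+1)\big(W(A(t)) + W(O(t))\big)$. Combining, $\frac{\dd}{\dd t}\Phi(t) \leq \frac{2}{\eps}\left(-s\,W(A(t)) + \tfrac{1}{2}(s+1)(W(A(t))+W(O(t)))\right)$, and plugging $s = 1+\eps$ yields $\frac{\dd}{\dd t}\Phi(t) \leq -W(A(t)) + O(1/\eps)\,W(O(t))$, which is (P2).

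The main obstacle is (P1): showing $\Phi$ does not jump up at arrivals and completions. A completion by $A$ or $O$ sets some $x_j$ or $x^O_j$ to zero; since $f$ is monotone (removing a job from a polytope scheduling instance cannot increase the min weighted completion time — downward-closedness of $\cP$ plus the trivial extension by not processing the removed job), both $f(\bp(t))$ and $f(\bp(t)\,\|\,\bp^O(t))$ only decrease, but we must check the \emph{combination} doesn't increase; the clean way is that at a completion the vanishing job contributes $0$ to the residual optimum regardless of copies, so both terms drop by the same "amount attributable to that job," and continuity handles it. The real content is the arrival of a new job $k$ with size $p_k$: it is added as size $p_k$ to \emph{both} $\bp(t)$ and $\bp^O(t)$, so
$$\Delta\Phi = \frac{2}{\eps}\Big(\underbrace{f(\bp \cup \{k{:}p_k\}) - f(\bp)}_{\alpha} - \tfrac{1}{2}\underbrace{\big(f(\bp\,\|\,\bp^O \cup \{\text{two copies of }k{:}p_k\}) - f(\bp\,\|\,\bp^O)\big)}_{\beta}\Big),$$
and we need $\Delta\Phi \leq 0$, i.e. $\beta \geq 2\alpha$. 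This is exactly where \textbf{discrete-supermodularity} enters: adding the first copy of $k$ to the (larger) doubled instance increases $f$ by at least $\alpha$ (it dominates the increment $\alpha = \bar g_{\bx'}(\text{small set} \cup \{k\}) - \bar g_{\bx'}(\text{small set})$ by the supermodular inequality $\bar g(U \cup \{k\}) - \bar g(U) \geq \bar g(V\cup\{k\}) - \bar g(V)$ for $U \supseteq V$, after matching up sizes appropriately), and adding the \emph{second} copy of $k$ to the even-larger instance increases $f$ by at least $\alpha$ again — two applications of supermodularity give $\beta \geq 2\alpha$. I would carry out this step carefully: the subtlety is that $f$ on the doubled instance treats the two copies of job $k$ as \emph{distinct} items with the same remaining size $p_k$, so one sets $\bx$ on the doubled ground set appropriately (value $p_j(t)$, $p^O_j(t)$, and $p_k$ on the relevant coordinates) and applies Definition~\ref{def:d-super} twice with nested sets $V \subseteq U$. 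After (P1) and (P2), since $\Phi(0) = \Phi(\infty) = 0$ and $\Phi \geq 0$ is not even needed — only that boundary values are zero — integrating (P2) over time and using Proposition~\ref{pro:integral-weight} (that $\int W^A = \sum_j w_j F_j^A$ and likewise for $O$) finishes the proof.
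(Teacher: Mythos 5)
Your proposal is correct and follows essentially the same route as the paper: same potential, same observation that completions cause no discrete jump because a zero-size job contributes nothing to the residual optimum, the same two applications of discrete-supermodularity at an arrival (using that the two copies of the new job have identical size) to get $\beta \geq 2\alpha$, and the same $s$-speed vs.\ $(s{+}1)$-speed comparison for the continuous term, yielding the drift bound $-W^A(t) + O(1/\eps)\,W^O(t)$ with $s = 1+\eps$. The only cosmetic difference is in how you telescope the two supermodularity inequalities at an arrival (you add $j^A$ to $A\cup O$ first, then $j^O$ to $A\cup O\cup j^A$; the paper first applies supermodularity for the pair $\{j^A,j^O\}$ against $A\cup O$ vs.\ $A$ and then telescopes inside), but the two decompositions are equivalent and both correct.
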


\subsection{Substitutes Empower Gradient Descent}
    \label{sec:Substitutes-Empower-GD}

Unfortunately, a large number of scheduling problems are NP-hard to optimize for the integral completion objective.

Further, it is non-trivial to check if the residual optimum objective is supermodular. Thus, to make the gradient descent framework more readily applicable, we consider a fractional version of the residual objective and show it satisfies supermodularity for a large class of scheduling problems related to substitute valuation functions. Recall that we can use a standard conversion to translate an algorithm that is scalable for fractional weighted flow into one that is scalable for integer weighted flow; see Lemma~\ref{thm:conversion}. 

As discussed in Section~\ref{sec:problem-definition}, we can take the feasible processing rate view (polytope scheduling) or equivalently the resource view (multi-dimensional scheduling). In the following, we consider both views.

\subsubsection{Feasible Processing Rate View}

The following is a well-known time-indexed LP \cite{hall1997scheduling}, which we will use to measure our fractional residual objective. As before, we use $\tilde t$ to distinguish the time used in the residual schedule from the global time $t$.
\begin{align}
\mbox{Residual (Time-Indexed) LP $\cL^{\cP}_\bx$:} \quad \quad    
     f(\bx):= \min \sum_{j \in A(t)} & \frac{w_j}{p_j} \int_{\tilde t \geq 0} \tilde t \cdot z_j(\tilde t) \dd \tilde t & \label{eqn:frac-residual} \\
     \int_{\tilde t \geq 0} z_j(\tilde t) \dd \tilde t &= x_j \quad  \forall j \in A(t) \nonumber \\
     \bz(\tilde t) &\in \cP \quad  \forall \tilde t \geq 0 \nonumber 
 \end{align}

This LP tries to minimize total weighted fractional completion time. 
Gradient descent on $f(\bx) := \min_{\bz} \cL^\cP_\bx(\bz)$ solves $\min \nabla f(\bx) \;|_{\bx = \bp(t)} \cdot \bz$ subject to $\bz \in \cP$ and processes jobs at time $t + \tilde t$ according to $\bz(\tilde t)$, but as before, we will show that we can follow the LP solution until a new job arrives. 

\begin{restatable}[Gradient Descent Works for Linear Substitutes in Feasible Processing Rate View]{theorem}{thmtwo}
    \label{thm:procesing-rate-view}
    For a polytope scheduling problem with polytope $\cP$, suppose that for any $\bc \in \R_+^{|J|}$, $v(\bx) = \max_{0 \leq \bz \leq \bx, \bz \in \cP} \bc \cdot \bz$ is linear substitute. Then, for any $\eps > 0$, the GD gives a $(1 + \eps)$-speed $O(1 / \eps)$-competitive online algorithm for minimizing total fractional weighted flow time; thus, there exists a $(1+\eps)$-speed $O(1 / \eps^2)$-competitive algorithm for total weighted flow time.  
\end{restatable}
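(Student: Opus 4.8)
The plan is to reduce Theorem~\ref{thm:procesing-rate-view} to a single statement — that the fractional residual optimum $f(\bx) := \min_{\bz}\cL^\cP_\bx(\bz)$ is discrete-supermodular in the sense of Definition~\ref{def:d-super} — and then run the potential-function machinery. Concretely, I would first note that the proof of Theorem~\ref{thm:main-clairvoyant-int} adapts essentially verbatim to the fractional objective, with the integral residual optimum $\cI^\cP_\bx$ replaced by the LP residual optimum $\cL^\cP_\bx$: use the same generic potential $\Phi(t) = \tfrac{2}{\eps}\bigl(f(\bp(t)) - \tfrac12 f(\bp(t)\,||\,\bp^O(t))\bigr)$; derive property (P1) from discrete-supermodularity of $f$ exactly as in the single-machine illustration (Section~\ref{sec:illustration-2}); and derive (P2) from Lemma~\ref{lem:GD-redisual-2}, i.e.\ that $\GD$ simply follows the residual-LP solution until a job arrives, so $\tfrac{\dd}{\dd t}f(\bp(t)) = -(1+\eps)\tilde W^A(t)$ while viewing the combined processing on $\bp(t)\,||\,\bp^O(t)$ as a feasible speed-$(2+\eps)$ schedule gives $\tfrac{\dd}{\dd t}f(\bp(t)\,||\,\bp^O(t)) \ge -(2+\eps)(\tilde W^A(t)+\tilde W^O(t))$. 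This yields $(1+\eps)$-speed $O(1/\eps)$-competitiveness for total fractional weighted flow; composing with Lemma~\ref{thm:conversion} (losing an extra $(1+\eps)$ factor in speed and $1/\eps$ in the ratio) and reparametrizing $\eps$ gives the claimed $(1+\eps)$-speed $O(1/\eps^2)$-competitive algorithm for integral weighted flow. So all the real work is the supermodularity claim.

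For that, I would carry out the Walrasian-market argument of Section~\ref{sec:illustration-2} for a general downward-closed polytope $\cP$. Fix the alive set $J' := A(t)$ and the remaining sizes $\bx$, and time-discretize $\cL^\cP_\bx$ (slots $\tilde t \in \{\delta, 2\delta, \dots\}$, truncated at a horizon $T$ that is finite for the instance since the residual optimum never idles and $\cP$ is bounded; the continuum is recovered by $\delta \to 0$ at the end, which preserves all the (sub/super)modularity inequalities). Negate the objective and add the modular term $B\|\bx\|_1$ to get a maximization problem $g(\bx) = \max\bigl\{\sum_{\tilde t}\bc^{(\tilde t)}\!\cdot\bz(\tilde t) : \bz(\tilde t)\in\cP\ \forall \tilde t,\ \sum_{\tilde t}\bz(\tilde t)\le\bx\bigr\}$ with $c^{(\tilde t)}_j := \bigl(B - \tfrac{w_j}{p_j}\tilde t\bigr)_+$. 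Replacing the coefficient by its positive part is without loss because $\cP$ is downward closed, so zeroing a coordinate with a negative coefficient keeps feasibility without changing the value; and for $B$ (and $T$) large enough relative to the instance, the optimal solution processes every job fully, so $f(\bx) = B\|\bx\|_1 - g(\bx)$. Since $Z \mapsto B\|\bx\odot\bone_Z\|_1$ is modular, $f$ is discrete-supermodular iff $g$ is discrete-submodular.

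Next I would identify $g$ as the social surplus of a Walrasian market over divisible resources: one divisible item per job $j \in J'$, with $x_j$ units available, and one agent per time slot $\tilde t$ whose valuation for a bundle $\by$ is $v_{\tilde t}(\by) = \max_{0\le\bz\le\by,\ \bz\in\cP}\bc^{(\tilde t)}\!\cdot\bz$. A routine check — redistributing any leftover units among the slots, using downward-closedness of $\cP$ — shows $g(\bx) = \max\{\sum_{\tilde t}v_{\tilde t}(\by_{\tilde t}) : \sum_{\tilde t}\by_{\tilde t} = \bx,\ \by_{\tilde t}\ge 0\}$, i.e.\ $g$ is the aggregation (max-convolution) of the $v_{\tilde t}$. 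Each $v_{\tilde t}$ is concave (a max of a linear function over a polytope depending monotonically on $\by$) and, since $\bc^{(\tilde t)}\in\R_+^{|J'|}$, is linear substitute by the hypothesis of the theorem. Hence by Lemma~\ref{lem:LS-aggregation} the aggregate $g$ is linear substitute (and still concave), so by Lemma~\ref{lem:LS-submodular} it is submodular as a continuous function. Restricting this inequality to the sublattice $\{\bx\odot\bone_Z : Z\subseteq J'\}$, on which $\vee$ and $\wedge$ realize $\cup$ and $\cap$, gives discrete-submodularity of $g$, hence discrete-supermodularity of $f$; letting $\delta\to 0$ transfers this to the continuous-time $\cL^\cP_\bx$, completing the reduction.

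I expect the main obstacle to be this market reduction rather than the competitiveness step. The delicate points are: (i) establishing $f(\bx) = B\|\bx\|_1 - g(\bx)$, which requires choosing the penalty $B$ and the truncation horizon $T$ correctly and justifying the positive-part replacement via downward-closedness; (ii) bridging the continuum of time and a finite agent set — discretize, truncate, then pass to a limit — while keeping the substitutes machinery applicable, since Lemmas~\ref{lem:LS-aggregation} and~\ref{lem:LS-submodular} are stated for finitely many valuations on boxes $\prod_i[0,L_i]$; and (iii) verifying the definitional hypotheses of linear substitutes (concavity, the box domain) for each $v_{\tilde t}$. By contrast, feeding supermodularity into the competitive ratio is a re-run of the proof of Theorem~\ref{thm:main-clairvoyant-int} with the fractional weights $\tilde W^\sigma(t)$ in place of $W^\sigma(t)$, together with Lemmas~\ref{lem:GD-redisual-2} and~\ref{thm:conversion}, and should be routine once that fractional meta-argument is written down.
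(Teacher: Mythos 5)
Your proposal is correct and takes essentially the same approach as the paper: the same potential $\Phi(t)$, the same reduction to discrete-supermodularity of the LP residual optimum (paper's Lemma~\ref{lem:second-thm-super}), the same rate bounds on the continuous change (paper's Lemmas~\ref{lem:second-theorem-gd} and~\ref{lem:second-theorem-adv}, recovering $-s\tilde W^A$ and $-(s+1)(\tilde W^A+\tilde W^O)$), the same Walrasian-market / aggregation argument using Lemmas~\ref{lem:LS-aggregation} and~\ref{lem:LS-submodular}, and the same conversion via Lemma~\ref{thm:conversion}. Your extra care about truncating coefficients at their positive part (so that $\bc^{(\tilde t)}\in\R_+^{|J'|}$ genuinely meets the theorem's hypothesis) and converting $\sum_{\tilde t}\by_{\tilde t}\le\bx$ to equality by redistributing leftover supply are clean small refinements of what the paper handles more informally via "sufficiently large $B$," but they do not change the structure of the argument.
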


The theorem can be proved separately for gross substitutes (GS) for indivisible resources. However, we only state the theorem for linear substitutes (LS) because we can show that the concave closure of GS is LS (Theorem~\ref{thm:gs-concave-closure}). The conceptual contribution of the theorem lies in connecting GD to substitute valuation.

The time-indexed LP could be exponentially large. 
In Appendix~\ref{sec:rt}, we show that we can consider an approximate residual optimum to obtain a polynomial time 
algorithm---assuming that we have an efficient oracle to test whether $\bz' \in \cP$ or not.

\subsubsection{Resource View} 

We now consider the other equivalent view where each job gets some resources at each time from a $D$-dimensional supply resource vector $\bone$, which is replenished at every time. Recall that each job $j$ is associated with a valuation function $u_j$. Let $\by_{jt}$ denote the resource vector job $j$ receives at time $t$. Job $j$ then gets processed at a rate of $u_j(\by_{jt})$. We will keep reserving $\tilde t$ to refer to times in the residual schedule. The LP we solve is essentially identical to $\cL^{\cP}_\bx$, but has more explicit constraints that resources cannot be over-allocated. Concretely, we use the following linear programming.
\begin{align}
\mbox{(Resource View) Residual LP (or CP) $\cR_\bx$:} \quad \quad    
     f(\bx) := \min \sum_{j} & \frac{w_j}{p_j} \int_{\tilde t \geq 0} \tilde t \cdot u_j(\by_{j\tilde t}) \dd \tilde t & \label{resource-view-ip} \\
     \int_{\tilde t \geq 0} u_{j}(\by_{j\tilde t}) \dd \tilde t &= x_j \quad  \forall j \nonumber \\
     \sum_{j} \by_{j \tilde t} &\leq \bone \quad \forall \tilde t \geq 0 \nonumber 
 \end{align}

We prove the following theorem.

\begin{restatable}[Gradient Descent Works for Linear Substitutes in Resource View]{theorem}{thmthree}
    \label{thm:resource-view-2}
    For a multi-dimensional scheduling problem with a supply vector $\bone$ of divisible resources, suppose that for every job $j$, $u_j : [0, 1]^{D} \to \R_+$ is strictly monotone and strictly concave\footnote{Under this assumption, the demand set is a singleton set.} linear substitute  and further satisfies the following:
    \begin{enumerate}
        \item For any $\by \in \+D(u_j, \bb)$ and $\lambda \geq 1$, there exists $\by' \in \+D(u_j, \lambda \bb)$ such that $\by \geq \by'$.
        \item For any $\by \in \+D(u_j, \bb)$ and $\bb' \leq \bb$, there exists $\by' \in \+D(u_j, \bb')$ such that $u_j(\by') \geq u_j(\by)$.          
    \end{enumerate}    
    Then, for any $\eps > 0$, the GD applied to $\cR_{\bp(t)}$
    yields a $(1 + \eps)$-speed $O(1 / \eps)$-competitive online algorithm for minimizing total fractional weighted flow time; thus, there exists a $(1+\eps)$-speed $O(1 / \eps^2)$-competitive algorithm for integer weigthed flow time. 
\end{restatable}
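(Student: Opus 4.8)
\textbf{Proof proposal for Theorem~\ref{thm:resource-view-2}.}

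The plan is to reduce the resource-view theorem to the feasible-processing-rate-view theorem (Theorem~\ref{thm:procesing-rate-view}), which in turn rests on Theorem~\ref{thm:main-clairvoyant-int}. The key object is the polytope $\cP := \{ \bz \; | \; z_j = u_j(\by_j) \mbox{ for some } \by \geq 0 \mbox{ with } \sum_j \by_j \leq \bone \}$ that the multi-dimensional instance induces in processing-rate space. Since each $u_j$ is monotone, concave, and $u_j(0) = 0$, this $\cP$ is a downward-closed convex body, so $\cR_\bx$ and $\cL^\cP_\bx$ have the same optimum value; hence the GD algorithm applied to $\cR_{\bp(t)}$ coincides with the GD algorithm of Theorem~\ref{thm:procesing-rate-view}. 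It therefore suffices to verify the hypothesis of Theorem~\ref{thm:procesing-rate-view}: for any $\bc \in \R_+^{|J|}$, the function $v(\bx) := \max_{0 \leq \bz \leq \bx,\ \bz \in \cP} \bc \cdot \bz$ is linear substitute. Expanding the definition of $\cP$, this is $v(\bx) = \max \{ \sum_j c_j \min(u_j(\by_j), x_j) : \sum_j \by_j \leq \bone,\ \by \geq 0 \}$.

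Next I would decompose $v$ as an aggregation (max-convolution) of per-job valuations and invoke closure of LS under aggregation (Lemma~\ref{lem:LS-aggregation}). Define, for each job $j$ and each ``budget'' $x_j$, the single-agent valuation $w_j^{(x_j)}(\by_j) := c_j \min(u_j(\by_j), x_j)$ over $\by_j \in [0,1]^D$; then introduce one phantom agent per unit of supply and write $v(\bx)$ as the social surplus $\max\{ \sum_j w_j^{(x_j)}(\by_j) : \sum_j \by_j \leq \bone\}$, padded if necessary with a dummy agent absorbing unused supply (whose valuation is identically zero, trivially LS). By Lemma~\ref{lem:LS-aggregation} it is enough to show each $w_j^{(x_j)}$ is linear substitute --- but here we must be careful: $x_j$ plays the role of a coordinate of the argument $\bx$ of $v$, not of $\by_j$, so the cleaner route is to treat $x_j$ as an extra fictitious divisible resource coupled to job $j$ and show the extended per-job valuation $(\by_j, x_j) \mapsto c_j \min(u_j(\by_j), x_j)$ is LS on $[0,1]^{D+1}$. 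This is where hypotheses (1) and (2) of the theorem are used: (1) (demand shrinks monotonically as all prices scale up) and (2) (one can reach a weakly-better-valued demand point when all prices drop) are exactly the monotonicity properties needed to propagate the substitutes condition through the pointwise-minimum ``capping'' operation $u_j \mapsto \min(u_j, x_j)$ and through scaling by $c_j$. Concretely I would verify, via Lemma~\ref{lem: LS-equal}, that the dual profit function $\pi_j(\bq, q_{x}) := \max_{\by_j, x_j} c_j \min(u_j(\by_j), x_j) - \bq\cdot\by_j - q_x x_j$ is submodular, using strict concavity to keep the demand a singleton and using (1)--(2) to control how the optimal $(\by_j^*, x_j^*)$ moves as prices change.

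Finally, with $v$ shown LS, Theorem~\ref{thm:procesing-rate-view} gives a $(1+\eps)$-speed $O(1/\eps)$-competitive algorithm for total fractional weighted flow time, and Lemma~\ref{thm:conversion} upgrades this to $(1+\eps)$-speed $O(1/\eps^2)$-competitive for integral weighted flow time, as claimed. The main obstacle I anticipate is the middle step: the ``capping'' operation $u_j \mapsto \min(u_j, x_j)$ is not one of the textbook LS-preserving operations, and linear substitutes (unlike gross substitutes) is known not to be closed under all the operations one might hope for; so the crux is to show, delicately and using precisely hypotheses (1) and (2), that capping a strictly-concave strictly-monotone LS valuation by a fresh coordinate preserves LS. A secondary technical point is handling the domain/boundary carefully (extending $u_j$ and $v$ to all of $\R_{\geq 0}^D$, ensuring the max is attained, and ensuring the singleton-demand assumption survives the capping), and confirming that GD on $\cR_{\bp(t)}$ is genuinely implementable by following the residual LP solution until the next arrival (the analogue of Lemma~\ref{lem:GD-redisual-2} in the resource view).
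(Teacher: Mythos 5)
Your overall reduction to Theorem~\ref{thm:procesing-rate-view} and the goal of showing $v(\bx) := \max_{0 \le \bz \le \bx,\ \bz \in \cP} \bc \cdot \bz$ is LS is exactly the paper's target, so the starting point is right. The gap is in the decomposition you chose. Your plan hinges on the claim that the capped per-job valuation $(\by_j, x_j) \mapsto c_j \min(u_j(\by_j), x_j)$ is LS on $[0,1]^{D+1}$, and that hypotheses (1)--(2) are the tool to prove this. That claim is false. The operator $\min(\cdot, \cdot)$ makes $\by_j$ and the fictitious resource $x_j$ \emph{complements} (a Leontief-type interaction: capacity beyond the other coordinate is wasted), which is the opposite of substitutability, and no hypothesis on $u_j$ alone can undo it. Concretely, take $D=1$, $c_j = 1$, $u_j(y) = \sqrt{y}$. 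The utility $\min(\sqrt{y}, x) - q_y y - q_x x$ is maximized (for interior prices) at $x^* = \frac{1-q_x}{2q_y}$ and $y^* = (x^*)^2$; raising $q_x$ strictly decreases $y^*$, so the demand for $y$ drops when the price of $x$ rises --- a direct violation of the LS condition. Yet $u_j(y) = \sqrt{y}$ is strictly monotone, strictly concave, trivially LS in one dimension, and satisfies both hypotheses (1) and (2) of the theorem. Those hypotheses constrain the demand of $u_j$ under a price vector $\bb$ on the $D$ physical resources; they say nothing about the demand of the capped function under joint prices on $(\by_j, x_j)$, so they cannot carry you through the $\min$.

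The paper takes an entirely different route and never tries a job-by-job aggregation. It verifies LS of $v$ directly. For $\bq > \textbf{0}$ the demand $\+D(v, \bq)$ is the unique $\bx(\bq)$ attained by the resource-view CP $\max \sum_i (c_i - q_i) u_i(\by_i)$ s.t.\ $\sum_i \by_i \le \bone$ --- a Walrasian surplus maximization over the $D$ physical resources in which each job $i$ is an agent with LS valuation $(c_i - q_i) u_i$. Raising only $q_j$ is equivalent to scaling up $j$'s effective resource prices by the factor $\frac{c_j - q_j}{c_j - q'_j} \ge 1$, so hypothesis (1) makes job $j$ demand weakly fewer resources; hence the old equilibrium resource price $\bb$ becomes an under-demand price, and the tatonnement process yields a new equilibrium $\bb' \le \bb$ (Lemma~\ref{lem:mono-market}). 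Jobs $i \neq j$ have unchanged valuations, and with the lower resource prices $\bb'$ their demand attains weakly higher valuation by hypothesis (2), i.e.\ $x_i(\bq') \ge x_i(\bq)$. Together with uniqueness of the demand (Lemma~\ref{lem:demand-unique}) this gives LS of $v$; a short extension handles $\bq$ with some zero coordinates. Your secondary concerns (boundary and domain-extension issues; implementability of GD on $\cR_{\bp(t)}$ by following the LP solution between arrivals) are benign and are handled by the same mechanisms as in Theorem~\ref{thm:procesing-rate-view}. The aggregation-with-capping step is the one that does not go through.
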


In the theorem, in addition to every job $j$ having an LS valuation function, we require two additional properties. The first says that if we increase all prices by the same factor, then each job demands only less resources. The second says that if we increase resource prices, each job's optimal demand maximizing its quasi-linear utility should have a lower valuation. While these two properties are intuitive, not all LS valuations satisfy them\footnote{For example,  $u(\by) := \max \sum_i \lambda_i y'_i \ \text{s.t.} \ ||\by'||_1 \leq 1$ and $||\by'||_1 \leq ||\by||_1$ is LS, but does not satisfy the first property.}.

To prove the theorem, we show that the valuation functions considered here induce an LS function in the feasible processing rate view, which allows us to use Theorem~\ref{thm:procesing-rate-view}. The high-level proof idea is as follows. Consider $v(\bx) = \max_{0 \leq \bz \leq \bx, \bz \in \cP} \bc \cdot \bz$ for a fixed $\bc$. 
We want to show the valuation $v$ is LS. Let $\bq$ be the job prices per unit. 
By processing job $j$ by one unit, we obtain value $c_j$, yet have to pay the price $q_j$. Finding the demand set for $v$  can be viewed as finding the maximum social surplus when each job (agent) $j$ has valuation $(c_j - q_j)u_j$ in the resource view; let $\bb$ be the equilibrium prices of the $D$ resources. 
Increasing $q_j$ lowers $(c_j - q_j)u_j$, which has the same effect of increasing the resource prices $\bb$ for $j$ by a certain factor $\lambda > 1$.  Thus it results in resources being under-utilized due to the first property. Then, it is known that the  equilibrium prices drop \cite{ausubel2004auctioning, milgrom2009substitute}. All jobs' valuation functions $(c_i - q_i) u_i$, except for $j$, remain the same. Therefore, they can only get processed more due to the second property.

\section{Proofs of Main Theorems}
    \label{sec:main-proofs}

\subsection{Proof of Theorem~\ref{thm:main-clairvoyant-int}}
    \label{sec:proof-thm:main-clairvoyant-int}

It is highly important to note that in the proof when a job $j$ arrives, we pretend that GD and the optimum schedule each are given a \emph{distinct} copy of $j$. Therefore, if $A(t)$ and $O(t)$ denote the sets of jobs that have arrived yet have not been completed at time $t$ by GD and by the optimum schedule respectively, then $A(t) \cap O(t) = \emptyset$ always. We will use $A$ and $O$ to denote GD's schedule and the optimum schedule, respectively.  

We use $\bx || \by$ to denote the concatenation of two vectors $\bx$ and $\by$. Let $p^A_j(t)$ and $p^O_j(t)$ denote the remaining size of job $j$ under our algorithm GD and a fixed optimum schedule respectively. For brevity we let $\bp^A(t) := \{p^A_j(t)\}_{j \in A(t)}$ and $\bp^O(t) := \{p^O_j(t)\}_{j \in O(t)}$. For analysis, we will use the following potential function. 

$$ \Phi(t) = \frac{2}{\eps}\left(f(\bp^A(t)) - \frac{1}{2}f(\bp^A(t) || \bp^O(t))\right)$$

Following the standard potential function argument in online scheduling \cite{ImMP11} we will show the following:

\begin{enumerate}
    \item Boundary Conditions: $\Phi(0) = \Phi(\infty) = 0$. In other words, at the beginning when no jobs have arrived, and at the end when all jobs have been completed, the potential is 0. This is obvious because the residual optimum is 0 when there are no jobs to be processed and it is only concerned with processing jobs that are currently alive. Further, it's worth noting that jobs of zero remaining sizes have no effect on the residual optimum.
    
    \item Discrete Changes: $\Phi$ does not increase when a job arrives or is completed by GD or the optimum schedule. To this end, we will crucially use supermodularity.
    \item Continuous Changes: $\frac{\dd \Phi(t)}{\dd t} \leq - W^A(t) + O(\frac{1}{\eps}) W^O(t)$, when $A$ is given $1+\eps$ speed and $O$ 1-speed, where $W^A(t) : = \sum_{j \in A(t)} w_j$ and $W^O(t) := \sum_{j \in O(t)} w_j$ denote the total weight of jobs alive at time $t$ in GD's schedule and the optimum's, respectively. 
\end{enumerate}

\begin{figure}[htbp]
    \centering
    \begin{tikzpicture}[scale=0.8, transform shape]
        \node at (-0.5, 0.5){$\bz$:};
        \draw[general shadow={fill=gray}] (1,0) rectangle node {\Large $\bZ(\dd t)$} (2.5,1);
        \draw (2.5, 0)[pattern=my crosshatch dots] rectangle node{$\bz'$} (11.5, 1);
        \draw [-] (2.5, 0) -- (1, -1.5);
        \draw [-] (11.5, 0) -- (10, -1.5);
        \node at (-0.5, -2){$\bz'$:};
        \draw (1, -2.5)[pattern=my crosshatch dots] rectangle node{$\bz'$} (10, -1.5);
        \draw [->] (1,-3) node[left] {Time} -- (11.5,-3);
    \end{tikzpicture}
    \caption{An illustration of the proof of Lemma~\ref{lem:GD-redisual} that shows the first direction ($\GD(\bx) \leq -\sum_j w_j$). The first row represents the residual schedule before GD's processing. The second row represents the schedule we consider after GD's processing for $\dd t$ time units. GD processes the jobs $\bz(\tau)$, $\tau \in [0, \dd t)$, which is colored dark grey. $\bz'$ is a schedule of the remaining sizes after processing sizes $\bZ(\dd t)$. Since $\bz'$ is a feasible residual solution for remaining sizes $\bx' = \bx - \bZ(\dd t)$, we can upper bound $\GD(\bx)$ by comparing $\bz$ and $\bz'$. 
    Because $\bz'$ is a suffix of $\bz$, shifted by $\dd t$ time units, each job's completion time differs by exactly $\dd t$ units in both schedules. This time difference translates directly to the difference in the objective function, which is the total weight of (alive) jobs.  }  
\end{figure}
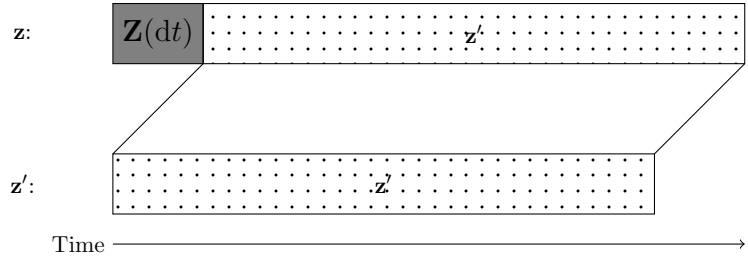

Recall that $\GD^\cP(\bx) := - \min_{\bz \in \cP} \nabla f(\bx) \cdot \bz$ denotes the change rate of the residual optimum for remaining sizes $\bx$ due to GD's processing. 
As mentioned earlier, the $\cP$ can be dropped for convenience, so $\GD(\bx)$ denotes the same concept. If GD uses speed s, we denote it as $\GD_s(\bx)$.
Note that GD with $s$ speed can process jobs at rates $s \bz$ if $\bz \in \cP$. 

\begin{claim}
    \label{claim:gd-scale}
    For any $s >0$, we have $\GD_s(\bx) = -  s\GD(\bx)$.
\end{claim}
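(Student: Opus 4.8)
\textbf{Proof plan for Claim~\ref{claim:gd-scale}.} The claim is essentially a restatement of the linearity of the gradient descent objective in the speed parameter, so the plan is to unwind the definitions and use positive homogeneity of the feasible set under scaling the rate vector. Recall $\GD(\bx) := -\min_{\bz \in \cP} \nabla f(\bx)|_{\bx} \cdot \bz$, the maximum rate at which a speed-$1$ algorithm can decrease the residual optimum $f$ at the point $\bx$. When GD runs with speed $s$, at each instant it may process jobs at a combined rate vector $s\bz$ for some $\bz \in \cP$; equivalently, its feasible instantaneous rate vectors form the set $s\cP = \{s\bz : \bz \in \cP\}$. Thus $\GD_s(\bx) = -\min_{\bw \in s\cP} \nabla f(\bx) \cdot \bw$.

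First I would substitute $\bw = s\bz$ in the minimization defining $\GD_s(\bx)$: as $\bz$ ranges over $\cP$, $\bw = s\bz$ ranges over $s\cP$, and since $s > 0$ this is a bijection. Hence
\[
\GD_s(\bx) = -\min_{\bz \in \cP} \nabla f(\bx) \cdot (s\bz) = -\min_{\bz \in \cP} s\,\big(\nabla f(\bx) \cdot \bz\big) = -s\min_{\bz \in \cP} \nabla f(\bx)\cdot \bz = -s\,\GD(\bx),
\]
where the third equality uses $s > 0$ to pull the positive scalar out of the minimum. This gives the desired identity. I would also note, for the record, that the gradient $\nabla f(\bx)$ is evaluated at the \emph{same} point $\bx = \bp(t)$ in both expressions — the speed only affects the rate at which we move along the descent direction, not the direction-selection subproblem itself — so no subtlety arises there.

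There is no real obstacle here; the only things worth being careful about are (i) confirming that a speed-$s$ algorithm's instantaneous feasible rate vectors are exactly $s\cP$ (this is the standard convention for speed augmentation, already used implicitly in the preceding paragraph, "GD with $s$ speed can process jobs at rates $s\bz$ if $\bz \in \cP$"), and (ii) that $s > 0$ so that scaling does not flip the direction of the $\min$. If one wanted to be fully rigorous about differentiability, one could remark that when $f$ is not differentiable the gradient should be read as a subgradient / directional-derivative functional, but since the claim and its use only concern the linear objective $\nabla f(\bx) \cdot \bz$ over $\cP$, this does not affect the argument. The whole proof is a one-line computation once the scaling of $\cP$ is made explicit.
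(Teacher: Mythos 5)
Your approach --- substituting $\bw = s\bz$ so that the feasible set becomes $s\cP$ and then pulling the positive scalar $s$ out of the $\min$ --- is the right (and essentially the only) way to argue this, and the paper does not itself supply a proof, treating the claim as immediate. But your final equality does not follow from your own stated convention. Having set $\GD(\bx) := -\min_{\bz\in\cP}\nabla f(\bx)\cdot\bz$, from $-s\min_{\bz\in\cP}\nabla f(\bx)\cdot\bz$ you should conclude $s\bigl(-\min_{\bz\in\cP}\nabla f(\bx)\cdot\bz\bigr) = s\,\GD(\bx)$, not $-s\,\GD(\bx)$; the last link in your chain silently flips a sign.

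The confusion traces to an inconsistency in the paper itself. The claim as displayed, $\GD_s(\bx) = -s\,\GD(\bx)$, conflicts with how $\GD$ is defined and used around it: Lemma~\ref{lem:GD-redisual} gives $\GD(\bx) = -\sum_{x_j>0}w_j$ (a negative quantity, the rate of change of $f$ under GD at unit speed), and the proof of Lemma~\ref{lem:together-change} invokes Claim~\ref{claim:gd-scale} precisely to conclude that GD at speed $s+1$ decreases $f$ at rate at most $(s+1)$ times the total alive weight, i.e.\ $\GD_{s+1}(\bx) = (s+1)\,\GD(\bx)$. Both of these require $\GD_s(\bx) = s\,\GD(\bx)$ with no leading minus; the extra minus sign in the displayed claim, like the stray minus in the preceding sentence (``Recall that $\GD^\cP(\bx) := -\min_{\bz\in\cP}\nabla f(\bx)\cdot\bz$'') which contradicts the earlier definition of $\GD^\cP(\bx)$ as the optimum value of Eqn.~(\ref{eqn:gd}), appears to be a typo. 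With the claim corrected to $\GD_s(\bx) = s\,\GD(\bx)$ and the spurious sign removed from your last step, your substitution argument is complete and correct.
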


The following lemma proves that the change in the optimal residual objective value due to GD is precisely equal to the total weight of the uncompleted jobs. The intuition is simple: the residual optimal schedule remains optimal for minimizing total weighted completion time until a new job arrives. Consequently, every job's completion time in the residual schedule decreases by one in each time step. This establishes the ``$\leq$" direction. The other direction can be proven similarly.

\begin{lemma}
    \label{lem:GD-redisual}
    When the residual optimum is $f(\bx) := \min_{\bz} \cI^\cP_\bx(\bz)$, we have $\GD(\bx) = - \sum_{x_{j}>0}w_{j}$.
\end{lemma}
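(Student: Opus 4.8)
\textbf{Proof plan for Lemma~\ref{lem:GD-redisual}.}

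The plan is to prove the two inequalities $\GD(\bx) \le -\sum_{x_j>0} w_j$ and $\GD(\bx) \ge -\sum_{x_j>0} w_j$ separately, both by exhibiting explicit feasible solutions to the relevant optimization problems. Throughout, fix the remaining sizes $\bx$ and let $\bz = \{\bz_j(\tilde t)\}$ be an optimal solution to the residual problem $\cI^\cP_\bx$, so that $f(\bx) = \sum_{j: x_j>0} w_j \tilde C_j$ where $\tilde C_j$ is $j$'s completion time under $\bz$. Note that since $\bz \in \cP$ for all $\tilde t$ and $\cP$ is downward-closed, all the schedules I build below are feasible.

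\emph{Direction $\GD(\bx) \le -\sum_{x_j>0} w_j$ (GD can decrease the residual optimum at least this fast).} I would argue that processing according to $\bz(0)$ — i.e. the rate vector the optimal residual schedule uses at time $\tilde t = 0$ — decreases $f$ at rate at least $\sum_{x_j>0} w_j$, which upper-bounds the GD optimum $\GD(\bx) = -\min_{\bz'\in\cP}\nabla f(\bx)\cdot \bz'$ since $\bz(0)\in\cP$. Concretely, after processing rate vector $\bz(0)$ for an infinitesimal $\dd t$, the new remaining sizes are $\bx' = \bx - \bz(0)\,\dd t = \bx - \bZ(\dd t)$, and the ``shifted suffix'' $\bz'(\tilde t) := \bz(\tilde t + \dd t)$ is a feasible solution for $\cI^\cP_{\bx'}$. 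In $\bz'$ every job that is still alive completes exactly $\dd t$ earlier than in $\bz$, so $\cI^\cP_{\bx'}(\bz') = f(\bx) - \left(\sum_{j: x_j'>0} w_j\right)\dd t$ (jobs that complete within $[0,\dd t)$ contribute a negligible correction of order $\dd t^2$ — I would note this carefully but not belabor it). Hence $f(\bx') \le f(\bx) - \left(\sum_{x_j>0} w_j\right)\dd t$, giving the first inequality after taking $\dd t \to 0$. This is exactly the picture in the figure referenced in the excerpt.

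\emph{Direction $\GD(\bx) \ge -\sum_{x_j>0} w_j$ (GD cannot decrease it faster).} Here I would take an optimal GD step: let $\bz^\star \in \cP$ achieve $\GD(\bx) = -\nabla f(\bx)\cdot\bz^\star$, and after processing $\bz^\star$ for $\dd t$ let $\bx' = \bx - \bz^\star\dd t$, with $\bw$ an optimal residual schedule for $\bx'$ so $f(\bx') = \cI^\cP_{\bx'}(\bw)$. I then build a feasible schedule for $\cI^\cP_\bx$ by prepending the step: run $\bz^\star$ on $[0,\dd t)$ and then run $\bw$ (shifted by $\dd t$) afterwards. This schedule completes each still-alive job exactly $\dd t$ later than $\bw$ does, so its objective is $f(\bx') + \left(\sum_{j: x_j'>0} w_j\right)\dd t$, and since it is feasible for $\cI^\cP_\bx$ we get $f(\bx) \le f(\bx') + \left(\sum_{x_j>0} w_j\right)\dd t$, i.e. $\GD(\bx) \ge -\sum_{x_j>0} w_j$ in the limit. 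Combining the two directions yields $\GD(\bx) = -\sum_{x_j>0} w_j$.

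\emph{Expected main obstacle.} The substantive point — as opposed to the bookkeeping — is handling jobs with tiny remaining size that complete within the infinitesimal window $[0,\dd t)$, so that ``still alive'' differs between $\bx$, $\bx'$, and the shifted schedules only by a set of negligible total weight; formally one should argue that the integral/derivative identity holds because the set of completion times is finite (or of measure zero) and the discrepancy is $O(\dd t^2)$. A secondary subtlety is that the gradient $\nabla f(\bx)$ must be interpreted correctly (one-sided directional derivative) since $f$ need not be differentiable everywhere; phrasing the argument directly in terms of the right-derivative $\lim_{\dd t\to 0^+} (f(\bx) - f(\bx - \bz\,\dd t))/\dd t$ and the fact that $\bz^\star$ maximizes this over $\bz\in\cP$ sidesteps any differentiability worry. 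Neither issue is deep, but both need a clean sentence.
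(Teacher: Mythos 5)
Your proposal is correct and follows essentially the same route as the paper's proof: both directions are established by shift/prepend constructions of feasible residual schedules and a term-by-term comparison of completion times. The only cosmetic difference is that you handle the "jobs that complete inside $[0,\dd t)$" and nondifferentiability worries with an $O(\dd t^2)$ remark and a one-sided derivative, whereas the paper simply stipulates that $\dd t$ is chosen small enough that no job completes in $[0,\dd t)$ and treats $\GD(\bx)$ directly as the minimized directional change, which avoids both issues in one sentence.
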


\begin{proof}
    Let $\bz(\tau)$, $\tau \in [0, +\infty)$ be the optimal residual schedule for remaining size $\bx$, i.e., $\bz = \arg \min_{\bz'} \cI^{\cP}_{\bx}(\bz')$. Consider the decrease of the residual optimum in $\dd t$ units of time when we run GD. Let $\dd t$ be an infinitesimally small time step such that no job is completed during time $[0, \dd t)$. Recall that GD schedules $\bz(\tau)$, $\tau \in [0, \dd t)$. Let $\bZ(\tau') := \int_{\tau'' \geq 0}^{\tau'} \bz(\tau'') \dd \tau''$. Let $\bx' = \bx - \bZ(\dd t)$ be the remaining sizes after processing $\bz$ in $\dd t$ time. Note that $\bz'(\tau) := \bz(\tau + \dd t)$, $\tau \in [0, \infty)$ is a feasible schedule, for remaining job sizes $\bx'$. Then we have $f(\bx') \leq \cI_{\bx'}^\cP(\bz')$ because $f(\bx')$ minimizes $\cI^{\cP}_{\bx'}$ w.r.t.  remaining sizes $\bx'$, and $\bz'$ is a feasible schedule w.r.t. the same vector $\bx'$. Thus we have
    \begin{equation}
        \label{eqn:gd-residual-1}
        f(\bx) - f(\bx') \geq \cI^{\cP}_{\bx}(\bz)- \cI^{\cP}_{\bx'}(\bz') = \dd t \sum_{x_{j}>0}w_{j}
    \end{equation}
    This is because every job $j$ s.t. $x_j > 0$ has a $\dd t$ units larger completion time in $\bz$ than in schedule $\bz'$ and is still alive in the infinitesimally time step. Formally, $j$ completes in $\bz$ at the maximum (supremum) time $c$ such that $z_{cj} > 0$, and therefore it completes at time $c - \dd t$ in $\bz'$ since $\bz(c) = \bz'(c - \dd t)$.

To show the other direction, suppose GD schedules $\bz''(\tau)$ for $\tau \in [0, \dd t)$ and let $\bx'$ be the resulting remaining job sizes. Let $\bz'$ be the optimal residual schedule for $\bx'$; thus we have $f(\bx') = \cI^{\cP}_{\bx'}(\bz')$. Let $\bz$ be the schedule that first has $\bz''$ for $\dd t$ units of time and then exactly follows $\bz'$. Since $\bz$ is a valid residual schedule for remaining sizes $\bx$, we have $f(\bx) \leq \cI^{\cP}_{\bx}(\bz)$. Therefore, 
\begin{equation}
    f(\bx') - f(\bx) \geq \cI^{\cP}_{\bx'}(\bz') - \cI^{\cP}_{\bx}(\bz) = - \dd t \sum_{x_{j}>0}w_{j}
\end{equation}
As above, the equality holds as each job $j$'s completion time (such that $x_j > 0)$ differs exactly by $\dd t$ in the two schedules $\bz$ and $\bz'$.
This implies GD decreases the residual objective by at most $\dd t \sum_{x_{j}>0}w_{j}$. Thus we have $\GD(\bx) =  - \sum_{x_{j}>0}w_{j}$, as desired.
\end{proof}

The next lemma demonstrates that we can implement GD more efficiently by following the optimal residual schedule until a new job arrives. Upon a new job arrival, we recompute a new optimal residual schedule. The lemma is unnecessary if we are willing to compute the optimal residual schedule at every time step.

\begin{lemma}
    \label{lem:GD-redisual-2}
    Let $\bz$ denote the optimum residual schedule for remaining sizes $\bp(t)$. Suppose no jobs arrive during the time interval $[t, t']$. Then, we can assume wlog that GD processes jobs at rates $\bz(\tilde t)$ at each time $t + \tilde t$ for all $\tilde t \in [0, t' - t]$.
\end{lemma}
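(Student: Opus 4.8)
The plan is to show that following the optimal residual schedule $\bz$ from time $t$ onward decreases the residual optimum at exactly the maximal possible rate, so that GD, which by definition picks a direction achieving this maximal decrease rate, can be taken to coincide with $\bz$. First I would recall from the proof of Lemma~\ref{lem:GD-redisual} the key mechanism: if at time $t + \tilde t$ (with $\tilde t \in [0, t'-t)$) the remaining sizes are $\bx(\tilde t)$ and we continue to run the suffix $\bz'(\tau) := \bz(\tilde t + \tau)$, then $\bz'$ is a feasible residual schedule for $\bx(\tilde t)$, and moreover it is \emph{optimal}: any strictly cheaper residual schedule for $\bx(\tilde t)$ could be prepended with the prefix $\bz|_{[0,\tilde t)}$ to give a residual schedule for $\bp(t)$ strictly cheaper than $\bz$, contradicting optimality of $\bz$ for $\bp(t)$. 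Hence at every time in $[t, t')$ the remaining portion of $\bz$ is itself an optimal residual schedule for the current remaining sizes.

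Next I would argue that processing according to $\bz(\tilde t)$ at time $t + \tilde t$ is a valid gradient descent step at that time. By Lemma~\ref{lem:GD-redisual}, the value of an optimal residual schedule decreases at rate exactly $\sum_{x_j > 0} w_j$ per unit of processing speed, and this is simultaneously the \emph{maximum} possible decrease rate of $f$ (this is precisely what the two directions in Lemma~\ref{lem:GD-redisual} establish, together with Claim~\ref{claim:gd-scale} to account for the speed factor $s$). Since running the suffix of $\bz$ realizes this maximal decrease rate — because the suffix is optimal for the current remaining sizes, so continuing it keeps the residual value on the optimal trajectory — the instantaneous rate vector $\bz(\tilde t)$ is an optimizer of the linear program $\min_{\bz \in \cP} \nabla f(\bx)\big|_{\bx = \bp(t + \tilde t)} \cdot \bz$ defining GD in Eqn.~(\ref{eqn:gd}). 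When this LP has multiple optimal solutions, GD is free to choose any of them, and in particular may choose $\bz(\tilde t)$; this is exactly the ``wlog'' in the statement. One should note the single-machine illustration already given right before the lemma: there $\bz(0)$ processes $\pi(1)$ and this agrees with the unique GD direction, and the suffix of $\bz$ continues to process $\pi(1)$ until it finishes, then $\pi(2)$, etc., matching SRPT.

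The one point requiring a little care — and the main obstacle — is justifying that GD's trajectory is well-defined and that ``choosing $\bz(\tilde t)$ at every time'' is a legitimate run of GD even though $f$ need not be differentiable everywhere and the optimal residual schedule need not be unique. I would handle this by observing that we never actually need $f$ to be smooth: the GD algorithm is specified operationally as ``process so as to decrease $f$ at the maximal rate,'' and Lemma~\ref{lem:GD-redisual} pins down both the maximal rate and the fact that running any optimal residual schedule achieves it; differentiability, where it holds, is only used to phrase Eqn.~(\ref{eqn:gd}), and at points of non-smoothness one simply reads ``$\min_\bz$ of the one-sided directional derivative,'' for which the same argument goes through verbatim. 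Thus the suffix of $\bz$ is a valid GD run on $[t, t')$, and upon the next arrival we recompute a fresh optimal residual schedule and repeat. This is all that is needed; the continuous-change bound (P2) in the proof of Theorem~\ref{thm:main-clairvoyant-int} then uses exactly this fact, that $\frac{\dd}{\dd t} f(\bp^A(t)) = -s\,W^A(t)$, which is immediate once GD is known to follow an optimal residual schedule.
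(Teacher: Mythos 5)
Your proof is correct and follows essentially the same route as the paper: invoke Lemma~\ref{lem:GD-redisual} (in particular Eqn.~(\ref{eqn:gd-residual-1})) to see that following $\bz$ achieves the maximal decrease rate, note that the suffix of $\bz$ remains an optimal residual schedule for the updated remaining sizes, and recurse until the next arrival. The paper states the suffix-optimality step without argument, whereas you spell out the prepending/exchange justification, and you add a remark on non-smoothness that the paper does not address; neither changes the substance of the argument.
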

\begin{proof}
    The proof of this lemma is implicit in that of Lemma~\ref{lem:GD-redisual}. Eqn.~(\ref{eqn:gd-residual-1}) shows that processing along $\bz$, until a job is completed, is as effective as GD, which by definition is the most effective in decreasing the residual objective. Note that the remaining schedule must be an optimum residual schedule for the rest of the jobs. Then, we can recurse on the remaining sizes, until time $t'$ when a new job arrives and we  recompute the optimum residual schedule.
   
\end{proof}

The following lemma bounds the potential change when a job arrives or is completed. This is where we use supermodularity.

\begin{lemma}
    \label{lem:discrete-change-clairvoyant-int}
    If $\Phi(t)$ changes discontinuously due to a job's arrival or completion, the change is non-positive.
\end{lemma}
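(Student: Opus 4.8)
\textbf{Proof plan for Lemma~\ref{lem:discrete-change-clairvoyant-int}.}

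The plan is to handle the three kinds of discrete events separately: (i) a job is completed by $A$ or $O$; (ii) a job arrives. For case (i), when $A$ (or $O$) completes a job $j$, its remaining size $p^A_j(t)$ (resp. $p^O_j(t)$) has just reached $0$. Since $f$ depends on the remaining-size vector only through jobs with positive remaining size --- indeed the residual optimization problem $\cI^\cP_\bx$ only constrains jobs with $x_j>0$, and a zero-size job contributes nothing to the objective --- dropping $j$ from $A(t)$ changes neither $f(\bp^A(t))$ nor $f(\bp^A(t) \,||\, \bp^O(t))$. Hence $\Phi$ is continuous across completions; this needs only the remark already made in the ``Boundary Conditions'' paragraph.

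The substantive case is a job arrival, say job $j$ arrives at time $t$ with size $p_j$. Recall that by the convention fixed at the top of Section~\ref{sec:proof-thm:main-clairvoyant-int}, GD and $O$ each receive a \emph{distinct} copy of $j$, so both $\bp^A$ and $\bp^O$ gain a new coordinate equal to $p_j$. Write $\bx := \bp^A(t)$ (over jobs $A(t)$, before arrival) and $\by := \bp^O(t)$. The change in $\Phi$ is
\begin{align*}
\Phi(t^+) - \Phi(t^-) = \frac{2}{\eps}\Bigl( \bigl(f(\bx, p_j) - f(\bx)\bigr) - \tfrac12\bigl(f(\bx\,||\,\by,\, p_j,\, p_j) - f(\bx\,||\,\by)\bigr)\Bigr),
\end{align*}
where $(\bx,p_j)$ means $\bx$ with the new coordinate appended. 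So it suffices to show
\begin{align*}
f(\bx\,||\,\by,\, p_j,\, p_j) - f(\bx\,||\,\by) \;\geq\; 2\bigl(f(\bx, p_j) - f(\bx)\bigr).
\end{align*}
The plan is to split the left-hand side into two marginal increments by adding the two copies of $j$ one at a time:
\begin{align*}
f(\bx\,||\,\by,\,p_j,\,p_j) - f(\bx\,||\,\by) = \underbrace{\bigl(f(\bx\,||\,\by,\,p_j) - f(\bx\,||\,\by)\bigr)}_{(\ast)} + \underbrace{\bigl(f(\bx\,||\,\by,\,p_j,\,p_j) - f(\bx\,||\,\by,\,p_j)\bigr)}_{(\ast\ast)}.
\end{align*}
Now invoke discrete-supermodularity of $f$ (Definition~\ref{def:d-super}): identifying coordinates with jobs and holding the remaining-size values fixed, $f$ restricted to subsets is supermodular, so the marginal increment of adding job $j$ (with fixed size $p_j$) is nondecreasing in the ambient set of jobs. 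Since the job set underlying $\bx\,||\,\by$ contains the job set underlying $\bx$, we get $(\ast) \geq f(\bx,p_j) - f(\bx)$. For $(\ast\ast)$, the ambient set now additionally contains the first copy of $j$, which again only enlarges it relative to the set underlying $\bx$, so $(\ast\ast) \geq f(\bx,p_j) - f(\bx)$ as well. (Here one uses that a job and its ``copy'' are genuinely distinct coordinates with the same assigned size, so supermodularity applies verbatim; this is exactly why the distinct-copies convention was set up.) Adding the two bounds gives the desired inequality, hence $\Phi(t^+) \leq \Phi(t^-)$.

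The main obstacle is purely bookkeeping: making precise that the two-copies construction lets us treat ``adding job $j$'s copy to a set already containing job $j$'' as a legitimate instance of the supermodular marginal inequality of Definition~\ref{def:d-super}, and that appending coordinates (rather than toggling a fixed ground set) is consistent with how $\bar g_\bx$ is defined there. One clean way to do this is to fix in advance the ground set $J$ to be the \emph{multiset} of all job-copies that will ever appear (both the $A$-copy and $O$-copy of every job), let $\bx$ range over size-vectors indexed by $J$ with unborn/completed copies set to $0$, and note that $f$ ignores zero coordinates; then every discrete event is literally a change of one coordinate of $\bx$ from $0$ to $p_j$ or from $p_j$ to $0$ inside a \emph{fixed} ground set, and Definition~\ref{def:d-super} applies directly. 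With that framing the inequalities $(\ast),(\ast\ast)\ge f(\bx,p_j)-f(\bx)$ are immediate from supermodularity, and the factor-$2$ slack in the potential is exactly absorbed.
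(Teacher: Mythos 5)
Your proof is correct and follows essentially the same approach as the paper: handle completions trivially (zero-size coordinates don't affect $f$), then for arrivals bound the increase of $f(\bp^A||\bp^O)$ from below by $2(f(\bp^A,p_j)-f(\bp^A))$ via two applications of supermodularity together with the fact that the two copies $j^A, j^O$ carry the same size. The only cosmetic difference is the order of operations — you telescope into the two copy-increments first and then use supermodularity on each, whereas the paper first uses supermodularity to strip $O$ out and then telescopes — but the argument, hypotheses, and factor-$2$ bookkeeping are the same.
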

\begin{proof}
    If there is no arrival and completion, thanks to 
    Lemma~\ref{lem:GD-redisual-2}, it is straightforward to see $\Phi(t)$ is continuous in $t$ and differentiable.

    First, we observe that job completion in $A$'s schedule or the optimum solution's schedule has no effect on $f$ and therefore on $\Phi(t)$. This is because once a job $j$'s remaining size becomes 0, then the residual optimum schedule does not process it. Thus, we only need to consider job arrival.
    
    Suppose a new job $j$ arrives at a fixed time $t$. For brevity, we drop $t$ in the following. Let $\bone_{j^A}$ denote the unit vector that has 1 only the position corresponding to $j$ in $A$'s schedule. 
        Define $\bone_{j^O}$ analogously for the other copy of $j$ in $O$'s schedule. Here $j^A$ and $j^O$ refer to $j$'s copy in the schedule of $A$ and $O$, resp.  Dropping $t$ from the notation for brevity, $f(\bp^A \ || \  \bp^O)$ increases by: 
    \begin{align*}
        f(\bp^{A} \ || \ p_j \bone_{j^A} \ || \ \bp^O \ || \ p_j  \bone_{j^O}) - f(\bp^{A} \ || \ \bp^O)
    = \bar g_{\by}(A \cup j^A \cup O \cup j^O) - \bar g_\by(A \cup O)
    \end{align*}
where $\by = \bp^{A} \ || \ p_j  \bone_{j^A} \ || \ \bp^O \ || \ p_j  \bone_{j^O}$. 
Recall that the function $\bar g_\by(S)$ computes $f$ assuming that each  $j' \in S$ has size $\by \cdot \bone_{j'}$ and every job not in $S$ has zero size. Notably, we view $f$ as a set function $\bar g$ by fixing each job's (remaining) size. 
In the remaining proof, we use $g$ instead of $\bar g_\by$ for notational brevity. 
    
    By supermodularity of $g$ (due to the theorem condition), we have, 
    \begin{align*}
    & \ g(A \cup j^A \cup O \cup j^O) - g(A \cup O)  \\
    \geq \ & g(A \cup j^A \cup j^O) - g(A) \nonumber\\
    = \ &  g(A \cup j^A \cup j^O) - g(A \cup j^A ) + g(A \cup j^A) - g(A) \nonumber\\
    \geq \ & g(A  \cup j^O) - g(A) + g(A \cup j^A) - g(A) \nonumber\\
    = \ & 2 (g(A  \cup j^A) - g(A)),
    \end{align*}
where the last inequality follows from the fact that $j^O$ and $j^A$ both have the same size $p_j$ when $j$ arrives.

Now, we shift to bounding the change of $f(\bp^A)$, which increases by exactly
\begin{align*}
        f(\bp^{A} \ || \ p_j \bone_{j^A}) - f(\bp^{A})
    = g(A \cup j^A) - g(A)
\end{align*}

Therefore, $\Phi(t)$'s change is at most
$\frac{2}{\eps} \left( g(A \cup j^A) - g(A) - 
\frac{1}{2} \cdot 2(g(A \cup j^A) - g(A))\right) = 0
$, as desired.
\end{proof}

\begin{lemma}
    \label{lem:together-change}
    Consider any time $t$ when no job arrival or completion occurs. If $A$ and $O$ are given $s$-speed and $1$-speed respectively,  $\frac{\dd}{\dd t} f(\bp^A(t) \ || \ \bp^O(t))  \geq - (s + 1) ( W^A(t) + W^O(t) )$.
\end{lemma}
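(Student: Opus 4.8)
The plan is to exhibit a feasible residual schedule for the concatenated instance whose rate of decrease in weighted fractional completion time is exactly $(s+1)(W^A(t)+W^O(t))$, and then use optimality of $f$ to conclude the stated inequality. Concretely, fix a time $t$ with no arrival or completion. By Lemma~\ref{lem:GD-redisual-2} (applied with speed $s$, and noting Claim~\ref{claim:gd-scale}), GD is processing jobs at rates $s\bz^A(\tilde t)$ where $\bz^A$ is the optimal residual schedule for $\bp^A(t)$; likewise let $\bz^O$ be the optimal residual schedule for $\bp^O(t)$, which the adversary (wlog, for the purpose of this bound) follows at speed $1$. The key observation is that the combined processing on the concatenated vector $\bp^A(t)\ ||\ \bp^O(t)$ — namely GD's rates $s\bz^A$ on the $A$-copies and the adversary's rates $\bz^O$ on the $O$-copies — is a feasible residual schedule for $\cP$ only if $s\bz^A(\tilde t) + \bz^O(\tilde t)$ (suitably interpreted on the concatenated coordinate set, so really the rate vector that is $s\bz^A$ on $A$-coordinates and $\bz^O$ on $O$-coordinates) lies in the polytope; since $\bz^A(\tilde t),\bz^O(\tilde t)\in\cP$ and $\cP$ is downward-closed, we can feed this into a schedule running at combined speed $s+1$, i.e. the concatenated rate vector rescaled is feasible for a speed-$(s+1)$ machine.

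The main steps, in order, would be: (i) Set up the concatenated residual instance $\cI^\cP_{\bp^A(t)\ ||\ \bp^O(t)}$ and recall $f$ of it is the minimum over feasible residual schedules. (ii) Argue that, at speed $s+1$, one can simultaneously run $\bz^A$ on the $A$-copies and $\bz^O$ on the $O$-copies: because at speed $s$ GD processes $s\bz^A\in s\cP$ and the adversary at speed $1$ processes $\bz^O\in\cP$, their ``union'' is schedulable with total speed $s+1$ by downward-closedness of $\cP$; formally this is the same trick used in Section~\ref{sec:overview-1}. (iii) Apply the analogue of Lemma~\ref{lem:GD-redisual} to this speed-$(s+1)$ schedule on the concatenated instance: following it for $\dd t$ time decreases $f(\bp^A(t)\ ||\ \bp^O(t))$ at rate exactly $(s+1)$ times the total (fractional) weight of alive jobs in the concatenated instance, which is $(s+1)(W^A(t)+W^O(t))$. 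Since this particular schedule is only one feasible choice, the true optimum $f$ can decrease at least this fast, giving $\frac{\dd}{\dd t} f(\bp^A(t)\ ||\ \bp^O(t)) \geq -(s+1)(W^A(t)+W^O(t))$.

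The subtlety — and the step I'd expect to need the most care — is that the schedules $\bz^A$ and $\bz^O$ were each computed as optimal residual schedules for their own instances, so the concatenated schedule ``$s\bz^A$ on $A$-copies, $\bz^O$ on $O$-copies'' is generally \emph{not} the optimal residual schedule for the concatenated instance; it is merely feasible. That is exactly why the statement is an inequality ($\geq -(s+1)(\dots)$) rather than an equality: the true $f$ of the concatenated instance may decrease even faster than along this particular schedule, which only strengthens the bound. One must also be careful that ``$\frac{\dd}{\dd t} f$'' here refers to the rate of change induced by the actual processing GD and $O$ perform (not an abstract best-case), and that Lemma~\ref{lem:GD-redisual-2} legitimately lets us take $\bz^A$ as the schedule GD is currently executing; an analogous wlog assumption is made for $O$, or alternatively one notes that $O$'s actual processing, whatever it is, decreases its own contribution at rate at most $W^O(t)$ in the fractional sense. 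Combining the $A$-side (rate exactly $sW^A(t)$ by Lemma~\ref{lem:GD-redisual}) and the $O$-side (rate at most $W^O(t)$, or $(s+1)$-speed bound for the joint object) yields the claim; packaging it cleanly via the single feasible speed-$(s+1)$ schedule on the concatenated instance is the cleanest route and avoids double-counting.
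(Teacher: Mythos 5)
Your overall strategy is the same as the paper's: view the combined $A$-plus-$O$ processing as a feasible speed-$(s+1)$ schedule on the concatenated instance $\bp^A(t)\,||\,\bp^O(t)$, and then invoke Lemma~\ref{lem:GD-redisual} (with Claim~\ref{claim:gd-scale}) to bound the decrease rate of $f$. However, step (iii) and the ``subtlety'' paragraph have the logic pointed in the wrong direction, and as written the argument does not quite close.

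The specific problem is your claim that ``following \emph{it} [the concatenated schedule $s\bz^A$ on $A$-copies, $\bz^O$ on $O$-copies] for $\dd t$ time decreases $f$ at rate \emph{exactly} $(s+1)(W^A(t)+W^O(t))$.'' That exact rate is attained only by GD on the concatenated instance run at speed $s+1$; your combined schedule is feasible at speed $s+1$ but is generally \emph{not} GD for $\bp^A\,||\,\bp^O$, so it typically decreases $f$ \emph{slower}, not at the stated exact rate. The relevant fact from Lemma~\ref{lem:GD-redisual} is the second direction of its proof: \emph{any} feasible processing during $[t,t+\dd t)$ can decrease the residual optimum by at most $\dd t \sum_{x_j>0} w_j$ at unit speed, hence at most $(s+1)\dd t(W^A+W^O)$ at speed $s+1$. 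That upper bound on the decrease is what yields $\tfrac{\dd}{\dd t}f \ge -(s+1)(W^A+W^O)$, and it is achieved by comparing $f(\bx)$ to a \emph{hybrid} schedule (the actual processing for $\dd t$, then the optimal residual schedule for the new sizes), not to the full concatenated $s\bz^A\,||\,\bz^O$ schedule as you propose. Using the full concatenated schedule only bounds $f(\bx')$ \emph{from above} by $\cI_{\bx'}(\bz_{\ge\dd t})$, which is the wrong sign for the inequality you need; the hybrid makes that quantity tight.

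Consequently the sentences ``the true optimum $f$ can decrease at least this fast'' and ``the true $f$ $\ldots$ may decrease even faster $\ldots$ which only strengthens the bound'' are reversed. If $f$ decreased \emph{faster} than $(s+1)(W^A+W^O)$, the stated inequality $\tfrac{\dd}{\dd t}f\ge -(s+1)(W^A+W^O)$ would be \emph{violated}, not strengthened. The correct phrasing is that GD at speed $s+1$ is the most effective at decreasing $f$ and achieves rate $(s+1)(W^A+W^O)$, so the actual combined processing can do \emph{no better}, which is precisely the lower bound on $\tfrac{\dd}{\dd t}f$. Finally, the ``wlog $O$ follows its residual optimum'' assumption is unnecessary and cannot be justified as stated; the argument above works for \emph{whatever} feasible processing $O$ performs, which is exactly why the paper does not need to pin down $O$'s schedule.
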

\begin{proof}
    We observe that processing jobs in $A$ and $O$ can be thought as a feasible schedule with speed $s+1$ ($A$ and $O$ are given speeds $s$ and $1$ respectively). How much can one decrease $f(\bp^A(t) \ || \ \bp^O(t))$ when given $s+1$-speed? Lemma~\ref{lem:GD-redisual}, together with Claim~\ref{claim:gd-scale}, implies that even GD, which is the most effective in decreasing the residual objective, can decrease at a rate up to the total weight of all jobs (of non-zero remaining sizes) times $s+1$ when given $s+1$ speed. Thus, the lemma follows.     
       
\end{proof}

\begin{corollary}
    \label{cor:running-clairvoyant-int}
    Consider any time $t$ when no job arrival or completion occurs. Then, $\frac{\dd}{\dd t} \Phi(t) \leq -W^A(t) + (1 + \frac{2}{\eps}) W^O(t)$ when GD is given $1+\eps$-speed.
\end{corollary}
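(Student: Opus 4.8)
The plan is to differentiate $\Phi(t)$ directly and feed in the two rate bounds already established in Lemmas~\ref{lem:GD-redisual}, \ref{lem:GD-redisual-2} and \ref{lem:together-change}. Since
\[
\Phi(t) = \frac{2}{\eps}\left(f(\bp^A(t)) - \tfrac{1}{2} f(\bp^A(t) || \bp^O(t))\right),
\]
and, by Lemma~\ref{lem:GD-redisual-2}, $\Phi$ is differentiable at any time $t$ with no arrival or completion, we have
\[
\frac{\dd}{\dd t}\Phi(t) = \frac{2}{\eps}\left(\frac{\dd}{\dd t} f(\bp^A(t)) - \tfrac{1}{2}\,\frac{\dd}{\dd t} f(\bp^A(t) || \bp^O(t))\right).
\]

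First I would pin down the first term. By Lemma~\ref{lem:GD-redisual-2}, GD follows the optimal residual schedule between arrivals, so the rate at which it decreases $f(\bp^A(t))$ is exactly $\GD_s(\bp^A(t))$ with $s = 1+\eps$. By Claim~\ref{claim:gd-scale} together with Lemma~\ref{lem:GD-redisual}, $\GD_s(\bp^A(t)) = -s\,\GD(\bp^A(t)) = s\sum_{j : p^A_j(t) > 0} w_j = (1+\eps)W^A(t)$, since every job in $A(t)$ has strictly positive remaining size; hence $\frac{\dd}{\dd t} f(\bp^A(t)) = -(1+\eps)W^A(t)$. For the second term, Lemma~\ref{lem:together-change} applied with $A$ at speed $s = 1+\eps$ and $O$ at speed $1$ gives $\frac{\dd}{\dd t} f(\bp^A(t) || \bp^O(t)) \geq -(2+\eps)(W^A(t) + W^O(t))$.

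Substituting both bounds and collecting the coefficient of $W^A(t)$, namely $-(1+\eps) + \tfrac{2+\eps}{2} = -\tfrac{\eps}{2}$, yields
\[
\frac{\dd}{\dd t}\Phi(t) \leq \frac{2}{\eps}\left(-\tfrac{\eps}{2}W^A(t) + \tfrac{2+\eps}{2}W^O(t)\right) = -W^A(t) + \tfrac{2+\eps}{\eps}W^O(t) = -W^A(t) + \left(1+\tfrac{2}{\eps}\right)W^O(t),
\]
which is the claim. There is essentially no obstacle here: all the substance sits in the three lemmas, and this corollary is just their arithmetic assembly through the definition of $\Phi$. The only point worth a moment's care is the speed bookkeeping — the "$A$-only" term decreases at a rate scaled by $s = 1+\eps$, whereas the mixed term $\bp^A(t) || \bp^O(t)$ is bounded using the aggregate speed $s+1 = 2+\eps$, and it is precisely the mismatch between these two factors that makes the $W^A(t)$ coefficient collapse to exactly $-1$.
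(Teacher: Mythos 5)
Your proposal is correct and follows essentially the same route as the paper: differentiate $\Phi$, use Lemmas~\ref{lem:GD-redisual} (with the speed scaling in Claim~\ref{claim:gd-scale} and Lemma~\ref{lem:GD-redisual-2}) for the first term and Lemma~\ref{lem:together-change} for the second, and do the arithmetic with $s=1+\eps$. The only wrinkle is some awkward sign juggling around $\GD$ and $\GD_s$ — you inherit the paper's own internal inconsistency between the stated sign of Claim~\ref{claim:gd-scale} and how it is actually used — but you navigate it to the correct final identity $\frac{\dd}{\dd t} f(\bp^A(t)) = -(1+\eps)W^A(t)$, so the proof goes through as written.
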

\begin{proof}
    From Lemmas~\ref{lem:GD-redisual} and \ref{lem:together-change}, we have
    $\frac{\dd}{\dd t} \Phi(t)  = \frac{\dd}{\dd t}\frac{2}{\eps}\left(f(\bp^A(t)) -  \frac{1}{2}f(\bp^A(t) || \bp^O(t))\right)
    \leq \frac{2}{\eps} ( - sW^A(t) + \frac{s+1}{2} (W^A(t) + W^O(t)))  = -W^A(t) + (1 + \frac{2}{\eps}) W^O(t)$, where $s = 1+\eps$.
\end{proof}

By Lemma~\ref{lem:discrete-change-clairvoyant-int} and the fact that $\Phi(T) = \Phi(0) = 0$, we have $\int_{t=0}^{\infty} \frac{\dd}{\dd t} \Phi(t) \dd t \geq 0$. By Corollary~\ref{cor:running-clairvoyant-int}, we know that $\int_{t=0}^{\infty} W^A(t) \dd t \leq \frac{2 +  \eps}{\eps} \int_{t=0}^{\infty} W^O(t) \dd t$. Since the LHS and RHS are the total integral weighted flow time of GD and the optimum schedule respectively (Proposition~\ref{pro:integral-weight}), we have Theorem~\ref{thm:main-clairvoyant-int}.

\subsection{Proof of Theorem~\ref{thm:procesing-rate-view}}
\label{sec:proof-theorem-processing}

We continue to use the same notation $\bp^A(t)$ and $\bp^O(t)$ as we used in the previous section. We use the same meta-potential function. But, here we use fractional completion time for the residual optimum, so we have $f(\bx) := \min_\bz \cL^{\cP}_\bx(\bz)$ (see Eqn.~(\ref{eqn:frac-residual})).

$$ \Phi(t) = \frac{2}{\eps}\left(f(\bp^A(t)) - \frac{1}{2}f(\bp^A(t) || \bp^O(t))\right)$$

As before, each job $j$ has two distinct copies, which each appear in $\bp^A$ and $\bp^O$.

We first establish $f$'s supermodularity. 

\begin{lemma}
    \label{lem:second-thm-super}
    $f(\bx) := \min_\bz \cL^{\cP}_\bx(\bz)$ is (discrete-)supermodular.
\end{lemma}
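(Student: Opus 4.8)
\textbf{Proof plan for Lemma~\ref{lem:second-thm-super}.}

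The plan is to reformulate $f(\bx)$, after a sign flip and shift, as the maximum social surplus of a Walrasian market with divisible items, one item per alive job (with supply $x_j$ of item $j$), and a phantom agent per time slot $\tilde t$. Concretely, I would introduce a large constant $B$ and rewrite the minimization $\min_\bz \cL^\cP_\bx(\bz)$ as a maximization $\max_\bz \sum_{\tilde t} \sum_j (B - \frac{w_j}{p_j}\tilde t) z_j(\tilde t)$ subject to $\sum_{\tilde t} z_j(\tilde t) \le x_j$ for all $j$ (the equality constraint can be relaxed to $\le$ once the coefficients are made positive by choosing $B$ large enough, since unused processing only hurts the objective) and $\bz(\tilde t) \in \cP$ for all $\tilde t$. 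Up to the additive constant $B \sum_j x_j$ and the sign, this is exactly $-f(\bx)$. The key point is that for a fixed time slot $\tilde t$, the per-slot valuation is $v_{\tilde t}(\bz) := \max_{0 \le \bz' \le \bz,\ \bz' \in \cP}\ \textbf{c}^{(\tilde t)} \cdot \bz'$ with the coefficient vector $c_j^{(\tilde t)} = B - \frac{w_j}{p_j}\tilde t$; for $B$ large these are nonnegative, so this is precisely the form $v(\bx) = \max_{0 \le \bz \le \bx,\ \bz \in \cP} \textbf{c}\cdot \bz$ that the hypothesis of Theorem~\ref{thm:procesing-rate-view} asserts is linear substitute. (One subtlety: the theorem hypothesis is stated for $\bc \in \R_+^{|J|}$, so I need $B$ chosen slot-by-slot or uniformly large over the finitely many relevant $\tilde t$; since the optimal residual schedule completes everything by time $\sum_j x_j / (\text{min rate})$, a finite horizon suffices, and $B$ can be taken large enough there.)

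Next I would invoke Lemma~\ref{lem:LS-aggregation}: the class of linear-substitute valuations is closed under aggregation (max-convolution), so the social surplus function $V(\bx) := \max\{\sum_{\tilde t} v_{\tilde t}(\bz_{\tilde t}) : \sum_{\tilde t} \bz_{\tilde t} = \bx,\ \bz_{\tilde t}\ge 0\}$ is itself linear substitute. By construction $V(\bx) = B\sum_j x_j - f(\bx)$ (up to the relaxation argument above), so $f(\bx) = B\sum_j x_j - V(\bx)$. Since $V$ is LS and concave (it is a max-convolution of concave functions over a polytope, hence concave), Lemma~\ref{lem:LS-submodular} gives that $V$ is submodular; the linear term $B\sum_j x_j$ is modular; therefore $f$ is supermodular in the continuous sense, and in particular the restriction $\bar f_\bx(Z) = f(\bx \odot \bone_Z)$ is supermodular for every fixed $\bx$, which is exactly discrete-supermodularity in the sense of Definition~\ref{def:d-super}.

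The main obstacle I anticipate is handling the equality-versus-inequality constraint and the unbounded/continuous time index cleanly. Specifically: (i) justifying that relaxing $\int z_j = x_j$ to $\int z_j \le x_j$ does not change the optimum once $B$ is large (so that the Walrasian-market reformulation is exact) — this is a routine exchange argument since any unused budget can be scheduled at a late time with positive coefficient, but it must be stated; (ii) arguing the time horizon can be truncated to a finite set of slots $\tilde t \in \{1, \dots, T\}$ so that the aggregation lemma (stated for finitely many valuations) and the positivity of coefficients both apply — this follows because the optimal fractional completion schedule never needs to run past a bound determined by $\bx$ and $\cP$; and (iii) checking that $v_{\tilde t}$ as defined on $\{\bz : \bz \ge 0\}$ (with the implicit extension of the domain beyond $\cP$) is genuinely the valuation to which the LS hypothesis applies. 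None of these is deep, but they are where the argument needs care; the conceptual heart — LS of the per-slot valuation plus closure under aggregation plus LS $\Rightarrow$ submodular — is immediate from the cited results.
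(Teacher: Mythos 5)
Your proposal follows essentially the same route as the paper's proof: sign-flip with a large constant $B$, relaxation of the equality constraint to an inequality, definition of a per-time-slot valuation $v_{\tilde t}(\bz) = \max_{0\le \bz'\le \bz,\,\bz'\in\cP}\bc^{(\tilde t)}\cdot\bz'$ which is linear substitute by the theorem's hypothesis, closure of LS under aggregation (Lemma~\ref{lem:LS-aggregation}) to get LS of the social surplus, then LS $\Rightarrow$ submodular (Lemma~\ref{lem:LS-submodular}), and finally passing from continuous supermodularity of $f$ to discrete-supermodularity via $\bone_{U\cup V}\odot\bx = (\bone_U\odot\bx)\vee(\bone_V\odot\bx)$ and the analogous identity for intersection. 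The subtleties you flag (equality vs.\ inequality, finite time horizon, domain of $v_{\tilde t}$) are the same ones the paper either dispatches briefly or defers to a footnote, so this is the paper's argument in slightly more cautious form.
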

\begin{proof}
Showing the lemma is equivalent to showing $- f(\bx)$ is submodular. Then, the objective in Eqn.~(\ref{eqn:frac-residual}) becomes 
$\max \sum_{j \in A(t)}   \int_{\tilde t \geq 0} - \frac{w_j}{p_j} \tilde t \cdot z_j(\tilde t) \dd \tilde t$. This is equivalent to $\max \sum_{j \in A(t)}   \int_{\tilde t \geq 0} (B - \frac{w_j}{p_j} \tilde t)  z_j(\tilde t) \dd \tilde t$ for any constant $B > 0$ due to the constraint $
\int_{\tilde t \geq 0} z_j(\tilde t) \dd \tilde t = x_j$
for all $j \in A(t)$; also adding a constant does not change a function's submodularity.  Further, if we let $B$ be a sufficiently large constant, we obtain the following equivalent formulation. 
\begin{align}
     \max \sum_{j \in A(t)} &  \int_{\tilde t \geq 0} \left(B -  \frac{w_j \tilde t}{p_j} \right)  z_j(\tilde t) \dd \tilde t & \label{eqn:frac-residual-flipped} \\
     \int_{\tilde t \geq 0} z_j(\tilde t) \dd \tilde t &\leq  x_j \quad  \forall j \in A(t) \nonumber \\
     \bz(\tilde t) &\in \cP \quad  \forall \tilde t \geq 0 \nonumber 
 \end{align}
Note that although the first constraint does not require equality, due to the big constant $B$, the optimum solution to the above LP must process $j$ to the full $x_j$ units. 
 
We now create a valuation function $v_{\tilde t}$ for each time $\tilde t$.
\begin{align*}
    v_{\tilde t}(\bz(\tilde t)) := \max_{0 \leq \by \leq \bz(\tilde t), \by \in \cP} \sum_{j \in A(t)} \left(B -  \frac{w_j \tilde t}{p_j} \right)  y_j
\end{align*}
In other words, here we view each time $\tilde t$ as an agent\footnote{Technically, achieving this result requires discretizing the time horizon into a finite set of sufficiently small unit times and then taking the limit. However, to maintain clarity in the presentation, we omit this technical detail as it is straightforward. } with valuation function $v_{\tilde t}$ and view each job $j$'s size/processing as a divisible resource, which exists by $x_j$ units. When agent $\tilde t$ receives processing units $\bz(\tilde t)$, it can process each job $j$ and it would like to maximize an objective linear in $\by$ subject to $\by \in \cP$ and $\by \leq z(\tilde t)$.

By the precondition of the theorem, $v_{\tilde t}$ is LS. Then max LP objective is equivalent to 
$$ \max \int v_{\tilde t} (z(\tilde t)) \dd \tilde t  \mbox{ s.t. } \int_{\tilde t \geq 0} z(\tilde t) \dd \tilde t \leq \bx$$
Since this is a convolution of LS valuations, thanks to Lemma~\ref{lem:LS-aggregation}, we have the max LP optimum is LS w.r.t. $\bx$. From Lemma~\ref{lem:LS-submodular}, we know that it is submodular. 
Thus, we have shown $f$ is supermodular. 
It is an easy exercise to show that this means $f$'s discrete-supermodularity (see Definition~\ref{def:d-super}) using the fact that $\bone_{A \cup B} \odot \bx = (\bone_{A} \odot \bx) \vee (\bone_{B} \odot \bx)$ and $\bone_{A \cap B} \odot \bx = (\bone_{A} \odot \bx) \wedge (\bone_{B} \odot \bx)$.

\end{proof}

The remaining analysis is almost identical to the proof of Theorem~\ref{thm:procesing-rate-view}. 
We start the analysis by considering $\Phi(t)$'s discontinuous changes. The proof of the following lemma is identical to that of Lemma~\ref{lem:discrete-change-clairvoyant-int}, which crucially used  $f$'s supermodularity.

\begin{lemma}
    \label{lem:discrete-change-processing-view}
    If $\Phi(t)$ changes discontinuously, the change is non-positive.
\end{lemma}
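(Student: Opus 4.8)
The plan is to show that Lemma~\ref{lem:discrete-change-processing-view} is an immediate corollary of the supermodularity established in Lemma~\ref{lem:second-thm-super}, together with the observation that job completions do not affect $f$. First I would note that the meta-potential $\Phi(t)$ has exactly the same form here as in the proof of Theorem~\ref{thm:main-clairvoyant-int}, and the only change is that $f(\bx) := \min_\bz \cL^\cP_\bx(\bz)$ now measures the \emph{fractional} residual optimum rather than the integral one. The discontinuities of $\Phi$ occur only at job arrivals and job completions (between consecutive such events $\Phi$ is continuous and differentiable by Lemma~\ref{lem:GD-redisual-2}), so it suffices to handle those two cases.

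For job \emph{completions}: when a job $j$ reaches remaining size $0$ in either $A$'s schedule or $O$'s schedule, it contributes nothing to any term of the form $\cL^\cP_\bx(\bz)$, since the constraint $\int z_j(\tilde t)\dd\tilde t = x_j = 0$ forces $z_j \equiv 0$ and hence $j$ adds $0$ to the objective. Thus $f(\bp^A(t))$ and $f(\bp^A(t)\,||\,\bp^O(t))$ are both unchanged, and $\Phi$ is continuous at completion times. For job \emph{arrivals}: this is the case where supermodularity is used, and the argument is verbatim the one in Lemma~\ref{lem:discrete-change-clairvoyant-int}. When a new job $j$ arrives at time $t$, we give $A$ and $O$ each a fresh copy, $j^A$ and $j^O$, both of size $p_j$. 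Writing $g = \bar g_\by$ for the set function obtained by fixing remaining sizes $\by = \bp^A\,||\,p_j\bone_{j^A}\,||\,\bp^O\,||\,p_j\bone_{j^O}$, discrete-supermodularity of $f$ gives
\begin{align*}
g(A\cup j^A\cup O\cup j^O) - g(A\cup O) &\ge g(A\cup j^A\cup j^O) - g(A)\\
&= \big(g(A\cup j^A\cup j^O) - g(A\cup j^A)\big) + \big(g(A\cup j^A) - g(A)\big)\\
&\ge \big(g(A\cup j^O) - g(A)\big) + \big(g(A\cup j^A) - g(A)\big)\\
&= 2\big(g(A\cup j^A) - g(A)\big),
\end{align*}
where the last equality uses that $j^A$ and $j^O$ have the same size $p_j$ upon arrival, so $g(A\cup j^O)-g(A) = g(A\cup j^A)-g(A)$. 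Hence the increase of $f(\bp^A\,||\,\bp^O)$ is at least twice the increase of $f(\bp^A)$, and the change in $\Phi = \frac{2}{\eps}\big(f(\bp^A) - \tfrac12 f(\bp^A\,||\,\bp^O)\big)$ is at most $\frac{2}{\eps}\big(\Delta - \tfrac12\cdot 2\Delta\big) = 0$, where $\Delta := g(A\cup j^A)-g(A) \ge 0$.

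The only genuine obstacle I anticipate is a bookkeeping one: making sure the discrete-supermodularity of $f$ (Definition~\ref{def:d-super}), which quantifies over subsets of the \emph{current} job set with a \emph{fixed} size vector, is applied with the right fixed vector $\by$ — namely the one that already includes both new copies $j^A, j^O$ at size $p_j$ — so that all four set evaluations $g(A\cup O)$, $g(A\cup j^A\cup j^O)$, etc., refer to the same ground function. Once that is set up correctly (exactly as in Lemma~\ref{lem:discrete-change-clairvoyant-int}), the chain of inequalities is mechanical. I would therefore keep this proof very short, essentially remarking that it is identical to the proof of Lemma~\ref{lem:discrete-change-clairvoyant-int} with Lemma~\ref{lem:second-thm-super} supplying the needed supermodularity, and that completions are handled by the trivial observation that zero-size jobs are invisible to $f$.
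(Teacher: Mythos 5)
Your proposal is correct and matches the paper's approach exactly: the paper itself proves this lemma by noting it is identical to the proof of Lemma~\ref{lem:discrete-change-clairvoyant-int}, with supermodularity of the fractional residual optimum supplied by Lemma~\ref{lem:second-thm-super}. Your handling of completions (zero-size jobs are invisible to $f$) and your supermodularity chain for arrivals reproduce that argument verbatim.
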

    
Let $\tilde W^A(t) := \sum_{j \in A(t)} \frac{w_j}{p_j}p^A_j(t)$ denote the total fractional weight of jobs at time $t$ in $A$'s schedule. Let $\tilde W^O(t)$ denote the corresponding quantity in $O$'s schedule. We will assume that GD follows the optimum residual schedule and  show that no algorithm can decrease the residual optimum  more effectively than GD.\footnote{In the proof of Theorem~\ref{thm:main-clairvoyant-int} we defined GD to be the most effective algorithm for decreasing the residual optimum, and we showed following the residual optimum schedule is as effective as GD. We make this small change, as we now know GD is really nothing but following the residual optimum.}
The following lemma quantifies the decrease rate of the (fractional) residual optimum. 

\begin{lemma}
    \label{lem:second-theorem-gd}
    For any time $t$ when no job arrival or completion occurs, we have $\frac{\dd}{\dd t} f(\bp^A(t)) = - s \tilde W^A(t)$ when we run GD with $s$ speed. Furthermore, for any algorithm with $s$ speed, we have $\frac{\dd}{\dd t} f(\bp^A(t)) \geq - s \tilde W^A(t)$.
\end{lemma}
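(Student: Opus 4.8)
The plan is to mirror the structure of Lemma~\ref{lem:GD-redisual}, but now working with the fractional completion-time objective in $\cL^\cP_\bx$ instead of the integral one. The key observation is that the fractional residual LP is again a pure scheduling problem on the shifted time horizon whose optimum schedule stays optimal until a new job arrives, so following it for an infinitesimal $\dd t$ simply shifts every remaining piece of work $\dd t$ time units earlier. First I would fix the time $t$, let $\bz(\tau)$, $\tau \ge 0$, be an optimal solution to $\cL^\cP_{\bp^A(t)}$, and (by Lemma~\ref{lem:GD-redisual-2}, whose proof carries over verbatim since it only used that following the residual schedule is as good as GD) assume GD processes at rates $\bz(\tau)$ on $[0,\dd t)$, with speed $s$ meaning it actually executes $s\bz(\tau)$ — here I would invoke Claim~\ref{claim:gd-scale} to reduce to the $s=1$ computation and scale. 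After $\dd t$ units, the remaining-size vector is $\bx' = \bp^A(t) - \bZ(\dd t)$ where $\bZ(\dd t) = \int_0^{\dd t}\bz(\tau)\dd\tau$, and the tail $\bz'(\tau) := \bz(\tau+\dd t)$ is a feasible solution to $\cL^\cP_{\bx'}$, giving $f(\bx') \le \cL^\cP_{\bx'}(\bz')$.

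The heart of the argument is computing $\cL^\cP_{\bp^A(t)}(\bz) - \cL^\cP_{\bx'}(\bz')$ exactly. Since the fractional objective is $\sum_j \frac{w_j}{p_j}\int_{\tilde t \ge 0}\tilde t\, z_j(\tilde t)\dd\tilde t$, and $\bz'$ is $\bz$ shifted left by $\dd t$, a change of variables shows $\int_{\tilde t\ge 0}\tilde t\, z'_j(\tilde t)\dd\tilde t = \int_{\tilde t \ge \dd t}(\tilde t - \dd t) z_j(\tilde t)\dd\tilde t = \int_{\tilde t\ge 0}\tilde t\, z_j(\tilde t)\dd\tilde t - \dd t\cdot\bigl(x_j - \int_0^{\dd t} z_j(\tilde t)\dd\tilde t\bigr) - (\text{error of order }\dd t^2)$, where I use that $\int_{\tilde t\ge 0} z_j(\tilde t)\dd\tilde t = x_j$. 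Summing with weights $w_j/p_j$ and discarding the $\dd t^2$ term, the decrease is $\dd t\sum_j \frac{w_j}{p_j}\bigl(x_j - \int_0^{\dd t}z_j\bigr) = \dd t\sum_j \frac{w_j}{p_j}x_j - o(\dd t) = \dd t\,\tilde W^A(t) - o(\dd t)$, using $x_j = p^A_j(t)$. Combining with $f(\bx')\le\cL^\cP_{\bx'}(\bz')$ gives $f(\bp^A(t)) - f(\bx') \ge \dd t\,\tilde W^A(t) - o(\dd t)$, i.e. the decrease rate is at least $\tilde W^A(t)$; then Claim~\ref{claim:gd-scale} upgrades this to $s\tilde W^A(t)$ for GD with speed $s$, establishing the ``$\le$'' in the "furthermore'' clause for any speed-$s$ algorithm (any feasible processing for $\dd t$ produces remaining sizes for which the old schedule's tail is a feasible candidate, so the same bound applies — in fact the cleanest route is to note that processing by any algorithm is a valid partial residual schedule).

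For the matching equality under GD, I would run the symmetric argument exactly as in Lemma~\ref{lem:GD-redisual}: let GD process $\bz''$ on $[0,\dd t)$, let $\bz'$ be optimal for the resulting $\bx'$, let $\bz$ prepend $\bz''$ to a left-shift of $\bz'$, so $\bz$ is feasible for $\bp^A(t)$ and $f(\bp^A(t)) \le \cL^\cP_{\bp^A(t)}(\bz)$; the same change-of-variables computation yields $f(\bx') - f(\bp^A(t)) \ge -\dd t\,\tilde W^A(t) + o(\dd t)$, i.e. GD decreases the residual optimum by at most $\tilde W^A(t)$ per unit time. Together with the lower bound, $\frac{\dd}{\dd t}f(\bp^A(t)) = -s\tilde W^A(t)$ under GD with speed $s$, and $\ge -s\tilde W^A(t)$ for any speed-$s$ algorithm. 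The main obstacle — and the one place to be careful — is the bookkeeping of the $o(\dd t)$ terms coming from the $\int_0^{\dd t} z_j$ correction and from the possibility that fractional weight is being removed during $[0,\dd t)$; since $\dd t$ is chosen infinitesimally and no job completes in that window, these are genuinely lower-order and vanish in the derivative, exactly as in the integral case, so no new idea beyond the shift argument is needed.
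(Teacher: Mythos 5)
Your proposal follows essentially the same route as the paper: compute the exact decrease of $\cL^\cP_{\bp^A(t)}(\cdot)$ under the shift-by-$\dd t$ change of variables, bound $f(\bx')$ by the cost of the shifted tail for the "at least" direction, and prepend the algorithm's processing to the new residual optimum for the "at most" direction. The computations, the use of $\int_{\tilde t \ge 0}z_j(\tilde t)\dd\tilde t = x_j$, and the $o(\dd t)$ bookkeeping all match. One step as written, however, is wrong: the parenthetical in your second paragraph, "any feasible processing for $\dd t$ produces remaining sizes for which the old schedule's tail is a feasible candidate," is not true. If an arbitrary algorithm processes $\bZ'$ during $[0,\dd t)$, the shifted tail $\bz'(\tau)=\bz(\tau+\dd t)$ of the old optimum integrates to $p^A_j(t)-Z_j(\dd t)$, not $p^A_j(t)-Z'_j$, so it is not in general feasible for the new remaining sizes. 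The argument you actually need for the "furthermore" clause is the one you give in your final paragraph (prepend the algorithm's $\bz''$ to the \emph{new} residual optimum for the resulting sizes), and you should carry it out for an arbitrary $\bz''$, not just GD's — the paper does exactly this, which yields both the inequality for arbitrary algorithms and (combined with the first direction) the equality for GD. Also a small slip: when you prepend $\bz''$ on $[0,\dd t)$ to $\bz'$, the latter is shifted to the \emph{right} by $\dd t$, not to the left.
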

\begin{proof}
    Fix a time $t$ and consider how much 
     $f(\bp^A(t))$ changes during $[t, t+ \dd t)$ when we run GD with $s$-speed. 
         Let $\bz = \arg \min_{\bz'} \cL_{\bp^A(t)}(\bz')$. Note that $j$ contributes to the objective $f(\bp^A(t))$ by $\int_{\tau \geq 0} \frac{w_j}{p_j} \tau z_{j}(\tau) \dd \tau$.
    GD schedules $\bz(\tau)$, $\tau \in [0, s \dd t)$ during time interval $[t, t+ \dd t)$ as it is given $s$ speed. Note that 
    $\bz(\tau)$, $\tau \in [s \dd t, \infty)$ is a feasible schedule (with 1-speed) for remaining job sizes $\bp^A(t) - \bZ(s \dd t)$ where $\bZ(\tau') := \int^{\tau'}_{\tau'' \geq 0} \bz(\tau'') \dd \tau''$. Therefore, $\bz'$, where  $\bz'(\tau) = \bz(s \dd t+ \tau)$, is a feasible solution to $\cL_{\bp^A(t) - \bZ(s \dd t)}$.

    Thus, $f(\bp^A(t))$ decreases by
    \begin{align*}
    &f(\bp^A(t)) - f(\bp^A(t)- \bZ(s \dd t)) \\
    \geq \ &f(\bp^A(t)) - \cL_{\bp^A(t) - \bZ(s \dd t)}(\bz') \\
    = \ &  \sum_j \left(\int_{\tau \geq 0} \frac{w_j}{p_j} \tau z_{j}(\tau) \dd \tau - \int_{\tau \geq s \dd t} \frac{w_j}{p_j} (\tau - s \dd t) z_{j}(\tau) \dd \tau \right) \\
    \geq \ & s \dd t \sum_j \frac{w_j}{p_j} \int_{\tau \geq s dt} z_j(\tau) \dd \tau = s \dd t (\tilde W^A(t) - \sum_j \frac{w_j}{p_j} Z_j(s \dd t)),
    \end{align*}
    which implies $\frac{\dd}{\dd t} f(\bp^A(t)) \leq - s \tilde W^A(t)$.

    Similarly, we can show the other direction. 
    Consider any feasible $s$-speed processing $\bZ'$ during the time interval; so, $j$ is processed by $Z'_j$ units. Let $\bz = \arg \min_{\bz'} \cL_{\bp^A(t) - \bZ'}(\bz')$ denote the optimum residual schedule for the remaining job sizes. 
    Let $\bZ' \circ \bz$ denote the 1-speed schedule where we first process $\bZ'$ for $s \dd t$ time units and subsequently schedule $\bz$. Note that $\bZ' \circ \bz$ is a feasible solution to $\cL_{\bp^A(t)}$. Therefore, $f(\bp^A(t))$ decreases by
    \begin{align*}
        &f(\bp^A(t)) - f(\bp^A(t)- \bZ') \\
    \leq \ &\cL_{\bp^A(t)}(\bZ' \circ \bz) - \cL_{\bp^A(t)- \bZ'}(\bz) \\        
     \leq \   &  \sum_j \left(\int_{\tau \geq 0} \frac{w_j}{p_j} (\tau + s \dd t) z_{j}(\tau) \dd \tau - \int_{\tau \geq 0} \frac{w_j}{p_j} \tau z_{j}(\tau) \dd \tau \right) + \sum_j \dd t \frac{w_j}{p_j} Z'_j\\  
    \leq \  & s \dd t   \sum_j \frac{w_j}{p_j} \int_{\tau \geq 0}  z_j(\tau) \dd \tau +  \dd t \sum_j  \frac{w_j}{p_j} Z'_j\\    
    =   \ & s \dd t \tilde W^A(t) +  \dd t \sum_j  \frac{w_j}{p_j} Z'_j, 
    \end{align*}
    
    which implies  $\frac{\dd}{\dd t} f(\bp^A(t)) \geq - s \tilde W^A(t)$, as desired. 
    
\end{proof}

This lemma allows us to  bound the continuous change of the second term of $\Phi(t)$. The proof is essentially identical to that of Lemma~\ref{lem:together-change}: it follows from the observation that we can view $A$ and $O$'s combined processing as a $s+1$-speed schedule and the residual objective can change at a rate up to the total fractional weight of jobs in the system, times $s+1$.

\begin{lemma}
    \label{lem:second-theorem-adv}
    Consider any time $t$ when no job arrival or completion occurs. If $A$ and $O$ are given $s$-speed and $1$-speed respectively,  $\frac{\dd}{\dd t} f(\bp^A(t) \ || \ \bp^O(t))  \geq - (s + 1) ( \tilde W^A(t) + \tilde W^O(t))$.
\end{lemma}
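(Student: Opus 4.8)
The plan is to mirror the argument of Lemma~\ref{lem:together-change} almost verbatim, replacing the integral weights $W^A, W^O$ by the fractional weights $\tilde W^A, \tilde W^O$ and invoking Lemma~\ref{lem:second-theorem-gd} in place of Lemma~\ref{lem:GD-redisual}. The key observation is that the residual optimum $f(\bp^A(t) \,||\, \bp^O(t))$ is simply an instance of the same residual LP $\cL^\cP_\bx$, but over the job set obtained by taking two disjoint copies of every alive job (one with remaining size $p^A_j(t)$, one with remaining size $p^O_j(t)$). So the quantity $\tilde W^A(t) + \tilde W^O(t)$ is exactly the total fractional weight of all jobs (of nonzero remaining size) in this combined instance.

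Concretely, first I would note that the combined processing performed by $A$ (running with speed $s$) and by $O$ (running with speed $1$) during an interval $[t, t+\dd t)$ constitutes a feasible schedule for the combined instance with speed $s+1$: at each instant the rate vector that $A$ applies to its copies lies in $\cP$ (scaled by $s$) and the rate vector that $O$ applies to its copies lies in $\cP$ (scaled by $1$), and since the two copy-sets are disjoint, the concatenated rate vector is feasible for the combined polytope instance at speed $s+1$. Second, I would apply Lemma~\ref{lem:second-theorem-gd} to the combined instance: no $(s+1)$-speed algorithm — in particular not the actual processing done by $A$ and $O$ together — can decrease its fractional residual optimum at a rate faster than $(s+1)$ times the total fractional weight of its jobs, which is $(s+1)(\tilde W^A(t) + \tilde W^O(t))$. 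Hence $\frac{\dd}{\dd t} f(\bp^A(t) \,||\, \bp^O(t)) \geq -(s+1)(\tilde W^A(t) + \tilde W^O(t))$, as claimed.

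I do not anticipate a real obstacle here — the lemma is essentially bookkeeping. The one point requiring a line of care is making precise that Lemma~\ref{lem:second-theorem-gd}'s ``no $s$-speed algorithm can do better than GD'' statement applies to the combined instance even though the processing of $A$ and $O$ was not itself generated by running GD on that combined instance; this is fine because the second half of Lemma~\ref{lem:second-theorem-gd} is a statement about \emph{arbitrary} $s$-speed processing, not just GD's, and here the relevant ``$s$'' is $s+1$. A second minor point is the disjointness of the copies (already emphasized at the start of Section~\ref{sec:proof-theorem-processing} and in the proof of Theorem~\ref{thm:main-clairvoyant-int}), which guarantees that concatenating the two feasible rate vectors stays within the feasible region of the doubled instance. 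With these two observations in place the proof is a one-paragraph repetition of the proof of Lemma~\ref{lem:together-change}.
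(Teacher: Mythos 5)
Your proposal is correct and follows exactly the route the paper indicates: view the combined processing of $A$ (speed $s$) and $O$ (speed $1$) on the disjointly-copied instance as a single $(s+1)$-speed schedule, then invoke the "no $s$-speed algorithm beats $-s\tilde W$" half of Lemma~\ref{lem:second-theorem-gd} with $s$ replaced by $s+1$ and $\tilde W$ replaced by the combined fractional weight $\tilde W^A + \tilde W^O$. The two points you flag (that the upper bound in Lemma~\ref{lem:second-theorem-gd} applies to arbitrary processing, not only GD's, and that disjointness of the copies is what lets the two rate vectors be concatenated into one feasible vector for the doubled polytope) are precisely the two facts the paper relies on implicitly, so this is the same proof.
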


Combining the two lemmas we obtain the following corollary. 

\begin{corollary}
    \label{cor:running-clairvoyant}
    Consider any time $t$ when no job arrival or completion occurs. When GD is given $1+\eps$-speed, we have $\frac{d}{dt} \Phi(t) \leq -\tilde W^A(t) + \frac{2 +  \eps}{\eps} \tilde W^O(t)$.
\end{corollary}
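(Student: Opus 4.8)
The plan is to combine Lemma~\ref{lem:second-theorem-gd} and Lemma~\ref{lem:second-theorem-adv} directly, since between them they already control the derivatives of both terms that appear in $\Phi(t)$. First I would observe that, since no job arrives or completes at time $t$ and GD follows a fixed optimum residual schedule locally (Lemma~\ref{lem:GD-redisual-2}), both $f(\bp^A(t))$ and $f(\bp^A(t)\,||\,\bp^O(t))$ are differentiable at $t$, so that $\frac{\dd}{\dd t}\Phi(t) = \frac{2}{\eps}\big(\frac{\dd}{\dd t}f(\bp^A(t)) - \tfrac12\frac{\dd}{\dd t}f(\bp^A(t)\,||\,\bp^O(t))\big)$.

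Next I would plug in the two estimates with $s = 1+\eps$. Lemma~\ref{lem:second-theorem-gd} gives $\frac{\dd}{\dd t}f(\bp^A(t)) = -s\,\tilde W^A(t)$, and Lemma~\ref{lem:second-theorem-adv} gives $\frac{\dd}{\dd t}f(\bp^A(t)\,||\,\bp^O(t)) \geq -(s+1)\big(\tilde W^A(t)+\tilde W^O(t)\big)$. The one thing that needs care is the sign: the second term enters $\Phi$ with the negative coefficient $-\tfrac12$, so the lower bound on its derivative turns into an upper bound after multiplication, while the outer factor $\frac{2}{\eps}>0$ preserves the direction of the inequality. Collecting terms yields
$$\frac{\dd}{\dd t}\Phi(t) \;\leq\; \frac{2}{\eps}\Big(-s\,\tilde W^A(t) + \tfrac{s+1}{2}\big(\tilde W^A(t)+\tilde W^O(t)\big)\Big) \;=\; \frac{1}{\eps}\Big(-(s-1)\,\tilde W^A(t) + (s+1)\,\tilde W^O(t)\Big),$$
and substituting $s-1=\eps$ and $s+1=2+\eps$ gives exactly $-\tilde W^A(t) + \frac{2+\eps}{\eps}\tilde W^O(t)$.

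There is essentially no obstacle: the corollary is pure bookkeeping once the two lemmas are available, and the computation is literally the fractional-weight analogue of Corollary~\ref{cor:running-clairvoyant-int}, with $\tilde W$ replacing $W$. The only points worth flagging in the writeup are the sign flip in front of the $f(\bp^A\,||\,\bp^O)$ term and the fact that the constants $\tfrac12$ inside $\Phi$ and $\tfrac{2}{\eps}$ in front of it are precisely what make the coefficient of $\tilde W^A(t)$ come out to $-1$; this is the calibration that ultimately drives the $O(1/\eps)$ competitive ratio for fractional weighted flow time.
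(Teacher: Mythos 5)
Your proof is correct and matches the paper's approach exactly: the paper states Corollary~\ref{cor:running-clairvoyant} as an immediate combination of Lemmas~\ref{lem:second-theorem-gd} and~\ref{lem:second-theorem-adv}, which is precisely the bookkeeping you carry out, mirroring the explicit computation given for the integral analogue Corollary~\ref{cor:running-clairvoyant-int}. The sign-flip observation and the final substitution $s=1+\eps$ are the same steps, and the constants $\tfrac{2+\eps}{\eps}$ and $1+\tfrac{2}{\eps}$ are of course equal.
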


By Lemma~\ref{lem:discrete-change-processing-view} and the fact that $\Phi(T) = \Phi(0) = 0$, it must be the case that $\int_{t = 0}^\infty \frac{\dd}{\dd t} \Phi(t) \dd t \geq 0$. By Corollary~\ref{cor:running-clairvoyant} we know that
$\int_{t = 0}^\infty \tilde W^A(t) \dd t \leq \frac{2 +  \eps}{\eps} \int_{t = 0}^{\infty} \tilde W^O(t) \dd t$ where the LHS and RHS are exactly the total weighted fractional flow time of GD and the optimum schedule (Proposition~\ref{pro:fractional-weight}). Thus, we have Theorem~\ref{thm:procesing-rate-view} for total fractional weighted flow time. Lemma~\ref{thm:conversion} allows us to obtain an algorithm that is $(1+\eps)$-speed $O(1 / \eps^2)$-competitive for the integer objective.



\subsection{Proof of Theorem~\ref{thm:resource-view-2}}

This section is devoted to proving Theorem~\ref{thm:resource-view-2}. 
The high-level idea is to show that when we take the feasible processing rate view, we satisfy the condition of Theorem~\ref{thm:procesing-rate-view}, i.e.,  
$v(\bx) := \max_{0 \leq \bz \leq \bx, \bz \in \cP} \bc \cdot \bz$ is  linear substitute. Note that $\bz \in \cP$ if and only if $\bz_j = u_j(\by_j)$ for all $j$ for some non-negative vectors $\{\by_j\}_{j \in J}$ such that $\sum_j \by_j \leq \bone$. 

For any price vector $\bq$, let $\bx(\bq)$ denote the unique solution to: 
\begin{align*}
    \max_{\bx'} \sum_{i \in J} &(c_i - q_i) x'_i \\
    \bx' &\in \cP    
\end{align*}
To see that this is well defined, take the resource view of the CP. Recall in the resource view, each job $i$ is associated with a valuation function $u_i$ and the divisible resources $[0, 1]^d$ are allocated to the jobs. If $\by_i$ denotes the resources job $i$ receives, we have the following equivalent CP.
\begin{align}
        \max \sum_{i \in J} &(c_i - q_i) u_i(\by_i) \label{equ:wal-equ}\\
    \sum_{i \in J} \by_i &\leq \bone \nonumber \\
                    \by_i &\geq \textbf{0}  \quad \forall i \in J \nonumber
\end{align}

It is straightforward to see that this CP has a unique solution due to the theorem condition that $u_i$ is strictly concave for all jobs $i$.

Without loss of generality, we assume $\bq < \bc$. This is because jobs $i$ with $c_i - q_i \leq 0$ do not play any roles and therefore we can pretend that they do not exist. Now consider the Walrasian market where each job/agent $i$ has a valuation function $(c_i - q_i) u_i$ for divisible resources $[0, 1]^D$. Because $u_i$ is LS for all $i$, which is a precondition of Theorem~\ref{thm:resource-view-2}, $(c_i - q_i) u_i$ is LS for all $i$ (LS is closed under a positive scalar multiplication). 
Thus, the CP formulates the social surplus maximization problem where all agents have LS valuations. In the following $\bb$ and $\bb'$ are uniquely defined as $u_i$ is assumed to be strictly monotone and strictly concave.

\begin{lemma}
    \label{lem:mono-market}
    Consider any $\bq \leq \bq' \in \R^{|J|}$. 
    Let $\bb$ be the equilibrium price vector for the market with valuations $(c_i - q_i) u_i$, $i \in J$ along with divisible resources $[0, 1]^d$. Similarly, let $\bb'$ be the  equilibrium price vector for $(c_i - q'_i) u_i$, $i \in J$. We have $\bb' \leq \bb$.
\end{lemma}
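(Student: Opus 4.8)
The plan is to deduce the monotonicity of equilibrium prices from a single structural observation together with the standard comparative statics of substitutes markets. As in the discussion preceding the lemma we may assume $\bq \le \bq' < \bc$ coordinatewise; if $q'_i \ge c_i$ for some job $i$, that job contributes nothing to the second market and we simply discard it, which only weakens demand further. The key observation is that for a \emph{fixed} resource price vector $\bb$, job $i$'s demand in the market with valuation $(c_i - q_i)u_i$ is exactly $\+D(u_i, \bb/(c_i - q_i))$, since multiplying a valuation by the positive scalar $c_i - q_i$ is the same as dividing the price by $c_i - q_i$. Hence passing from $q_i$ to $q'_i$ multiplies job $i$'s effective price vector $\bb/(c_i - q_i)$ by the \emph{uniform} factor $\lambda_i := (c_i - q_i)/(c_i - q'_i) \ge 1$. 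By the first hypothesis of Theorem~\ref{thm:resource-view-2} applied to $u_i$ (which is precisely a statement about scaling all prices by a common factor $\ge 1$), there is a demand $\by'_i \le \by_i$ for job $i$ at resource price $\bb$ under its new valuation, where $\by_i$ is its demand under the old one. Summing over $i$ and using that the old allocation clears the market at $\bb$, i.e.\ $\sum_i \by_i = \bone$, we get that \emph{at the old price $\bb$ the aggregate demand in the new market is at most $\bone$ coordinatewise}: no resource is over-demanded.

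It then remains to convert ``no over-demand at $\bb$'' into ``there is an equilibrium price vector $\le \bb$''. One way is a descending t\^atonnement for the new market started from $\bb$: repeatedly lower the price of any strictly under-demanded resource; since each $(c_i - q'_i)u_i$ is a linear substitute, lowering one resource's price weakly raises demand for that resource and cannot raise demand for the others, so the invariant ``aggregate demand $\le \bone$'' is preserved; the prices decrease monotonically and stay non-negative, so the process converges to a price vector $\bb'$ at which every resource is either fully demanded or priced $0$ --- a Walrasian equilibrium of the new market --- with $\bb' \le \bb$. Since the $u_i$ are strictly monotone and strictly concave, the equilibrium resource price of the new market is unique, so this limit is the $\bb'$ of the statement. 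An equivalent, t\^atonnement-free route uses the submodularity machinery already set up: the aggregate valuation $W_{\bq}(\bx) := \max\{\sum_i (c_i - q_i)u_i(\by_i) : \sum_i \by_i = \bx,\ \by_i \ge 0\}$ is linear substitute by Lemma~\ref{lem:LS-aggregation}, so by Lemma~\ref{lem: LS-equal} the profit function $\pi_{\bq}(\bb) := \max_{\bx \ge 0}(W_{\bq}(\bx) - \bb \cdot \bx)$ is submodular; by LP duality the equilibrium resource prices are exactly the minimizers over $\bb \ge 0$ of $\bb \mapsto \pi_{\bq}(\bb) + \bone \cdot \bb$, which (by strict concavity, so that the aggregate demand $\bx^*_{\bq}(\bb)$ is single-valued) is differentiable with gradient $\bone - \bx^*_{\bq}(\bb)$; the first paragraph shows $\bx^*_{\bq}(\bb)$ is coordinatewise non-increasing in $\bq$, so this family has increasing differences in $(\bb, -\bq)$ and Topkis' theorem gives that its minimizer is non-increasing in $\bq$.

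I expect the main obstacle to be the care needed to make this run in the \emph{divisible} (linear-substitutes) setting, where, as noted in Section~\ref{sec:ls}, several clean properties of the discrete gross-substitutes world are not automatic: one must check existence/attainment of the suprema defining $\pi_{\bq}$ and of the equilibrium, convergence of the continuous t\^atonnement, and uniqueness of the equilibrium price (which must be derived from strict monotonicity and strict concavity, e.g.\ via the agents' first-order conditions resource by resource). These are precisely the points addressed by \cite{ausubel2004auctioning, milgrom2009substitute}, which I would cite for this step. Note that only the first hypothesis of Theorem~\ref{thm:resource-view-2} enters here; the second hypothesis is used afterwards, to conclude that every job's attainable processing does not drop once the resource prices have fallen.
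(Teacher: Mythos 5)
Your main argument is essentially identical to the paper's proof: you exploit the same rescaling identity $\+D((c_i - q_i)u_i, \bb) = \+D(u_i, \bb/(c_i - q_i))$, invoke the first hypothesis of Theorem~\ref{thm:resource-view-2} to conclude $\by'_i \le \by_i$ at the fixed price $\bb$, sum to see that $\bb$ is an under-demand price vector for the new market, and then let the descending t\^atonnement of \cite{ausubel2004auctioning,milgrom2009substitute} carry you down to $\bb' \le \bb$. The alternative Topkis-based route in your second paragraph is a nice bonus not in the paper, and you need only $\sum_i \by_i \le \bone$ (feasibility of the old equilibrium allocation) rather than the exact clearing $\sum_i \by_i = \bone$ you state, but this does not affect the argument.
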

\begin{proof}
    To prove this lemma, we use the tatonnement process described in \cite{milgrom2009substitute}.\footnote{Although the process is described for indivisible resources that exist in multiple units, the same proof extends to divisible resources. The proof can be much simplified if we assume valuation functions are strictly concave, as shown in \cite{ausubel2004auctioning}.}
    The tatonnement process is governed by a price vector $\hat \bb(\tau)$; the vector monotonically changes as $\tau$ increases.     We say that all resources are under-demanded\footnote{The over-demanded case is symmetric.} under the current price vector $\bb''$ if there exists an allocation $\by''$ that maximizes everyone's (here, every job $i$'s) quasi-linear utility and $\sum_i \by''_i \leq \bone$. Here, if job $i$'s utility is $(c_i - q_i) u_i$, then $\by''_i \in \+D((c_i - q_i) u_i, \bb'')$.
    For brevity, we will equivalently say that $\bb''$ is an under-demand price vector. 
    It is shown in \cite{milgrom2009substitute} that starting with any under-demand price vector $\hat \bb(0)$, we can monotonically decrease $\hat \bb(\tau)$ to an under-demand equilibrium price, which yields the maximum social surplus.
    
    Let $\by$ be the unique allocation achieving the unique equilibrium with the price vector $\bb$.
    Define $\by'$ analogously for $(c_i - q_i')u_i$ with the price vector $\bb$.
   
    Since for all $i$, $c_i - q_i \geq c_i - q_i'$ and $\by_i' \in \+D((c_i - q_i') u_i, \bb) = \+D( (c_i - q_i) u_i, \frac{ c_i - q_i}{ c_i - q_i'}\bb )$, we have  $\by_i' \leq \by_i$ due to the first property of Theorem~\ref{thm:resource-view-2}. Thus, $\sum_i \by'_i \leq \bone$, meaning $\bb$ is an under-demand price vector when each $i$ has valuation $(c_i - q'_i)u_i$. Thanks to the above tatonnement process, we conclude $\bb' \leq \bb$. 
\end{proof}

Now we are ready to prove that $v(\bx)$ is LS. 
Say $q'_j > q_j$ and $q'_i = q_i$ for all $i \neq j$, so we only increased job $j$'s price. 
Consider any $i \neq j$. We use $\by$ and $\by'$ as we defined in the proof of Lemma~\ref{lem:mono-market}. Since $u_i$ is strictly concave, we have, 
\begin{align*}
\{\by_i\} &=  \+D((c_i - q_i)u_i, \bb) = \+D(u_i, \bb / (c_i - q_i)) \\
\{\by'_i\} &=  \+D((c_i - q_i') u_i, \bb') = \+D((c_i - q_i) u_i, \bb') = \+D(u_i, \bb' / (c_i - q_i))
\end{align*}

From Lemma~\ref{lem:mono-market}, we know  $\bb' \leq \bb$. Thus, due to the second property stated in Theorem~\ref{thm:resource-view-2}, we have $u_i(\by_i') \geq u_i(\by_i)$. 

Knowing that $x_i(\bq) = u_i(\by_i)$ and $x_i(\bq') = u_i(\by_i')$, we conclude 
\begin{equation}
    \label{eqn:qq}
x_i(\bq') \geq x_i(\bq)
\end{equation}

\begin{lemma}
    \label{lem:demand-unique}
    If $\bq > \textbf{0}$, then the demand set $\mathcal{D}(v, \bq)$ is a singleton set. Further, we have $\mathcal{D}(v, \bq) = \{ \bx(\bq)\}$. 
\end{lemma}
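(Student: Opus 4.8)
The plan is to show that $v(\bx) := \max_{0 \le \bz \le \bx,\ \bz \in \cP} \bc \cdot \bz$ has a unique maximizer of its quasi-linear utility $v(\bx) - \bq \cdot \bx$ over $\bx \geq \textbf{0}$ when $\bq > \textbf{0}$, and that this maximizer is exactly $\bx(\bq)$, the unique optimum of $\max_{\bx' \in \cP} (\bc - \bq) \cdot \bx'$. First I would observe that for any $\bx \geq \textbf{0}$, by definition $v(\bx)$ equals $(\bc \cdot \bz^*)$ for some $\bz^* \in \cP$ with $\bz^* \le \bx$; since $\bq > \textbf{0}$, replacing $\bx$ by $\bz^*$ in the utility can only help (it keeps $v$ the same, because $\bz^*$ is still feasible in the inner problem for the smaller budget $\bz^*$, and it decreases the payment $\bq \cdot \bx$ unless $\bx = \bz^*$). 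Hence any utility-maximizing $\bx$ must satisfy $\bx \in \cP$ and $v(\bx) = \bc \cdot \bx$, which reduces the outer optimization to $\max_{\bx \in \cP} (\bc - \bq) \cdot \bx$.

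Next I would invoke the strict-concavity hypothesis of Theorem~\ref{thm:resource-view-2}: via the resource view (the CP in Eqn.~(\ref{equ:wal-equ})), the problem $\max_{\bx \in \cP}(\bc - \bq)\cdot \bx$ has a unique optimal processing-rate vector, namely $\bx(\bq)$, as already argued right before the lemma statement (the map $\by \mapsto \sum_i (c_i - q_i) u_i(\by_i)$ over $\sum_i \by_i \le \bone$ has a unique maximizer because each $u_i$ is strictly concave, and its image under $(u_i)_i$ is the unique point of $\cP$ attaining the max). Combining the two steps: the set of utility maximizers of $v(\bx) - \bq\cdot\bx$ over $\bx \ge \textbf{0}$ equals the set of maximizers of $(\bc - \bq)\cdot\bx$ over $\bx \in \cP$, which is the singleton $\{\bx(\bq)\}$. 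Therefore $\mathcal{D}(v, \bq) = \{\bx(\bq)\}$.

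I expect the one subtlety — the main obstacle — to be handling the boundary/interaction between ``$\le \bx$'' in the inner maximization and the downward-closedness of $\cP$ carefully, i.e. making rigorous the claim that we may assume $\bx \in \cP$ at an optimum. The argument is: if $\bx \notin \cP$, then the inner max is attained at some $\bz^* \in \cP$, $\bz^* \le \bx$, $\bz^* \ne \bx$, with $v(\bx) = \bc\cdot\bz^* = v(\bz^*)$ (using downward-closedness of $\cP$ so that $\bz^*$ remains feasible for budget $\bz^*$), and $\bq\cdot\bz^* < \bq\cdot\bx$ since $\bq > \textbf{0}$; so $\bx$ is strictly dominated. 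This uses that $\cP$ is downward-closed (stated in the PSP definition) and that $v$ restricted to $\cP$ is just $\bc \cdot \bx$ (again by downward-closedness, the inner constraint $\bz \le \bx$ is not binding when $\bx \in \cP$). Everything else is a short LP/convexity argument, so the write-up should be brief.
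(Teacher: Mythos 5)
Your proof is correct and follows essentially the same route as the paper: observe that $\bq > \textbf{0}$ forces an optimal demand to lie in $\cP$ (where the inner maximization becomes trivial), so that $\mathcal{D}(v,\bq)$ reduces to $\arg\max_{\bx' \in \cP}(\bc - \bq)\cdot\bx'$, whose uniqueness was established just before the lemma via strict concavity of the $u_i$. The paper's proof is a two-line version of the same argument; you usefully unpack the step it leaves implicit, namely that downward-closedness of $\cP$ and $\bc \geq \textbf{0}$ give $v(\bz^*) = \bc\cdot\bz^*$, so replacing a non-$\cP$ point $\bx$ by its inner maximizer $\bz^*$ strictly improves utility.
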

\begin{proof}
The quasi-linear utility, $v(\bx'
) - \bq \cdot \bx'$ is maximized only when $\bx' \in \cP$ since $\bq 
 > \textbf{0}$. Then, the quasi-linear utility maximization can be written as $\max_{\bx'} \bc \cdot \bx' - \bq \cdot \bx'$ s.t. $\bx' \in \cP$ which has a unique solution as discussed above. The solution is $\bx(\bq)$ by definition. 
\end{proof}

Thus, if $\bq > \textbf{0}$, then we have $\bq' > \textbf{0}$ as well. Due to Lemma~\ref{lem:demand-unique} and Eqn.~(\ref{eqn:qq}), we have shown that $v$ is LS. Therefore, we have Theorem~\ref{thm:resource-view-2} from Theorem~\ref{thm:procesing-rate-view}.

We extend the proof to an arbitrary $\bq \geq 0$.
Let $J^0$ denote the set of jobs $j$ with $q_j = 0$, and $J^+$ the other jobs. It is an easy exercise to show that 
\begin{equation}
    \label{eqn:char}
\+D(v, \bq) = \{ \bx(\bq) + \sum_{j \in J^0} \lambda_j \bone_j  \; | \; \lambda_j \geq 0 \; \forall j \in J^0\}.
\end{equation}
In other words, the quasi-linear utility is maximized by purchasing any amount of free jobs on top of $\bx(\bq)$. Consider any $\bx \in \+D(v, \bq)$ and increase job $j$'s price, so we have $\bq'$ such that $q'_j > q_j$ and $q'_i = q_i$ for all $i \neq j$. For any $i \in J^0$ such that $i \neq j$, it is trivial to see that we can find $\bx'  \in \+D(v, \bq')$ such that $x'_i \geq x_i$ as job $i$ is free. If $i \in J^+$, we have $x_i(\bq') \geq x_i(\bq) = x_i$, where the inequality follows from Eqn.~(\ref{eqn:qq}) and the equality follows from Eqn.~(\ref{eqn:char}) with the facts that  $\bx, \bx(\bq) \in \+D (v, \bq)$  and $i \in J^+$.

\section{Applications of the Theorems}
    \label{sec:theorem-applications}

In this section, we discuss some problems captured by our theorems. 
See Section~\ref{sec:applications} for their definition.

\subsection{Matroid Scheduling}

We can formulate the matroid scheduling problem as an instance of the PSP as follows: 
$$ \cP = \{ \bz \ | \ \sum_{j \in S} z_j \leq r(S) \quad \forall S \subseteq J \}$$
where $r$ is the rank function of the given matroid $\+M = (J, \+I)$ and $z_{j}$ is the processing rate of job $j$. For a cost vector $\bc$, $\bar v(\bone_S) = \max_{\bz \leq \bone_S, \bz \in 2^J, \bz \in \cP} \bc \cdot \bz$ is equivalent to the weighted rank function of a matroid, which is known to be GS (Section~\ref{sec:gs}).
Note that $\bar v(\bone_S)  = v(\bone_S)$  for all $S \subseteq J$ because the polytope $\cP$ is integral, where  $v(\bx) = \max_{\bz \leq \bx, \bz \in \cP} \bc \cdot \bz$ (see Theorem~\ref{thm:procesing-rate-view}).

To show $v$ is LS, we prove that the concave closure of $\bar v$ is equivalent to $v$. For the concave closure definition, see Theorem~\ref{thm:gs-concave-closure}. Note that here we used $\bar v$ to distinguish the discrete valuation from the continuous valuation $v$. 

\begin{claim}
    For $\bx \in [0,1]^J$, $v(\bx) = \bar v^+(\bx)$.
\end{claim}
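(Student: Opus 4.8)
The plan is to prove the two inequalities $v(\bx)\ge\bar v^+(\bx)$ and $v(\bx)\le\bar v^+(\bx)$ separately, exploiting that $\cP$ here is the \emph{integral} matroid independent-set polytope and that $\bc\ge 0$. Throughout I use (as already noted in the text, since $\cP$ is integral) that $\bar v(\bone_S)=v(\bone_S)=\max_{I\subseteq S,\,I\in\cI}\bc\cdot\bone_I$ is the weighted rank function, and that $\bar v^+$ is its concave closure, so $\bar v^+(\bx)=\max\{\sum_S\lambda_S\bar v(\bone_S):\sum_S\lambda_S\bone_S=\bx,\ \sum_S\lambda_S=1,\ \lambda\ge 0\}$, a maximum over a nonempty compact set since $[0,1]^J$ is the convex hull of $\{\bone_S\}_{S\subseteq J}$.

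For the direction $v(\bx)\ge\bar v^+(\bx)$, I would take a convex decomposition $\bx=\sum_S\lambda_S\bone_S$ attaining $\bar v^+(\bx)$ and, for each $S$ with $\lambda_S>0$, an optimizer $\bz^S\in\cP$ with $\mathbf 0\le\bz^S\le\bone_S$ and $\bc\cdot\bz^S=\bar v(\bone_S)$. Then $\bz:=\sum_S\lambda_S\bz^S$ lies in $\cP$ by convexity and satisfies $\bz\le\sum_S\lambda_S\bone_S=\bx$, so it is feasible for the program defining $v(\bx)$; hence $v(\bx)\ge\bc\cdot\bz=\sum_S\lambda_S\bar v(\bone_S)=\bar v^+(\bx)$. (Equivalently: $v$ is concave on $[0,1]^J$ and agrees with $\bar v$ on $\{0,1\}^J$, hence dominates the concave closure; no matroid structure is needed for this direction.)

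The reverse inequality $v(\bx)\le\bar v^+(\bx)$ is the substantive direction. I would fix an optimizer $\bz^*\in\cP$ with $\bz^*\le\bx$ and $\bc\cdot\bz^*=v(\bx)$, and use integrality of $\cP$ (its vertices are exactly the $\bone_I$, $I\in\cI$, and it is down-closed) to write $\bz^*=\sum_k\mu_k\bone_{I_k}$ with $\mu_k>0$, $\sum_k\mu_k=1$, each $I_k\in\cI$. The key move is to enlarge each $I_k$ to a superset $S_k\supseteq I_k$ so that the \emph{same} weights now produce $\bx$, i.e.\ $\sum_k\mu_k\bone_{S_k}=\bx$. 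This is possible because for each coordinate $j$ the pieces with $j\notin I_k$ carry total weight $1-z^*_j\ge x_j-z^*_j\ge 0$, so, processing coordinates one at a time (and subdividing a single piece whenever the weights do not line up exactly), one adds $j$ to a sub-collection of such pieces of total weight exactly $x_j-z^*_j$; subdividing never disturbs coordinates already handled, and $j$ is only added to pieces missing $j$ from $I_k$, so $I_k\subseteq S_k$ is preserved. Monotonicity of $\bar v$ (from $\bc\ge 0$) then gives $\bar v(\bone_{S_k})\ge\bc\cdot\bone_{I_k}$, and summing over $k$,
\[
\bar v^+(\bx)\ \ge\ \sum_k\mu_k\,\bar v(\bone_{S_k})\ \ge\ \sum_k\mu_k\,\bc\cdot\bone_{I_k}\ =\ \bc\cdot\bz^*\ =\ v(\bx).
\]

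The main obstacle is precisely this extension step: producing a convex decomposition of $\bx$ each of whose pieces lies above an independent set while matching $\bx$ on every coordinate. The argument is routine, but the subdivision bookkeeping must be done carefully; it relies only on the standard integrality of the matroid independent-set polytope. Once the claim is in hand, combining it with Theorem~\ref{thm:gs-concave-closure} (the concave closure of a GS valuation is LS) shows $v=\bar v^+$ is linear substitute, so Theorem~\ref{thm:procesing-rate-view} applies and yields the scalable algorithm for matroid scheduling.
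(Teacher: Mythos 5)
Your proof is correct and follows essentially the same route as the paper: use integrality of the matroid polytope to pass between vertices of $\cP$ and independent sets, and prove the two inequalities separately. Your $v(\bx)\ge\bar v^+(\bx)$ direction is the paper's argument (the paper picks the max-weight independent subset $\pi(S)\subseteq S$ for each $S$ in the optimal concave-closure decomposition and forms $\bx'=\sum_S\lambda_S\bone_{\pi(S)}\in\cP$ with $\bx'\le\bx$; your $\bz^S$ is $\bone_{\pi(S)}$), and your ``equivalently'' remark that $v$ is concave and agrees with $\bar v$ on $\{0,1\}^J$ is a nice compression of it.

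The noteworthy difference is your treatment of $v(\bx)\le\bar v^+(\bx)$, where you are actually \emph{more} careful than the paper. The paper decomposes the optimizer $\bz$ as $\sum_{S\in\cI}\lambda_S\bone_S$ and immediately writes $\sum_S\lambda_S\bar v(\bone_S)\le \bar v^+(\bx)$, but that decomposition sums to $\bz$, not to $\bx$, so it only certifies $\le\bar v^+(\bz)$; passing to $\bar v^+(\bx)$ tacitly uses monotonicity of the concave closure, which is not an automatic consequence of monotonicity of $\bar v$ without an argument. Your extension step — subdividing pieces and growing each $I_k$ to $S_k\supseteq I_k$ so that $\sum_k\mu_k\bone_{S_k}=\bx$, exploiting $1-z^*_j\ge x_j-z^*_j$ coordinate by coordinate — is precisely the missing justification (equivalently, it proves $\bar v^+$ is monotone on $[0,1]^J$ when $\bar v$ is monotone). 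So your proposal not only matches the paper's intent but closes a small gap that the paper leaves implicit.
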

\begin{proof}
    First, we show the ``$\leq$" direction. Let $\bz \in \cP$ such that $v(\bz) = v(\bx)$. 
    Since $\cP$ is integral, we can express $\bz$ as a convex combination of the subsets of $J$, so $\bz = \sum_{S \in \mathcal{I}} \lambda_S \cdot \bone_S$, where $\sum_{S  \in \mathcal{I}}\lambda_S = 1$. Then, $v(\bx) = \bc \cdot \bz =  \sum_{S \in \mathcal I} \lambda_{S} (\bc \cdot  \bone_S) = \sum_{S \in \mathcal I}\lambda_S ~ \bar v(\bone_S) \leq v^+(\bx)$.

    To show the other direction, let $\bar v^+(\bx) = \sum_S \bar v(\bone_S) \lambda_S$ for some   $\{\lambda_S\}_{S \subseteq J}$ with $\sum_S \lambda_S = 1$ and $\sum_S \lambda_S \bone_S = \bx$. Let $\pi(S)$ be $S' \subseteq S$ such that $\bone_{S'} \in  \cP$ and 
    $\bar v(\bone_{S'}) = \bar v(\bone_{S})$. We then have
    \begin{equation}
        \label{eqn:matroid-11}
        \bar v^+(\bx) = \sum_S \bar v(\bone_{\pi(S)}) \lambda_S = \sum_S   \lambda_S ~ \bc \cdot \bone_{\pi(S)}
    \end{equation}

    Let $\bx' := \sum_S \lambda_{S} \bone_{\pi(S)}$. Note that $\bx' \leq \bx$. Further, $\bx' \in \cP$ due to $\cP$'s integrality. By definition of $v$, we know $v$ is monotone. Thus, we have, 
    \begin{equation}
        \label{eqn:matroid-12}
        v(\bx) \geq v(\bx') = \bc \cdot \bx' = \sum_S \lambda_S ~ \bc \cdot \bone_{\pi(S)}
    \end{equation}
    From Eqn.~(\ref{eqn:matroid-11}) and (\ref{eqn:matroid-12}), we conclude $v(\bx) \geq \bar v^+(\bx)$, as desired. 
   
\end{proof}

Since $\bar v^+$ is LS, so is $v$.  Therefore, Theorem~\ref{thm:procesing-rate-view} establishes GD as a scalable algorithm. 
Finally, if  $\{ \lambda_S \}_{S \in \mathcal{I}}$ is the convex combination of independent sets that achieves $v^+(\bx)$, we can implement our algorithm by preemptively scheduling each independent set $S$ in proportional to $\lambda_S$.

\subsection{Generalized Flow Scheduling}

Recall that we are given a network $G=(V,E)$, where $J \subseteq V$ is the set of source nodes. Arc $e$ has capacity $u_e$ and flow gain factor $\gamma_e$; a flow of value $f$ becomes $\gamma_e f$ after flowing through $e$. 
Job $j$ gets processed at a rate equal to its (outgoing) net flow. Formally, node $v$ has excess capacity $b_v$. Sources have non-negative excess capacities. A job $j$ gets processed at a rate of $z_j$, when its net flow is $z_j$ for  $\bz \in \cP$, which is defined as follows.
\begin{equation} 
    \label{eqn:p-gf}
\cP = \set{\bz  = \langle \sum_{w} (f_{jw} - \gamma_{wj} f_{wj})  \rangle_{j  \in J} \; \Big| \;  \sum_{w} (f_{vw} - \gamma_{wv} f_{wv})\leq b_v \; \forall v \in V; f_e \leq u_{e} \; \forall e \in E; \vecf \geq 0}
\end{equation}

Theorem~\ref{thm:gf-ls} shows $v(\bx) = \max_{\bz \in \cP} \bc \cdot \bz$ is LS. Note that our network can ensure $\bz \leq \bx$ by setting the excess capacity of the node that represents job $j$ to be $x_j$. Thus, we immediately have a scalable algorithm due to Theorem~\ref{thm:procesing-rate-view}.

\subsection{Speed-up Curves on Partitioned Resources}

We consider this problem from the Resource View. Then we need to show that the utility $u_j$ is linear-substitute and satisfies two desired properties in Theorem~\ref{thm:resource-view-2}. Without loss of generality, we can assume the utility is determined by only one univariate function that is $u_j(\bx) = g(\sum_{d\in [D]} a_d x_d)$. We drop $j$ for brevity henceforth.

We begin by showing a simple greedy algorithm being optimal. The proof follows from a simple swapping argument. 
\begin{lemma}
    \label{lem:concave-greedy-OPT}
    For any price vector $\bq$ on resources, the greedy algorithm selects elements fractionally in decreasing order of $a_d / q_d$ until no marginal increment, finds a solution maximizing the quasi-linear utility, i.e., a solution in  $\+D(u, \bq)$.
\end{lemma}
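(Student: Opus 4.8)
The plan is to collapse this $D$-dimensional problem to a one-dimensional concave maximization. For a target aggregate level $s \ge 0$, define $\phi(s) := \min\{ \bq \cdot \bx : \sum_d a_d x_d = s,\ 0 \le \bx \le \bone\}$, the cheapest way to ``buy'' $s$ units of $a$-weighted resource. First I would record two structural facts: (i) on $[0, \sum_d a_d]$, $\phi$ is nondecreasing, convex and piecewise linear, with one-sided slopes equal to the values $q_d/a_d$ taken in nondecreasing order as $s$ grows; and (ii) the greedy fill that raises coordinates in decreasing order of $a_d/q_d$ (coordinates with $q_d=0$ first, treating $a_d/q_d=+\infty$) until it reaches aggregate level $s$ is an optimal solution of the minimization defining $\phi(s)$. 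Fact (ii) is exactly the fractional-knapsack ``swapping'' argument referenced in the statement: if $\bx$ is feasible for level $s$ but not a greedy fill, there are $d,d'$ with $a_d/q_d > a_{d'}/q_{d'}$, $x_d<1$, $x_{d'}>0$; shifting an infinitesimal amount $\delta$ of $a$-weight from $d'$ to $d$ keeps $\sum_d a_d x_d=s$ and changes the cost by $\delta(q_d/a_d - q_{d'}/a_{d'})<0$, so every non-greedy point is strictly suboptimal, and iterating (with a limiting argument for ties) yields (ii); convexity and piecewise-linearity in (i) then follow directly.

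Next I would observe that for any feasible $\bx$, writing $s = \sum_d a_d x_d$, the quasi-linear utility satisfies $u(\bx) - \bq\cdot\bx = g(s) - \bq\cdot\bx \le g(s) - \phi(s)$, with equality whenever $\bx$ is the greedy fill reaching level $s$. Hence $\max_{0 \le \bx \le \bone}\bigl(u(\bx)-\bq\cdot\bx\bigr) = \max_{s \in [0,\sum_d a_d]} h(s)$ where $h(s) := g(s) - \phi(s)$, and any $s^\star$ attaining the right-hand maximum, paired with the greedy fill for $s^\star$, lies in $\+D(u,\bq)$. Since $g$ is concave and $\phi$ is convex, $h$ is concave on $[0,\sum_d a_d]$, and its right-derivative at $s$ is $g'_+(s) - q_{d(s)}/a_{d(s)}$, where $d(s)$ is the resource greedy would increase next at level $s$ --- this is precisely the ``marginal increment'' of the statement. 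Greedy continues while it is positive and halts at the first level $s^g$ where it turns nonpositive, or else when every coordinate hits its cap, i.e. $s^g = \sum_d a_d$. By concavity, the right-derivative of $h$ is nonincreasing, nonnegative on $[0,s^g)$ and nonpositive on $(s^g, \sum_d a_d]$, so such an $s^g$ is a global maximizer of $h$; therefore the greedy solution is in $\+D(u,\bq)$.

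The main thing to be careful about is the bookkeeping at non-differentiable points and degenerate inputs: $\phi$ is only piecewise linear, so all derivative claims must be stated with one-sided derivatives (or subgradients), and the boundary case in which greedy stops only because every coordinate has been driven to $x_d=1$ must be handled separately --- there $h$ may still be increasing at $\sum_d a_d$, and that endpoint is the maximizer. Ties in the ratios $a_d/q_d$ are harmless (any tie-break gives the same $\phi(s)$ and the same value of $h$), and coordinates with $q_d=0$ are consistent with the ordering since their marginal increment $a_d g'_+(s)\ge 0$ never turns negative, so greedy fills them fully. Modulo these routine checks, the combination of ``greedy fill is cost-optimal for each level $s$'' with concavity of $h$ delivers the lemma.
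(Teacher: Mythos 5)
Your proof is correct but structured quite differently from the paper's, which is a two-sentence swapping argument: take a hypothetical optimal $\bx$ that is not in greedy order, find $i<j$ (i.e.\ $a_i/q_i \ge a_j/q_j$) with $x_i<1$ and $x_j>0$, and shift mass from $j$ to $i$ preserving $\sum_d a_d x_d$ without increasing cost; iterating gives a greedy-shaped optimum. You instead first factor the problem through the one-dimensional auxiliary function $\phi(s)=\min\{\bq\cdot\bx : \sum_d a_d x_d = s,\ 0\le\bx\le\bone\}$, reduce to maximizing the concave function $h(s)=g(s)-\phi(s)$, and then argue that the greedy stopping rule locates a maximizer of $h$ by inspecting one-sided derivatives. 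The swapping step still appears (to establish that greedy fill achieves $\phi(s)$), but you relocate it to proving structure of $\phi$ rather than applying it to the final optimum. The payoff of your version is that it makes the \emph{stopping condition} (``until no marginal increment'') explicit and justified via concavity of $h$, which the paper's proof glosses over entirely; it also isolates the boundary case $s^g=\sum_d a_d$ and the handling of $q_d=0$ / ties, which the paper's sketch does not mention. The cost is length: the paper's argument is a single exchange step, and for the purposes of this lemma that suffices since the marginal-increment stopping rule is exactly first-order optimality of a concave objective over a box, which most readers will accept without the $\phi$/$h$ scaffolding. Both are valid; yours is the more careful version of what the paper states tersely.
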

\begin{proof}
    Fix some price vector $\bq$ on resources.  Assume that $a_1 / q_1 \geq a_2 / q_2 \geq ...\geq a_D / q_D$. Suppose we have an optimum solution $\bx$ where  $x_i < 1$ and $x_j > 0$ for some $i < j$. Then, we maximally increase $x_i$ by $a_j \delta / a_i$ and decrease $x_j$ by $\delta$ for some $\delta 
 >0 $
     until $x_i$ becomes 1 or $x_j$ becomes 0. This    preserves the value of $\sum_d a_d x_d$ without increasing $\bq \cdot \bx$. By repeating  this, we have the lemma. 
\end{proof}

First, we prove that $u$ satisfies linear-substitute. Although the claim is known to be true (see Section 
 5 in \cite{murota2022discrete}), we include it as it is a nice warm-up before proving the subsequent claims. 

\begin{claim}  
    $u$ is linear-substitute.
\end{claim}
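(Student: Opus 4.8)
The plan is to verify the linear-substitutes condition directly, using the greedy characterization of the demand set from Lemma~\ref{lem:concave-greedy-OPT}. The key structural observation is that, for $u(\bx) = g(\sum_d a_d x_d)$ with $g$ concave, the demand set at a price vector $\bq$ is governed by a single scalar threshold. Indeed, the greedy buys resources fractionally in decreasing order of $a_d/q_d$, and the stopping rule (``no marginal increment'') says it keeps filling resource $d$ exactly while $g'(s)\,a_d > q_d$, where $s = \sum_d a_d x_d$ is the amount of ``effective input'' bought so far. So if $\theta$ denotes the value of $g'$ at the optimal effective input $t^* := \sum_d a_d x_d$ (a supergradient of $g$ at $t^*$), then $\+D(u,\bq)$ consists precisely of the $\bx\in[0,1]^D$ with $x_d = 1$ when $\theta a_d > q_d$, $x_d = 0$ when $\theta a_d < q_d$, $x_d\in[0,1]$ arbitrary when $\theta a_d = q_d$, subject to $\theta$ being a supergradient of $g$ at $\sum_d a_d x_d$ (free resources, with $a_d>0$ and $q_d=0$, being always saturated).

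First I would show that raising the price of a single resource $j$ can only raise this threshold: if $\bq\le\bq'$ with $q_k=q_k'$ for all $k\neq j$, then $\theta'\ge\theta$. Write $C_{\bq}(t) := \min\{\langle\bq,\bx\rangle : \sum_d a_d x_d \ge t,\ \bx\in[0,1]^D\}$ for the minimum cost of purchasing $t$ units of effective input, a convex nondecreasing function of $t$ computed exactly by the greedy of Lemma~\ref{lem:concave-greedy-OPT}; then $\pi(\bq)=\max_{t\ge 0}\,g(t)-C_{\bq}(t)$ and $t^*$ is a maximizer. Raising $q_j$ to $q_j'$ increases $C_{\bq}(\cdot)$, and the increment $C_{\bq'}(t)-C_{\bq}(t) = (q_j'-q_j)\,x_j^{\bq'}(t)$ (where $x_j^{\bq'}(t)$ is the amount of $j$ the greedy buys to reach level $t$ at prices $\bq'$) is nondecreasing in $t$, since the greedy resorts to the now-costlier resource $j$ only as $t$ grows. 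Subtracting from the concave map $g$ a function that is penalized more at larger $t$ moves the maximizer weakly to the left, so the optimal effective input weakly decreases; concavity of $g$ then gives $\theta'\ge\theta$.

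Second, given $\theta'\ge\theta$, I would produce $Y\in\+D(u,\bq')$ with $Y_k\ge X_k$ for all $k\neq j$. For $k\neq j$ the ratio $q_k/a_k = q_k'/a_k$ is unchanged, so every $k$ with $\theta a_k > q_k$ (hence $X_k=1$) still satisfies $\theta' a_k > q_k$, and we set $Y_k=1$; every $k$ with $\theta a_k < q_k$ has $X_k=0\le Y_k$ trivially; for the boundary resources ($\theta a_k = q_k$) one has enough slack in the partial amounts — their total effective input being $t^*$ minus the saturated part — to choose $Y_k\ge X_k$ for all of them at once, using that $(t')^*$, though possibly below $t^*$, still covers everything that must stay saturated among $k\ne j$. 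Degenerate cases are immediate: a resource with $a_j=0$ is unused whenever $q_j>0$, and a free resource is always saturated; affine pieces of $g$ only enlarge the demand set and are absorbed by the same freedom in choosing $Y$. The main obstacle is precisely this last bookkeeping when several resources sit at threshold and $(t')^*<t^*$: one must check that deprioritizing $j$ frees up exactly enough budget so the drop in effective input is charged to $x_j$ and not to some other $x_k$ — the threshold description above is what makes this transparent, since for $k\neq j$ membership in ``fully/partially/not bought'' depends only on $\theta$ versus the unchanged $q_k/a_k$. If this argument turns unwieldy, a clean fallback is to invoke Lemma~\ref{lem: LS-equal} and show submodularity of the profit function $\pi(\bq)=\max_{\bx\in[0,1]^D} g(\sum_d a_d x_d)-\langle\bq,\bx\rangle$ directly, though that route essentially re-derives the same greedy/threshold structure.
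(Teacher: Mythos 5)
Your proof is correct and, at bottom, rests on the same engine as the paper's: the greedy characterization of $\+D(u,\bq)$ from Lemma~\ref{lem:concave-greedy-OPT}. The paper's argument is shorter and more primal: raising $q_j$ pushes $j$ later in the $a_d/q_d$ order while preserving the relative order of all other resources, so each $k\ne j$ is reached with no more accumulated effective input than before, and concavity of $g$ then gives $x_k^*$ not decreasing. Your version reformulates this via the marginal-value threshold $\theta=g'(t^*)$ and a Topkis-type monotone-comparative-statics step to show $t^*$ weakly decreases (hence $\theta$ weakly increases), then reconstitutes $Y$ from the threshold description. That buys a crisper treatment of the boundary/tie cases (where $g$ is affine or several $k$ sit at $\theta a_k=q_k$), at the cost of more machinery; the two arguments are otherwise interchangeable.

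One technical slip worth fixing: the identity $C_{\bq'}(t)-C_{\bq}(t)=(q_j'-q_j)\,x_j^{\bq'}(t)$ is not exact, because the minimizing purchase plan changes with the price. The correct statement is the sandwich $(q_j'-q_j)\,x_j^{\bq'}(t)\le C_{\bq'}(t)-C_{\bq}(t)\le (q_j'-q_j)\,x_j^{\bq}(t)$ (or the envelope integral $\int_{q_j}^{q_j'} x_j^{\bq_s}(t)\,\dd q_s$). Fortunately the property you actually need — that the increment $C_{\bq'}(t)-C_{\bq}(t)$ is nondecreasing in $t$, i.e.\ $C$ has increasing differences in $(t,q_j)$ — holds for a cleaner reason: the derivative $\partial C_{\bq}/\partial t$ equals the current marginal per-unit cost $q_k/a_k$, and raising one entry of the multiset $\{q_d/a_d\}$ raises the sorted sequence pointwise, so $\partial C_{\bq'}/\partial t\ge \partial C_{\bq}/\partial t$ everywhere. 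With that repair, your Topkis step and the conclusion $\theta'\ge\theta$ go through, and the rest of the argument is sound.
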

\begin{proof}
    Consider the greedy algorithm in Lemma~\ref{lem:concave-greedy-OPT}. Now we increase the price of resource $d$. In the new greedy order, the resource $d$ can only be considered later; other elements keep the same relative order. Thus $\bx^*_i$ can not decrease for all $i \neq d$ because they are processed earlier than the original greedy order. Therefore, we have the claim.
\end{proof}

The following claim shows that when we increase the price vector $\bq$ simultaneously for each coordinate at the same scale, $u$ can only demand less.

\begin{claim}
    For any $\by \in \+D(u, \bq)$ and $\lambda \geq 1$, there exists $\by' \in \+D(u, \lambda \bq)$ such that $\by \geq \by'$.
\end{claim}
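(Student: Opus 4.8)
The plan is to analyze both demand sets through the greedy characterization of Lemma~\ref{lem:concave-greedy-OPT}. Fix an ordering $1,\dots,D$ of the resources with $a_1/q_1 \ge a_2/q_2 \ge \cdots \ge a_D/q_D$; since scaling prices by $\lambda>0$ multiplies every ratio by $1/\lambda$, this is also a valid greedy ordering for $\lambda\bq$. First I would extract the structural consequence of the exchange argument in the proof of Lemma~\ref{lem:concave-greedy-OPT}: if $\by\in\+D(u,\bq)$ and $a_i/q_i > a_j/q_j$ \emph{strictly} while $y_i<1$ and $y_j>0$, then shifting mass from $j$ to $i$ at fixed $\sum_d a_d x_d$ \emph{strictly} lowers the cost, contradicting optimality. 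Hence every member of $\+D(u,\bq)$ is a ``prefix'' allocation for this order: resources with ratio strictly above the last active (tied) group are set to $1$, those with ratio strictly below are set to $0$, and only the active group is fractional; the same holds for $\+D(u,\lambda\bq)$ with the same order.

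Next I would bound the total level reached. Let $L:=\sum_d a_d y_d$ for the given $\by$. Running the greedy for $\bq$, the resource $d(\ell)$ being filled at level $\ell$ depends only on $\ell$ and the $a_d$'s, so it is common to $\bq$ and $\lambda\bq$; the greedy keeps going exactly while the right-marginal $a_{d(\ell)}g'_+(\ell)-q_{d(\ell)}$ is positive, i.e.\ while $g'_+(\ell) > q_{d(\ell)}/a_{d(\ell)}$. For prices $\lambda\bq$ the threshold becomes $\lambda q_{d(\ell)}/a_{d(\ell)} \ge q_{d(\ell)}/a_{d(\ell)}$ (using $\lambda\ge1$ and nonnegativity; zero-price resources have threshold $0$ under both). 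Since $g$ is concave, $g'_+$ is non-increasing, so the set of levels at which the $\lambda\bq$-greedy continues is contained in that of the $\bq$-greedy; thus its stopping level $L'$ satisfies $L' \le L_{\min}(\bq) \le L$, where $L_{\min}(\bq)$ is where the $\bq$-greedy halts (which is $\le L$ because $\by$ is itself optimal for $\bq$, so $L$ is an optimal level).

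Finally I would build $\by'$ explicitly below $\by$. Group the ordered resources by equal ratio into $G_1,G_2,\dots$ with cumulative weights $\bar A_k = \sum_{i\le k}\sum_{d\in G_i} a_d$. Let $G_m$ be the active group of $\by$ and $G_{m'}$ the active group of the level-$L'$ prefix allocation; from $L'\le L$ one gets $m'\le m$, so $y_d=1$ on $G_1\cup\cdots\cup G_{m'-1}$ and $\sum_{d\in G_{m'}} a_d y_d \ge L'-\bar A_{m'-1}\ge 0$. Define $y'_d=1$ on $G_1\cup\cdots\cup G_{m'-1}$, $y'_d=0$ on the later groups, and on $G_{m'}$ take $\by$ scaled by the factor $(L'-\bar A_{m'-1})/\sum_{e\in G_{m'}}a_e y_e \in[0,1]$ (or $\by'=0$ on $G_{m'}$ if that sum is $0$). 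Then $\by'\le\by$ coordinatewise, $\sum_d a_d y'_d = L'$, and $\by'$ has the greedy prefix structure with the greedy stopping level, so by Lemma~\ref{lem:concave-greedy-OPT} it maximizes the quasi-linear utility for $\lambda\bq$, i.e.\ $\by'\in\+D(u,\lambda\bq)$, as required.

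I expect the main obstacle to be the bookkeeping around ties and zero prices: one must verify that the active-group index $d(\ell)$ is genuinely common to $\bq$ and $\lambda\bq$, that zero-price resources behave identically under the two price vectors, and that \emph{any} prefix allocation with the greedy stopping level (not merely the one the greedy literally outputs) has the same utility and cost and hence lies in the demand set. None of this is deep, but these are the points where a careless argument would fail.
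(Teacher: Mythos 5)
Your proof is correct and follows essentially the same route as the paper's: both invoke the greedy characterization of Lemma~\ref{lem:concave-greedy-OPT}, observe that scaling $\bq$ by $\lambda\ge 1$ leaves the order of $a_d/q_d$ unchanged, and conclude that the greedy under $\lambda\bq$ stops at or before the stopping point under $\bq$. The paper states this in three sentences and leaves the remaining bookkeeping implicit; you spell out the prefix structure of demand sets, the level comparison $L'\le L$, and the explicit coordinatewise construction of $\by'\le\by$, including the tie/zero-price cases — this is a fuller account of the same argument rather than a different one.
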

\begin{proof}
    Consider the greedy algorithm in Lemma~\ref{lem:concave-greedy-OPT}. Note that increasing the price vector $\bq$ to $\lambda \bp$ does not change the order of $a_d / q_d$. Thus the stopping point in the execution of the greedy algorithm comes earlier as the marginal increment becomes zero with fewer resources in the order, which gives the proof.
\end{proof}

Further, we show that when we increase the price vector $\bq$, the utility would not increase.

\begin{claim}
    For any $\by \in \+D(u, \bq)$ and $\bq' \geq \bq$, there exists $\by' \in \+D(u, \bq')$ such that $u(\by) \geq u(\by')$.
\end{claim}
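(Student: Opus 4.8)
The plan is to prove the claim by exhibiting an explicit $\by' \in \+D(u, \bq')$ and comparing its utility to that of $\by$. Since $\bq' \geq \bq$ differs in possibly many coordinates, I would proceed coordinate-by-coordinate, reducing to the case where $\bq'$ and $\bq$ agree in all but one coordinate; the general case then follows by composing finitely many (or, in the limit, a continuum of) such single-coordinate increments, using that the property being proved is transitive in the appropriate sense (if $u(\by) \geq u(\by')$ and $u(\by') \geq u(\by'')$ with $\by' \in \+D(u,\bq')$, $\by'' \in \+D(u,\bq'')$ then $u(\by) \geq u(\by'')$).

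So suppose $q'_d > q_d$ for a single coordinate $d$ and $q'_k = q_k$ for $k \neq d$. I would invoke Lemma~\ref{lem:concave-greedy-OPT}: both $\by$ and a candidate $\by'$ can be taken to be greedy solutions, selecting resources fractionally in decreasing order of $a_k/q_k$ (resp.\ $a_k/q'_k$) until the marginal increment of the quasi-linear utility vanishes, i.e.\ until $g'\!\left(\sum_k a_k x_k\right) \cdot a_k \leq q_k$ (resp.\ $\leq q'_k$) for the next resource to be added. The key structural observation is that raising $q_d$ only moves resource $d$ later in the greedy order, leaving the relative order of all other resources unchanged. Hence the greedy process for $\bq'$ follows the same trajectory as for $\bq$ on the prefix before $d$ was reached, and then either stops earlier (if the stopping threshold is hit before or at $d$'s new position) or continues. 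In the worst case the total ``weighted amount'' $T := \sum_k a_k x_k$ achieved under $\bq'$ is no larger than that under $\bq$: intuitively, at every stopping configuration the constraint $g'(T)\,a_k \leq q'_k$ is harder to violate-in-reverse than $g'(T)\,a_k \leq q_k$, so the greedy-for-$\bq'$ run terminates at a total weighted amount $T' \leq T$. Since $u(\by) = g(T)$ and $u(\by') = g(T')$ and $g$ is monotone, $u(\by') = g(T') \leq g(T) = u(\by)$.

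To make the ``$T' \leq T$'' step rigorous, I would argue as follows: let $T$ be the total weighted amount $\sum_d a_d y_d$ at the greedy stopping point for $\bq$. The greedy rule for $\bq$ continues adding resource $k$ (in order) as long as $g'(T_{\text{current}}) a_k > q_k$; since $g$ is concave, $g'$ is nonincreasing, so there is a well-defined threshold $T$ beyond which no resource can be profitably added under prices $\bq$. Under $\bq'$, every resource $k$ satisfies $q'_k \geq q_k$, so the condition $g'(T_{\text{current}}) a_k > q'_k$ is \emph{more restrictive}; thus the greedy-for-$\bq'$ run cannot push the weighted total past $T$ — i.e.\ $T' \leq T$. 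One subtlety to handle carefully is that the greedy order itself changes (resource $d$ moves), so the ``prefix'' resources processed may differ; but since we only need a bound on the scalar $T' = \sum_k a_k y'_k$, not on the individual coordinates, this reordering is harmless — the stopping value $T'$ depends only on the multiset of $(a_k, q'_k)$ pairs and the function $g$, and that value is monotone nonincreasing in each $q'_k$.

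The main obstacle I anticipate is the edge-case bookkeeping: when $g'$ has flat stretches (so $g'$ is not strictly decreasing) the greedy ``stopping point'' may not be unique, the demand set may momentarily fail to be a singleton, and one must be careful that the \emph{chosen} $\by'$ has $u(\by') \leq u(\by)$ rather than merely \emph{some} element of the demand set doing so — but the claim only asserts existence of such a $\by'$, so selecting the greedy solution that stops as early as possible suffices. A second, minor point is justifying the reduction from arbitrary $\bq' \geq \bq$ to the single-coordinate case when infinitely many coordinates change simultaneously; here one can either note $D$ is finite (so finitely many single-coordinate steps suffice) or appeal directly to the monotonicity of the scalar stopping value $T'$ in the whole vector $\bq'$, bypassing the reduction entirely. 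I would favor the latter, as it is cleaner: define $T(\bq) := \sup\{ t \geq 0 : \exists \text{ feasible greedy run reaching weighted total } t \}$, observe $T$ is coordinatewise nonincreasing, and conclude $u(\by') = g(T(\bq')) \leq g(T(\bq)) = u(\by)$.
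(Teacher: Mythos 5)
The core strategy---reduce the claim to showing the scalar weighted total $T(\bq) := \sum_d a_d y_d$ at the greedy optimum is coordinatewise nonincreasing in $\bq$, then conclude $u(\by') = g(T(\bq')) \le g(T(\bq)) = u(\by)$ via monotonicity of $g$---is sound and genuinely different from the paper's. The paper argues by contradiction, truncating the greedy-$\bq'$ run at a ``prefix'' with the same weighted total as the $\bq$-greedy and tracking which coordinates are filled; your route is cleaner in that it collapses the comparison to a single scalar and sidesteps the bookkeeping over possible reorderings of $a_k/q_k$ versus $a_k/q'_k$. The single-coordinate reduction you describe is fine (finite $D$), though as you note it is unnecessary for the scalar-monotonicity formulation.

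However, the justification you give for $T(\bq') \le T(\bq)$ has a gap. You characterize $T$ as ``a threshold beyond which no resource can be profitably added'' and then say that since $\bq' \ge \bq$ the marginal-gain test $g'(T)a_k > q'_k$ is more restrictive, so greedy-$\bq'$ cannot push past $T$. But the stopping condition at $T$ under $\bq$ is really ``for every $k$, either $y_k=1$ or $g'(T)a_k \le q_k$'': the box constraint $y_k \le 1$ can halt the greedy on a resource whose marginal gain is still strictly positive. For such a $k$ (with $y_k=1$, $g'(T)a_k > q_k$), nothing rules out $g'(T)a_k > q'_k$ as well; and nothing forces the $\bq'$-greedy to have filled $k$ to $1$ by the time its weighted total reaches $T$ (the $\bq'$-order may place $k$ late). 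So the ``more restrictive conditions'' step does not yet deliver $T' \le T$. The gap closes with the first-order optimality conditions directly: if $T' > T$ then $g'(T) > g'(T')$ by strict concavity; for any $k$ with $y'_k > 0$, optimality under $\bq'$ gives $g'(T')a_k \ge q'_k$, hence $g'(T)a_k > q'_k \ge q_k$, and optimality under $\bq$ then forces $y_k = 1 \ge y'_k$; combined with $y_k \ge 0 = y'_k$ otherwise, $\by \ge \by'$ coordinatewise, so $T \ge T'$---contradiction. Patching your proposal with this step makes it fully rigorous and, arguably, more transparent than the paper's prefix-tracking argument.
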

\begin{proof}
    For the sake of contradiction, suppose $u(\by) < u (\by')$.  
Let $\by'$ be the solution constructed by the greedy algorithm described in the proof of Lemma~\ref{lem:concave-greedy-OPT} with price vector $\bq'$. Define $\by$ analogously with $\bq$. Because of the assumption $u(\by) < u (\by')$ and $g$'s monotonicity, it must be the case that 
$\sum_i a_i y_i < \sum_i a_i y_i'$. Let $\by''$ be the ``prefix'' of $\by'$ such that $z :=\sum_i a_i y_i = \sum_i a_i y_i''$. In other words, we run the greedy with $\bq'$ but stop when  $z  = \sum_i a_i y_i''$. 
Let $k$ be the smallest such that $y_k' > y_k''$. The greedy with respect to $\bq'$ further increases $\by''$ since the marginal gain $g(z + \eps a_k) - g(z) - \eps \bq'_k > 0$ for a sufficiently small $\eps >0$. Therefore, if $\by_k < 1$, the greedy must have continued increasing $\by$ since $g(z + \eps a_k) - g(z) - \eps q_k \geq g(z + \eps a_k) - g(z) - \eps \bq'_k > 0$. Since the greedy with respect to $\bq$ considers resources in the order of $1, 2, \ldots$, it means $\by_1 = \by_2 = \ldots = \by_k = 1$; yet $\by''_k < 1$ and $\by''_{k+1} = \by''_{k+2} = \ldots = \by''_D  = 0$. This is a contradiction to
$\sum_i a_i y_i < \sum_i a_i y_i'$.
\end{proof}
Finally, we discuss why it is wlog to assume that $u$ is strictly monotone and strictly concave. To obtain a strictly concave function that is arbitrarily close to $u$, we replace each $x_d$ with $x_d^{1 - \delta}$ for an arbitrary small $\delta > 0$. Further, we replace $g$ with a strictly monotone function that is arbitrarily close to $g$. It is an easy exercise to show this new function is strictly concave and arbitrarily close to the original function $u$ on $[0, 1]^D$. We then pretend an infinitesimal unit $\eps$ of each resource as a distinct item and order these infinitesimal-sized items in decreasing order of 
$a_d (( k\eps)^{1 - \delta} - ((k-1)\eps)^{ 1 - \delta}) / (p_d \eps)$, which corresponds to the effectiveness of  the $k$-th $\eps$-sized copy of resource $d$. It is straightforward to check that the whole analysis remains unchanged. Further, if $u$ does not depend on some resources, then for each of such resources $d$ we can add $\lambda_d x_d^{1-\delta}$ for a sufficiently small $\lambda_d > 0$, so this additional term essentially has no effect on $u$ while ensuring $u$ being strictly monotone. 

Thus, this application admits 
 a scalable algorithm due to Theorem~\ref{thm:resource-view-2}.

\subsection{Unrelated Machines for Unweighted Jobs}
    \label{sec:unrelated}

In this section we show how to obtain a $(1+\eps)$-speed $O(1 / \eps)$-competitive algorithm for total unweighted flow time using gradient descent; the weighted case is presented in the following section. Our algorithm is migratory, unlike the algorithms in \cite{ChadhaGKM09,AnandGK12}. (We also reproduce the non-migratory result in \cite{AnandGK12} in  Section~\ref{sec:unrelated-immediate}, focusing on supermodularity and gradient descent.) 

Unfortunately, we can not directly use Theorem~\ref{thm:main-clairvoyant-int} for the following reason: the problem we consider allows migration of jobs across machines. The residual optimum of a migratory schedule does not seem to be LS. Thus, we use 
the non-migratory residual optimum, which turns out to be GS. Interestingly, our resulting gradient descent algorithm is migratory. Nevertheless, the analysis will be very similar to the proof of Theorem~\ref{thm:main-clairvoyant-int}. Thus, we will adopt the same notation throughout this section.

We first describe the residual optimum for this problem. Let $f(\bx)$ denote (the cost of) the min-cost non-migratory schedule of jobs, where each job $j$ has (remaining) size $x_j$. It is well known that it can be computed by min-cost bipartite matching. For completeness, we describe it here: Consider a bipartite graph $G = (J, M'; E)$ where there is a unique node,  indexed by $(i, k)$, for each pair of machine $i$ and $k \in \Z_{> 0}$. There is an edge between $j \in J$ and $(i, k) \in M'$. Choosing the edge means scheduling  job $j$ on machine $i$ in the $k$-th \emph{reverse} order non-preemptively, thus the edge has cost $k \cdot p_{ij}$, where $p_{ij} = x_{j} / \lambda_{ij}$. Intuitively, this means there are $k$ jobs behind job $j$ on the machine, including itself---thus $j$ delays $k$ jobs---when following SRPT on each machine. The residual optimum $f(\bx)$ is the min cost of a bipartite matching for this instance.

 \begin{lemma}
    \label{lem:unrelated-super}
    $\bar g_\bx(J') := f(\bx \odot \bone_{J'})$ is  (discrete)-supermodular.
 \end{lemma}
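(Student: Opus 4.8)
The plan is to identify $\bar g_\bx$, after a sign flip, with a maximum-weight bipartite matching valuation — one of the gross-substitute examples listed in Section~\ref{sec:gs-examples} — and then read off supermodularity from Lemma~\ref{lem:gs-submodular}.

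First, fix $\bx \in \R^{|J|}_{\ge 0}$ and make the matching description precise. Recall the bipartite graph $G = (J, M'; E)$ with a port node $(i,k)$ for each machine $i$ and each $k \in \{1,\dots,|J|\}$ (reverse positions beyond $|J|$ are never used, so $G$ is finite), where the edge $\{j,(i,k)\}$ has cost $c_\bx(j,(i,k)) := k\, x_j / \lambda_{ij} \ge 0$. For $J' \subseteq J$ and $\bx' := \bx \odot \bone_{J'}$, every job $j \notin J'$ has $x'_j = 0$, so all of its incident edges cost $0$; since the ports are as plentiful as $J$ and an optimal non-migratory schedule occupies consecutive reverse positions on each machine, such zero-size jobs can always be matched to spare ports without displacing any paying job. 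Hence
$$ \bar g_\bx(J') = f(\bx') = \min\Big\{\textstyle\sum_{j \in J'} c_\bx(j,\sigma(j)) \;\Big|\; \sigma \colon J' \hookrightarrow M' \text{ injective}\Big\}, $$
and in particular $\bar g_\bx(J')$ is finite for every $J' \subseteq J$.

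Now put $w_\bx := -c_\bx$. The displayed identity rewrites as
$$ -\bar g_\bx(J') = \max\Big\{\textstyle\sum_{e \in M} w_\bx(e) \;\Big|\; M \subseteq J' \times M' \text{ a matching of } G \text{ saturating every vertex of } J'\Big\}, $$
which is exactly the maximum-weight bipartite matching valuation on $G$ with edge weights $w_\bx$, evaluated at $\bone_{J'}$. By Section~\ref{sec:gs-examples} this valuation is gross substitute, so by Lemma~\ref{lem:gs-submodular} it is submodular as a set function on $J$; therefore $\bar g_\bx$ is supermodular. As $\bx$ was arbitrary, $f$ is discrete-supermodular in the sense of Definition~\ref{def:d-super}.

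The only genuinely delicate point is the reduction carried out in the second step: one must verify that zeroing out the sizes of the jobs outside $J'$ really turns the min-cost non-migratory schedule of all of $J$ into the min-cost schedule that processes only $J'$, i.e., that the free jobs never have to bump a paying job to a worse reverse position. I would make this rigorous by noting that on each machine an optimal schedule occupies reverse positions $1,\dots,\ell$ for some $\ell$, so the free jobs can be placed at positions $\ell+1,\ell+2,\dots$ — equivalently, scheduled to run first — contributing nothing to the objective. It is also worth flagging, for context, why we do not instead appeal to Theorem~\ref{thm:main-clairvoyant-int} with a migratory residual optimum: only the non-migratory residual optimum carries this bipartite-matching (hence gross-substitute) structure, and, as noted in the text, migratory residual optima need not be substitute-valued.
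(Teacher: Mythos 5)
Your proof is correct and follows essentially the same route as the paper: negate the residual optimum, identify it with the maximum-weight bipartite matching valuation from the gross-substitutes example list, and invoke GS $\Rightarrow$ submodular (Lemma~\ref{lem:gs-submodular}). The paper's own proof of Lemma~\ref{lem:unrelated-super} is only two lines; you have filled in exactly the details it leaves implicit, namely that zeroing the sizes of jobs outside $J'$ correctly reduces $f(\bx\odot\bone_{J'})$ to the min-cost assignment of $J'$ alone (since zero-size jobs cost $0$ at any port and ports are plentiful), and the explicit sign flip turning the min-cost formulation into the max-weight matching form quoted in Section~\ref{sec:gs-examples}.
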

\begin{proof}
    From Section~\ref{sec:gs-examples}, we know that maximum weight bipartite matching is GS. 
    Therefore, $ - \bar g_\bp$ is GS and submodular. 
\end{proof}

As before, we use $\GD(\bx)$ to denote the max decrease rate of the residual optimum. We will see that following the optimum residual schedule until a new job arrives is as effective as the true GD; that is, it decreases the residual optimum at a rate equal to $\GD(\bx)$.

Since the analysis is very similar to  the proof of Theorem~\ref{thm:main-clairvoyant-int}, we only highlight the differences. We use the same potential function. Lemma~\ref{lem:unrelated-super} gives supermodularity of the residual optimum. Thus, we can show all discrete changes of $\Phi$ is non-positive following the same argument in the proof of Lemma~\ref{lem:discrete-change-clairvoyant-int}.

\begin{lemma}
    \label{lem:GD-redisual-unrelated}
    When the residual optimum is $f(\bx)$, we have  $\GD(\bx) = -\sum_{x_{j}>0} 1$. 
\end{lemma}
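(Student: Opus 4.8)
The plan is to follow the two‑direction template of Lemma~\ref{lem:GD-redisual}, but with the time‑indexed residual schedule replaced by the min‑cost bipartite matching that now defines $f$. Write $f(\bx) = \min_{M}\bc_{M}\cdot\bx$, where $M$ ranges over assignments of jobs to $(\text{machine},\text{reverse-position})$ slots and $c_{M,j} = k_M(j)/\lambda_{i_M(j),j}$, with $k_M(j)$ the number of jobs on $j$'s machine that complete no earlier than $j$ (itself included). Thus $f$ is concave and piecewise linear, and on the linearity cell containing $\bx$ we have $\nabla f(\bx) = \bc_{M^*}$ for the optimal assignment $M^*$ of $\bx$ (when $\bx$ lies on the boundary of several cells, any optimal $M^*$ works, using that it still gives $f(\bx) = \bc_{M^*}\cdot\bx$). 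Exactly as in Lemma~\ref{lem:GD-redisual}, $\GD(\bx)$ then equals minus the maximal rate at which a feasible processing can decrease $f$, namely $\GD(\bx) = -\max_{\bz\in\cP}\bc_{M^*}\cdot\bz$, so the lemma is equivalent to $\max_{\bz\in\cP}\bc_{M^*}\cdot\bz = \sum_{x_j>0}1$.

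For the ``$\ge$'' inequality I would exhibit a single $\bz\in\cP$: on each non‑empty machine $i$ run its frontmost (smallest‑remaining) job at full rate and do nothing else. This is precisely following the optimal residual schedule $\sigma$ for an infinitesimal step, and, as in Lemma~\ref{lem:GD-redisual}, since per‑machine SRPT keeps the frontmost job frontmost, $\sigma$ merely shifts by $\dd t$, so every alive job finishes $\dd t$ earlier; algebraically this $\bz$ contributes $c_{M^*,j}\cdot\lambda_{ij} = k_{M^*}(j) = n_i$ from machine $i$ (its job count), summing to $\sum_i n_i = \sum_{x_j>0}1$. This step also establishes the analogue of Lemma~\ref{lem:GD-redisual-2}: following the per‑machine‑SRPT residual schedule until the next arrival realizes GD.

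The ``$\le$'' inequality, $\bc_{M^*}\cdot\bz\le\sum_{x_j>0}1$ for every $\bz\in\cP$, is the step I expect to be the main obstacle. Since $\cP$ is the set $\{\bz : z_j = \sum_i x_{ij}\lambda_{ij},\ \{x_{ij}\}\text{ doubly substochastic (each machine and each job used for total fraction at most one)}\}$, the left‑hand maximum is a max‑weight fractional, hence integral, bipartite matching between machines and jobs with weights $w_{ij} = \lambda_{ij}\,c_{M^*,j}$, and ``$\ge$'' already exhibits the matching $i\mapsto(\text{frontmost job of }i)$ of value $\sum_i n_i$. To certify optimality I would use the dual assignment $\alpha_i = n_i$, $\beta_j = 0$, which is feasible iff $w_{ij}\le n_i$, i.e.\ $\lambda_{ij}\,k_{M^*}(j)\le n_i\,\lambda_{M^*(j),j}$ for all $i,j$; this in turn should follow from optimality of $M^*$ via the swap that moves job $j$ to be (near‑)frontmost on machine $i$ --- any violation of the inequality would make this swap strictly decrease $f$. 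The delicate part is pinning down the swap cleanly: reassigning $j$ to a different machine reindexes the reverse‑positions of the other jobs on both the source and destination machines, so one has to argue (as in the $\sum C_j$ min‑cost‑matching literature this residual optimum comes from) that these reindexings only help the swap, leaving the stated inequality as the exact optimality condition. Granting this, the two inequalities give $\max_{\bz\in\cP}\bc_{M^*}\cdot\bz = \sum_{x_j>0}1$, hence $\GD(\bx) = -\sum_{x_j>0}1$; the remainder of the unrelated‑machines analysis (discrete changes of $\Phi$, continuous changes, and the integration) then goes through verbatim as in the proof of Theorem~\ref{thm:main-clairvoyant-int}, using Lemma~\ref{lem:unrelated-super} for supermodularity.
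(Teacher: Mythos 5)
Your ``$\geq$'' direction---exhibiting the $\bz\in\cP$ that runs the frontmost job of each machine in the residual schedule, giving $\bc_{M^*}\cdot\bz=\sum_i n_i=\sum_{x_j>0}1$---is sound and matches the first half of the paper's proof.

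The ``$\leq$'' direction has a genuine error: the dual assignment $\alpha_i=n_i$, $\beta_j=0$ is not feasible in general, and the stated inequality $\lambda_{ij}\,k_{M^*}(j)\le n_i\,\lambda_{M^*(j),j}$ does \emph{not} follow from optimality of $M^*$. A concrete counterexample: take two identical machines ($\lambda_{ij}\equiv 1$) and three jobs with remaining sizes $1,\,1+\epsilon,\,1+2\epsilon$. The min-cost matching $M^*$ puts the smallest job $j_1$ at reverse-position $2$ on machine $1$ and the two larger jobs at reverse-position $1$, so $n_1=2$, $n_2=1$, $c_{M^*,j_1}=2$. Dual feasibility at edge $(i,j)=(2,j_1)$ would require $w_{2,j_1}=\lambda_{2,j_1}\,k_{M^*}(j_1)=2\le n_2\,\lambda_{M^*(j_1),j_1}=1$, which fails. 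Moreover, the swap you rely on---moving $j_1$ to machine $2$---is cost-neutral here (both matchings have cost $4+3\epsilon$), so optimality of $M^*$ does not rule out the violation. The underlying gap is that moving $j$ onto machine $i$ not only costs $j$'s own slot but also delays the $n_i$ jobs already on $i$; your dual condition omits that delay, so the optimality swap does not certify it. (Strong duality still holds of course: in this example the LP value is $3$, witnessed e.g.\ by $\alpha_1=\alpha_2=1$, $\beta_{j_1}=1$, other $\beta_j=0$. But that dual has no clean closed form in terms of the $n_i$'s alone.)

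By contrast, the paper does not certify the instantaneous LP via a dual. It compares the residual schedule $\sigma'$ optimal for $\bx'=\bx-\dd t\,\hat\bz$ with the (sub-optimal) schedule $\sigma$ obtained by inflating $\sigma'$'s sizes back to $\bx$ while keeping assignment and order fixed, and bounds $\cost(\sigma)-\cost(\sigma')$ directly. The swap inequality it invokes (Eqn.~(\ref{eqn:nec})) is applied only to the specific pairs $(i,j(i))$ that arise from GD's processing, not uniformly over all job--machine pairs, and is combined with the observation $\frac{P_{i,<j(i)}}{x'_{i,j(i)}}+N_{i,\ge j(i)}\le N_i$ to obtain a per-machine bound that is then summed. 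So the high-level interpretation (reduce to $\max_{\bz\in\cP}\bc_{M^*}\cdot\bz=\sum_{x_j>0}1$) is fine, but your proposed certificate for the upper bound does not work; you would need to redo that step along the paper's schedule-comparison lines or construct a correct dual, which will generally require nonzero $\beta_j$'s.
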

\begin{proof}
    Proving $\GD(\bx) \leq -\Sigma_{x_{j}>0} 1$ is identical to the first part proof of Lemma~\ref{lem:GD-redisual}. In fact, we can decrease $f(\bx)$ at a rate of at least $\Sigma_{x_{j}>0} 1$ by following the residual optimum schedule. 
 
Showing the other direction needs special care because while the residual optimum schedule is non-migratory, changing the job assignment could help decrease the residual optimum. Assume wlog that $\bx > 0$, because we can ignore jobs $j$ such that $x_j = 0$.
Suppose GD processes $j(i)$ on machine $i$ for a sufficiently small unit time $\dd t$, so no job completes during the time slot. Let $\bx'$ be the resulting remaining job sizes and $\sigma'$ be the optimal non-migratory schedule for $\bx'$, where $j$ is scheduled on machine $\psi(j)$. Let $\sigma$ be the non-migratory schedule obtained from $\sigma'$ by increasing each job's remaining size from $x'_j$
to $x_j$, without changing the assignment $\psi$ and the job ordering on each machine. So, $\sigma$ is a feasible non-migratory schedule for $\bx$, but not necessarily optimum. Let $\cost(\sigma)$ denote the total completion time of schedule $\sigma$. Clearly we have,
$$
    \GD(\bx) \cdot \dd t \geq \cost(\sigma')  - \cost(\sigma).
$$
Thus, it suffices to lower bound $\cost(\sigma')  - \cost(\sigma)$. For notational brevity, let $x'_{ij} := x'_j / \lambda_{ij}$, which is $j$'s processing time on $i$ when its remaining size is $x'_j$. Similarly, let $x_{ij} := x_j / \lambda_{ij}$. Define $P_{i, <  k} := \sum_{j': x'_{ij'} < x'_{ik}} x'_{ij'}$ to be the total remaining processing time of jobs smaller than job $k$ on machine $i$ w.r.t. $\bx'$ in $\sigma'$, breaking ties in an arbitrary but fixed order. Let $N_{i, \geq k} := \sum_{j': x'_{ij'} \geq x'_{ik}} 1 $ denote the number of jobs of  remaining sizes no smaller than job $k$ on machine $i$ w.r.t. $\bx'$ in $\sigma'$.

Since $j(i)$ gets processed on machine $i$, we have $x_{j(i)} - x'_{j(i)} = \lambda_{i, j(i)} \dd t$. The job $j(i)$ is on machine $\psi(j(i))$, and the number of jobs behind it\footnote{Note that on each machine $i$ jobs are ordered following the SRPT order, i.e. in decreasing order of $x_{ij}$.}, including itself, is exactly $N_{\psi(j(i)), \geq j(i)}$. Thus, due to  $x'_{j(i)}$ increasing to $x_{j(i)}$, $j(i)$'s contribution to $\cost(\sigma') - \cost(\sigma)$ is exactly,
\begin{equation}
    \label{eqn:nec3}
    N_{\psi(j(i)), \geq j(i)} x'_{\psi(j(i)),j(i)} - N_{\psi(j(i)), \geq j(i)} x_{\psi(j(i)), j(i)}
    = - N_{\psi(j(i)), \geq j(i)} \frac{\lambda_{i, j(i)}}{\lambda_{\psi(j(i)), j(i)}} \dd t
\end{equation}

Therefore, we have
$$
    \GD(\bx) \geq - \sum_i N_{\psi(j(i)), \geq j(i)} \frac{\lambda_{i, j(i)}}{\lambda_{\psi(j(i)), j(i)}} \dd t
$$

We complete the proof by showing 
\begin{equation}
    N_i \geq N_{\psi(j(i)), \geq j(i)} \frac{\lambda_{i, j(i)}}{\lambda_{\psi(j(i)), j(i)}}
\end{equation}
for all $i$, 
where $N_i := |\psi^{-1}(i)|$ is the number of jobs assigned to machine $i$. To see this, observe the following: 
\begin{equation}
    \label{eqn:nec}
P_{i, <  j(i)} + N_{i, \geq j(i)} \frac{x'_{j(i)}}{ \lambda_{i, j(i)}} \geq
P_{\psi(j(i)), <  j(i)}  + N_{\psi(j(i)), \geq j(i)} \frac{x'_{j(i)}}{\lambda_{\psi(j(i)), j(i)}}, 
\end{equation}
This is because the LHS and RHS measure the increase of the objective due to $j(i)$'s presence on machines $i$ and  $\psi(j(i))$, respectively (in other words, the LHS is the increase of the residual objective on machine $i$ 
when we move $j(i)$ from $\psi(j(i))$ to $i$, and the RHS is the decrease of the residual objective on machine $\psi(j(i))$ for that move): The first term is $j(i)$'s start time and the second is the number of jobs delayed by $j(i)$, including itself, multiplied by $j(i)$'s remaining size on the machine. Thus, the inequality is a necessary condition for $\sigma'$ being the optimal residual schedule for $\bx'$.

Multiplying both sides in Eqn.~(\ref{eqn:nec}) by $\frac{\lambda_{i, j(i)}}{x'_{j(i)}} = \frac{1}{x'_{i,j(i)}}$, we have
\begin{equation}
    \label{eqn:nec2}
\frac{P_{i, <  j(i)}}{x'_{i, j(i)}} + N_{i, \geq j(i)}  \geq
\frac{P_{\psi(j(i)), <  j(i)}}{x'_j /\lambda_{i, j(i)}}  + N_{\psi(j(i)), \geq j(i)} \frac{\lambda_{i, j(i)}}{\lambda_{\psi(j(i)), j(i)}}. 
\end{equation}

For the first term, we have $\frac{P_{i, <  j(i)}}{x'_{i, j(i)}} = \sum_{j': x'_{ij'} < x'_{i, j(i)}} x'_{ij'} / x'_{i, j(i)} \leq \sum_{j': x'_{ij'} < x'_{i, j(i)}} 1$. Thus, it is at most the number of jobs that have smaller processing times than $j(i)$ on machine $i$. Thus, the LHS is at most the number of jobs assigned to machine $i$, i.e., $N_i$, as desired. 
\end{proof}

As mentioned in the above proof, following the non-migratory residual optimum schedule decreases the residual as much as GD. Further, it is an easy exercise to show that we do not have to compute a residual optimum until a new job arrives. All the remaining analysis remains the same and is omitted.

\subsection{Unrelated Machines for Weighted Jobs}
    \label{sec:unrelated-weighted}

In this section, we extend the GD algorithm to handle weighted jobs and give an algorithm that is $(1+\eps)$-speed $O(1 / \eps^3)$-competitive algorithm for total weighted flow time. One may try the previous residual LP (Eqn.~(\ref{eqn:frac-residual})) where 
    $$\cP = \Big\{ z_j(\tilde t) = \sum_{i} \lambda_{ij} y_{ij}(\tilde t) \; | \; \sum_{i} y_{ij}(\tilde t) \leq 1 \ \forall j, \tilde t ; \; \sum_j y_{ij}(\tilde t) \leq 1 \ \forall i, \tilde t \Big\} $$
Unfortunately, this objective does not seem to be LS. 
Instead, we consider essentially the same LP that was used in \cite{AnandGK12} for their dual fitting analysis. 
However, our algorithm is very different from the immediate-dispatch algorithm that was given in  \cite{AnandGK12}. 
For notional simplicity, we may use $p_{ij} := p_j / \lambda_{ij}$ to denote the processing time of job $j$ on machine $i$ when $j$ is processed on machine $i$ only. 
\begin{align}
    \mbox{Residual LP $\cL_{\bx}:$} \quad \quad f(\bx) := \min \underbrace{\sum_{i,j} \frac{w_j \lambda_{ij}}{p_{j}} \int_{\tilde t \geq 0} \tilde t \cdot z_{ij}(\tilde t) \dd \tilde t}_{(*)} &+ \underbrace{\sum_{i,j} w_j \int_{\tilde t \geq 0} z_{ij}(\tilde t) \dd \tilde t}_{(**)} \label{eqn:lp-um-l1} \\
    \sum_{j} z_{ij}(\tilde t) &\leq 1 \quad \forall i, \tilde t \nonumber \\
    \sum_{i} \int_{\tilde t \geq 0} \lambda_{ij} z_{ij}(\tilde t)\dd \tilde t &= x_j \quad \forall j \nonumber
\end{align}
This new objective will be shown to be supermodular from the LS property (see Lemma~\ref{lem:apx-ur-sup}). Notice that this LP does not force a job to be processed on at most one machine; thus $\bz(\tilde t)$ may not be a feasible schedule. For this reason, we can not directly apply the GD algorithm in Section~\ref{sec:Substitutes-Empower-GD}, which schedules $\bz(\tilde t = 0)$. 

 To handle this issue, we consider the following \emph{approximate} GD: For each machine $i$, we only process one unit of jobs that appear the earliest in $\bz(\tilde t)$, ensuring that each job $j$ gets processed up to the fraction of the job assigned to the machine. Formally, 
 let $\tilde t_i$ be the earliest time that $\sum_j \int_{\tilde t=0}^{\tilde t_i} \frac{z_{ij}(\tilde t)}{p_{ij}} \dd \tilde t = 1$. Then we process $\hat z_{ij} := \int_{\tilde t=0}^{\tilde t_i} \frac{z_{ij}(\tilde t)}{p_{ij}} \dd \tilde t$ fraction of job $j$ on machine $i$.
We continue this schedule until the fraction of a job on some machine is completed. For brevity, we may call this approximate GD as GD. 

\begin{remark}
    If we use the true GD algorithm, which decreases the residual optimum objective the most, we can still show the same guarantee. However, for the sake of analysis, we will consider the approximate GD. 
\end{remark}

We first verify that  $\hat \bz$ is a feasible schedule. By definition of the algorithm, we have $\sum_{j} \hat z_{ij} \leq 1$ for all $i$. This, together with the following claim, shows that $\hat z_{ij}$ is a fractional matching between jobs and machines. Since a fractional bipartite matching can be expressed as a convex combination of integral matchings, we have that $\hat z_{ij}$ is indeed a feasible schedule. 
\begin{claim}
    For any job $j$, we have $\sum_{i} \hat z_{ij} \leq 1$.
\end{claim}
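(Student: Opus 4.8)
The plan is a direct computation. First I would unwind the definition of $\hat z_{ij}$: by construction $\hat z_{ij} = \int_{\tilde t = 0}^{\tilde t_i} \frac{z_{ij}(\tilde t)}{p_{ij}}\,\dd \tilde t$, and since $p_{ij} = p_j / \lambda_{ij}$ this equals $\frac{1}{p_j}\int_{\tilde t = 0}^{\tilde t_i} \lambda_{ij} z_{ij}(\tilde t)\,\dd \tilde t$. Summing over $i$ and enlarging each truncated integral to the full half-line $[0,\infty)$ (which only increases the value, as $z_{ij} \geq 0$) gives $\sum_i \hat z_{ij} \leq \frac{1}{p_j}\sum_i \int_{\tilde t \geq 0} \lambda_{ij} z_{ij}(\tilde t)\,\dd \tilde t$.

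Next I would invoke the second constraint of the residual LP $\cL_\bx$, namely $\sum_i \int_{\tilde t \geq 0} \lambda_{ij} z_{ij}(\tilde t)\,\dd \tilde t = x_j$, to rewrite the right-hand side as $x_j / p_j$. Finally, since $\bx = \bp(t)$ records the remaining sizes at the current time and a job's remaining size never exceeds its original size, we have $x_j \leq p_j$, so $\sum_i \hat z_{ij} \leq x_j / p_j \leq 1$, as claimed.

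There is essentially no real obstacle here; the only point requiring a word of care is that $\tilde t_i$ may equal $+\infty$ for some machines (if machine $i$'s residual schedule processes strictly less than one unit of jobs in total), but the step that enlarges the integration domain to $[0,\infty)$ covers this case uniformly. Combined with $\sum_j \hat z_{ij} \leq 1$, which is immediate from the choice of $\tilde t_i$, this claim shows that $\hat \bz$ is a doubly substochastic matrix, i.e.\ a fractional bipartite matching between jobs and machines; hence it is a convex combination of integral matchings, and therefore the approximate GD schedule $\hat \bz$ is a feasible schedule.
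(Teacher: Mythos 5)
Your proposal is correct and matches the paper's proof step for step: rewrite $1/p_{ij}$ as $\lambda_{ij}/p_j$, enlarge the truncated integral to $[0,\infty)$, apply the second LP constraint to get $x_j/p_j$, and conclude with $x_j \leq p_j$. The extra remark about $\tilde t_i$ possibly being $+\infty$ is a harmless clarification that the paper leaves implicit.
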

\begin{proof}
    From the second constraint of LP, we have\\
        $$\sum_{i} \hat z_{ij} = \sum_i \int_{\tilde t=0}^{\tilde t_i} \frac{z_{ij}(\tilde t)}{p_{ij}} \dd \tilde t \leq \sum_i \int_{\tilde t\geq 0}^{\tilde t_i} \frac{\lambda_{ij}z_{ij}(\tilde t)}{p_j}  \dd \tilde t = \frac{x_j}{p_j} \leq 1.$$
\end{proof}
For simplicity, we  assume wlog that $\sum_i \hat z_{ij} = 1$ and $\sum_j \hat z_{ij} = 1$, by adding  0 weight jobs.

\medskip

For the analysis, we use the following potential function.
    $$ \Phi(t) = \frac{2}{\eps}\left( f\Big(\bp^A(t)\Big) - \eps f \Big(\bp^A(t)  \ || \ (\frac{1}{\eps} - 1)\bp^O(t)\Big)\right)$$ 
Here we note that $f \Big(\bp^A(t)  \ || \ (\frac{1}{\eps} - 1)\bp^O(t)\Big)$
is the optimum objective of $\cL$ when we have  one copy of each job $j$ in $A(t)$ with remaining size $p^A_j(t)$ and $(1 / \eps - 1)$ copies of each job $j$ in $O(t)$, each with remaining size $p^O_j(t)$.

\subsubsection{Analysis Overview}

Before diving into the analysis, we first provide a high-level overview. Since the LP solution does not directly give us a time-indexed feasible schedule, we compromised true gradient descent. If possible, it would be the best for the algorithm to process exactly what the LP solution processes at time $\tilde t = 0$. Unfortunately as discussed before, this may not be feasible. 

To intuitively understand our algorithm, consider a single machine $i$ where jobs $1, 2, 3, \ldots$ are processed in this order, each by a half fraction. For simplicity, let us drop $i$ from the notation. Then, the LP solution on the machine is processing $j$ for $p_j / 2$ time steps. As mentioned before, fully processing job 1 on the machine may result in an infeasible schedule if job 1 is scheduled before other jobs on another machine in the LP solution. Our algorithm processes job 1 and job 2, each by half on the machine (these two jobs are the first unit fraction of jobs), to ensure a feasible schedule. This is surely not so effective as fully processing job 1 in decreasing $(*)$. However, for all jobs except 1 and 2, our algorithm is equally effective! Further, for the deficit due to jobs 1 and 2, $(**)$ comes to rescue. So, if $\tilde W_{(1)}$ is the total fractional weight processed by our algorithm, and we let $\tilde W_{(2)}$ be the other total fractional weight, our algorithm gets $\tilde W_{(1)}$ credit from $(**)$ and $\tilde W_{(2)}$ credit from $(*)$. Thus, $f(\bp^A(t)))$ decreases at a rate of at least $\tilde W_{(1)}$ + $\tilde W_{(2)}$, which is the total fractional weight of the remaining jobs in our algorithm's schedule. 

To show the algorithm is scalable, we create $(1/\eps - 1)$ copies of each job in the adversary's schedule and due to this we lose another $1 / \eps$ factor in the competitive ratio, but this is a minor technicality. The critical observation is that the approximate GD is as effective as GD with the aid of the decrease of $(**)$. 

As before we will consider the discrete changes and continuous changes separately.

\subsubsection{Analysis: Discrete Changes}

The proof of discrete changes of $\Phi$ is very similar to Lemma~\ref{lem:second-thm-super}, so we only highlight the differences. First, we will use generalized flow to show the supermodularity of our new residual objective. 

\begin{lemma}
    \label{lem:apx-ur-sup}
    $f(x)$ is (discrete-)supermodular.
\end{lemma}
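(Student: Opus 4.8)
The plan is to reuse the argument of Lemma~\ref{lem:second-thm-super} almost verbatim, the only real novelty being which substitute valuation we feed into it. Write $\bar g_\bx(Z) := f(\bx \odot \bone_Z)$; by Definition~\ref{def:d-super}, proving $f$ is discrete-supermodular is the same as proving $-\bar g_\bx$ is submodular on $2^J$ for every fixed $\bx$, and it will be cleanest to establish the stronger statement that $\bx' \mapsto -f(\bx')$ is lattice-submodular on $\R^J_{\ge 0}$ and then restrict to $\bx' = \bx \odot \bone_Z$ using $\bone_{A \cup B}\odot\bx = (\bone_A\odot\bx)\vee(\bone_B\odot\bx)$ and $\bone_{A\cap B}\odot\bx = (\bone_A\odot\bx)\wedge(\bone_B\odot\bx)$. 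As in Lemma~\ref{lem:second-thm-super}, every feasible point of $\cL_\bx$ (Eqn.~(\ref{eqn:lp-um-l1})) satisfies $x_j = \sum_i \int_{\tilde t\ge 0}\lambda_{ij}z_{ij}(\tilde t)\,\dd\tilde t$, so for a sufficiently large constant $B$ I can add the modular term $B\sum_j x_j$ to $-f$ and simultaneously relax the equality constraint to ``$\le$'' without changing the optimum. After discretizing $\tilde t$ into infinitesimal slots and truncating the horizon at a time by which every job completes in any optimal residual schedule (a routine step I suppress), the function $-f(\bx') + B\sum_j x'_j$ becomes the value of the maximization LP
\begin{align*}
\max \ & \sum_{i,\tilde t}\ \sum_j \Big( \lambda_{ij}\big(B - \tfrac{w_j \tilde t}{p_j}\big) - w_j \Big)\, z_{ij}(\tilde t) \\
\text{s.t.} \ \ & \sum_j z_{ij}(\tilde t) \le 1 \ \ \forall i, \tilde t; \\
& \sum_i \int_{\tilde t \ge 0} \lambda_{ij} z_{ij}(\tilde t)\,\dd\tilde t \le x'_j \ \ \forall j; \qquad \bz \ge 0 .
\end{align*}

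I would read this LP as the social-surplus problem of a Walrasian market whose items are the jobs --- job $j$ available in $x'_j$ divisible units --- and whose agents are the machine--time pairs $(i,\tilde t)$. Given a bundle $\bb \ge 0$ (the amounts of each job's resource handed to it), agent $(i,\tilde t)$ realizes
\[
v_{i,\tilde t}(\bb) := \max\Big\{ \textstyle\sum_j \big(\lambda_{ij}(B - \tfrac{w_j\tilde t}{p_j}) - w_j\big)\, u_j \ :\ \lambda_{ij} u_j \le b_j \ \forall j,\ \ \textstyle\sum_j u_j \le 1,\ \ \bu \ge 0 \Big\},
\]
where $u_j$ plays the role of $z_{ij}(\tilde t)$ and $\lambda_{ij} u_j$ is the consumption of item $j$; note $v_{i,\tilde t}$ is monotone in $\bb$. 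The crux of the proof is the claim that each $v_{i,\tilde t}$ is a \emph{linear-substitute} valuation: one checks by a routine construction --- a single capacitated source feeding one node per job through an arc whose gain factor encodes $1/\lambda_{ij}$ --- that $v_{i,\tilde t}$ is a generalized-flow valuation of the form covered by Theorem~\ref{thm:gf-ls}, which therefore yields the LS property. (Alternatively, one can route through Theorem~\ref{thm:gs-concave-closure} by recognizing $v_{i,\tilde t}$, up to a positive rescaling of its domain coordinates, as the concave closure of a gross-substitute valuation, using Lemma~\ref{lem: LS-equal} to see that such a rescaling preserves submodularity of the profit function.) Coordinates $j$ whose objective coefficient is negative are never used, so they contribute only a dummy coordinate and do not affect LS; this also makes the horizon truncation harmless.

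Granting the claim, the assembly is mechanical. The LP value above equals $\max\{\sum_{i,\tilde t} v_{i,\tilde t}(\bb^{(i,\tilde t)}) : \sum_{i,\tilde t}\bb^{(i,\tilde t)} \le \bx',\ \bb^{(i,\tilde t)}\ge 0\}$, which by monotonicity of the $v_{i,\tilde t}$ coincides with the aggregated valuation of Lemma~\ref{lem:LS-aggregation} and is therefore LS in $\bx'$; being the optimum of an LP as a function of its right-hand side it is concave, so Lemma~\ref{lem:LS-submodular} makes it submodular on $\R^J_{\ge 0}$. Subtracting the modular term $B\sum_j x'_j$ preserves submodularity, so $\bx' \mapsto -f(\bx')$ is lattice-submodular; restricting to $\bx' = \bx\odot\bone_Z$ as above gives supermodularity of $\bar g_\bx$, i.e.\ $f$ is discrete-supermodular.

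The step I expect to be the real work is the claimed LS-ness of the per-agent valuation $v_{i,\tilde t}$: the gain factors $\lambda_{ij}$ must be encoded correctly, either as the generalized-flow network above (so that Theorem~\ref{thm:gf-ls} applies directly) or via the rescaled concave-closure route, and getting the capacity/excess bookkeeping right is where the subtlety lies. Everything surrounding this --- the big-$B$ reformulation, the truncation that makes the time-indexed program finite, the equality-versus-inequality and negative-coefficient technicalities, and the passage from lattice-submodularity of $-f$ to discrete-supermodularity of $f$ --- is bookkeeping already rehearsed in the proof of Lemma~\ref{lem:second-thm-super}.
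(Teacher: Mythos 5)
Your proof is correct and follows essentially the same route as the paper: flip the sign, absorb a large constant $B$ into the objective using the job-coverage constraint, interpret the resulting maximization as a Walrasian social-surplus problem whose items are job-processing units, show the per-agent valuation is LS via Theorem~\ref{thm:gf-ls}, and aggregate via Lemma~\ref{lem:LS-aggregation} together with Lemma~\ref{lem:LS-submodular}. The one structural difference is granularity: the paper treats each time slot $\tilde t$ as a single agent whose valuation $v_{\tilde t}$ involves all machines simultaneously (Eqn.~(\ref{eqn:weighted-ur-sup})), whereas you introduce one agent per machine--time pair $(i,\tilde t)$. This finer decomposition is valid --- the paper's $v_{\tilde t}$ is just the aggregation of your $v_{i,\tilde t}$ over $i$ --- and has the small advantage that each $v_{i,\tilde t}$ is a one-machine, knapsack-like valuation, so the generalized-flow network (or, via your alternative route, the rescaled concave closure of a unit-rank GS valuation) is a single star rather than a bipartite graph; it saves you nothing asymptotically but makes the per-agent LS claim a bit cleaner to check. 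One small remark: your objective coefficient $\lambda_{ij}(B - w_j \tilde t / p_j) - w_j$ is the correct one after flipping both $(*)$ and $(**)$, whereas the paper's Eqn.~(\ref{eqn:weighted-ur-sup}) writes $-1$ in place of $-w_j$, which appears to be a typo and has no effect on the LS argument.
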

\begin{proof}
    We follow the same strategy as we used to prove Lemma~\ref{lem:second-thm-super}. Consider the following valuation
    \begin{align}
        v_{\tilde t}(\bx(\tilde t)) := \max \sum_{i,j} \left(B \lambda_{ij} - \frac{w_j \tilde t}{p_{ij}} - 1 \right) z_{ij} \ \text{s.t.} \ 0 \leq \sum_i \lambda_{ij} z_{ij} \leq x_{j}(\tilde t), \sum_j z_{ij} \leq 1 \label{eqn:weighted-ur-sup}
    \end{align}
    for each time $\tilde t$. It is easy to see that the LP objective by flipping the sign and adding a big constant is equivalent to 
        $$\max \int_{\tilde t \geq 0} v_{\tilde t}(\bx(\tilde t)) \dd \tilde t \ \text{s.t.} \ \int_{\tilde t\geq 0} \bx(\tilde t) \dd \tilde t \leq \bx$$
    To show $f(x)$ is supermodular, it is sufficient to show that $v_{\tilde t}$ is LS for each time $\tilde t$. Toward this end, we can interpret the constraints in (\ref{eqn:weighted-ur-sup}) in the generalized flow setting as follows. For each machine $i$, node $i$ has excess capacity $1$. For each job $j$, node $j$ has excess capacity $x_j(\tilde t)$. Then add an arc from $j$ to $i$ with gain factor $1 / \lambda_{ij}$ for each job $j$ and machine $i$. Thus the constraints fall in the generalized flow polytope (\ref{eqn:p-gf}). Then we let the cost from $j$ to $i$ be $(B \lambda_{ij} - \frac{w_j \tilde t}{p_{ij}} - 1)$. From Theorem~\ref{thm:gf-ls}, we know $v_{\tilde t}$ is LS which leads to the lemma.
\end{proof}

For the sake of completeness, we give the proof of the following lemma that bounds the change of jobs' arrival or completion.

\begin{lemma}
    \label{lem:GD-l1-disc}
    The change of $\Phi(t)$ due to the job's completion or arrival is non-positive.
\end{lemma}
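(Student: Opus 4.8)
The plan is to mimic the proof of Lemma~\ref{lem:discrete-change-clairvoyant-int}, now using the supermodularity of the new residual objective $f$ granted by Lemma~\ref{lem:apx-ur-sup}, and accounting for the fact that the potential carries weight $(1/\eps-1)$ on each adversary copy rather than $\tfrac12$ on a single one. As in Lemma~\ref{lem:discrete-change-clairvoyant-int}, a completion of any job in $A$ or $O$ does not change $f$ (the residual LP $\cL$ never processes a job of zero remaining size), hence does not change $\Phi$, so it suffices to bound the jump caused by an arrival. We assume without loss of generality that $1/\eps=k+1$ for a positive integer $k$ (we may always shrink $\eps$ slightly), so that $f\big(\bp^A(t)\,\|\,(1/\eps-1)\bp^O(t)\big)$ is $f$ evaluated on a collection of distinct job-copies containing one copy of each alive job of $A$ and $k$ copies of each alive job of $O$, with their fixed remaining sizes.

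Fix a time $t$ at which a job $j$ arrives, and drop $t$ from the notation. Let $\mathcal{A}$ (resp. $\mathcal{O}$) denote the current collection of distinct job-copies on $A$'s (resp. $O$'s) side, and write $g(\cdot):=\bar g_{\by}(\cdot)$ for the set-function view of $f$ restricted to these copies (Definition~\ref{def:d-super}), where $\by$ records all remaining sizes right after $j$'s arrival. Upon arrival, $\mathcal{A}$ gains one fresh copy $j^A$ of size $p_j$, and $\mathcal{O}$ gains $k$ fresh distinct copies $j^O_1,\dots,j^O_k$, each of size $p_j$; note that at the instant of arrival nothing has been processed, so all $k+1$ new copies carry the same weight $w_j$, the same parameters $\{\lambda_{i\cdot}\}$, and the same remaining size $p_j$. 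Since two copies of the same job with equal remaining size play symmetric roles in $\cL$, we have $g(S\cup j^A)=g(S\cup j^O_\ell)$ for every $S$ disjoint from the new copies and every $\ell$. Writing $\delta:=g(\mathcal{A}\cup j^A)-g(\mathcal{A})$, the change in $\Phi$ equals
\[
\Delta\Phi=\tfrac{2}{\eps}\Big(\delta-\eps\big[g(\mathcal{A}\cup\mathcal{O}\cup\{j^A,j^O_1,\dots,j^O_k\})-g(\mathcal{A}\cup\mathcal{O})\big]\Big),
\]
so the desired conclusion $\Delta\Phi\le 0$ is equivalent to the key inequality
\[
g(\mathcal{A}\cup\mathcal{O}\cup\{j^A,j^O_1,\dots,j^O_k\})-g(\mathcal{A}\cup\mathcal{O})\ \ge\ (k+1)\,\delta .
\]

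To prove the key inequality, first apply supermodularity to the whole block of $k+1$ new copies: adding them all to the larger base $\mathcal{A}\cup\mathcal{O}$ raises $g$ by at least as much as adding them to $\mathcal{A}$, i.e. the left-hand side is at least $g(\mathcal{A}\cup\{j^A,j^O_1,\dots,j^O_k\})-g(\mathcal{A})$. Then expand this as a telescoping sum, inserting the copies in the order $j^A,j^O_1,\dots,j^O_k$: the $0$-th marginal gain is exactly $\delta$, and for $\ell\ge 1$ the $\ell$-th marginal gain $g(\mathcal{A}\cup\{j^A,j^O_1,\dots,j^O_\ell\})-g(\mathcal{A}\cup\{j^A,j^O_1,\dots,j^O_{\ell-1}\})$ is, by supermodularity (the base is a superset of $\mathcal{A}$), at least $g(\mathcal{A}\cup j^O_\ell)-g(\mathcal{A})=g(\mathcal{A}\cup j^A)-g(\mathcal{A})=\delta$, using the symmetry of equal-size copies. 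Summing the $k+1$ terms gives the key inequality, hence $\Delta\Phi\le 0$.

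The only genuinely delicate point is the bookkeeping with the $k=1/\eps-1$ adversary copies: one must arrange the chain of supermodularity inequalities so that every marginal gain is measured against a superset of $\mathcal{A}$, and one must use that all $k+1$ fresh copies share the size $p_j$ at the moment of arrival (which is what makes them interchangeable and forces each marginal gain to dominate $\delta$). If one dislikes assuming $1/\eps\in\Z$, the identical computation carries through with $k+1$ replaced by the possibly fractional multiplicity $1/\eps$ after a routine discretization of the copy-multiplicities and a limit, since $f$ and hence $g$ are defined via the linear program $\cL$; but assuming $1/\eps$ integral is cleanest and costs nothing.
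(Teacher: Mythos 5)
Your proof is correct and follows essentially the same route as the paper: after reducing to the arrival case, you apply supermodularity to pass from the base $\mathcal{A}\cup\mathcal{O}$ to the base $\mathcal{A}$, then telescope the marginal gains of the $1/\eps$ new copies and lower-bound each by $\delta$ via supermodularity plus copy-symmetry. The only cosmetic difference is that the paper first identifies all the copies with $(1/\eps)$ copies of $j^A$ before telescoping, whereas you keep the $j^O_\ell$ labels and invoke the size-equality at each marginal step; both amount to the same calculation.
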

\begin{proof}
    The proof is almost identical to Lemma~\ref{lem:discrete-change-clairvoyant-int} although we use a slightly different potential function.  Thus we only point out the differences when a new job $j$ arrives. Then $f \big(\bp^A  \ || \ (\frac{1}{\eps} - 1)\bp^O\big)$ increases by at least: 
    \begin{align*}
        f(\bp^{A} \ || \ p_j \bone_{j^A} \ || \ (\frac{1}{\eps} - 1 )\bp^O \ || \ (\frac{1}{\eps} - 1 ) p_j  \bone_{j^O}) - f(\bp^{A} \ || \ (\frac{1}{\eps} - 1 ) \bp^O) \\
    =  g_{\by}(A \cup j^A \cup (\frac{1}{\eps} - 1 ) O \cup (\frac{1}{\eps} - 1 ) j^O) -  g_\by(A \cup (\frac{1}{\eps} - 1 ) O)
    \end{align*}
where $\by = \bp^{A} \ || \ p_j  \bone_{j^A} \ || \ (\frac{1}{\eps} - 1 ) \bp^O \ || \ (\frac{1}{\eps} - 1 ) p_j  \bone_{j^O}$. By supermodularity of $g$ and using the fact that $p^A(r_j) = p^O(r_j) = p_j$, we have 
\begin{align}
    &g_{\by}(A \cup j^A \cup (\frac{1}{\eps} - 1 ) O \cup (\frac{1}{\eps} - 1 ) j^O ) - g_\by(A \cup (\frac{1}{\eps} - 1 ) O) \nonumber \\
    & \geq  g_{\by}(A \cup j^A \cup (\frac{1}{\eps} - 1 ) j^O) - g_\by(A) \nonumber \\
    & =g_{\by}(A \cup ( 1/ \eps) j^A  ) - g_\by(A) \nonumber
    \end{align}
    Then we can decompose it to a telescopic sum and apply the supermodularity as follows.
    \begin{align*}
        &= \Big(g_{\by}(A \cup (1 / \eps) j^A) )- g_{\by}(A \cup (1 / \eps - 1) j^A ) \Big)+ ... + \Big(g_{\by}(A \cup j^A) - g_\by(A)\Big)  \\
        &\geq \frac{1}{\eps} ( g_{\by}(A \cup j^A) - g_\by(A)) 
    \end{align*}
    Therefore, $\Phi(t)$'s increase is at most $\frac{2}{\eps}(g_{\by}(A \cup j^A) - g_y(A) - \eps \cdot \frac{1}{\eps} ( g_{\by}(A \cup j^A) - g_\by(A)) ) = 0$, as desired. 
\end{proof}

\subsubsection{Analysis: Continuous Changes}

As before, we let $\tilde W^A(t) = \sum_j \frac{w_j}{p_j} p^A_j(t)$ be the total remaining fractional weight of our algorithm $A$ at time $t$. 
Then, let $\tilde W^A_{(1)}(t)$ be the fractional weight of jobs processed by the algorithm at $t$, i.e., $\tilde W^A_{(1)}(t) := \sum_{i,j} w_j \int_{\tilde t = 0}^{\tilde t_i} \frac{z_{ij}(\tilde t)}{p_{ij}} \dd \tilde t = \sum_{i,j} w_j \hat z_{ij}$. Let $\tilde W^A_{(2)}(t) := \tilde W^A(t) - \tilde W^A_{(1)}(t)$. For brevity, in all of the following lemmas, we omit that we consider a time $t$ when there is no jobs' completion and arrival.
The following lemma shows that the approximate GD still decreases the new residual optimum effectively.

\begin{lemma}
    \label{lem:GD-l1-A}
    $\frac{\dd}{\dd t} f(\bp^A(t)) \leq -s \tilde W^A(t)$.
\end{lemma}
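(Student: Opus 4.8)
Plan for proving Lemma~\ref{lem:GD-l1-A} ($\frac{\dd}{\dd t} f(\bp^A(t)) \leq -s\tilde W^A(t)$ under approximate GD with $s$ speed).

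The plan is to mimic the first half of the proof of Lemma~\ref{lem:GD-redisual} / Lemma~\ref{lem:second-theorem-gd}: exhibit a feasible residual solution for the remaining sizes after an infinitesimal step, plug it into $\cL_{\bx'}$, and compare the objectives. Let me think about the structure.

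Let me reconsider — what makes this lemma nontrivial compared to Lemma~\ref{lem:second-theorem-gd} is precisely that the approximate GD does NOT process $\bz(\tilde t=0)$; it processes, on each machine $i$, the first unit fraction $\hat z_{ij}$ of jobs. So the "suffix" of the LP solution is not directly a feasible residual solution after the step. The key idea (flagged in the Analysis Overview) is that the two objective terms $(*)$ and $(**)$ together compensate: the decrease of $(*)$ captures the jobs that are effectively "further along," and $(**)$ captures exactly the fractional weight that got processed at the front.

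Here is the plan. Fix time $t$, drop $t$, let $\bz = \arg\min_{\bz'}\cL_{\bp^A(t)}(\bz')$ be the optimal residual schedule, and let $\tilde t_i$ be the front threshold as in the algorithm definition. In a step of length $\dd t$ with $s$ speed, the algorithm processes $s\,\hat z_{ij}\,\dd t$... wait, more carefully: the algorithm processes $\hat z_{ij}$ fraction per unit... let me just say it processes each job's front fraction at rate $s$, so after $\dd t$ the remaining sizes are $\bp^A(t) - \bZ$ where $Z_j = s\,\dd t \sum_i \lambda_{ij}\hat z_{ij}$ (absorbing an appropriate scaling; the exact infinitesimal bookkeeping is routine). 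Now construct a feasible residual solution $\bz'$ for $\bp^A(t)-\bZ$ as follows: on each machine $i$, take $\bz(\tilde t)$ for $\tilde t$ in the interval $[s\,\dd t\cdot(\text{front}), \infty)$ — i.e., shift past the portion consumed — which is feasible for machine $i$ since capacity $\sum_j z_{ij}(\tilde t)\le 1$ is preserved under a time shift, and it processes exactly the right total of each job $j$ (what was in $\bz$ minus what the algorithm consumed from the front on each machine, which is precisely $Z_j$). Then bound
$$
f(\bp^A(t)) - f(\bp^A(t)-\bZ) \;\ge\; \cL_{\bp^A(t)}(\bz) - \cL_{\bp^A(t)-\bZ}(\bz').
$$
Expand the RHS using the two terms of~(\ref{eqn:lp-um-l1}): for term $(*)$, the shift by $s\,\dd t$ decreases $\tilde t$-weighted contributions by (speed times) the total fractional weight of jobs still appearing after the front, i.e.\ $s\,\dd t\,\tilde W^A_{(2)}(t)$ up to lower-order terms; for term $(**)$, removing the front fraction that the algorithm processed decreases $\sum_{i,j} w_j\int z_{ij}$ by exactly $s\,\dd t\,\tilde W^A_{(1)}(t) = s\,\dd t\sum_{i,j}w_j\hat z_{ij}$. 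Adding these gives a decrease of at least $s\,\dd t\,(\tilde W^A_{(1)}(t)+\tilde W^A_{(2)}(t)) = s\,\dd t\,\tilde W^A(t)$, hence $\frac{\dd}{\dd t}f(\bp^A(t))\le -s\tilde W^A(t)$.

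The main obstacle I expect is making the "feasible residual solution after the step" argument precise: I need to check that chopping off the front unit-fraction on each machine and shifting the rest really yields a schedule that (a) respects $\sum_j z_{ij}(\tilde t)\le 1$ for all $i,\tilde t$ — true because a time-truncation-and-shift of a per-machine-feasible schedule stays per-machine-feasible — and (b) processes each job $j$ for exactly its new remaining amount $x_j - Z_j$ (in the $\sum_i\int\lambda_{ij}z_{ij}=\cdot$ sense). Point (b) is where the definition of $\tilde t_i$ and $\hat z_{ij}$ is used: $\hat z_{ij}=\int_0^{\tilde t_i}\frac{z_{ij}(\tilde t)}{p_{ij}}\dd\tilde t$ is precisely the fraction of $j$ that the front-portion on $i$ carries, so $\sum_i \lambda_{ij}\hat z_{ij}\cdot(\text{rate factor})$ matches the processing the algorithm did. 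The bookkeeping of where the constant $s$ and the $\dd t$ enter (and that cross terms are $O(\dd t^2)$) is routine and I would not spell it out; everything else is a direct transcription of the proof of Lemma~\ref{lem:second-theorem-gd} with the single new twist that term $(**)$ absorbs the front fraction $\tilde W^A_{(1)}$.
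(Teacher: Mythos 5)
Your plan matches the paper's proof in structure and in both bounding steps: you construct a feasible schedule for the post-step remaining sizes, and you split the decrease between $(*)$ (giving $s\,\dd t\,\tilde W^A_{(2)}$) and $(**)$ (giving $s\,\dd t\,\tilde W^A_{(1)}$), which is exactly the paper's argument. The one place where you gloss over something real is the construction of $\bz'$: you describe it as a time truncation/shift ``take $\bz(\tilde t)$ for $\tilde t\ge s\,\dd t\cdot(\text{front})$.'' A plain truncation of the first $s\,\dd t$ time units removes roughly $s\,\dd t\,\lambda_{ij}z_{ij}(0)$ of each job $j$ from machine $i$, which does \emph{not} match what the approximate GD removed, namely $s\,\dd t\,\lambda_{ij}\hat z_{ij}$ where $\hat z_{ij}$ averages $z_{ij}$ over all of $[0,\tilde t_i)$; the mismatch is $\Theta(\dd t)$, not a lower-order term. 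The paper instead removes the proportional amount $s\,\dd t\cdot z_{ij}(\tilde t)/p_{ij}$ at every local time $\tilde t\le\tilde t_i$ (so the per-job removal is exactly $s\,\dd t\,\hat z_{ij}$, and the total vacated space on machine $i$ is exactly $s\,\dd t$ because $\sum_j\hat z_{ij}=1$), and only then compacts. You do flag that point (b) is ``where the definition of $\tilde t_i$ and $\hat z_{ij}$ is used,'' which is the right place to look, so I'd call this an imprecision rather than a wrong idea --- but you should replace the ``shift'' construction with the proportional-removal-and-compaction one before calling the bookkeeping routine.
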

\begin{proof}
    Fix any time $t$ and consider the change during time $[t, t + \dd t)$. Let $\bz$ be the optimum solution of $f(\bp^A(t))$, i.e, $\bz = \arg\min_{\bz'} \cL_{\bp^A(t)}(\bz')$. 
    Recall that our algorithm schedules $\hat \bz$ during time interval $[t, t + \dd t)$ and it is given $s$-speed, i.e., GD processes a job $j$ on machine $i$ by $s \dd t \cdot \hat z_{ij}$ fraction, where $\hat z_{ij} := \int_{\tilde t=0}^{\tilde t_i} \frac{z_{ij}(\tilde t)}{p_{ij}} \dd \tilde t$. Now following the same strategy in Lemma~\ref{lem:second-theorem-gd}, we need to find a feasible schedule $\bz'$ (with $1$-speed) for remaining job sizes $\bp^A(t) - s \dd t \hat \bZ$ where $\hat Z_j := \sum_i \lambda_{ij} \hat z_{ij}$. Then, the construction of $\bz'$ is as follows. First, we remove $s \dd t \cdot z_{ij}(\tilde t) / p_{ij}$ fraction from $\bz$ at local time $\tilde t \leq \tilde t_i$ for job $j$ on machine $i$. Note that this empties out $s \dd t$ units of space on the machine because $\sum_j \int_{\tilde t=0}^{\tilde t_i} \frac{z_{ij}(\tilde t)}{p_{ij}} \dd \tilde t = 1$. We obtain $\bz'$ by removing the empty spaces on each machine. 
    
    Let's fix machine $i$ and consider the change of $(*)$ in the residual LP (\ref{eqn:lp-um-l1}). For any $\tilde t > \tilde t_i$, since $s \dd t \cdot \sum_j \hat z_{ij} = s \dd t$ fraction of jobs completed before time $\tilde t_i$, the contribution of $z_{ij}(\tilde t)$ in $\cL_{\bp^A(t + \dd t)}(\bz')$ becomes at most $\frac{w_j}{p_{ij}}(\tilde t - s \dd t) z_{ij}(\tilde t)$.

    Thus, the decrease of $f(\bp^A(t))$ from $(*)$ is by at least 
    \begin{align*}
        \sum_{i,j} \frac{w_j}{p_{ij}} \int_{\tilde t \geq \tilde t_i} \Big(\tilde t - (\tilde t - s \dd t )\Big) z_{ij}(\tilde t) \dd \tilde t = s \dd t  \cdot \sum_{i,j} \frac{w_j}{p_{ij}} \int_{\tilde t \geq \tilde t_i} z_{ij}(\tilde t) \dd \tilde t = s\dd t \cdot \tilde W^A_{(2)}(t)
    \end{align*}
    Since job $j$ gets processed by $s \dd t \cdot \hat z_{ij}$ fraction on machine $i$, the change of $f(\bp^A(t))$ from $(**)$ is exactly 
    \begin{align*}
        \sum_{i,j} w_j \Big( \int_{\tilde t \geq 0} z_{ij}(\tilde t) \dd \tilde t - \int_{\tilde t \geq 0} z'_{ij}(\tilde t) \dd \tilde t \Big) &= \sum_{i,j} w_j \cdot s \dd t  \cdot \hat z_{ij} = s\dd t \cdot \tilde W^A_{(1)}(t)
    \end{align*}
    Therefore, the lemma follows by putting two  bounds together.
\end{proof}

In the following $|_{GD}$ and $|_{\opt}$ mean that we consider the change due to GD's processing and $\opt$'s processing alone respectively. 

\begin{lemma}
    \label{lem:GD-l1-AO}
$-\frac{\dd}{\dd t} f \big(\bp^A(t) \ || \ (\frac{1}{\eps} - 1)\bp^O(t)\big) \Big |_{GD} \leq 2s \tilde W^A(t) + s(\frac{1}{\eps} - 1) \tilde W^O(t)$.
\end{lemma}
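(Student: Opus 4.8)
The plan is to mirror the proof of Lemma~\ref{lem:GD-l1-A}, but carry the argument out inside the \emph{combined} instance rather than the $A$-only instance. Fix a time $t$ at which no job arrives or completes, and consider the interval $[t, t+\dd t)$. Let $\cI_{\text{before}}$ be the combined residual instance consisting of one $A$-copy of each job $j$ with size $p^A_j(t)$ together with $(\frac{1}{\eps}-1)$ $O$-copies of size $p^O_j(t)$, and let $\cI_{\text{after}}$ be the same instance with each $A$-copy shrunk by exactly the amount GD processes during $[t,t+\dd t)$ (recall GD never touches the $O$-copies). Let $\sigma^*$ be an optimal solution of the residual LP~(\ref{eqn:lp-um-l1}) for $\cI_{\text{after}}$, so $f(\cI_{\text{after}}) = \cost(\sigma^*)$. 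Since $f(\cI_{\text{before}}) \le \cost(\tau)$ for every feasible $\tau$, and since $-\frac{\dd}{\dd t} f\big(\bp^A(t)\, ||\, (\frac{1}{\eps}-1)\bp^O(t)\big)\big|_{GD}$ is the limit of $\big(f(\cI_{\text{before}}) - f(\cI_{\text{after}})\big)/\dd t$, it suffices to build a feasible $\tau$ for $\cI_{\text{before}}$ with $\cost(\tau) - \cost(\sigma^*) \le \dd t\,\big(2s\tilde W^A(t) + s(\frac{1}{\eps}-1)\tilde W^O(t)\big)$.

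I would construct $\tau$ from $\sigma^*$ by \emph{reinserting} GD's processed work: on each machine $i$, prepend the block that GD runs on $i$ during $[t,t+\dd t)$ --- the earliest unit of jobs of $\bz := \arg\min_{\bz'} \cL_{\bp^A(t)}(\bz')$ restricted to $i$ --- and then run $\sigma^*$. Together with $\sigma^*$'s processing of the shrunken $A$-copies, each $A$-copy is again fully processed, so $\tau$ is feasible for $\cI_{\text{before}}$ (the residual LP does not require a job to stay on one machine, so there is no conflict with where $\sigma^*$ finishes the $A$-copies). The cost increase splits along the two terms of~(\ref{eqn:lp-um-l1}). For the term $(**)$, the prepended blocks add exactly the weighted volume of the jobs GD processes, which equals $s\dd t\,\tilde W^A_{(1)}(t)$ by the same computation as in Lemma~\ref{lem:GD-l1-A}, and $\tilde W^A_{(1)}(t) \le \tilde W^A(t)$. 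For the term $(*)$, GD's block on machine $i$ occupies at most $s\dd t$ units of machine time (a speed-$s$ machine does at most $s\dd t$ work in time $\dd t$), hence delays every job $\sigma^*$ schedules on $i$ by at most $s\dd t$; summing the resulting increase over machines and using the LP constraint $\sum_i \int \lambda_{ij} z^{\sigma^*}_{ij}(\tilde t)\,\dd\tilde t = x^{\text{after}}_j$, this $(*)$-increase is at most $s\dd t$ times the total fractional weight of $\cI_{\text{after}}$, i.e.\ at most $s\dd t\big(\tilde W^A(t) + (\frac{1}{\eps}-1)\tilde W^O(t)\big)$. Adding the two contributions gives $\cost(\tau) - \cost(\sigma^*) \le s\dd t\,\tilde W^A(t) + s\dd t\big(\tilde W^A(t) + (\frac{1}{\eps}-1)\tilde W^O(t)\big)$, the claimed bound; the factor $2$ on $\tilde W^A$ is precisely ``$\tilde W^A$ charged to $(**)$ plus $\tilde W^A$ from delaying the $A$-copies in $(*)$.''

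The ingredients that are routine are already in place: supermodularity of the new residual objective is Lemma~\ref{lem:apx-ur-sup}, and the fact that delaying a machine's schedule by $\delta$ raises $(*)$ by at most $\delta$ times the total fractional weight follows from the LP constraint manipulation above. The step I expect to need the most care is the precise accounting of the three contributions --- the $(**)$-volume of GD's block, the delay it imposes on the $A$-copies, and the delay it imposes on the $(\frac{1}{\eps}-1)$-fold replicated $O$-copies --- so that they sum to exactly $2s\tilde W^A + s(\frac{1}{\eps}-1)\tilde W^O$; in particular one must be careful that the reinserted machine-$i$ block really has length $\le s\dd t$, which is where the definition of the approximate GD (it processes, per machine, only the earliest unit-fraction of jobs of $\bz$, i.e.\ exactly what a speed-$s$ machine can do in $\dd t$ time) and the semantics of the LP variables $z_{ij}(\tilde t)$ must be invoked carefully.
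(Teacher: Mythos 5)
Your proposal is correct and takes essentially the same approach as the paper: construct a feasible schedule for the ``before'' instance by prepending GD's $s\dd t$-unit block on each machine to an optimal schedule of the ``after'' instance, then account for the increase in the two terms of the LP objective separately --- $(**)$ increases by the prepended volume $s\dd t\,\tilde W^A_{(1)}(t)\le s\dd t\,\tilde W^A(t)$, and $(*)$ increases by at most $s\dd t$ times the total fractional weight of the combined instance. The paper phrases the $(*)$ bound as a coefficient shift rather than a per-machine delay, but the underlying computation is identical.
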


\begin{proof}
    Fix a time $t$ and consider the change during $[t, t + \dd t)$, where no job arrives or completes assuming that only GD processes jobs. For notational simplicity, let $\bx(t):= \bp^A(t) \ || \ (\frac{1}{\eps} - 1)\bp^O(t)$. To achieve our goal of upper bounding the following quantity,
        $$ f(\bx(t)) - f(\bx(t + \dd t)),$$
    we construct a feasible solution $\bz''$ to $\cL_{\bx(t)}$. Let $\bz'$ be an optimum solution of $f(\bx(t + \dd t))$, i.e, $\bz' = \arg\min_{\bz'''} \cL_{\bx(t + \dd t)}(\bz''')$. Let $\bz''$ be the time-indexed schedule which first has $\hat z_{ij}$ for $s \dd t$ units of time and then exactly follow $\bz'$ on each machine $i$. Clearly, we have 
        $$ f(\bx(t)) - f(\bx(t + \dd t)) \leq  \cL_{\bx(t)} (\bz'') - f(\bx(t + \dd t))$$

    Let's first consider the change of $(**)$ in the LP objective. Since GD processes jobs according to the distribution $\{\hat z_{ij}\}_j$ during $[t, t+\dd t)$, $(**)$ decreases by 
    $$
        \sum_{i,j} w_j \hat z_{ij} s \dd t = \tilde W^A_{(1)} s \dd t,
    $$
    as we observed in the proof of Lemma~\ref{lem:GD-l1-A}.
    
    We now turn our attention to the change of $(*) $, which is the fractional weighted completion time of jobs. By scheduling $\hat z$ with speed $s$, every terms coefficient changes from $w_j/ p_{ij} \tilde t$ to $w_j/ p_{ij} (\tilde t - s \dd t)$. Thus, the decrease is at most the total fractional weight of all jobs times $s \dd t$, which is 
     $$       (\tilde W^A(t) + (1 / \eps - 1) \tilde W^O(t)) s \dd t$$
    By adding up the upper bounds on the change of $(*)$ and $(**)$ and using the fact that $\tilde W_{(1)}^A(t) \leq \tilde W^A(t)$, we obtain the lemma. 
\end{proof}

We now consider the change due to the adversary's processing.
Note that  the RHS consists of $\tilde W^A(t) := \sum_{j \in A(t)} w_j\frac{p^A_j(t)}{p_j}$ and $ W^O(t) := \sum_{j \in O(t)} w_j$, which denote the  total fractional remaining weight of jobs in $A(t)$ and the total integral weight of jobs in $O(t)$, respectively.

\begin{lemma}
    \label{lem:GD-l1-AO-2}
    $-\frac{\dd}{\dd t} f \big(\bp^A(t) \ || \ (\frac{1}{\eps} - 1)\bp^O(t)\big) \Big |_{\opt} \leq (1/ \eps - 1) \tilde W^A(t) + (1/ \eps) (1/ \eps - 1)  W^O(t)$.
\end{lemma}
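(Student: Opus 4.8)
The plan is to mimic the ``other direction'' arguments in Lemmas~\ref{lem:second-theorem-gd} and \ref{lem:GD-l1-A}: exhibit a single feasible time-indexed schedule for the enlarged instance $\bx(t):=\bp^A(t)\ ||\ (1/\eps-1)\bp^O(t)$ whose $\cL$-cost is close to the optimal residual cost of the instance obtained after $\opt$ processes for $\dd t$ time, and read off the decrease rate. Fix a time $t$ with no arrival or completion, let $\opt$ run for $\dd t$, and let $\bp^O(t+\dd t)$ be $\opt$'s remaining sizes; write $\bx(t+\dd t):=\bp^A(t)\ ||\ (1/\eps-1)\bp^O(t+\dd t)$, which differs from $\bx(t)$ only in that each of the $(1/\eps-1)$ copies of an $O$-job $j$ has shrunk by $\delta_j:=\sum_i\lambda_{ij}\rho_{ij}$, where $\rho_{ij}$ is the machine-$i$ time $\opt$ spends on (the single real copy of) job $j$ during $[t,t+\dd t)$. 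Since a job occupies at most one machine at a time, $\rho_j:=\sum_i\rho_{ij}\le\dd t$, and every job with $\delta_j>0$ lies in $O(t)$.

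First I would build the comparison schedule $\bz''$ for $\cL_{\bx(t)}$ by \emph{prepending} $\opt$'s $[t,t+\dd t)$ processing, replicated $(1/\eps-1)$ times, to an optimal solution $\bz'$ of $\cL_{\bx(t+\dd t)}$ shifted later in time. Concretely, reserve on every machine a block of local time $[0,L)$ with $L:=(1/\eps-1)\dd t$; in the $k$-th length-$\dd t$ sub-block re-run exactly $\opt$'s $[t,t+\dd t)$ schedule, but applied to the $k$-th copy of each $O$-job (padding with idle time on machines that $\opt$ does not fully occupy). This gives each of the $(1/\eps-1)$ copies of $j$ precisely $\delta_j$ units of processing, so after the block the residual sizes are exactly $\bx(t+\dd t)$, on top of which $\bz''$ runs $\bz'$ shifted by $L$. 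Feasibility is immediate: $\sum_j z''_{ij}(\tilde t)\le 1$ everywhere, and every job receives its required amount ($A$-copies only from the $\bz'$ part, since they are unchanged; each $O$-copy of $j$ gets $\delta_j$ from the block and $p^O_j(t+\dd t)$ from $\bz'$, summing to $p^O_j(t)$).

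Next I would account for $\cL_{\bx(t)}(\bz'')-\cL_{\bx(t+\dd t)}(\bz')$ term by term, using $f(\bx(t))-f(\bx(t+\dd t))\le \cL_{\bx(t)}(\bz'')-f(\bx(t+\dd t))$. Shifting $\bz'$ by $L$ leaves $(\ast\ast)$ unchanged and increases $(\ast)$ by exactly $L$ times the total fractional weight of $\bx(t+\dd t)$, i.e.\ $L\bigl(\tilde W^A(t)+(1/\eps-1)\tilde W^O(t+\dd t)\bigr)$, by the equality constraint of $\cL$. The prepended block adds at most $(1/\eps-1)\sum_j w_j\rho_j\le(1/\eps-1)\,\dd t\, W^O(t)$ to $(\ast\ast)$ (using $\rho_j\le\dd t$ and $\{j:\rho_j>0\}\subseteq O(t)$), and since all of it occurs at local time $<L$ while the work inside it is $O(\dd t)$, it adds only $O(\dd t^2)$ to $(\ast)$. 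Combining, $f(\bx(t))-f(\bx(t+\dd t))\le L\bigl(\tilde W^A(t)+(1/\eps-1)\tilde W^O(t+\dd t)\bigr)+(1/\eps-1)\,\dd t\, W^O(t)+O(\dd t^2)$; dividing by $\dd t$ and letting $\dd t\to 0$ gives $-\tfrac{\dd}{\dd t}f(\bx(t))\big|_{\opt}\le(1/\eps-1)\tilde W^A(t)+(1/\eps-1)^2\tilde W^O(t)+(1/\eps-1)W^O(t)$. Finally, bounding $\tilde W^O(t)\le W^O(t)$ in the middle term collapses it into $(1/\eps-1)\bigl((1/\eps-1)+1\bigr)W^O(t)=(1/\eps-1)\tfrac1\eps W^O(t)$, which is the claimed bound.

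The main obstacle, and why this is subtler than the adversary step in the unweighted migratory case (Lemma~\ref{lem:GD-redisual}), is that here the residual relaxation $\cL$ is the machine-fractional LP that does \emph{not} pin a job to one machine, so $\opt$'s genuine schedule is not copy-by-copy feasible for $\cL_{\bx(t)}$ and must be re-inserted by hand; replaying $\opt$'s infinitesimal schedule sequentially on each of the $(1/\eps-1)$ copies is what forces the length-$L=(1/\eps-1)\dd t$ blow-up of the prepended block. The one place this blow-up would hurt is the $(\ast)$-contribution of the block, and the key observation that makes the accounting go through is that this contribution is controlled by the \emph{integral} adversary weight $W^O(t)$ (via $\rho_j\le\dd t$), not the fractional one — which is exactly why the lemma is stated with $W^O$ and the slightly lossy constant $\tfrac1\eps(\tfrac1\eps-1)$. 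Everything else is bookkeeping parallel to the proofs of Lemmas~\ref{lem:second-theorem-gd}, \ref{lem:GD-l1-A}, and \ref{lem:GD-l1-AO}.
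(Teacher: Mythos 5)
Your proposal is correct and follows essentially the same route as the paper's proof: prepend a block of length $(1/\eps-1)\,\dd t$ that replays $\opt$'s infinitesimal schedule on the $(1/\eps-1)$ adversary copies, follow it with the optimal residual schedule $\bz'$ for the post-processing instance, and then account for the $(\ast)$ increase via the shift (giving $(1/\eps-1)(\tilde W^A+(1/\eps-1)\tilde W^O)$) and the $(\ast\ast)$ contribution of the block via $\rho_j\le\dd t$ and $\{j:\rho_j>0\}\subseteq O(t)$ (giving $(1/\eps-1)W^O$), closing with $\tilde W^O\le W^O$. The only cosmetic difference is that you handle fractional matchings directly via the $\rho_{ij}$'s while the paper assumes an integral matching wlog and notes the extension is straightforward; the construction and the term-by-term accounting are otherwise identical.
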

\begin{proof}
    The proof of this lemma is very similar to 
    that of Lemma~\ref{lem:GD-l1-AO}.
    We keep the same notation for $\bx$ and $\bz'$. Since the adversary processing job $j$ with 1 speed is  equivalent to processing all $(1 / \eps - 1)$ copies of the same job at the same rate, we will pretend that all the copies are distinct jobs and the adversary has $(1/ \eps -1)$-speed. Clearly, this only gives more power to the adversary. 
    For simplicity we assume that the adversary schedules an integral matching during $[t, t + \dd t)$, since the extension of the analysis to fractional matching is straightforward. That is, the adversary processes exactly one job $j(i)$ on machine $i$ during $[t, t + \dd t)$ and processes it on no other machines. Let $\hat z^*$  denote this schedule. So, $z^*_{i, j(i)} = 1$ but $z^*_{i, j'} = 0$ for all $j' \neq j(i)$.    
    Let $\bz''$ be a schedule that has $\hat z^*$ for $(1/ \eps - 1) \dd t$ time steps, followed by $\bz'$.

    We note that $(**)$ decreases by 
    $$
        \sum_{i,j} w_j \hat z^*_{ij} (1 / \eps -1) \dd t = \sum_i w_{j(i)} (1 / \eps -1) \dd t \leq (1 / \eps -1) W^O(t)  \dd t,
    $$
 where the last inequality follows from the fact that $j(i) \neq j(i')$ for all $i \neq i'$. 
    
    By observing that the total fractional weight 
    in $\bp^A(t) \ || \ (\frac{1}{\eps} - 1)\bp^O(t)$ is $\tilde W^A(t) + (1 / \eps - 1)\tilde W^O(t)$, we can upper bound the change of $(*)$ by 
    $$(1/ \eps - 1) \dd t (\tilde W^A(t) + (1 / \eps - 1)\tilde W^O(t))$$
    We obtain the lemma by adding the two upper bounds and using the fact that $\tilde W^O(t) \leq W^O(t)$.    
\end{proof}
    
\begin{corollary}
    \label{lem:GD-l1-Phi}
    Consider any time step $t$ that no arrival or completion of jobs occurs. For $\eps < \frac{1}{8}$ and $A$ is given $s = 1 + 2\eps$-speed, we have $\frac{\dd}{\dd t}\Phi(t) \leq - \tilde W^A(t) + O(1 / \eps^2) \tilde W^O(t)$.
\end{corollary}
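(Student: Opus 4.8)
The plan is to simply combine the three continuous-change lemmas (Lemmas~\ref{lem:GD-l1-A}, \ref{lem:GD-l1-AO}, and \ref{lem:GD-l1-AO-2}) with the definition of $\Phi$ and then choose parameters so that the coefficient on $\tilde W^A(t)$ comes out to be at most $-1$. Recall
$$\Phi(t) = \frac{2}{\eps}\left( f\big(\bp^A(t)\big) - \eps f\big(\bp^A(t) \ || \ (\tfrac1\eps - 1)\bp^O(t)\big)\right).$$
So $\frac{\dd}{\dd t}\Phi(t) = \frac{2}{\eps}\frac{\dd}{\dd t} f(\bp^A(t)) - 2 \frac{\dd}{\dd t} f\big(\bp^A(t)\ ||\ (\tfrac1\eps-1)\bp^O(t)\big)$, and the second derivative splits into the $|_{GD}$ and $|_{\opt}$ parts since GD runs at speed $s$ and the adversary at speed $1$ and these processings are disjoint contributions to the time-indexed residual. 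First I would apply Lemma~\ref{lem:GD-l1-A} to the first term, giving $\frac{2}{\eps}\cdot(-s\tilde W^A(t)) = -\frac{2s}{\eps}\tilde W^A(t)$. Next I would apply Lemma~\ref{lem:GD-l1-AO} and Lemma~\ref{lem:GD-l1-AO-2} to bound $-2\frac{\dd}{\dd t}f(\cdots)$ from above by $2\big(2s\tilde W^A(t) + s(\tfrac1\eps - 1)\tilde W^O(t)\big) + 2\big((\tfrac1\eps - 1)\tilde W^A(t) + \tfrac1\eps(\tfrac1\eps - 1)W^O(t)\big)$.

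Collecting the $\tilde W^A(t)$ coefficient, I get $-\frac{2s}{\eps} + 4s + 2(\tfrac1\eps - 1) = \frac{2}{\eps}(1 - s) + 4s - 2 + \frac{2}{\eps}\cdot \eps\cdot(\ldots)$; more carefully, with $s = 1 + 2\eps$ this is $-\frac{2(1+2\eps)}{\eps} + 4(1+2\eps) + \frac{2}{\eps} - 2 = -\frac{2}{\eps} - 4 + 4 + 8\eps + \frac{2}{\eps} - 2 = 8\eps - 2$. Since $\eps < \tfrac18$ we have $8\eps - 2 < -1$, so the $\tilde W^A(t)$ coefficient is at most $-1$, as required. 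For the $\tilde W^O(t)$ and $W^O(t)$ terms, I would note $\tilde W^O(t) \le W^O(t)$ (a job's fractional weight never exceeds its integral weight) and that $s(\tfrac1\eps - 1) = O(1/\eps)$ and $\tfrac1\eps(\tfrac1\eps - 1) = O(1/\eps^2)$, so the total adversary contribution is $O(1/\eps^2) W^O(t) = O(1/\eps^2)\tilde W^O(t)$ after absorbing; one should be slightly careful here whether the final bound is stated in terms of $\tilde W^O$ or $W^O$, but since the statement writes $\tilde W^O(t)$ and the $W^O(t)$ terms dominate with a larger $O(1/\eps^2)$ constant, I would just upper bound $W^O(t)$ by itself and observe that in the subsequent integration step (Proposition~\ref{pro:fractional-weight} for $A$ and Proposition~\ref{pro:integral-weight} for $O$) these are exactly the right quantities — so I would actually keep the mixed form $-\tilde W^A(t) + O(1/\eps)\tilde W^O(t) + O(1/\eps^2)W^O(t)$ and note it suffices, or simply write $O(1/\eps^2)\tilde W^O(t)$ matching the stated form if $\tilde W^O = W^O$ is assumed via the zero-weight-padding convention already in force.

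The main obstacle is purely bookkeeping: making sure the factor-of-2 from $\Phi$'s definition, the speed $s = 1+2\eps$, and the $(1/\eps - 1)$-copy trick all line up so the $\tilde W^A$ coefficient lands below $-1$, and tracking whether the residual-optimum decrease should be attributed as $\le -s\tilde W^A$ (from Lemma~\ref{lem:GD-l1-A}) rather than the naive $-\tilde W^A$ — this is exactly where the $(**)$ term of the LP does its job and why approximate GD is ``as good as'' true GD. There is no conceptual difficulty beyond what the three lemmas already supply; the choice $\eps < 1/8$ is dictated precisely by needing $8\eps - 2 \le -1$.
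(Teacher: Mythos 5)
Your proof is correct and follows essentially the same route as the paper's: it decomposes $\frac{\dd}{\dd t}\Phi(t)$ via Lemmas~\ref{lem:GD-l1-A}, \ref{lem:GD-l1-AO}, and~\ref{lem:GD-l1-AO-2}, substitutes $s = 1+2\eps$, and checks that the $\tilde W^A(t)$ coefficient equals $8\eps - 2 \le -1$ when $\eps < 1/8$. You also correctly observe that the proof actually delivers the bound with $W^O(t)$ rather than $\tilde W^O(t)$ (which is all that is used in the subsequent integration step via Propositions~\ref{pro:fractional-weight} and~\ref{pro:integral-weight}), so the corollary's stated $\tilde W^O(t)$ is a harmless notational slip.
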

\begin{proof}
    From Lemma~\ref{lem:GD-l1-A}, Lemma~\ref{lem:GD-l1-AO}, and Lemma~\ref{lem:GD-l1-AO-2}, we have 
    \begin{align*}
        \frac{\dd}{\dd t} \Phi(t) &\leq \frac{2}{\eps} \Big( -s \tilde W^A(t) + \eps \big( (2s + \frac{1}{\eps} - 1) \tilde W^A(t) +  (s + \frac{1}{\eps})(\frac{1}{\eps} - 1) W^O(t) \big) \Big) \\
        &= \frac{2}{\eps} \Big( (-\eps + 4\eps^2) \tilde W^A(t) +  O(1 / \eps) W^O(t) )\Big)\\
        &\leq -\tilde W^A(t) + O(1 / \eps^2) W^O(t)
    \end{align*}
\end{proof}

By Lemma~\ref{lem:GD-l1-disc} and $\Phi(T) = \Phi(0) = 0$, we have $\int_{t \geq 0} \frac{\dd}{\dd t} \Phi(t) \geq 0$. By Lemma~\ref{lem:GD-l1-Phi}, we have $\int_{t \geq 0} \tilde W^A(t) \leq O(1 / \eps^2) \int_{t \geq 0} W^O(t)$. LHS is the total fractional weighted flow time of the algorithm and RHS is the total integral weighted flow time of the optimum schedule. Therefore, by Lemma~\ref{thm:conversion}, we obtain a $(1+\eps)$-speed, $O(1/\eps^3)$-competitive algorithm for the integral objective.

\subsection{Unrelated Machines via Immediate Dispatch}
    \label{sec:unrelated-immediate}

In this section, we reproduce the result in \cite{AnandGK12}, which gave a $(1+\eps)$-speed $O(1/ \eps)$-competitive algorithm for minimizing total weighted flow time. The analysis of \cite{AnandGK12} was based on dual fitting unlike \cite{ChadhaGKM09}, which gave a $(1+\eps)$-speed $O(1/ \eps^2)$-competitive algorithm using potential functions. We show that potential functions can also achieve an $O(1/\eps)$-competitive ratio. At the high level, \cite{AnandGK12} considered integral residual optimum unlike \cite{ChadhaGKM09} that used fractional residual optimum.

Since we will use exactly the same algorithm as \cite{AnandGK12} and obtain the same competitive ratio of $O(1 /\eps)$, we will only sketch the analysis, focusing on how our approach leverages supermodularity and gradient descent explicitly.

The algorithm we consider is non-migratory and immediate-dispatch. Let $A_i(t)$ be the jobs assigned to machine $i$. Note that $\{A_i(t)\}_i$ will be a partition of $A(t)$. Let $\bp^{A_i(t)}$ be the (vector of the) remaining sizes of jobs assigned to  machine $i$ at time $t$ in our algorithm and define $\bp^{O_i(t)}$ analogously for the optimum schedule. 

We define $f(\bp^{A_i}(t))$ as the residual optimum at time $t$ of the jobs assigned to machine $i$. Formally,  letting $p_{ij}(t) = p_{j}(t) / \lambda_{ij}$, $f(\bp^{A_i}(t)) = \sum_{j} w_j \sum_{j' \leq j \textnormal{ s.t. } w_{j'} / p_{ij'}(t) \geq w_j /p_{ij}(t) }p_{ij'}(t) $ is the minimum total weighted completion time of jobs in $A_i(t)$ pretending that each job $j \in A_i(t)$ has a remaining size $p_j(t)$, weight $w_j$, arrival time $0$; ties are broken in an arbitrary but fixed order. Since $f$ is concerned with the single machine scheduling, it is straightforward to show its supermodularity. Then we let $\bar f(\bp^A(t)) := \sum_i f(\bp^{A_i}(t))$. 

We can restate \cite{AnandGK12} algorithm as follows: When a job $j$ arrives at time $t$, we permanently assign the job $j$ to the machine $i$ that incurs the minimum increment of the residual optimum $\bar f(\bp^A(t^-))$. Then, we process the jobs on each machine using gradient descent, i.e. follow the residual optimum schedule on the machine.

For the analysis, we use the following potential function, where $\bar f(\bp^A(t) \ || \ \bp^O(t))$ is defined analogously for jobs in $A$ and $O$ together. 
    $$ \Phi(t) = \frac{2}{\eps}\Big(\bar f(\bp^A(t)) - \frac{1}{2} \bar f(\bp^A(t) \ || \ \bp^O(t) )\Big) $$

    While we only consider the non-migratory optimum for brevity, it can be extended to the migratory case, following the approach described in \cite{ChadhaGKM09}.

\paragraph{Continuous Changes.} If we focus on machine $i$, we only need to consider the change of $\frac{2}{\eps}( f(\bp^{A_i}(t)) - \frac{1}{2}  f(\bp^{A_i}(t) \ || \ \bp^{O_i}(t) ))$. This is exactly the same as the single machine scheduling where we can pretend that we only have jobs in $A_i(t)$ and the adversary has jobs in $O_i(t)$. Thus, 
using Corollary~\ref{cor:running-clairvoyant-int}, we have that 
$\frac{\dd}{\dd t} \Phi(t)$ restricted to machine $i$ is at most $ - \sum_{j \in A_i(t)} w_j + (1 + \frac{2}{\eps}) \sum_{j \in O_i(t)} w_j$. Summing over all machines we have the following lemma. 
\begin{lemma}
    \label{lem:unrelated-id-continuous}
    Consider any time $t$ when no job arrives or is completed by our algorithm or the adversary. When GD is given $1+\eps$-speed, $\frac{\dd}{\dd t} \Phi(t) \leq -W^A(t) + (1 + \frac{2}{\eps}) W^O(t)$.
\end{lemma}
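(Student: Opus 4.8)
The plan is to reduce the statement to the single-machine analysis already carried out in Section~\ref{sec:proof-thm:main-clairvoyant-int}. Since the algorithm is immediate-dispatch and non-migratory, at any time $t$ between two arrivals the job set $A(t)$ is partitioned into fixed subsets $\{A_i(t)\}_i$, one per machine, and on machine $i$ the algorithm simply runs weighted gradient descent (i.e.\ follows the single-machine residual optimum) on the jobs in $A_i(t)$. Likewise, since we analyze against the non-migratory optimum, $O(t)$ is partitioned into $\{O_i(t)\}_i$. Consequently both $\bar f(\bp^A(t)) = \sum_i f(\bp^{A_i}(t))$ and $\bar f(\bp^A(t)\|\bp^O(t)) = \sum_i f(\bp^{A_i}(t)\|\bp^{O_i}(t))$ decompose over machines, and hence so does the potential: $\Phi(t) = \sum_i \Phi_i(t)$ with $\Phi_i(t) := \frac{2}{\eps}\big(f(\bp^{A_i}(t)) - \tfrac12 f(\bp^{A_i}(t)\|\bp^{O_i}(t))\big)$.

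First I would observe that, restricted to a single machine $i$, we are in exactly the single-machine PSP analyzed in the proof of Theorem~\ref{thm:main-clairvoyant-int}: the ``algorithm'' owns the jobs in $A_i(t)$ (with remaining processing time $p_{ij}(t) = p^A_j(t)/\lambda_{ij}$), it follows the residual-optimum schedule on that job set, and the ``adversary'' owns the jobs in $O_i(t)$. Lemma~\ref{lem:GD-redisual} together with Claim~\ref{claim:gd-scale} then gives $\frac{\dd}{\dd t} f(\bp^{A_i}(t)) = -s\, W^{A_i}(t)$ when GD runs at speed $s = 1+\eps$ on machine $i$, where $W^{A_i}(t) := \sum_{j\in A_i(t)} w_j$. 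For the mixed term, the instantaneous processing of GD (at most $s$ units of machine-$i$ time) together with that of the optimum (at most one unit of machine-$i$ time) is a feasible single-machine schedule of speed at most $s+1$ for the job set $A_i(t)\cup O_i(t)$, treating the $A$- and $O$-copies as distinct jobs; so Lemma~\ref{lem:together-change} applied to the single-machine polytope yields $\frac{\dd}{\dd t} f(\bp^{A_i}(t)\|\bp^{O_i}(t)) \ge -(s+1)\big(W^{A_i}(t)+W^{O_i}(t)\big)$. Plugging $s=1+\eps$ into $\frac{\dd}{\dd t}\Phi_i(t)$ and simplifying exactly as in Corollary~\ref{cor:running-clairvoyant-int} gives $\frac{\dd}{\dd t}\Phi_i(t) \le -W^{A_i}(t) + (1+\tfrac{2}{\eps}) W^{O_i}(t)$.

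Finally I would sum this bound over all machines $i$; since $\{A_i(t)\}_i$ and $\{O_i(t)\}_i$ partition $A(t)$ and $O(t)$ respectively, $\sum_i W^{A_i}(t) = W^A(t)$ and $\sum_i W^{O_i}(t) = W^O(t)$, and the lemma follows. The only step requiring genuine care --- and the main (though mild) obstacle --- is justifying the per-machine decomposition of both the potential and its dynamics: this rests on the algorithm and the optimum being non-migratory (so every job lives on a fixed machine throughout), on the immediate-dispatch property (so $A_i(t)$ is constant between arrivals), and on the algorithm's per-machine behavior being precisely single-machine weighted gradient descent. Once this decomposition is established, no new inequality is needed; the rest is a verbatim reuse of the single-machine computations from Section~\ref{sec:proof-thm:main-clairvoyant-int}.
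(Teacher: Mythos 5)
Your proof is correct and follows essentially the same route as the paper: decompose the potential machine-by-machine via the definition $\bar f = \sum_i f(\bp^{A_i})$, note that on each machine the immediate-dispatch, non-migratory algorithm is precisely single-machine weighted gradient descent, invoke the single-machine bound of Corollary~\ref{cor:running-clairvoyant-int} (via Lemma~\ref{lem:GD-redisual}, Claim~\ref{claim:gd-scale}, and Lemma~\ref{lem:together-change}) per machine, and sum. The paper states this tersely; your write-up spells out the same argument, including the correct identification of per-machine remaining processing times $p_{ij}(t) = p^A_j(t)/\lambda_{ij}$ and the reliance on the non-migratory optimum for the partition of $O(t)$.
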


\paragraph{Discrete Changes.}
The following lemma bounds the potential change due to a job's arrival or completion. Unlike the proofs in the previous section, we take a closer look at the machine to which the job is assigned. 

\begin{lemma}
    \label{lem:unrelated-id-discrete}
    If $\Phi(t)$ changes discontinuously due to a job's arrival or completion, the change is non-positive. 
\end{lemma}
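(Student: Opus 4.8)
The plan is to mirror the proof of Lemma~\ref{lem:discrete-change-clairvoyant-int}, but to run the supermodularity argument on the single machine to which the arriving job is dispatched rather than on a global instance. First I would dispose of completions: when a job $j$ is completed by the algorithm on some machine, or by the optimum, its remaining size becomes $0$, and a size-$0$ job contributes nothing to the single-machine residual optimum $f$ of whichever machine it sits on, hence nothing to $\bar f(\bp^A(t))$ or to $\bar f(\bp^A(t)\,||\,\bp^O(t))$; so $\Phi$ is continuous across completions. It then remains to handle the arrival of a job $j$ at time $t$ with size $p_j$, where the algorithm permanently dispatches $A$'s copy $j^A$ to the machine $i^\star := \arg\min_i\big(f(\bp^{A_i}\cup j^A)-f(\bp^{A_i})\big)$ while the (non-migratory) optimum dispatches its copy $j^O$ to some machine $i^O$; as throughout, $j^A$ and $j^O$ are treated as two distinct items, both of size $p_j$.

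Next I would write the two increments explicitly. Let $\Delta_A := f(\bp^{A_{i^\star}}\cup j^A) - f(\bp^{A_{i^\star}})$, which by the dispatch rule equals $\min_i\big(f(\bp^{A_i}\cup j^A)-f(\bp^{A_i})\big)$. Writing $g$ for the single-machine residual optimum viewed as a set function with all job sizes fixed (so $g$ is supermodular, since $f$ is), the increase $\Delta_{AO}$ of $\bar f(\bp^A\,||\,\bp^O)$ is confined to machines $i^\star$ and $i^O$. On machine $i^\star$ the increment is $g(A_{i^\star}\cup O_{i^\star}\cup j^A)-g(A_{i^\star}\cup O_{i^\star})\ge g(A_{i^\star}\cup j^A)-g(A_{i^\star})=\Delta_A$ by supermodularity (dropping $O_{i^\star}$); on machine $i^O$ the increment is $g(A_{i^O}\cup O_{i^O}\cup j^O)-g(A_{i^O}\cup O_{i^O})\ge g(A_{i^O}\cup j^O)-g(A_{i^O})$, and since $j^O$ and $j^A$ have the same size $p_j$ this equals $g(A_{i^O}\cup j^A)-g(A_{i^O})\ge\Delta_A$ again by the dispatch rule. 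If $i^\star\neq i^O$ these two contributions live on different machines and add, so $\Delta_{AO}\ge 2\Delta_A$. If $i^\star=i^O$, the single affected machine carries both new items, and the telescoping-plus-supermodularity computation of Lemma~\ref{lem:discrete-change-clairvoyant-int} (drop $O$, split $g(A\cup j^A\cup j^O)-g(A)=[g(A\cup j^A\cup j^O)-g(A\cup j^A)]+[g(A\cup j^A)-g(A)]$, bound the first bracket below by $g(A\cup j^O)-g(A)$, use $|j^A|=|j^O|$) again gives $\Delta_{AO}\ge 2\Delta_A$. In either case the jump of $\Phi$ is $\frac{2}{\eps}\big(\Delta_A-\frac{1}{2}\Delta_{AO}\big)\le\frac{2}{\eps}(\Delta_A-\Delta_A)=0$.

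The one step needing genuine care is the asymmetric case $i^\star\neq i^O$: there the adversary's copy $j^O$ sits on $i^O$, and the increment it forces on the \emph{combined} residual optimum of $i^O$ must still be charged against $\Delta_A$; this works precisely because supermodularity lets us drop $O_{i^O}$ and compare against adding $j^O$ to $A_{i^O}$ alone, and because the algorithm dispatched greedily to \emph{minimize} the marginal increase over all machines, so adding a copy of the same size to $A_{i^O}$ costs at least $\Delta_A$. I would close by remarking that for a migratory optimum the quantity $\bar f(\bp^A\,||\,\bp^O)$ should be replaced by the single-machine-structured surrogate used in \cite{ChadhaGKM09}, after which the same two-machine supermodularity argument goes through verbatim.
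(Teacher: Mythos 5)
Your argument is essentially the paper's own proof: same case split on whether the two copies land on the same machine, same use of supermodularity to drop the $O$ jobs, and the same appeal to the greedy dispatch rule to bound the increment on the optimum's machine $i^O$ from below by $\Delta_A$. The only cosmetic difference is that you spell out the telescoping step for the $i^\star=i^O$ case, which the paper compresses into a single invocation of supermodularity, and you phrase sizes as $p_j$ rather than the machine-scaled $p_{ij}$, but since $f$ on machine $i$ internally converts via $\lambda_{ij}$ this is just notation.
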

\begin{proof}
    For the same reason we used in the proof of Lemma~\ref{lem:discrete-change-clairvoyant-int}, we only need to focus on the arrival case. Suppose a new job $j$ arrives at time $t$. We use the same notations in Lemma~\ref{lem:discrete-change-clairvoyant-int}. So $j^A$ and $j^O$ are job $j$'s copy in the schedule of $A$ and $O$ respectively. We assume wlog that job $j^A$ is assigned to machine $1$ by the algorithm. If $j^O$ is assigned to the same machine $1$, the change of $\bar f(\bp^A \ || \ \bp^O)$ is at least
    \begin{equation}
        f(\bp^{A_1} \ || \ p_{1j} \bone_{j^A} \ || \ \bp^{O_1} \ || \ p_{1j} \bone_{j^O}) - f(\bp^{A_1} \ || \ \bp^{O_1}) \geq 2 \left(f(\bp^{A_1} \ || \ p_{1j} \bone_{j^A}) - f(\bp^{A_1}) \right) 
    \end{equation}
    where the inequality holds due to the supermodularity of $f$.
    
    Suppose $j^O$ is assigned to another machine, say machine $2$. Since the change only occurs on machines $1$ and $2$, the increase of $\bar f(\bp^A \ || \ \bp^O)$ is at least
    \begin{align}
        &f(\bp^{A_1} \ || \ p_{1j} \bone_{j^A} \ || \ \bp^{O_1}) - f(\bp^{A_1} \ || \ \bp^{O_1}) + f(\bp^{A_2} \ || \ \bp^{O_2} \ || \ p_{2j} \bone_{j^O}) - f(\bp^{A_2} \ || \ \bp^{O_2})  \nonumber \\        
        \geq \; & f(\bp^{A_1} \ || \ p_{1j} \bone_{j^A}) - f(\bp^{A_1}) + f(\bp^{A_2} \ || \ p_{2j} \bone_{j^O}) - f(\bp^{A_2}) \label{eq:unrelated-id-eq2}, 
    \end{align}
    where the inequality follows from $f$'s supermodularity.
    
    Recall that the algorithm assigns job $j$ to the machine $i$ that gives the minimum increment of $f$. Thus we have 
    \[
        f(\bp^{A_1} \ || \ p_{1j} \bone_{j^A}) - f(\bp^{A_1}) \leq f(\bp^{A_2} \ || \ p_{2j} \bone_{j^O}) - f(\bp^{A_2})
    \]
    Then, Eqn.(\ref{eq:unrelated-id-eq2}) is at least
       $ 2 \left(f(\bp_1^A \ || \ p_{1j} \bone_{j^A}) - f(\bp_1^A) \right) $.
    Thus, in both cases, we have shown that $\bar f(\bp^A \ || \ \bp^O)$ increases by at least $ 2 \left(f(\bp_1^A \ || \ p_{1j} \bone_{j^A}) - f(\bp_1^A) \right) $. Since the change of $\bar f(\bp^A)$ is exactly
        \[ f(\bp^{A_1} \ || \ p_{1j} \bone_{j^A}) - f(\bp^{A_1}), \]
    we can conclude that  $\Phi(t)$'s change due to job $j$'s arrival is non-positive.         
\end{proof}

We combine Lemmas \ref{lem:unrelated-id-continuous} and 
\ref{lem:unrelated-id-discrete} as we combined 
Lemma~\ref{lem:discrete-change-clairvoyant-int} and  Corollary~\ref{cor:running-clairvoyant-int} at the end of Section~\ref{sec:proof-thm:main-clairvoyant-int}. As a result, we can show that the online algorithm is $(1+\eps)$-speed $O(1/\eps)$-competitive  for the total weighted flow time objective. 
\section{Other Related Work}
    \label{sec:other-related}

Gross substitutes valuations were introduced by Kelso and Crawford in economics \cite{KelC82}. Interestingly, the same valuations were introduced in other areas under different names, such as M$^\natural$-concave functions \cite{murota1999m}, matroidal maps \cite{dress1995well,dress1995rewarding}, and valuated matroids \cite{dress1992valuated}. Thus, there are several equivalent ways to characterize them. The interested reader is referred to the nice survey  \cite{leme2017gross} and tutorial talk given in ACM EC 2018 \cite{gstutorial}, and the extensive survey \cite{murota2022discrete}. 

We discuss two assumptions we make in this paper, clairvoyance and free preemption. Clairvoyant algorithms are assumed to know a job's size upon its arrival---in contrast, non-clairvoyant algorithms do not until completing the job. See~\cite{motwani1994nonclairvoyant,KalyanasundaramP00,edmonds2003non,PruhsST04,EdmondsP12,im2014focs,ImKM18,DBLP:conf/icalp/0001G0019} for some examples of non-clairvoyant algorithms. There have recently been work to design online algorithms when we are given jobs sizes that are not completely accurate \cite{purohit2018improving,im2021non,azar2021flow,AzarLT22,LindermayrM22}, where the  goal is to obtain a schedule almost as good as what the best clairvoyant algorithm can offer when the (predicted) sizes are not far from the actual sizes. In this paper we allow online algorithms to preempt jobs for free. In fact, the flow time minimization becomes computationally very hard without preemption. Thus, non-preemptive scheduling has been studied with resource augmentation even in the offline setting \cite{BansalCKPSS07,ChuzhoyGKN04,ImLMT15}. 

In this work we assume there are no precedence constraints among jobs. For online precedence constrained scheduling for total (weighted) flow time, see \cite{AgrawalLLM16,DBLP:conf/icalp/0001G0019}.

There exists a large literature on the ``delay" variants of online problems, where the online algorithm can procrastinate its decision incurring delay cost, e.g., \cite{azar2017online,azar2018set,azar2020beyond}. The main difference is that in our problem, we have capacitated resources that we can use to process jobs at each time, while the online algorithms pay extra cost for taking an action such as buying a set in their setting. 

Deng \etal \cite{DengHL07} study Walrasian equilibrium for a market where jobs act as agents bidding for CPU times. Their work considers the single-machine setting, where jobs have different valuations depending on their own completion time. The focus is on the existence and computability of the equilibrium.

\appendix

\section{Missing Proofs for Valuation Functions}
    \label{sec:ls-proofs}

\subsection{Proof of Lemma~\ref{lem:LS-submodular}}
\begin{proof}
    We have  $v(\bx) = \min_{\bq\geq 0} \set{ \pi(\bq) + \bq \bx}$ (See \cite{milgrom2009substitute} Theorem 26, the concavity implies this). Let $f(\bx, \bq) = \pi(\bq) + \bq \cdot \bx$. We can show $\bq \cdot \bx$ is supermodular in $\bq$ and $\bx$ when $\bq, \bx \geq 0$ using on each coordinate the fact that $a b + a' b' \leq \max\{a, a'\} \max\{b, b'\} + \min\{a, a'\} \min\{b, b'\}$ for any scalars $a, a', b, b' \geq 0$.
    Then, we have $f(-\bx, \bq) = \pi(\bq) - \bq \cdot \bx$, which is submodular in $\bx$ and $\bq$. By taking the min over all $\bq$, we know $v(-\bx)$ is submodular for the following reason. 
    Let $g(\bx, \bq) := f(-\bx, \bq)$. Let $\bq$ and $\bq'$ be such that $v(-\bx) = g(\bx, \bq)$ and $v(-\bx') = g(\bx', \bq')$. We then have, 
    \begin{align*}
        v(-\bx) + v(-\bx') &= 
        g(\bx, \bq) + g(\bx', \bq')  \\
        & \geq g( \bx \vee \bx', \bq \vee \bq') + g( \bx \wedge \bx', \bq \wedge \bq') \quad \mbox{[$g(\bx, \bq)$ is submodular in both $\bx, \bq$]}\\        
        & \geq v( - (\bx \vee \bx')) + v( -\bx \wedge \bx'),
    \end{align*}
    which proves $v(-\bx)$'s submodularity  in $\bx$. 
    So, we have 
    \begin{align*}
v(\bx) + v(\by) &= v(- (-\bx)) + v((-\by)) \\
        &\geq v(- (-\bx \vee-\by) + v(-(-\bx \wedge -\by)) \quad \mbox{[$v(-\bx)$ is submodular in $\bx$]}\\
        &= v( \bx \wedge \by) + v(\bx \wedge \by)
    \end{align*}
\end{proof}

\subsection{Concave Closure of GS: Proof of Theorem~\ref{thm:gs-concave-closure}} 

Consider a fixed gross-substitute valuation $v:\set{0, 1}^n \to \R$.\footnote{We interchangeably use $v(S)$ and $v(\bone_S)$} We first observe that for any rational number $\delta > 0$, we can approximate $v$ by a function $v':\set{0, 1}^n \to \Q$ such that $|v(S) - v'(S)| \leq \delta$ for any $S \subseteq [n]$. This is because due to \cite{gstutorial} that shows that $v$ is GS if and only if it satisfies a finite number of linear inequalities, we can set up a linear programming with variables $\{v'(S)\}_{S \subseteq [n]}$ satisfying the inequalities together with the approximation requirement for all $S$. The LP has a feasible solution, and particularly a rational number solution since $\delta \in \Q$. Thus, we assume $v:\set{0, 1}^n \to \Q$ henceforth.

We show that for any $\bx \in \Q^n \cap [0, 1]^n$, we can view $v^+(\bx)$ as a convolution of multiples copies of $v$. The proof is not difficult, but we include it for completeness.

\begin{lemma}
    \label{lem:GS-equal}
    Define $\tilde v^{+} : \Q^n \cap [0, 1]^n \to \Q$ such that
    $$\tilde v^{+}(\bx) = \sup_{\eps = 1 / k! : k \in \Z_{\geq 0} } \eps \max_{ \bx / \eps = \bx_1 + ... + \bx_{1 / \eps}} \sum_{i = 1}^{ 1 / \eps } v(\bx_i)$$
    where $\bx_1, ..., \bx_{1 /\eps} \in \set{0, 1}^n$. Then, for any $\bx \in \Q^n \cap [0, 1]^n$, 
    $$ \tilde v^{+}(\bx) := v^+(\bx) $$ 
    where $ v^+(\bx) = \max \set{\sum_S v(S) \lambda_S : \sum_{S} \lambda_S \bone_S = \bx, \sum_{S} \lambda_S = 1, \lambda_S \geq 0}$ is the concave closure of $v$. 
\end{lemma}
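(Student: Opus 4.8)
The plan is to prove the two inequalities $\tilde v^{+}(\bx) \le v^{+}(\bx)$ and $\tilde v^{+}(\bx) \ge v^{+}(\bx)$ separately, for a fixed $\bx \in \Q^n \cap [0,1]^n$. As a preliminary remark, for each admissible scale $\eps = 1/k!$ with $k!\,\bx \in \Z^n$ the set $\set{0,1}^n$ is finite, so there are only finitely many multisets of size $1/\eps$ over it; hence the inner maximum in the definition of $\tilde v^{+}$ is attained, and the whole expression lies between $\min_S v(S)$ and $\max_S v(S)$ (scales with $k!\,\bx \notin \Z^n$ admit no decomposition and play no role in the supremum). For the direction $\tilde v^{+}(\bx) \le v^{+}(\bx)$ I would start from any decomposition $\bx/\eps = \bx_1 + \dots + \bx_{1/\eps}$ with each $\bx_i \in \set{0,1}^n$ and set $\lambda_S := \eps \cdot \#\set{i : \bx_i = \bone_S}$. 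A one-line check gives $\lambda_S \ge 0$, $\sum_S \lambda_S = \eps \cdot (1/\eps) = 1$, and $\sum_S \lambda_S \bone_S = \eps \sum_i \bx_i = \bx$, so $\{\lambda_S\}$ is feasible for the program defining $v^{+}(\bx)$; since $\eps \sum_i v(\bx_i) = \sum_S \lambda_S v(S) \le v^{+}(\bx)$, taking the maximum over decompositions and then the supremum over $\eps$ yields the claim.

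The substance is the reverse inequality $\tilde v^{+}(\bx) \ge v^{+}(\bx)$. Here I would use that the feasible polytope of the concave-closure program, $P = \set{\lambda \ge 0 : \sum_S \lambda_S \bone_S = \bx,\ \sum_S \lambda_S = 1}$, has a $0$-$1$ constraint matrix and rational right-hand side $(\bx, 1)$, and is nonempty (because $[0,1]^n$ is the convex hull of $\set{\bone_S}_{S \subseteq [n]}$); hence a linear objective over $P$ attains its maximum at a rational vertex $\lambda^*$, which moreover has at most $n+1$ nonzero coordinates. Let $q$ be a common denominator of the nonzero entries of $\lambda^*$ and choose $k \ge q$ so that $q \mid k!$; put $\eps := 1/k!$. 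Then every $\lambda^*_S/\eps = \lambda^*_S\, k!$ is a non-negative integer, $\sum_S \lambda^*_S/\eps = 1/\eps$, and in particular $\bx/\eps = \sum_S (\lambda^*_S/\eps)\,\bone_S \in \Z_{\ge 0}^n$. The decomposition that uses $\lambda^*_S/\eps$ copies of $\bone_S$ for each $S$ therefore consists of exactly $1/\eps$ vectors of $\set{0,1}^n$ summing to $\bx/\eps$, with total value $\eps\sum_S (\lambda^*_S/\eps)\, v(S) = \sum_S \lambda^*_S v(S) = v^{+}(\bx)$. Thus this single scale $\eps$ already witnesses $\tilde v^{+}(\bx) \ge v^{+}(\bx)$, and combined with the first part this gives the equality.

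The one mildly delicate point — essentially the only place where care is needed — is in the reverse direction: one must observe that the admissible scales $\set{1/k!}_k$ form a net fine enough to ``reach'' the denominator $q$ of the optimal rational convex combination, which is exactly why factorials are the convenient choice (for every $q$ there is $k$ with $q \mid k!$). Everything else reduces to a common-denominator computation together with the standard fact that a linear program over a nonempty bounded rational polytope has a rational optimal vertex; in particular, rationality of $v$ (reduced to just before the lemma) is not actually needed in this argument, though it does no harm.
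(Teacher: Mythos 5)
Your proof is correct and follows essentially the same route as the paper's: the $\le$ direction by converting a decomposition into a feasible convex combination, and the $\ge$ direction by taking an optimal rational $\{\lambda_S\}$ and choosing a scale $\eps = 1/k!$ fine enough that every $\lambda_S/\eps$ is an integer. You supply two justifications the paper elides---that an optimal rational vertex $\lambda^*$ exists (a clean LP argument over the rational polytope, which, as you correctly observe, makes the preliminary reduction to rational-valued $v$ unnecessary for this particular lemma) and that the factorial scales reach every denominator---and both are exactly right.
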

\begin{proof}
    Fix a $\bx \in \Q^n \cap [0, 1]^n$. Let $\set{\lambda_S}_S$ be an optimal solution of $v^+(\bx)$; note that $\lambda_S \in \Q$ for all $S$. Consider an $\eps$ such that $\lambda_S$ is an integer multiple of $\eps$ for all $S$. To show $\tilde v^+ (\bx) \geq v^+(\bx)$, we assign $\bone_S$ to $\lambda_{S}  / \eps$ copies of $v$. Then, 
    $\tilde v^+ (\bx)  \geq \eps \sum_S (\lambda_S / \eps) v(\bone_S) = v^+ (\bx)$.
    Conversely, fix an $\eps > 0$ such that $1 / \eps$ is an integer, together with any $\bx_1, \bx_2, \ldots, \bx_{1 / \eps} $ such that $\bx / \eps = \sum_{i=1}^{1 / \eps} \bx_i$. Then, by setting $\lambda_{S} = \eps |\set{\bx_i = \bone_S}_i|$, one can immediately show $v^+(\bx) \geq \tilde v^{+}(\bx)$. 
    Finally, it is easy to see that $\tilde v(\bx)$ is well defined as in the definition of $\tilde v^+(\bx)$, the quantity monotonically increases as $k \to \infty$. 
\end{proof}

\begin{remark}
    Strictly speaking, when we create $1/ \eps$ copies of $v$, we also create $1 / \eps$ copies of each item. Each copy of $v$ is assumed to have no benefit from having more than one copy of any item. It is easy to check that  this valuation function is GS over all the item copies.
\end{remark}

The following corollary is immediate from the proof of the above lemma. 
\begin{corollary}
    For any $\bx \in \Q^n \cap [0, 1]^n$, there exists $\eps > 0$ such that $v^{+\eps}(\bx) = \tilde v^{+}(\bx) = v^+(\bx)$, where $$v^{+\eps}(\bx) =  \eps \max_{ \bx / \eps = \bx_1 + ... + \bx_{1 / \eps}} \sum_{i = 1}^{ 1 / \eps } v(\bx_i)$$
\end{corollary}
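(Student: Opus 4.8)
The statement is essentially a re-reading of the proof of Lemma~\ref{lem:GS-equal}, with the extra observation made explicit that the supremum defining $\tilde v^{+}$ is attained at a finite index. Fix $\bx \in \Q^n \cap [0,1]^n$. By Lemma~\ref{lem:GS-equal} we already have $\tilde v^{+}(\bx) = v^{+}(\bx)$, so it suffices to produce a single $\eps$ of the form $1/k!$ with $v^{+\eps}(\bx) = v^{+}(\bx)$; the chain of equalities then follows by sandwiching, since $v^{+\eps}(\bx)$ is one of the terms over which $\tilde v^{+}(\bx)$ is a supremum, giving $v^{+}(\bx) = v^{+\eps}(\bx) \le \tilde v^{+}(\bx) = v^{+}(\bx)$.

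First I would note that the linear program defining $v^{+}(\bx)$, namely $\max\{\sum_S v(S)\lambda_S : \sum_S \lambda_S\bone_S = \bx,\ \sum_S \lambda_S = 1,\ \lambda_S \ge 0\}$, is feasible (because $[0,1]^n$ is the convex hull of $\{0,1\}^n$) and bounded (the $\lambda_S$ live in a simplex and, by the reduction at the start of the section, $v:\{0,1\}^n \to \Q$ is finite). Hence its optimum is attained at a vertex of a polyhedron whose defining inequalities have rational coefficients (using $\bx \in \Q^n$), so there is an optimal solution $\{\lambda_S\}_S$ with every $\lambda_S \in \Q_{\ge 0}$. Since only finitely many $\lambda_S$ are nonzero, I can pick $k \in \Z_{\ge 0}$ large enough that $k!\,\lambda_S \in \Z_{\ge 0}$ for every $S$, and set $\eps := 1/k!$, so that $1/\eps = k!$ is a positive integer and each $\lambda_S/\eps$ is a nonnegative integer.

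Next I would exhibit the decomposition: take the multiset of $\{0,1\}^n$-vectors consisting of $\bone_S$ repeated $\lambda_S/\eps$ times for each $S$. Its cardinality is $\sum_S \lambda_S/\eps = (1/\eps)\sum_S \lambda_S = 1/\eps$ and its sum is $(1/\eps)\sum_S \lambda_S \bone_S = \bx/\eps$, so it is an admissible choice in the definition of $v^{+\eps}(\bx)$, yielding $v^{+\eps}(\bx) \ge \eps \sum_S (\lambda_S/\eps)\,v(\bone_S) = \sum_S \lambda_S v(\bone_S) = v^{+}(\bx)$. Together with the reverse inequality above this gives $v^{+\eps}(\bx) = \tilde v^{+}(\bx) = v^{+}(\bx)$, as claimed.

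The only point that needs a word of care is the rationality of an optimal LP solution, so that its denominators divide some factorial; this is immediate from the rational description of the polyhedron and from the reduction to a rational-valued $v$, and I do not anticipate any real obstacle beyond this bookkeeping.
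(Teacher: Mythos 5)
Your proposal is correct and is essentially the argument the paper intends: the paper dismisses the corollary as ``immediate from the proof of the above lemma,'' and your write-up simply unpacks that proof — choose a rational optimal $\{\lambda_S\}$ for the concave-closure LP, take $\eps = 1/k!$ with $k$ large enough that each $\lambda_S/\eps$ is a nonnegative integer, and use the resulting decomposition to get $v^{+\eps}(\bx) \ge v^{+}(\bx)$, sandwiching with $v^{+\eps}(\bx) \le \tilde v^{+}(\bx) = v^{+}(\bx)$ from Lemma~\ref{lem:GS-equal}. The only addition you make is to justify the rationality of the optimal $\lambda_S$ (via vertex rationality of a rational polyhedron), a point the paper asserts without elaboration.
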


From Lemma~\ref{lem: LS-equal}, it is sufficient to show the dual profit of $v^+$ is submodular. 

\begin{lemma}
    \label{lem: GS-eps-Sub}
    The dual profit $\pi^{+}(\bq) = \max_{\bx \in \R^n} v^{+}(\bx) - \bq \cdot \bx$ is submodular.
\end{lemma}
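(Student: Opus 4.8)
The plan is to reduce $\pi^{+}$ to the ordinary (indirect-utility) dual profit of the \emph{discrete} GS valuation $v$, and then quote the known substitutability fact for that object.

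\textbf{Step 1: Identify $\pi^{+}$ with the discrete dual profit of $v$.} I would unfold the two nested maxima in $\pi^{+}(\bq)=\max_{\bx} v^{+}(\bx)-\bq\cdot\bx$. Writing $v^{+}(\bx)=\sum_{S}v(S)\lambda_S$ for a convex combination with $\sum_{S}\lambda_S\bone_S=\bx$ and $\sum_S\lambda_S=1$, and substituting $\bx=\sum_S\lambda_S\bone_S$, the objective becomes $\sum_S\lambda_S\bigl(v(S)-\bq\cdot\bone_S\bigr)$, a linear function maximized over the standard simplex $\{\lambda\ge 0,\ \sum_S\lambda_S=1\}$; its optimum is attained at a vertex, giving
\[
\pi^{+}(\bq)=\max_{S\subseteq[n]}\bigl(v(S)-\bq\cdot\bone_S\bigr)=:\pi_v(\bq),
\]
the indirect utility of the discrete valuation $v$ (finite for every $\bq\in\R^n$ since $v^{+}$ has the compact domain $[0,1]^n$). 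This is consistent with the $v^{+\eps}$ picture already set up: for each $\eps$, $\pi^{+\eps}(\bq):=\max_{\bx}v^{+\eps}(\bx)-\bq\cdot\bx=\eps\,\pi_{V_{1/\eps}}(\bq)$, where $V_{1/\eps}$ is the $(1/\eps)$-fold convolution of $v$ over $1/\eps$ copies of each item; since with $1/\eps$ agents no supply constraint couples them, $\pi_{V_{1/\eps}}=(1/\eps)\pi_v$, so $\pi^{+\eps}=\pi_v$ for every $\eps$ and $\pi^{+}=\sup_\eps\pi^{+\eps}=\pi_v$.

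\textbf{Step 2: Submodularity of $\pi_v$.} It remains to show $\pi_v(\bq)=\max_S\bigl(v(S)-\bq\cdot\bone_S\bigr)$ is submodular in $\bq$ when $v$ is gross substitutes. This is the standard fact that the indirect utility of a GS valuation is submodular; I would cite it (equivalently: $v$ is $\mathrm{M}^{\natural}$-concave, so the discrete convex conjugate of $-v$ is $\mathrm{L}^{\natural}$-convex, and $\mathrm{L}^{\natural}$-convex functions are submodular, see \cite{murota2022discrete,GulS99,milgrom2009substitute}), or give the self-contained exchange proof. For the latter it suffices to verify the pairwise submodular inequality in two coordinates $i\ne j$: fix increments $\delta,\delta'>0$, let $A\in\+D(v,\bq+\delta\bone_i+\delta'\bone_j)$ and $B\in\+D(v,\bq)$; raising first $q_i$ by $\delta$ and then $q_j$ by $\delta'$ and applying the single-improvement/exchange property of GS produces $C\in\+D(v,\bq+\delta\bone_i)$ and $D\in\+D(v,\bq+\delta'\bone_j)$ obtained from $B$ (resp.\ $A$) by dropping only the item whose price changed, and then comparing the four quasi-linear utilities yields $\pi_v(\bq+\delta\bone_i+\delta'\bone_j)+\pi_v(\bq)\le\pi_v(\bq+\delta\bone_i)+\pi_v(\bq+\delta'\bone_j)$, which is the two-coordinate submodularity condition and hence implies submodularity of $\pi_v$ on $\R^n$.

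The routine parts are the simplex algebra of Step 1 and the bookkeeping once the exchanged bundles are in hand; the genuine content — and the main obstacle if one insists on a fully self-contained argument rather than a citation — is Step 2, i.e.\ exploiting the GS/$\mathrm{M}^{\natural}$-structure of $v$ (not merely its submodularity, which by itself is insufficient: partial maximization does not preserve submodularity). I expect to dispatch degenerate price vectors, where the demand correspondence $\+D(v,\cdot)$ is not a singleton, by a standard perturbation argument, and to note that the reduction in Step 1 only uses that $v$ is finite-valued on all of $2^{[n]}$, which the preceding rational-approximation step guarantees.
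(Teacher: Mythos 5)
Your proof is correct, and the reduction is the same one the paper uses: show $\pi^{+}$ coincides with the discrete indirect utility $\pi_v(\bq)=\max_S v(S)-\bq\cdot\bone_S$ of the underlying GS valuation, then invoke the standard fact that GS valuations have submodular indirect utility. Where you differ is in how you establish the identity $\pi^{+}=\pi_v$. The paper restricts first to rational price vectors $\bq$, uses Lemma~\ref{lem:GS-equal} and its corollary to pick a fine enough $\eps$ with $v^{+\eps}(\bx^*)=v^{+}(\bx^*)$ at the optimizer $\bx^*$, expands $v^{+\eps}$ as a $(1/\eps)$-fold convolution, collapses the double max to $\eps\cdot(1/\eps)\pi(\bq)=\pi(\bq)$, and finally passes from $\Q^n$ to $\R^n$ by continuity of both sides. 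Your argument bypasses all of that: since $v^{+}$ is by definition the value of an LP over the simplex of mixing weights $\{\lambda_S\}$, substituting $\bx=\sum_S\lambda_S\bone_S$ turns $\pi^{+}(\bq)$ into the maximum of the linear functional $\sum_S\lambda_S(v(S)-\bq\cdot\bone_S)$ over the simplex, which is attained at a vertex $\lambda=\bone_{S^*}$, giving $\pi^{+}(\bq)=\pi_v(\bq)$ directly for every real $\bq$. This is cleaner: it avoids the rational-approximation scaffolding entirely and also makes it transparent that the identity holds pointwise without any $\eps$ or continuity step. The only caveat worth making explicit (which the paper also glosses over) is that Step 2 — GS $\Rightarrow$ submodular indirect utility — is the one place where genuine GS structure is used, and it is cited rather than proved in both accounts; your sketch of the two-coordinate exchange argument is the right skeleton if one wanted to make it self-contained.
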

\begin{proof}
    We first assume $\bq \in \Q^n$. When $\bq \in \Q^n$, we know $\bx = \arg \max_{\bx' \in \R^n} v^{+}(\bx') - \bq \cdot \bx'$ is in $\Q^n$. Thus, from the previous lemma and corollary, we have $v^+(\bx) = \tilde v^+(\bx)$. Further, for some $\eps > 0$, we have $v^{+\eps}(\bx) = v^+(\bx)$. Then, we have,
    \begin{align*}
        \pi^{+}(\bq) &= \max_{\bx \in \Q_+^n} v^{+\eps}(\bx) - \bq \cdot \bx \\ 
        &= \max_{\bx \in \Q_+^n} \eps \max_{ \bx / \eps  = \bx_1 + ... + \bx_{1 / \eps }}  \sum_{i = 1}^{ 1 / \eps} v(\bx_i) - \bq \cdot  \bx_i \\
        &= \eps \max_{\bx_1, ..., \bx_{1/\eps} \in \set{0, 1}^n } \sum_{i = 1}^{1 / \eps} v(\bx_i) - \bq \cdot  \bx_i \\
        &= \eps ( 1/ \eps) \pi(\bq) \\
        &= \pi(\bq)
    \end{align*}
     Thus, we have shown that $\pi^+$ coincides with $\pi$ on all rational vectors $\bq$. Due to continuity, it does on all real vectors $\bq$. Since $v$ is a gross-substitute valuation, $\pi$ is submodular. Therefore, $\pi^+$ is submodular, as desired. 
\end{proof}

\subsection{Generalized Flow is Linear Substitute}

To show that $v$ is LS, from Lemma~\ref{lem: LS-equal}, we know that it is equivalent to showing its dual profit $\pi(\bq)$ is submodular w.r.t price vector $\bq \in \R^n$. The following lemma is due to 
Fleischer \cite{Fle10} based on LP primal-dual. For a combinatorial proof, see~\cite{li2011generalized}.

\begin{lemma} [Fleischer \cite{Fle10}]
    \label{lem:Fle-supmodular}
    Let $g(S)$ denote the (cost of the) min-cost flow with non-negative excess at the set of nodes $S \subseteq V \setminus V^-$, that is the min-cost flow with excess vector $b'$ defined as $b'_v = b_v$ for $v \in S \cup V^-$; and $b'_v = 0$ for the other nodes. Then, $g$ is a supermodular.
\end{lemma}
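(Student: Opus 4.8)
The plan is to read off supermodularity of $g$ directly from the LP formulation of min-cost generalized flow, using that the set $S$ enters the program only through the right-hand sides of the node-balance constraints. Write $g(S)=\min\{\bc\cdot\vecf : (M\vecf)_v\le b'_v(S)\ \forall v\in V,\ 0\le\vecf\le\bu\}$, where $(M\vecf)_v=\sum_w(f_{vw}-\gamma_{wv}f_{wv})$ is the net out-flow at node $v$, and $b'_v(S)=b_v$ for $v\in S\cup V^{-}$ while $b'_v(S)=0$ otherwise. The one structural fact we need is that $S\mapsto b'(S)$ is a lattice homomorphism into $\R^{V}$: because sources have $b_v\ge 0$ and $V^{-}$ is always included, $b'(S\cap T)=b'(S)\wedge b'(T)$ and $b'(S\cup T)=b'(S)\vee b'(T)$, so in particular $b'(S\cap T)+b'(S\cup T)=b'(S)+b'(T)$ coordinatewise. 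Hence proving the supermodular inequality $g(S)+g(T)\le g(S\cap T)+g(S\cup T)$ reduces to showing that the optimal value of this LP, viewed as a function of the node right-hand side, is supermodular along the sublattice $\{b'(S)\}$.

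For that I would argue by LP duality, following \cite{Fle10}. The dual feasible region $\{(\by,\bz)\ge 0 : M^{\top}\by+\bz\ge -\bc\}$ does not depend on the right-hand side, and $g(S)=\max\{-b'(S)\cdot\by-\bu\cdot\bz : (\by,\bz)\ \text{dual feasible}\}$; for a fixed dual solution the objective $-b'(S)\cdot\by-\bu\cdot\bz$ is affine in $\bone_S$, so the crux is to exhibit a single dual solution $(\by^{*},\bz^{*})$ that is simultaneously optimal for the two incomparable instances $g(S)$ and $g(T)$. This is precisely where the min-cost-flow structure enters, via monotone comparative statics of optimal node potentials as the node right-hand sides are relaxed. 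Granting such a common optimal dual, weak duality gives the lower bounds $g(S\cap T)\ge -b'(S\cap T)\cdot\by^{*}-\bu\cdot\bz^{*}$ and $g(S\cup T)\ge -b'(S\cup T)\cdot\by^{*}-\bu\cdot\bz^{*}$ (the dual is feasible, hence valid, for these two minimization problems), while it is exactly optimal for $g(S)$ and $g(T)$; adding the two lower bounds and using $b'(S\cap T)+b'(S\cup T)=b'(S)+b'(T)$ yields $g(S\cap T)+g(S\cup T)\ge g(S)+g(T)$, which is supermodularity. An equivalent combinatorial route, as in \cite{li2011generalized}, is to uncross optimal flows: starting from optimal generalized flows for $g(S\cap T)$ and $g(S\cup T)$, decompose them into the canonical building blocks of a generalized flow — flow-generating cycles, flow-absorbing cycles, and source-to-sink paths — and re-apportion these blocks between the two target instances so as to produce flows feasible for $g(S)$ and $g(T)$ of no larger total cost.

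The main obstacle is the gain factors $\gamma_e$. Without them $M$ is a network matrix, a feasible flow decomposes cleanly into paths and cycles, the optimal dual potentials are shortest-path labels, and both routes above are classical. With gains, $(M\vecf)$ is not conserved along a path, a feasible generalized flow only decomposes into the more delicate flow-generating/absorbing structures, and the capacity-and-label bookkeeping needed either to re-apportion these structures between two instances, or to show that the optimal canonical labels can be chosen consistently across $g(S)$, $g(T)$, $g(S\cap T)$, $g(S\cup T)$, is the technical heart of the proof. I would therefore isolate this as a standalone lemma — re-combining two optimal generalized flows whose boundary vectors form a lattice pair, with feasibility and cost-preservation verified there — after which the reduction in the first paragraph and the weak-duality bookkeeping in the second are routine.
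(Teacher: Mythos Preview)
The paper does not give its own proof of this lemma: it is quoted as a known result, attributed to Fleischer~\cite{Fle10} (``based on LP primal-dual'') with a pointer to~\cite{li2011generalized} for a combinatorial proof. So there is no in-paper argument to compare against; your two proposed routes --- an LP-duality argument and a flow-uncrossing argument --- are exactly the two proof styles the paper defers to.

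On the substance of your sketch: the reduction in your first paragraph is clean and correct --- $S$ enters only through $b'(S)$, and $S\mapsto b'(S)$ is a lattice homomorphism because the sources have $b_v\ge 0$ and $V^{-}$ is always active. The duality bookkeeping in your second paragraph is also correct \emph{conditional} on the existence of a single dual $(\by^{*},\bz^{*})$ optimal for both $g(S)$ and $g(T)$. That existence claim is the entire content of the lemma, and you rightly flag it as the crux; but ``monotone comparative statics of optimal node potentials'' is a name for the phenomenon, not an argument, and for generalized flow (with gains) it is not obvious that the set of optimal duals varies monotonically enough in $b'$ for two incomparable right-hand sides to share an optimizer. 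If you pursue this route you will need to actually establish that lattice structure on optimal potentials, which is essentially what Fleischer does. Your alternative combinatorial plan --- decompose optimal flows for $g(S\cap T)$ and $g(S\cup T)$ into flow-generating/absorbing cycles and paths, then re-apportion --- is the route of~\cite{li2011generalized}, and your assessment that the gain factors make the capacity bookkeeping delicate is accurate. Either way, what you have is a correct outline with the hard step correctly located but not discharged; since the paper itself treats the lemma as a black box, that is entirely appropriate here.
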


Intuitively, $g$ measures the cost of the min-cost flow of the underlying network $G$ when we `switch' on a subset $S$ of sources $V^{-}$; all the other sources do not generate any flow. As we add more sources, they create more congestion, which causes supermodularity. 

We show that Lemma~\ref{lem:Fle-supmodular} implies the valuation function we consider is LS. The proof idea is relatively simple. Due to Lemma~\ref{lem: LS-equal}, it suffices to show the dual profit function is submodular. Towards that end, we simulate increasing the price of a job's processing unit by adding/activating new source nodes associated with the job.

We first show submodularity when the prices are rational numbers, i.e., $\bq \in \eps \Z^n$, for any $\eps > 0$ such that $1 / \eps$ is an integer, and extend it to arbitrary prices.

\begin{lemma}
    \label{lem: GF-eps-Sub}
    For any $\eps \in 1 / \Z_+$, $\pi^{+\eps}(\bq) = \max_{\bx \in \R_+^n} v(\bx) - \bq \cdot \bx$ is submodular where $\bq \in \eps \Z_+^n$.
\end{lemma}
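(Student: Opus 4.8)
The plan is to derive this from the conjugacy characterization of Lemma~\ref{lem: LS-equal} together with Fleischer's supermodularity of min-cost generalized flow (Lemma~\ref{lem:Fle-supmodular}). Recall that, by Lemma~\ref{lem: LS-equal}, proving $v$ linear substitute (Theorem~\ref{thm:gf-ls}) is the same as proving its dual profit $\pi(\bq)=\max_{\bx}v(\bx)-\bq\cdot\bx$ submodular; the present lemma is the step that establishes this on the sub-lattice $\eps\Z_+^n$ of rational price vectors, and a companion lemma then extends submodularity to all of $\R_+^n$ by continuity, exactly as in the gross-substitute case (cf.\ Lemma~\ref{lem: GS-eps-Sub}). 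So it suffices to identify $\pi^{+\eps}$, up to sign and an additive constant, with the supermodular min-cost flow function $g$ of Lemma~\ref{lem:Fle-supmodular} evaluated along a monotone family of source sets indexed by $\bq$.

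First I would unwind $\pi^{+\eps}$. Since $v(\bx)=\max\{\bc\cdot\bz:\bz\in\cP,\ \bz\le\bx\}$ with $\cP$ the generalized-flow polytope of Eqn.~(\ref{eqn:p-gf}) (the excess bound at job node $j$ being $x_j$), the optimal $\bx$ in $\pi^{+\eps}(\bq)=\max_{\bx\ge 0}v(\bx)-\bq\cdot\bx$ can be taken equal to the vector of net outflows from the job nodes, so that $\pi^{+\eps}(\bq)$ becomes a max-cost generalized flow in which one unit of net outflow from job node $j$ nets $c_j-q_j$. Next comes the gadget: to each job node $j$ I attach an infinite pool of auxiliary source nodes $s_{j,1},s_{j,2},\ldots$, each with excess capacity $\eps$, wired into $j$'s processing route (together with a zero-cost disposal arc) so that (i) ``switching on'' the prefix $\{s_{j,k}:k\le m_j\}$ forces exactly $m_j\eps$ auxiliary units to crowd out job $j$'s flow, which is precisely the effect of charging $q_j=\eps m_j$ per processed unit of $j$, (ii) nothing else in the instance changes, and (iii) the cost stays bounded once all of $j$'s flow has been displaced. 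Writing $S(\bq):=\{s_{j,k}:k\le q_j/\eps\}$, the construction should yield $\pi^{+\eps}(\bq)=C-g(S(\bq))$ for a finite constant $C$, with $g$ the function of Lemma~\ref{lem:Fle-supmodular} on the augmented network.

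The conclusion is then formal: prefixes of a chain form a sublattice of $2^{\{s_{j,k}\}}$, so the map $\bq\mapsto S(\bq)$ sends componentwise $\vee$ to $\cup$ and componentwise $\wedge$ to $\cap$; plugging $S(\bq),S(\bq')$ into $g(S)+g(T)\le g(S\cup T)+g(S\cap T)$ and negating gives $\pi^{+\eps}(\bq)+\pi^{+\eps}(\bq')\ge\pi^{+\eps}(\bq\vee\bq')+\pi^{+\eps}(\bq\wedge\bq')$, i.e.\ submodularity of $\pi^{+\eps}$.

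The hard part will be the gadget and its bookkeeping. One must choose arc orientations, gain factors, and capacities so that one extra active auxiliary source for $j$ maps to exactly one extra $\eps$ of per-unit price for $j$ and to nothing else; one must ensure the disposal structure makes $g$ saturate, since otherwise $C$ would be infinite and the identity vacuous; one must get the sign and the monotone direction right, so that raising a price enlarges (rather than shrinks) the switched-on set, which is what turns $g$'s supermodularity into submodularity of $\pi^{+\eps}$; and one must carry the gain factors correctly through the ``max-profit $=$ $C-$min-cost'' translation, which is the only place the argument genuinely differs from ordinary (non-generalized) flow.
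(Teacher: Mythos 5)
Your overall plan---augment the generalized-flow network so that the dual profit $\pi^{+\eps}$ becomes (up to sign and a possible constant) the min-cost-flow value $g$ of Lemma~\ref{lem:Fle-supmodular} restricted to a family of ``switched-on'' source sets $S(\bq)$, then push the lattice structure of $\eps\Z_+^n$ through $\bq\mapsto S(\bq)$ and invoke Fleischer's supermodularity---is exactly the paper's strategy, and the final lattice-and-negation bookkeeping is formally correct for the identity $\pi^{+\eps}=C-g\circ S$ that you posit.

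The gap is the gadget, which you flag yourself, and as sketched it would not close. First, ``crowding out'' $m_j\eps$ units from $j$'s route is a \emph{capacity} perturbation: it changes the feasible polytope $\cP$, not the linear objective coefficient attached to $z_j$. But $\pi^{+\eps}(\bq)=\max_{\bz\in\cP}(\bc-\bq)\cdot\bz$ is a change of cost with $\cP$ fixed, and the displacement effect is nonlinear in $x_j$ (it vanishes whenever $j$ is away from the relevant bottleneck), so these are genuinely different knobs. Second, the monotonicity is backwards: $g$ is nonincreasing in $S$ (switching on more sources only relaxes excess constraints and enlarges the feasible set), so with $S(\bq)$ a prefix that grows with $\bq$, your $C-g(S(\bq))$ would be \emph{nondecreasing} in $\bq$, whereas $\pi^{+\eps}$ must be nonincreasing. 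Third, the need for an additive $C$ is itself a warning sign. The paper instead attaches to each job node $j$ a one-way chain $\cdots\to j_2\to j_1\to j_0\to j$, where the arc $(j_0,j)$ has cost $-c_j$ and every chain arc $(j_i,j_{i-1})$ has cost $\eps$ (all new gains $1$, infinite capacities and excesses, and all original arcs of $G$ reset to cost $0$), and defines $S_\bq$ to be the \emph{suffix} $\bigcup_j\{j_k: k\geq q_j/\eps\}$. Then the cheapest entry into the network through job $j$ costs exactly $q_j-c_j$ per unit, so $-\pi^{+\eps}(\bq)=g(S_\bq)$ with no constant; raising $q_j$ \emph{shrinks} $S_\bq$; suffixes map $\vee\mapsto\cap$ and $\wedge\mapsto\cup$; and Fleischer's supermodularity of $g$ yields submodularity of $\pi^{+\eps}$ after the sign flip. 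The suffix (not prefix) direction is what makes the per-unit price appear in the arc cost rather than as a displaced quantity, and what makes the monotonicity come out right.
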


\begin{proof}
     We construct an auxiliary network $\tilde G$ such that the min-cost flow of $\tilde G$ captures the dual profit $-\pi^{\eps}$ as follows.
    We extend $G$ to  $\tilde G$ by adding nodes and arcs as follows. Without declaration, the flow gain $\gamma_e$ is $1$ and the capacity $u_e$ is $\infty$ for all new arcs $e$, and the excess $b_v$ are $\infty$ for all new nodes. Consider the source node for job $j$ in $G$. Let $q_{j}$ be its price. First, add a new node $j_{0}$ to $G$ and connect it to $j$ by an arc $(j_{0},j)$ with cost $- c_{j}$. Then we add a series of  arcs: for every $j \in \Z_+$, we add a new node $j_i$, and connect it to $j_{i-1}$ with cost $\eps$. Finally, we set the cost of each arc in $G$ to be 0.

    Then we let $g(S)$ denote the min-cost flow with nonnegative excess at the set of nodes $S$, that is the excess $b'$ defined as $b_v' = b_v$ for all $v \in S \cup V^-$; and $b_v' = 0$ for the other nodes. Now fix any price vector $\bq \in \eps \Z_+^n$. Let $S_{\bq}$ be a set of nodes that contains $\set{j_{q_j / \eps}, j_{q_j / \eps + 1}, ...}$ for each $j \in J$. Then we notice that $-\pi^{+\eps}(\bq)$ is equal to $g(S_{\bq})$. Consider two price vectors $\bq, \bq' \in \eps \Z_+^n$ and $\bq \leq \bq'$, and a job $j$. Note that $\pi^{+\eps}(\bq + \eps \bone_j) - \pi^{+\eps}(\bq) = - g(S_{\bq + \eps \bone_j}) + g(S_{\bq})$ and $S_{\bq + \eps \bone_j} + \set{j_{q_j / \eps}} = S_{\bq}$. Because $\bq \leq \bq'$, we have $S_{\bq'} \subseteq S_{\bq}$. Then the lemma follows from the supermodularity of $g$ due to Lemma~\ref{lem:Fle-supmodular}. \qedhere 
\end{proof}

\begin{lemma}
    The dual profit $\pi : \R_+^n \to \R$ is submodular.
\end{lemma}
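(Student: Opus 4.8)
The plan is to bootstrap the submodularity of $\pi$ from rational price lattices, where it is already known by Lemma~\ref{lem: GF-eps-Sub}, to all of $\R_+^n$ by a density-and-continuity argument, and then conclude that $v$ is LS via Lemma~\ref{lem: LS-equal}, which proves Theorem~\ref{thm:gf-ls}.

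First I would record that $\pi$ is real-valued and continuous on $\R_+^n$. Because the polytope $\cP$ in Eqn.~(\ref{eqn:p-gf}) is bounded, $v$ is bounded, so $0 \le \pi(\bq) = \max_{\bx \ge \textbf{0}}\big(v(\bx) - \bq\cdot\bx\big) \le \max_{\bx} v(\bx) < \infty$ for every $\bq \geq \textbf{0}$; moreover, boundedness of $\cP$ lets us confine the maximizing $\bx$ to a fixed compact box, so $\pi$ is the maximum over a compact set of a function jointly continuous in $(\bx,\bq)$ and hence is continuous in $\bq$ (equivalently, $\pi$ is a finite convex function on $\R_+^n$, being a supremum of affine functions). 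I would also recall that $(\bq,\bq')\mapsto\bq\vee\bq'$ and $(\bq,\bq')\mapsto\bq\wedge\bq'$ are continuous.

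Now fix arbitrary $\bq,\bq'\in\R_+^n$. For each integer $m\ge 1$ put $\eps_m := 1/m! \in 1/\Z_+$, and let $\bq^{(m)},\bq'^{(m)}$ be the coordinatewise roundings of $\bq,\bq'$ down to multiples of $\eps_m$, i.e.\ $q^{(m)}_i := \eps_m\lfloor q_i/\eps_m\rfloor$ and similarly for $\bq'^{(m)}$. Then $\bq^{(m)},\bq'^{(m)}\in\eps_m\Z_+^n$, and hence also $\bq^{(m)}\vee\bq'^{(m)},\ \bq^{(m)}\wedge\bq'^{(m)}\in\eps_m\Z_+^n$, so Lemma~\ref{lem: GF-eps-Sub} applies at all four of these points and yields
\[
\pi(\bq^{(m)}) + \pi(\bq'^{(m)}) \ \ge\ \pi(\bq^{(m)}\vee\bq'^{(m)}) + \pi(\bq^{(m)}\wedge\bq'^{(m)}).
\]
Since each coordinate of $\bq^{(m)}$ (resp.\ $\bq'^{(m)}$) differs from the corresponding coordinate of $\bq$ (resp.\ $\bq'$) by less than $\eps_m\to 0$, we have $\bq^{(m)}\to\bq$, $\bq'^{(m)}\to\bq'$, and therefore $\bq^{(m)}\vee\bq'^{(m)}\to\bq\vee\bq'$ and $\bq^{(m)}\wedge\bq'^{(m)}\to\bq\wedge\bq'$. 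Passing to the limit $m\to\infty$ in the displayed inequality, using continuity of $\pi$, gives $\pi(\bq)+\pi(\bq')\ge\pi(\bq\vee\bq')+\pi(\bq\wedge\bq')$. As $\bq,\bq'$ were arbitrary, $\pi$ is submodular on $\R_+^n$, and Lemma~\ref{lem: LS-equal} then shows that $v$ is linear substitute, establishing Theorem~\ref{thm:gf-ls}.

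The only delicate point is the continuity of $\pi$ on the boundary of $\R_+^n$, since a finite convex function need not be continuous there in general; I expect to handle it via the compact-domain argument above --- using boundedness of $\cP$ to trap the optimal $\bx$ in a fixed box and then invoking a Berge/Danskin-type continuity statement for the value of a parametric maximization --- rather than relying on convexity alone. Everything else in the argument is routine.
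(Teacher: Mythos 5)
Your argument is correct and is essentially the same as the paper's: both establish submodularity of $\pi$ on the rational lattice $\eps\Z_+^n$ via Lemma~\ref{lem: GF-eps-Sub} and then pass to all of $\R_+^n$ by density of rationals and continuity of $\pi$, the only difference being that you take a direct limit along roundings $\eps_m\to 0$ while the paper phrases the density step as a proof by contradiction. Your extra care about continuity of $\pi$ on the boundary of $\R_+^n$ (via Berge-type continuity of the value of a parametric maximum over a fixed compact box, rather than relying on convexity alone) is a welcome refinement that the paper glosses over.
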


\begin{proof}
    For any feasible $\bx$ of $v$, we can write $\pi(\bq) = \max_\bx (\bc - \bq) \cdot \bx$. Then, it is clear that $\pi$ is continuous.
    
    Suppose that $\pi$ is not submodular. Let $\bq, \bb \in \R_+^n$ be the price vectors that violate the submodularity. Then, we have 
    $$ \pi(\bq) + \pi(\bb) < \pi(\bq \vee \bb) + \pi(\bq \wedge \bb) $$
    Let $\Delta$ be the difference of the above inequality. 
    From the continuity of $\pi$, for some infinitesimally small $\eps' > 0$,  we have $|\pi(\bq) - \pi(\bq')|, |\pi(\bb) - \pi(\bb')|, |\pi(\bq \vee \bb) - \pi(\bq' \vee \bb')| \ \text{and} \ |\pi(\bq' \wedge \bb) - \pi(\bq' \wedge \bb')|$ are less than $\eps'$, for some $\delta$ such that $\norm{\bq - \bq'}_1 < \delta$ and $\norm{\bb - \bb'}_1 < \delta$ where $\bq',\bb' \in \eps \Z_+^n$ for some $\eps \in 1 / \Z_+$. Since $\Delta - 4 \eps' > 0$, we have 
    $$\pi^{+\eps}(\bq') + \pi^{+\eps}(\bb') < \pi^{+\eps}(\bq' \vee \bb') + \pi^{+\eps}(\bq' \wedge \bb')$$  
    This would contradict the submodularity of $\pi^{+\eps}$ from Lemma~\ref{lem: GF-eps-Sub}, and thus we have the lemma. \qedhere
\end{proof}
\section{Gradient Descent is Not a Panacea}
    \label{sec:gd-bad-bdcast}

In this section, we briefly show that gradient descent is not $O(1)$-competitive with any $O(1)$-speed. In the (pull-based) broadcast scheduling \cite{aksoy1998scheduling,kalyanasundaram2000scheduling,bansal2005approximating,bansal2010better,im2012talg}, the server stores some $n$ pages of useful information. At each time some jobs arrive, each asking for a page. For simplicity, assume all pages are unit-sized. When the server broadcasts a page, it satisfies all jobs requesting the same page and it can broadcast only one page at a time.
Scalable algorithms are known for this problem \cite{bansal2010better,im2012talg}.

Suppose gradient descent has $s$-speed and $s$ is an integer. Consider the following instance. There are special pages, $e_1, e_2, \ldots, e_s$. At each time, a total of $2s$ jobs arrive, two jobs asking for each special page. Further, in every $s+1$ time steps, one job arrives asking for a distinct page. Suppose the instance ends at time $(s+1)T$. At each time, gradient descent processes the $2s$ jobs requesting the special pages, having no time to work on any other jobs. Thus, at time $T$, it has $\Omega(T)$ jobs alive, which have $\Omega(T)$ flow time on average. Therefore, gradient descent's total flow time is $\Omega(T^2)$. On the other hand, the adversary can repeat the following: Broadcast each of the special $s$ pages and then the unique other page requested. It is easy to see that it satisfies every job within $O(1)$-time step. Therefore, its total flow time is $O(T)$.

\section{Making Gradient Descent Run in Polynomial Time}
    \label{sec:rt}

The high-level idea is to replace time slots with intervals of exponentially increasing lengths to reduce the number of variables to consider. This idea is commonly used in approximation algorithms to make  a time-indexed LP compact.  While it is straightforward to see it only losing $1+\eps$ factor in the approximation ratio, here we should be more careful as we need to show the change of the approximate LP optimum (residual optimum) is $1+\eps$-approximate as opposed to what we obtain with the exact residual optimum. 

We define time intervals $\cI := \{ I_h := [a_h, a_{h+1}) \; | \; h \geq 1\}$, where $a_1 < a_2 < \ldots $ are the integers in $[\frac{10}{\rho^2}] \cup \{  \lfloor \frac{10}{\rho^2} (1+\rho)^l \rfloor \; | \; l \geq 1  \}$, in increasing order, for some small constant $\rho > 0$. We will set $\rho$ later and we will assume $1/ \rho$ is an integer. 
  Let $|I|$ denote the length of the interval $I$; more precisely, the number of integer time steps in $I$. 
  The following properties of $\cI$ are immediate from the construction.

\begin{observation}
    \label{o:rt-1}
The collection $\cI$ of intervals has the following properties. Let $I_{H'}$ be the first interval of length greater than 1.
\begin{enumerate}
    \item For all $h \geq 1$, $|I_h | \leq |I_{h+1}|$.
    \item All intervals $I_1, I_2, \ldots I_{H'-1}$ have length 1.
    \item For all $h \geq H'$, $|I_h| \geq \frac{1}{1+2\rho} |I_{h+1}|$.
\end{enumerate}    
\end{observation}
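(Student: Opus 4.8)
The plan is to split the breakpoint sequence $a_1 < a_2 < \cdots$ into a \emph{linear regime} — the integers in $[1,10/\rho^2]$ — and a \emph{geometric regime} — the points $b_l := \lfloor c (1+\rho)^l \rfloor$ for $l \ge 1$, where I write $c := 10/\rho^2$ (an integer since $1/\rho\in\Z$) and $r := 1+\rho$. Since $10/\rho$ is also an integer, $b_1 = \lfloor c + c\rho\rfloor = c + 10/\rho > c$, and more generally $b_l \ge b_1 > c$ for all $l \ge 1$; hence the geometric breakpoints contribute nothing inside $[1,c]$, so the breakpoints lying in $[1,c]$ are exactly $1,2,\dots,c$. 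Consequently $a_h = h$ for $1\le h\le c$, the intervals $I_1,\dots,I_{c-1}$ all have length $1$, and $I_c = [c, c+10/\rho)$ has length $10/\rho > 1$ (using $\rho\le 1$). This identifies $H' = c = 10/\rho^2$ and establishes Property 2; it also shows $a_{H'+k} = b_k$ for all $k\ge 0$ (with $b_0 := c$), once the $b_l$ are seen to be strictly increasing, which follows from the estimate below.

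For the geometric regime I would use the elementary floor bound $c r^{k+1} - c r^k - 1 < b_{k+1}-b_k < c r^{k+1} - c r^k + 1$. Writing $g_k := |I_{H'+k}| = b_{k+1}-b_k$ and $u_k := c r^k \rho = c r^{k+1} - c r^k$, this reads
\[
u_k - 1 \;<\; g_k \;<\; u_k + 1 .
\]
The one quantitative fact driving everything is $c\rho^2 = 10$, hence $u_k\rho = c r^k\rho^2 = 10(1+\rho)^k \ge 10$. In particular $u_0 = 10/\rho \ge 10$, so $g_k > u_k - 1 \ge u_0 - 1 \ge 9 > 0$, which gives the strict monotonicity of the $b_l$ used above.

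Property 1 on $h\ge H'$ now follows from $g_{k+1} > u_{k+1} - 1 = u_k + u_k\rho - 1 \ge u_k + 9 > u_k + 1 > g_k$; together with $|I_{H'-1}| = 1 < 10/\rho = |I_{H'}|$ and the length-$1$ intervals below $H'$, this yields Property 1 for every $h$. For Property 3 it suffices to show $g_{k+1} \le (1+2\rho)g_k$; since $g_{k+1} < u_k r + 1$ and $(1+2\rho)g_k > (1+2\rho)(u_k-1)$, it is enough that $u_k r + 1 \le (1+2\rho)(u_k - 1)$, which rearranges to $u_k\rho \ge 2 + 2\rho$ — true because $u_k\rho \ge 10$ and $\rho \le 1$.

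I expect the only delicate point is the seam between the two regimes: verifying that no geometric breakpoint lands in $\{1,\dots,c\}$ and that the transitional interval $I_{H'-1}$ (length $1$) is shorter than $I_{H'}$ (length $10/\rho$), so monotonicity is not violated there. The geometric estimates themselves are routine once $c\rho^2 = 10$ is in hand, and I would only need to record the standing assumption $\rho \le 1$ (indeed $\rho$ is a small constant with $1/\rho\in\Z$), used both for $10/\rho > 1$ and for $2 + 2\rho \le 10$.
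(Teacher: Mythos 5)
Your proof is correct and complete, and it is essentially the same argument the paper has in mind — the paper states that the observation is "immediate from the construction" and gives no further proof, so you are simply supplying the omitted verification. Your bookkeeping is sound: $c\rho$ and $c\rho^2$ are integers (hence $H'=c$ and the seam interval $I_{H'}=[c,c+10/\rho)$ has length exactly $10/\rho=u_0$, so the bound $u_k-1<g_k<u_k+1$ also holds at $k=0$), the inequality $u_k\rho\ge 10$ drives both monotonicity and the $(1+2\rho)$-ratio bound, and $\rho\le 1$ is automatic from $1/\rho\in\Z_{>0}$ so the standing assumption you flag costs nothing.
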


In the residual optimum, we will pretend that time steps in each interval in $\cI$ are all identical. Towards this end, define $g(t)$ to be $a_h$ where $I_h := [a_h, a_{h+1})$ is the unique interval in $\cI$ including $t$. So, we will consider the following LP.
\begin{align}
\mbox{Approximate Residual (Time-Indexed) LP $\cL^{\cP}_\bx$:} \quad \quad    
     f(\bx):= \min \sum_{j \in A(t)} & \frac{w_j}{p_j} \sum_{\tilde t \geq 1} g(\tilde t) \cdot z_j(\tilde t) \label{lp:rt}\\
     \sum_{\tilde t \geq 1} z_j(\tilde t) \dd \tilde t &= x_j \quad  \forall j \in A(t) \nonumber \\
     \bz(\tilde t) &\in \cP \quad  \forall \tilde t \geq 1 \nonumber 
 \end{align}

Note that in reality, $g(\tilde t)$ is the same for all $\tilde t \in I_h$, so we can aggregate the variables $\{z_j(\tilde t)\}_{\tilde t \in I_h}$ into one; this approximate LP is of compact size. However, we take this full expanded view for the sake of analysis, assuming wlog that $z_j(\tilde t)$ is the same for all $\tilde t$ in each $I_h$, for any fixed job $j$.

It is easy to see that replacing $\tilde t$ with $g(\tilde t)$ does not change the residual objective's supermodularity and monotonicity. The only properties we need to check is that the residual objective's change rate remains the same within $1+\eps$ factor, under GD and the adversary, which can be easily offset by extra $1+O(\eps)$-speed augmentation. So, henceforth we aim to show it. 

\paragraph{GD remains effective in decreasing the residual optimum.}
We first show that processing following the optimum solution to the above LP in the first time step decreases the objective by an amount that is almost equal to the total fractional weight of all jobs. Let $\bz := \langle \bz_1, \bz_2, \ldots, \rangle$ be the optimum solution to $\cL_\bx$. We may use inequalities in the subscript to consider a partial schedule of $\bz$; for example, $\bz_{\geq 2} :=\langle \bz_2, \bz_3, \ldots, \rangle  $.

\begin{lemma}
    \label{lem:rt-1}
    Let $\bz := \langle \bz_1, \bz_2, \ldots, \rangle$ be the optimum solution to $\cL_\bx$. Then, 
    $f(\bx) - f(\bx - \bz_1) \geq \cL_\bx(\bz_{\geq 1})  - \cL_{\bx - \bz_1}(\bz_{\geq 2}) \geq \frac{1}{1+2\rho}\tilde W(t)$.
\end{lemma}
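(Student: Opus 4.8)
The plan is to mimic the argument of Lemma~\ref{lem:GD-redisual} and Lemma~\ref{lem:second-theorem-gd}, but account for the fact that the cost coefficient is now $g(\tilde t)$ rather than $\tilde t$, which causes each job's cost to drop by one full \emph{interval length} (rather than exactly one) when we shift the schedule by one time step, and interval lengths can be large. So the first inequality $f(\bx) - f(\bx - \bz_1) \geq \cL_\bx(\bz_{\geq 1}) - \cL_{\bx-\bz_1}(\bz_{\geq 2})$ is immediate: $\bz_{\geq 2}$ (the schedule $\bz$ with its first time step removed, reindexed) is a feasible residual solution for remaining sizes $\bx - \bz_1$, so $f(\bx - \bz_1) \leq \cL_{\bx - \bz_1}(\bz_{\geq 2})$, while $f(\bx) = \cL_\bx(\bz_{\geq 1})$ by optimality of $\bz$. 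The work is in lower-bounding $\cL_\bx(\bz_{\geq 1}) - \cL_{\bx - \bz_1}(\bz_{\geq 2})$ by $\frac{1}{1+2\rho}\tilde W(t)$.

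First I would write out the difference explicitly. Recall we assume wlog that $z_j(\tilde t)$ is constant across all $\tilde t$ inside a fixed interval $I_h$. For a job $j$, its contribution to $\cL_\bx(\bz_{\geq 1})$ is $\frac{w_j}{p_j}\sum_{\tilde t \geq 1} g(\tilde t) z_j(\tilde t)$, and after the shift its contribution to $\cL_{\bx - \bz_1}(\bz_{\geq 2})$ is $\frac{w_j}{p_j}\sum_{\tilde t \geq 1} g(\tilde t) z_j(\tilde t + 1)$. When we shift the schedule left by one unit, a mass $z_j(\tilde t)$ that was charged at rate $g(\tilde t)$ is now charged at rate $g(\tilde t - 1)$; the total saving for job $j$ is $\frac{w_j}{p_j}\sum_{\tilde t \geq 2}\big(g(\tilde t) - g(\tilde t - 1)\big) z_j(\tilde t)$, plus the fact that $z_j(1)$ is entirely removed (so no cost). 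Now $g(\tilde t) - g(\tilde t - 1)$ is either $0$ (when $\tilde t$ and $\tilde t -1$ lie in the same interval) or $|I_{h-1}|$ (when $\tilde t = a_h$ is the left endpoint of $I_h$, so $\tilde t - 1 = a_h - 1$ lies in $I_{h-1}$). Using the constancy assumption, the mass of job $j$ processed at the left endpoint of $I_h$ is $z_j(a_h)$, and the total mass processed throughout $I_h$ is $|I_h| z_j(a_h)$. Hence the mass that ``crosses an interval boundary'' with a jump of $|I_{h-1}|$ is exactly $z_j(a_h)$, contributing $|I_{h-1}| z_j(a_h)$ to the saving, whereas the contribution of the interval $I_h$ to $\tilde W$-style accounting is proportional to $|I_h| z_j(a_h)$. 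By Observation~\ref{o:rt-1}(3), $|I_{h-1}| \geq \frac{1}{1+2\rho}|I_h|$ for $h > H'$, and for $h \leq H'$ all intervals have length $1$ so the jump is exactly $1$ (matching the original Lemma~\ref{lem:GD-redisual} case with ratio $1 \geq \frac{1}{1+2\rho}$). Summing over all jobs $j$ with $x_j > 0$ and over all intervals, and using $\sum_{\tilde t \geq 1} z_j(\tilde t) = x_j$ (equivalently $\sum_h |I_h| z_j(a_h) = x_j$), the total saving is at least $\frac{1}{1+2\rho}\sum_j \frac{w_j}{p_j} x_j = \frac{1}{1+2\rho}\tilde W(t)$ (here $x_j = p^A_j(t)$, so $\frac{w_j}{p_j}x_j = \frac{w_j}{p_j}p^A_j(t)$, the fractional weight). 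One small subtlety I'd handle: a job whose entire remaining mass is processed within $I_1$ contributes $0$ to the saving on the left; but this is fine since the statement only needs a lower bound, and in fact jobs finishing in $I_1$ still lose their first time-step's worth of processing which is accounted as part of $\bz_1$ being removed — more carefully, I'd note that removing $\bz_1$ and reindexing shifts \emph{everything} left, so any job with positive remaining mass at time $t$ has its completion ``advanced,'' and the boundary-crossing argument above captures the gain; jobs living only in $I_1$ have $|I_1|=1$ so the ratio bound $\frac{1}{1+2\rho}$ trivially holds.

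The main obstacle I anticipate is bookkeeping the boundary-crossing mass correctly under the wlog constancy assumption — specifically making sure that ``the mass of $j$ at the left endpoint of $I_h$'' equals $\frac{1}{|I_h|}$ times ``the mass of $j$ processed in $I_h$,'' and that when we shift left by one unit, exactly this left-endpoint mass is the part that gets the discount $g(a_h) - g(a_h - 1) = |I_{h-1}|$, while the remaining mass in $I_h$ stays at the same coefficient. This requires being slightly careful that after the shift the schedule $\bz_{\geq 2}$ may no longer be ``constant within each interval,'' but that only matters for feasibility/optimality of the \emph{next} LP, not for the inequality we're proving, where we evaluate $\cL_{\bx - \bz_1}$ directly on the given (possibly non-constant) shifted schedule. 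A clean way to avoid this is to work with the compact variables directly: let $\zeta_{jh}$ be the mass of $j$ processed in interval $I_h$, so the cost is $\sum_{j,h} \frac{w_j}{p_j} a_h \zeta_{jh}$ with $\sum_h \zeta_{jh} = x_j$; shifting left by one time step within the compact model corresponds to moving, for each $h$, a $\frac{1}{|I_h|}$-fraction of $\zeta_{jh}$ into interval $I_{h-1}$'s endpoint region, and the cost decreases by at least $\sum_{j,h}\frac{w_j}{p_j}(a_h - a_{h-1})\frac{\zeta_{jh}}{|I_h|} \cdot (\text{something}) $ — but I think the cleanest exposition keeps the expanded view and just invokes Observation~\ref{o:rt-1}(3) at the right place. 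I would also remark that this lemma, combined with the symmetric upper bound (analogous to the second half of Lemma~\ref{lem:GD-redisual}) and the adversary-side bound (analogous to Lemma~\ref{lem:second-theorem-adv}), yields that all the continuous-change inequalities in the proof of Theorem~\ref{thm:procesing-rate-view} hold with an extra $(1+2\rho)$ factor, absorbed into the speed augmentation by choosing $\rho = \Theta(\eps)$.
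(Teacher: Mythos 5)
Your first inequality and the algebraic identity $\cL_\bx(\bz_{\geq 1}) - \cL_{\bx-\bz_1}(\bz_{\geq 2}) = \sum_j \frac{w_j}{p_j}\sum_{h}(a_h - a_{h-1}) z_j(a_h) = \sum_j \frac{w_j}{p_j}\sum_h |I_{h-1}| z_j(a_h)$ are both correct and match the paper. The gap is in the last step. You argue that for every interval index $h$ the per-interval saving $|I_{h-1}| z_j(a_h)$ is at least $\frac{1}{1+2\rho}$ times the mass $|I_h| z_j(a_h)$, invoking Observation~\ref{o:rt-1}(3) for $h>H'$ and ``ratio $1 \geq \frac{1}{1+2\rho}$'' for $h \leq H'$. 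But this fails at exactly $h=H'$: there $|I_{H'-1}|=1$ while $I_{H'}$ is the \emph{first} interval of length greater than $1$, and by construction $|I_{H'}| = \Theta(1/\rho)$ (it is $\approx \frac{10}{\rho^2}\cdot\rho$), so $|I_{H'-1}|/|I_{H'}| = \Theta(\rho)$, which is nowhere near $\frac{1}{1+2\rho}$. Observation~\ref{o:rt-1}(3) only kicks in for $h-1 \geq H'$, i.e.\ $h > H'$, and Observation~\ref{o:rt-1}(2) says $I_1,\ldots,I_{H'-1}$ have length one but is silent about $I_{H'}$. A pointwise, job-by-interval comparison therefore cannot close this case: if an adversary were allowed to put all mass into $I_{H'}$, the ratio of saving to $\tilde W(t)$ really would be $\Theta(\rho)$.

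The missing ingredient is Claim~\ref{c:rt-2}, which you do not use: because $\bz$ is an \emph{optimal} LP solution, $V(\tau) := \sum_j \frac{w_j}{p_j} z_j(\tau)$ is non-increasing across interval boundaries (otherwise swapping two time steps would strictly decrease the objective). This is exactly what rules out the pathological concentration in $I_{H'}$. The paper's proof isolates the three ranges $\tau<a_{H'}$, $\tau\in I_{H'}$, $\tau\geq a_{H'+1}$, getting a lower bound of
$\sum_{\tau<a_{H'}} V(\tau) + \frac{1}{|I_{H'}|}\sum_{\tau\in I_{H'}} V(\tau) + \frac{1}{1+2\rho}\sum_{\tau\geq a_{H'+1}} V(\tau)$,
and then uses two facts together: (i) by Claim~\ref{c:rt-2}, every $V(\tau)$ with $\tau<a_{H'}$ is at least $V(a_{H'})$, so the first sum is at least $a_{H'} V(a_{H'})$, while the $I_{H'}$ mass is exactly $|I_{H'}| V(a_{H'})$; and (ii) $|I_{H'}| \leq 2\rho\, a_{H'}$, which is the entire reason the interval construction starts with the burn-in prefix $[\,10/\rho^2\,]$ of unit-length slots. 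Together these show the $I_{H'}$ mass is at most a $2\rho$ fraction of the earlier mass, and the $\frac{1}{1+2\rho}$ factor follows. Your proposal does not mention $V$'s monotonicity, does not mention the bound $|I_{H'}| \leq 2\rho a_{H'}$, and implicitly treats $I_{H'}$ as if it had unit length, so it would not survive scrutiny at that index. (Your side remark about jobs entirely within $I_1$ is fine but addresses a non-issue; the real boundary is $I_{H'}$.)
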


Before beginning proving the lemma, we first make a simple observation, which will be useful throughout this section. Intuitively, one wants to complete more fractional weights earlier to minimize the objective, which is formally stated below. 

\begin{claim}
    \label{c:rt-2}
    Let $V(\tau) :=\sum_j \frac{w_j}{p_j} z_j(\tau)$. We have $V(a_h) \geq V(a_{h+1})$.
    
\end{claim}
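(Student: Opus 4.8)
\textbf{Proof plan for Claim~\ref{c:rt-2}.} The plan is to argue by a swapping (exchange) argument against the optimality of $\bz$. Suppose for contradiction that $V(a_h) < V(a_{h+1})$ for some $h$. Since we assumed (wlog) that $z_j(\tilde t)$ is constant across all $\tilde t$ in any fixed interval $I_\ell$, the quantities $V(a_h)$ and $V(a_{h+1})$ are really the ``per-unit-time'' fractional weight completed in intervals $I_h$ and $I_{h+1}$ respectively. First I would isolate a single time step $\sigma \in I_h$ and a single time step $\sigma' \in I_{h+1}$ and consider the profile $\bz(\sigma)$ and $\bz(\sigma')$ of processing rates at those two steps; both lie in $\cP$. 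Because $V(\sigma') > V(\sigma)$, the profile $\bz(\sigma')$ achieves strictly larger $\sum_j \frac{w_j}{p_j} z_j$ than $\bz(\sigma)$ does.

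The key step is to swap the \emph{entire} contents of interval $I_h$ with the entire contents of interval $I_{h+1}$ whenever $|I_h| = |I_{h+1}|$; when $|I_h| \le |I_{h+1}|$ in general, we swap as much as possible — i.e., move the higher-$V$ processing earlier — while respecting the completion constraints $\sum_{\tilde t} z_j(\tilde t) = x_j$. Since the new schedule still uses the same total processing of each job (we only permuted when processing happens), it remains feasible: each $\bz(\tilde t)$ profile used is one that already appeared in $\bz$ and hence lies in $\cP$, and the per-job totals are unchanged. The objective change is governed by $g(\tilde t)$: moving weight $V(a_{h+1})$ worth of fractional completion from the later time $a_{h+1}$ to the earlier time $a_h$ decreases the objective by $(g(a_{h+1}) - g(a_h)) \cdot (\text{fractional weight moved})$, and this strictly exceeds the increase $(g(a_{h+1}) - g(a_h)) \cdot (\text{smaller fractional weight moved the other way})$, since $g(a_{h+1}) = a_{h+1} > a_h = g(a_h)$ and, by hypothesis, more fractional weight flows forward than backward. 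This contradicts optimality of $\bz$, proving $V(a_h) \ge V(a_{h+1})$.

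The main obstacle I anticipate is handling the case of \emph{unequal} interval lengths $|I_h| < |I_{h+1}|$ cleanly: one cannot literally swap the intervals wholesale, so one must be careful to move only a matching amount of ``processing mass'' and verify that the resulting profile at each time step is still a point of $\cP$ (which holds because $\cP$ is convex and downward-closed, so any fractional blend or truncation of existing feasible profiles is feasible) and that the per-job completion totals are preserved. A secondary subtlety is the boundary with the unit-length intervals $I_1, \dots, I_{H'-1}$ from Observation~\ref{o:rt-1}, but there $|I_h| = |I_{h+1}| = 1$ so the clean wholesale-swap argument applies directly. I would therefore structure the proof as: (i) reduce to comparing two generic time steps via the piecewise-constant assumption; (ii) do the equal-length swap; (iii) extend to unequal lengths by moving a bounded amount of mass, invoking convexity/downward-closedness of $\cP$ for feasibility of the intermediate profiles.
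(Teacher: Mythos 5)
Your core idea matches the paper's: you assume $V(a_h) < V(a_{h+1})$ and derive a contradiction to optimality by an exchange argument, exactly as the paper does. The paper's proof is a one-liner—swap $\bz(a_h)$ with $\bz(a_{h+1})$ (two \emph{single} time steps, one from each interval), observe that the LP objective strictly decreases because $g(a_h) = a_h < a_{h+1} = g(a_{h+1})$, and that feasibility is trivially preserved since each profile was already in $\cP$ and the per-job totals $\sum_{\tilde t} z_j(\tilde t)$ are unchanged by a permutation of time steps.

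Where you overcomplicate is in paragraphs two and three. After correctly isolating two single time steps $\sigma \in I_h$ and $\sigma' \in I_{h+1}$ in your first paragraph, you then pivot to swapping the \emph{entire} contents of $I_h$ and $I_{h+1}$, which forces you to worry about unequal interval lengths and to invoke convexity and downward-closedness of $\cP$ for ``intermediate profiles.'' None of that is needed: you never need to move a fractional amount of mass or blend profiles, because swapping two single time slots already produces a strictly better feasible solution. The ``main obstacle'' you anticipate (unequal lengths) and the secondary one (the boundary with unit-length intervals) therefore do not arise. Your invocation of convexity is a red herring—for a pure permutation of which time step holds which rate vector, you only need that the multiset of profiles is unchanged, which you already noted in passing. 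Cutting paragraphs two and three and finishing the argument of paragraph one (compute the objective change $(a_h - a_{h+1})(V(a_{h+1}) - V(a_h)) < 0$, note feasibility is preserved, contradiction) would give a clean, correct proof identical in spirit to the paper's.
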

\begin{proof}
    Otherwise, swapping $\bz(a_h)$ and $\bz(a_{h+1})$ decreases the LP objective, contradicting $\bz$'s  optimality.
\end{proof}

\begin{proof}[\mbox{Proof of  Lemma~\ref{lem:rt-1}}]
    The first inequality, $f(\bx) - f(\bx - \bz_1) \geq \cL_\bx(\bz_{\geq 1})  - \cL_{\bx - \bz_1}(\bz_{\geq 2})$, follows from the observation that $\bz_{\geq 2}$ is a feasible schedule to $\cL_{\bx - \bz_1}$. Let $I_H$ be the last interval where $z_j(a_H)$ has a non-zero value for some $j$. 
    We then have,
    \begin{align*}
    &\cL_\bx(\bz_{\geq 1})  - \cL_{\bx - \bz_1}(\bz_{\geq 2}) \\
    = &\sum_j \frac{w_j}{p_j} \sum_{h \in [H]} (a_h - a_{h-1}) z_j(a_h) 
    = \sum_j \frac{w_j}{p_j} \sum_{h \in [H]} |I_{h-1}| z_j(a_h) \\
    \geq &\sum_j \frac{w_j}{p_j} \sum_{h \leq H'} |I_{h-1}| z_j(a_h) + \sum_j \frac{w_j}{p_j} \sum_{H' < h  \leq H} |I_{h-1}| z_j(a_h)     \\
    \geq &\sum_j  \sum_{h \leq H'} \frac{w_j}{p_j}  z_j(a_h) + \sum_j \frac{w_j}{p_j} \sum_{H' < h  < H} \frac{1}{1+2\rho}|I_{h}| z_j(a_h) \quad \mbox{[Observation~\ref{o:rt-1}]}\\
    = & \sum_{\tau < a_{H'}} V(\tau) + \frac{1}{|I_{H'}|} \sum_{a_{H'} \leq \tau < a_{H'+1}} V(\tau) + \frac{1}{1+2\rho} \sum_{\tau \geq a_{H'+1}} V(\tau) \\
    \geq &\frac{1}{1+2\rho} \sum_\tau  V(\tau)  \\
    = &\frac{1}{1+2\rho} \tilde W(t)  
    \end{align*}    
The penultimate inequality follows from Claim ~\ref{c:rt-2} and $|I_{H'}| \leq 2\rho a_{H'}$.
\end{proof}
By setting $\rho$ small enough (and using slightly more speed augmentation), we will argue that GD effectively decreases the residual optimum. 

\paragraph{No need to recompute the residual optimum until a new job arrives.}
In the above, we only showed the GD effectively decreases the residual objective for one time step. Here, we further argue that we can just process $\bz_1, \bz_2, ...., \bz_{\tilde t}$ until a new job arrives at time $t+ \tilde t$. To this end, we \emph{pretend} that the residual changes by 
\begin{equation}
    \label{eqn:pretend-credit}
\cL_{\bx - \bz_{\leq \tau -1}}(\bz_{\geq \tau})  - \cL_{\bx - \bz_{ \leq \tau}}(\bz_{\geq \tau + 1})
\end{equation}
when we process $z_\tau$ at time $t+\tau$. 
This is because until a new job arrives at time $\tau'$, the residual decreases by at least
$\cL_{\bx - \bz_{\leq 0}}(\bz_{\geq 1})  - \cL_{\bx - \bz_{ \leq \tau'}}(\bz_{\geq \tau' + 1})$, and we can decompose it as a telescopic sum, where we pretend that the residual decreases by the amount shown in   Eqn.~(\ref{eqn:pretend-credit}) at each time $\tau$.

\paragraph{No algorithm can decrease the residual optimum much more than GD.} We need to argue that no algorithm can decrease the residual optimum by more than $\tilde W(t)$ (with speed 1). To see this, fix an algorithm and suppose it processes $\bz_1$ in the first time slot, and the optimum schedule for the remaining jobs sizes, $\bx - \bz_1$, is $\langle \bz_2, \bz_3, \ldots, \rangle$. Our goal is to upper bound,
\begin{align*}
f(\bx) - \cL_{\bx - \bz_1} (\bz_{\geq 2})
\leq &\cL_{\bx - \bz_0} (\bz_{\geq 1})  -  \cL_{\bx - \bz_1} (\bz_{\geq 2}) \quad  \mbox{[$\bz_{\geq 1}$ is a feasible schedule to $\cL_\bx$ and $\bz_0 = 0$]}  \\
 \leq &\sum_j \frac{w_j}{p_j} \sum_{h \in [H]} (a_h - a_{h-1}) z_j(a_h) 
    = \sum_j \frac{w_j}{p_j} \sum_{h \in [H]} |I_{h-1}| z_j(a_h) \\
    \leq &\sum_j \frac{w_j}{p_j} \sum_{h \in [H]} |I_{h}| z_j(a_h) 
 \quad \mbox{[Observation~\ref{o:rt-1}]}\\    
    = & \sum_\tau  V(\tau) 
    =  \tilde W(t)        
    \end{align*}    

\begin{lemma}
    \label{lem:rt-3}
    No algorithm can decrease the residual optimum by more than $\tilde W(t)$ at a time $t$. 
\end{lemma}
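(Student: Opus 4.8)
\textbf{Proof plan for Lemma~\ref{lem:rt-3}.}
The plan is to mirror the argument already carried out in the displayed chain of inequalities immediately preceding the lemma statement, which does essentially all the work; the lemma is just its formal record. First I would fix an arbitrary algorithm and a time $t$ at which no job arrives or completes, and let $\bz_1$ denote the (at most $s$-speed; but for this bound we only need speed $1$) processing the algorithm performs in the first time slot after $t$. Let $\bx = \bp(t)$ be the vector of remaining sizes, so that after processing $\bz_1$ the remaining sizes are $\bx - \bz_1$. Let $\langle \bz_2, \bz_3, \ldots \rangle$ be an optimum solution of the approximate residual LP $\cL_{\bx - \bz_1}$, so that $f(\bx - \bz_1) = \cL_{\bx - \bz_1}(\bz_{\geq 2})$, and note that $\langle \bz_1, \bz_2, \ldots \rangle$ is a feasible (speed-$1$) solution to $\cL_\bx$, whence $f(\bx) \leq \cL_\bx(\bz_{\geq 1})$ (setting $\bz_0 = 0$).

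The key quantity to bound is the decrease $f(\bx) - f(\bx - \bz_1) \leq \cL_\bx(\bz_{\geq 1}) - \cL_{\bx-\bz_1}(\bz_{\geq 2})$. Expanding both objectives in the time-indexed form with the step function $g$, and using that $g$ is constant on each interval $I_h = [a_h, a_{h+1})$, the difference telescopes into $\sum_j \frac{w_j}{p_j} \sum_{h} (a_h - a_{h-1}) z_j(a_h) = \sum_j \frac{w_j}{p_j} \sum_h |I_{h-1}| z_j(a_h)$. Applying Observation~\ref{o:rt-1}(1), $|I_{h-1}| \leq |I_h|$, this is at most $\sum_j \frac{w_j}{p_j} \sum_h |I_h| z_j(a_h)$. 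Since we assume wlog $z_j(\tilde t)$ is constant over $\tilde t \in I_h$, this equals $\sum_j \frac{w_j}{p_j} \sum_{\tilde t \geq 1} z_j(\tilde t) = \sum_\tau V(\tau) = \tilde W(t)$, the total fractional weight currently in the system; here I am using $V(\tau) := \sum_j \frac{w_j}{p_j} z_j(\tau)$ as defined in Claim~\ref{c:rt-2} and Proposition~\ref{pro:fractional-weight}. This establishes the one-time-step bound. To conclude the lemma as stated (decrease rate at all times, not merely the first slot), I would combine this with the telescoping/``pretend-credit'' device described in the paragraph ``No need to recompute the residual optimum until a new job arrives'': since no job arrives during $[t, t')$, the total decrease over that window is at most $\cL_{\bx}(\bz_{\geq 1}) - \cL_{\bx - \bz_{\leq \tau'}}(\bz_{\geq \tau'+1})$, which decomposes as a telescopic sum of per-slot decreases, each bounded by $\tilde W$ at the corresponding time by the identical computation applied to the shifted remaining-size vector.

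There is no serious obstacle here: the only mildly delicate points are (i) justifying the wlog assumption that the LP optimum is constant on each interval — which follows because $g$ is constant on each interval, so any optimal solution can be averaged within an interval without changing the objective — and (ii) making sure the inequality $|I_{h-1}| \leq |I_h|$ is the only structural fact about $\cI$ needed, which it is (in contrast to Lemma~\ref{lem:rt-1}, which needs the reverse bound $|I_h| \geq \frac{1}{1+2\rho}|I_{h+1}|$ and hence loses the $1+2\rho$ factor, while here we lose nothing). Combined with Lemma~\ref{lem:rt-1}, this gives matching upper and lower bounds (up to the $1+2\rho$ factor) on the residual-optimum decrease rate under GD versus any algorithm, which is exactly what is needed to rerun the potential-function analysis of Theorem~\ref{thm:procesing-rate-view} with $O(\rho)$ extra speed augmentation and thereby obtain a polynomial-time implementation.
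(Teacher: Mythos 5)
Your proposal is correct and follows exactly the paper's own argument: fix the algorithm's first-slot processing $\bz_1$, append the optimum continuation $\bz_{\geq 2}$ for $\bx - \bz_1$ to form a feasible solution of $\cL_\bx$, telescope the difference of LP objectives into $\sum_j \frac{w_j}{p_j}\sum_h |I_{h-1}|z_j(a_h)$, and apply $|I_{h-1}|\leq|I_h|$ from Observation~\ref{o:rt-1} to conclude. The closing remarks about extending the per-slot bound over a whole arrival-free window via the pretend-credit telescoping are a sensible (and correct) gloss on how the lemma is used, but are not required for the statement itself.
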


\paragraph{Putting the Pieces Together.} We use the same potential function we used in the proof of Theorem~\ref{thm:procesing-rate-view}, except that $f$ is replaced with the approximate function we consider in Eqn. (\ref{lp:rt}). All the other properties such as supermodularity remain to be satisfied with the approximate residual LP. Thus, the former analysis for discrete events such as jobs arrival or completion continues to hold true (Lemmas~\ref{lem:second-thm-super}
and \ref{lem:discrete-change-processing-view}). 
The bound in Lemma~\ref{lem:second-theorem-gd} becomes 
$\frac{\dd}{\dd t} f(\bp^A(t)) = - \frac{s}{1+2\rho} \tilde W^A(t)$ thanks to Lemma~\ref{lem:rt-1}. The bound in Lemma~\ref{lem:second-theorem-adv} remains unchanged due to Lemma~\ref{lem:rt-3}. Then, assuming the online algorithm is given $(1+\eps)(1+2\rho)$-speed, the bound in Corollary~\ref{cor:running-clairvoyant} becomes
$\frac{d}{dt} \Phi(t) \leq -\tilde W^A(t) + 
\frac{1}{\eps}((1+\eps)(1+2\rho) + 2) \tilde W^O(t)$.
We can set $\rho$ to $\eps$ to obtain an algorithm that is $(1+\eps)(1+2\eps)$-speed $O(1/ \eps)$-competitive for the fractional objective. By scaling $\eps$ appropriately, we have Theorem~\ref{thm:procesing-rate-view}, even with the approximate residual optimum. 
Thus, we have shown polynomial time algorithms achieving the theorem---more precisely, assuming that we have a poly-time separation oracle for whether $\bz \in \cP$ or not. As a result, We also have polynomial time algorithms for Theorem~\ref{thm:resource-view-2} as the proof of the theorem reduces the resource view to the processing rate view.

\bibliography{ref}

\bibliographystyle{alpha}

\end{document}